\numberwithin{equation}{section}
\numberwithin{figure}{section}
\theoremstyle{plain}
	\newtheorem{thm}{Theorem}
	\newtheorem{lem}[thm]{Lemma}
	\newtheorem{example}[thm]{Example}
	\newtheorem{cor}[thm]{Corollary}
	\newtheorem{quest}[thm]{Question}
	\newtheorem{prop}[thm]{Proposition}
\theoremstyle{definition}
	\newtheorem{defn}[thm]{Definition}
	\newtheorem{rem}[thm]{Remark}
\definecolor{links}{rgb}{0,0.3,0}
\newcommand{\U}{\mathcal{U}}
\newcommand{\F}{\mathcal{F}}
\newcommand{\G}{\mathcal{G}}
\newcommand{\BP}{\mathcal{P}}
\newcommand{\Dim}{d}
\newcommand{\Den}{\mathsf{D}}
\newcommand{\FeynEps}{\epsilon} % i*epsilon in Feynman propagator
\newcommand{\sdc}{\omega}
\newcommand{\uv}[1]{\mathsf{e}_{#1}}
\newcommand{\Loops}{L}
\newcommand{\AllMoms}{M}
\newcommand{\EdgeMom}{k}
\newcommand{\LoopMom}{\ell}
\newcommand{\ExtMom}{p}
\newcommand{\ExtMoms}{E}
\newcommand{\FI}{\mathcal{I}}
\newcommand{\FImel}{\widetilde{\mathcal{I}}} % without 1/Gamma(dim/2-sdc)
\newcommand{\Gram}{\mathsf{Gr}}
\newcommand{\Graph}[2][1.0]{\vcenter{\hbox{\includegraphics[scale=#1]{Graphs/#2}}}}
\newcommand{\bilabels}{\Theta}
\newcommand{\spMat}{\mathcal{A}}
\newcommand{\loopMat}{\Lambda}
\newcommand{\loopLin}{Q}
\newcommand{\loopConst}{J}
\newcommand{\bigo}[1]{\mathcal{O}\left( #1 \right)}
\newcommand{\set}[1]{\left\{ #1 \right\}}
\newcommand{\setexp}[2]{\left\{ #1 \colon #2 \right\}}
\newcommand{\restrict}[2]{\left.#1 \right|_{#2}}
\newcommand{\defas}{\mathrel{\mathop:}=}
\newcommand{\abs}[1]{\left\lvert #1 \right\rvert}
\newcommand{\norm}[1]{\left\| #1 \right\|}
\newcommand{\injects}{\lhook\joinrel\longrightarrow}
\newcommand{\surjects}{\relbar\joinrel\twoheadrightarrow}
\newcommand{\prOrth}[1]{\mathrm{pr}_{#1}^{\perp}}
\DeclareMathOperator{\lin}{lin}
\newcommand{\commu}[2]{\left[ #1, #2 \right]}
\renewcommand{\Re}{\operatorname{Re}}
\newcommand{\dd}[1][]{\mathrm{d}^{#1}} %differential
\newcommand{\R}{\mathbbm{R}}
\newcommand{\C}{\mathbbm{C}}
\newcommand{\N}{\mathbbm{N}}
\newcommand{\Z}{\mathbbm{Z}}
\newcommand{\Q}{\mathbbm{Q}}
\newcommand{\M}{\mathscr{M}}
\newcommand{\Nmod}{\mathscr{N}}
\newcommand{\Mfrak}{\mathfrak{M}}
\newcommand{\WeylT}[1]{\mathcal{D}^{#1}}
\newcommand{\Aff}{\mathbbm{A}}
\newcommand{\Lef}{\mathbbm{L}}
\newcommand{\PP}{\mathbbm{P}}
\newcommand{\Vanish}{\mathbbm{V}}
\newcommand{\Gm}[1][]{\mathbbm{G}_{m\ifstrequal{#1}{}{}{,#1}}}
\newcommand{\GmA}{\iota}
\newcommand{\GL}[2]{\mathsf{GL}_{#1}( #2 )}
\newcommand{\NewPol}[1]{\mathsf{NP}\left(#1\right)}
\DeclareMathOperator{\conv}{conv}
\DeclareMathOperator{\Vol}{Vol}
\newcommand{\OO}{\mathcal{O}}
\DeclareMathOperator{\DR}{DR}
\newcommand{\Koszul}[3][]{K^{#1}(#2;#3)}
\newcommand{\WS}[1]{\mathrm{WS}_{#1}}
\newcommand{\Sun}[1]{\mathrm{S}_{#1}}%\mathsf{S}_{#1}}
\newcommand{\cupdot}{\mathbin{\dot{\cup}}}
\DeclareMathOperator{\sign}{sgn}
\DeclareMathOperator{\coldet}{col-det}
\newcommand{\PlusD}[1]{\hat{\mathbf{#1}}^{+}} % multiply with index and raise index in the function to the right
\newcommand{\Minus}[1]{\mathbf{#1}^{-}} % decrement index in the function to the right
\newcommand{\nOp}[1]{\mathbf{n}_{#1}} % = \PlusD*\Minus = mutliplication with the index
\newcommand{\momIBP}[2]{\mathbf{o}_{#2}^{#1}}
\newcommand{\parIBP}[2]{\widetilde{\mathbf{O}}_{#2}^{#1}}
\newcommand{\shiftIBP}[2]{\mathbf{O}_{#2}^{#1}}
\newcommand{\hide}[1]{}%\todo{hidden text}}
\newcommand{\isomorph}{\cong}
\newcommand{\qiso}{\simeq}
\DeclareMathOperator{\id}{id}
\newcommand{\Transpose}{\intercal}
\newcommand{\Mellin}[2][]{\mathcal{M}\ifstrequal{#1}{}{}{^{#1}}\!\left\{ #2 \right\}}
\newcommand{\Weyl}[1]{{A}^{#1}}
\newcommand{\Shifts}[1]{{S}^{#1}}
\newcommand{\iu}{\mathrm{{i}}}
\DeclareMathOperator{\Realteil}{Re}
\newcommand{\tp}{\otimes}
\newcommand{\NoM}[1]{\mathfrak{C}\ifstrequal{#1}{}{}{\left(#1\right)}}
\newcommand{\NoMtop}[1]{\widehat{\mathfrak{C}}\left(#1\right)}
\DeclareMathOperator{\Ann}{Ann}
\DeclareMathOperator{\Mom}{Mom} % A_N-module generated by the momentum space IBPs
\newcommand{\Singular}{\href{https://www.singular.uni-kl.de/}{\textsc{Singular}}}
\newcommand{\JaxoDraw}{\texttt{\textup{JaxoDraw}}}
\newcommand{\Axodraw}{\texttt{\textup{Axodraw}}}
\newcommand{\Azurite}{\href{https://bitbucket.org/yzhphy/azurite/}{\textsc{Azurite}}}
\newcommand{\Mint}{\href{http://www.inp.nsk.su/~lee/programs/LiteRed/\#utils}{\texttt{Mint}}}
\newcommand{\Reduze}{\href{https://reduze.hepforge.org/}{\texttt{Reduze}}}
\newcommand{\Macaulay}{\href{http://www.macaulay2.com/}{\textit{Macaulay2}}}
\newcommand{\FeynArts}{\textsl{FeynArts}}
\newcommand*\samethanks{\footnotemark[\value{footnote}]}
\title{Feynman integral relations from parametric annihilators}
\author{%
	Thomas Bitoun\thanks{University of Oxford}
\and
	Christian Bogner\thanks{Humboldt-Universit\"{a}t zu Berlin}
\and
	Ren\'e~Pascal Klausen\samethanks
\and
	Erik Panzer\thanks{All Souls College, University of Oxford}
}
\begin{document}
\maketitle
\begin{abstract}
	We study shift relations between Feynman integrals via the Mellin transform through parametric annihilation operators. These contain the momentum space integration by parts relations, which are well-known in the physics literature.
	Applying a result of Loeser and Sabbah, we conclude that the number of master integrals is computed by the Euler characteristic of the Lee-Pomeransky polynomial. We illustrate techniques to compute this Euler characteristic in various examples and compare it with numbers of master integrals obtained in previous works.
%	Furthermore, we address the question whether the momentum space identities suffice to generate all shift relations of Feynman integrals.
\end{abstract}

%================================================================================
\section{Introduction}
\label{sec:intro}%
%================================================================================

At higher orders in perturbative quantum field theory, the computation of observables via Feynman diagrams involves a rapidly growing number of Feynman integrals.
Fortunately, the number of integrals which need to be computed explicitly can be reduced drastically by use of linear relations 
\begin{equation*}
	\sum_i c_{i} \FI_{i} = 0
	\tag{$\ast$}
\end{equation*}
between different Feynman integrals $\FI_{i}$, with coefficients $c_{i}$ that are rational functions of the space-time dimension $\Dim$ and the kinematic invariants characterizing the physical process (masses and momenta of elementary particles).

The most commonly used method to derive such identities is the integration by parts (IBP) method introduced in \cite{CheTka,Tkachov:calculability-4loop}. In this approach, the relations $(\ast)$ are obtained as integrals of total derivatives in the momentum-space representation of Feynman integrals. Combining these relations, any integral of interest can be expressed as a linear combination of some finite, preferred set of \emph{master integrals}.\footnote{%
	For different applications, various criteria for choosing the master integrals have been suggested. These include uniform transcendentality \cite{Henn:MultiloopSimple}, finiteness \cite{ManteuffelPanzerSchabinger:QuasiFinite} or finiteness of coefficients \cite{ChetyrkinFaisstSturmTentyukov:epsilonfinite}.
}
Laporta's algorithm \cite{Laporta:HighPrecision} provides a popular approach to obtain such reductions, and various implementations of it are available \cite{AnastasiouLazopoulos:Automatic,KIRA,ManteuffelStuderus:Reduze2,Smirnov:FIRE,Smirnov:FIRE5,Fire4LiteRed,Studerus:Reduze}.
However, the increase of complexity of todays computations has recently motivated considerable theoretical effort to improve our understanding of the IBP approach
and the efficiency of automated reductions \cite{Grozin:IBP,Lee:GroupStructureIBP,Lee:LiteRed,Lee:LiteRed1.4,ManteuffelSchabinger:Novel,RuijlUedaVermaseren:Diamond,ApplyingGroebner,Zhang:IBPfromDiffGeo,Ita:2lDecomposition}. This includes a method by Baikov \cite{Baikov:ExplicitSolutions-3loop,Baikov:ExplicitSolutions-multiloop,SmirnovSteinhauser:SolvingRecurrence,Baikov:PracticalCriterion}, which is based on a parametric representation of Feynman integrals.

It is interesting to ask for the number of master integrals which remain after such reductions. This number provides an estimate for the complexity of the computation and informs the problem of constructing a basis of master integrals by an Ansatz. In the recent literature, algorithms to count master integrals were proposed and implemented in the computer programs {\Mint} \cite{LeePom} and {\Azurite} \cite{GeorgoudisLarsenZhang:Azurite}. 

We propose an unambiguous definition for the number of master integrals as the dimension of an appropriate vector space. This definition and our entire discussion are independent of the method of the reduction. The main result of this article shows that this number is a well understood topological invariant: the Euler characteristic of the complement of a hypersurface $\set{\G=0}$ associated to the Feynman graph. 
Therefore, many powerful tools are available for its computation.

To arrive at our result, we follow Lee and Pomeransky \cite{LeePom} and view Feynman integrals as a Mellin transform of $\G^{-\Dim/2}$, where $\G$ is a certain polynomial in the Schwinger parameters. Each of these parameters corresponds to a denominator (inverse propagators or irreducible scalar products) of the momentum-space integrand.
The classical IBP relations relate Feynman integrals which differ from each other by integer shifts of the exponents of these denominators.
As Lee \cite{Lee:ModernTechniques} and Baikov \cite{Baikov:ExplicitSolutions-multiloop} pointed out, such shift relations correspond to annihilation operators of the integrands of parametric representations.\footnote{%
	Tkachov's idea \cite{Tkachov:WhatsNext} to insert Bernstein-Sato operators in the integrand with two Symanzik polynomials was used for numerical computations of one- and two-loop integrals 
\cite{FerrogliaPassarinoPasseraUccirati:Frontier,FerrogliaPasseraPassarinoUccirati:AllPurpose,Passarino:ApproachTowardNumerical,PassarinoUccirati:AlgebraicNumerical2loop,BardinKalinovskayaTkachov:1loopExperience}.
}
In our set-up, these are differential operators $P$ satisfying
\begin{equation*}
	P \G^{-\Dim/2} = 0.
\end{equation*}
We recall that such parametric annihilators provide all shift relations between Feynman integrals, in particular the ones known from the classical IBP method in momentum space. The obvious question, whether the latter suffice to obtain all shift relations, seems to remain open. As a positive indication in this direction, we show that the momentum space relations contain the inverse dimension shift.

Ideals of parametric annihilators are examples of $D$-modules.
Loeser and Sabbah studied the \emph{algebraic Mellin transform} \cite{LoeserSabbah:EDFdeterminants} of holonomic $D$-modules and proved a dimension formula in \cite{LoeserSabbah:IrredTore,LoeserSabbah:IrredToreII}, which, applied to our case, identifies the number of master integrals as an Euler characteristic.
The key property here is holonomicity, which was studied in the context of Feynman integrals already in \cite{KashiwaraKawai:HolonomicSystemsFI} and of course is crucial in the proof \cite{SmiPet} that there are only finitely many master integrals.

It is furthermore worthwhile to notice that algorithms \cite{Oaku:AlgorithmsB,OakuTakayama:AlgorithmsForDmodules} have been developed to compute generators for the ideal of all annihilators of $\G^{-\Dim/2}$, see also \cite[section~5.3]{SaitoSturmfelsTakayama}.
Today, efficient implementations of these algorithms via Gr\"{o}bner bases are available in specialized computer algebra systems such as {\Singular} \cite{Singular,Andetal}.
We hope that these improvements may stimulate further progress in the application of $D$-module theory to Feynman integrals \cite{Tarasov:LL1998,SmirnovSmirnov:ACAT2007,ApplyingGroebner,FujimotoKaneko:LL2012}.

We begin our article with a review of the momentum space and parametric representations of scalar Feynman integrals and recall how the Mellin transform translates shift relations to differential operators that annihilate the integrand.
In section~\ref{sec:ibp-momentum-space} we illustrate how the classical IBP identities obtained in momentum space supply special examples of such annihilators. The relations between integrals in different dimensions are addressed in section~\ref{sec:Dimension shifts}, where we relate them to the Bernstein-Sato operators and show that these can be obtained from momentum space IBPs.
Our main result is presented in section~\ref{sec:Euler-NoM}, where we apply the theory of Loeser and Sabbah to count the master integrals in terms of the Euler characteristic.
Practical applications of this formula are presented in the following section~\ref{sec:examples}, which includes a comparison to other approaches and results in the literature.
Finally we discuss some open questions and future directions.

In appendix~\ref{sec:app-representations} we give an example to illustrate our definitions in momentum space, present proofs of the parametric representations and demonstrate algebraically that momentum space IBPs are parametric annihilators.
The theory of Loeser and Sabbah is reviewed in appendix~\ref{sec:L-S}, which includes complete, simplified proofs of those theorems that we invoke in section~\ref{sec:Euler-NoM}.
Finally, appendix~\ref{sec:two-loop example} discusses the parametric annihilators of a two-loop example in detail.

\paragraph{Acknowledgments}
We like to thank Mikhail Kalmykov, Roman Lee, Robert Schabinger, Lorenzo Tancredi and Yang Zhang for interesting discussions and helpful comments on integration by parts for Feynman integrals, J{\o}rgen Rennemo for suggesting literature relevant for section~\ref{sec:no-masters} and Viktor Levandovskyy for communication and explanations around $D$-modules and in particular their implementation in {\Singular}.

Thomas Bitoun thanks Claude Sabbah for feedback and bringing the work \cite{LoeserSabbah:IrredTore,LoeserSabbah:IrredToreII,LoeserSabbah:EDFdeterminants} to our attention. Thomas acknowledges funding through EPSRC grant EP/L005190/1.

Christian Bogner thanks Deutsche Forschungsgemeinschaft for support under the project BO~4500/1-1.

This research was supported by the \href{http://www.munich-iapp.de/}{Munich Institute for Astro- and Particle Physics (MIAPP)} of the DFG cluster of excellence ``Origin and Structure of the Universe''. Furthermore we are grateful for support from the \href{http://hu.berlin/kmpb}{Kolleg Mathematik Physik Berlin (KMPB)} and hospitality at Humboldt-Universit\"{a}t Berlin.

Images in this paper were created with {\JaxoDraw} \cite{JaxoDraw} (based on {\Axodraw} \cite{Axodraw}) and {\FeynArts} \cite{Hahn:FeynArts3}.

%================================================================================
\section{Annihilators and integral relations}
\label{sec:Ann and rel}%
%================================================================================

In this section we elaborate a method to obtain relations between Feynman integrals from differential operators with respect to the Feynman parameters and show that these relations include the well known IBP relations from momentum space.

%================================================================================
\subsection{Feynman integrals and Schwinger parameters}
\label{sec:representations}%
%================================================================================

At first we fix conventions and notation for Feynman integrals in momentum space and recall their representations using Schwinger parameters. While the former is the setting for most traditional approaches to study IBP identities, it is the latter (in particular in its form with a single polynomial) which provides the direct link to the theory of $D$-modules that our subsequent discussion will be based on.

We consider integrals (also called \emph{integral families} \cite{ManteuffelStuderus:Reduze2}) that are defined by an $\Dim\Loops$-fold integral ($\Loops$ is the \emph{loop number}), over so-called \emph{loop momenta} $\LoopMom_1,\ldots,\LoopMom_{\Loops}$ in $\Dim$-dimensional Minkowski space, of a product of powers of \emph{denominators} $\Den=(\Den_1,\ldots,\Den_N)$:
\begin{equation}
	\FI(\nu_{1},\ldots,\nu_{N})
	= \left(\prod_{j=1}^{L}\int\frac{\dd[\Dim] \LoopMom_{j}}{\iu \pi^{\Dim/2}}\right)
	\prod_{a=1}^{N}\Den_{a}^{-\nu_{a}}
	.
	\label{eq:FI-momentum}%
\end{equation}
The denominators are (at most) quadratic forms in the $\Loops$ loop momenta and some number $\ExtMoms$ of linearly independent \emph{external momenta} $\ExtMom_1,\ldots,\ExtMom_\ExtMoms$.
In most applications, the denominators are inverse Feynman propagators associated with the momentum flow through a Feynman graph that arises from imposing momentum conservation at each vertex, see example~\ref{ex:bubble-parametric}. However, we will only restrict ourselves to graphs from section~\ref{sec:graph-polys} onwards, and keep our discussion completely general until then.

An integral \eqref{eq:FI-momentum} is a function of the \emph{indices} $\nu=(\nu_1,\ldots,\nu_N)$ (denominator exponents), the dimension $\Dim$ of spacetime and kinematical invariants (masses and scalar products of external momenta). However, we suppress the dependence on kinematics in the notation and treat kinematical invariants as complex numbers throughout.
The dimension and indices are understood as free variables; that is, we consider Feynman integrals as meromorphic functions of $(\Dim,\nu)\in\C^{1+N}$ in the sense of Speer \cite{Speer:GeneralizedAmplitudes}.

\begin{defn}
	\label{def:M-Q-J}%
	To each denominator $\Den_a$ of a list $\Den=(\Den_1,\ldots,\Den_N)$, we associate a variable $x_a$ ($1\leq a \leq N$) called \emph{Schwinger parameter}.
	The linear combination
\begin{equation}
	\sum_{a=1}^{N}x_{a} \Den_{a}
	=-\sum_{i,j=1}^{\Loops} \loopMat_{ij}(\LoopMom_{i} \cdot \LoopMom_{j})
	+\sum_{i=1}^{\Loops} 2 (\loopLin_{i} \cdot \LoopMom_{i})
	+\loopConst
	\label{eq:def:M-Q-J}%
\end{equation}
	decomposes into quadratic, linear and constant terms in the loop momenta.\footnote{%
		The scalar products $\LoopMom_{i} \cdot \LoopMom_{j} = \LoopMom_i^1 \LoopMom_j^1 - \sum_{\mu=2}^{\Dim} \LoopMom_{i}^{\mu} \LoopMom_{j}^{\mu}$ are understood with respect to the Minkowski metric.
	}
	This defines a symmetric $\Loops\times\Loops$ matrix $\loopMat$, a vector $Q$ of $\Loops$ linear combinations of external momenta and a scalar $\loopConst$.
	We define furthermore the polynomials
\begin{equation}
	\U \defas \det \loopMat,
	\quad
	\F
	\defas
%	(\det \loopMat )
%	\left(\sum_{i,j=1}^{L}\loopMat_{ij}^{-1}\loopLin_{i} \loopLin_{j}+J\right)
%	 =
	 \U \left( \loopLin^{\Transpose} \loopMat^{-1} \loopLin + \loopConst \right)
	    \quad\text{and}\quad
	  \G \defas \U+\F  
	.
	\label{eq:def-U-F}%
\end{equation}
\end{defn}
Schwinger parameters yield useful representations of the Feynman integrals \eqref{eq:FI-momentum}:
\begin{prop} \label{prop: parametric reps}
	Let us denote the \emph{superficial degree of convergence} by
	\begin{equation}
		\sdc \defas \abs{\nu} - \Loops\frac{\Dim}{2}
		\quad\text{where}\quad
		\abs{\nu} \defas \sum_{i=1}^{N} \nu_i
		.
		\label{eq:sdc}%
	\end{equation}
	Then the Feynman integral \eqref{eq:FI-momentum} can be written as
\begin{align}
	\FI(\nu_{1},\ldots,\nu_{N})
	&=
	\left(\prod_{i=1}^{N}\int_{0}^{\infty}\frac{x_{i}^{\nu_{i}-1}\dd x_{i}}{\Gamma\left(\nu_{i}\right)}\right)
	\frac{e^{-\F/\U}}{\U^{\Dim/2}}
	,
	\label{eq:parametric-exp}%
	\\
	\FI(\nu_{1},\ldots,\nu_{N})
	&=\Gamma(\sdc)
	\left(
		\prod_{i=1}^{N}
		\int_{0}^{\infty}
		\frac{x_{i}^{\nu_{i}-1}\dd x_{i}}{\Gamma\left(\nu_{i}\right)}
	\right)
	\frac{\delta\left(1-\sum_{j=1}^{N}x_{j}\right)}{\U^{\Dim/2-\sdc}\F^{\sdc}}
	\quad\text{and}
	\label{eq:FI-UF}%
	\\
	\FI(\nu_{1},\ldots,\nu_{N})
	&= \frac{\Gamma\left(\frac{\Dim}{2}\right)}{\Gamma\left(\frac{\Dim}{2}-\sdc\right)}
	\left(\prod_{i=1}^{N}
		\int_{0}^{\infty}
		\frac{x_{i}^{\nu_{i}-1}\dd x_{i}}{\Gamma(\nu_{i})}
	\right)
	\G^{-\Dim/2}
	.
	\label{eq:Lee-Pom}%
\end{align}
\end{prop}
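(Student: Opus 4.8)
The plan is to establish the three representations in sequence, starting from the momentum-space definition \eqref{eq:FI-momentum}, proving \eqref{eq:parametric-exp} first and then deducing \eqref{eq:FI-UF} and \eqref{eq:Lee-Pom} from it by elementary one-dimensional integral manipulations. Throughout I would work in a region of the parameters $(\Dim,\nu)$ where every integral converges absolutely, and extend the resulting identities to meromorphic functions of $(\Dim,\nu)\in\C^{1+N}$ afterwards in the sense of Speer.

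First I would Schwinger-parameterize each denominator using the Gamma integral $\Den_a^{-\nu_a}=\Gamma(\nu_a)^{-1}\int_0^\infty x_a^{\nu_a-1}e^{-x_a\Den_a}\,\dd x_a$, valid for $\Re\Den_a>0$ and $\Re\nu_a>0$. Substituting into \eqref{eq:FI-momentum} and exchanging the orders of integration collects the denominators into $\exp(-\sum_a x_a\Den_a)$, whose exponent is exactly the quadratic polynomial in the loop momenta recorded in \eqref{eq:def:M-Q-J}. Completing the square via the shift $\LoopMom_i\mapsto\LoopMom_i+(\loopMat^{-1}\loopLin)_i$, which is legitimate since $\loopMat$ is generically invertible, the $\Dim\Loops$-fold Gaussian momentum integral factorizes; with the normalization $\iu\pi^{\Dim/2}$ and the usual Wick rotation it evaluates to $\U^{-\Dim/2}e^{-\F/\U}$, where $\U=\det\loopMat$ and $\F/\U=\loopLin^{\Transpose}\loopMat^{-1}\loopLin+\loopConst$ as in \eqref{eq:def-U-F}. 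This is \eqref{eq:parametric-exp}.

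To pass to \eqref{eq:FI-UF} I would exploit that $\U$ and $\F$ are homogeneous of degrees $\Loops$ and $\Loops+1$ in the Schwinger parameters. Inserting $1=\int_0^\infty\delta(\lambda-\sum_j x_j)\,\dd\lambda$ and rescaling $x_a=\lambda y_a$ sends $\F/\U\mapsto\lambda\,\F/\U$ and $\U^{\Dim/2}\mapsto\lambda^{\Loops\Dim/2}\U^{\Dim/2}$; collecting all powers of $\lambda$ from the measure, the monomials and the delta leaves $\lambda^{\sdc-1}$, and the $\lambda$-integral against the exponential produces $\Gamma(\sdc)(\F/\U)^{-\sdc}$. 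Combined with $\U^{-\Dim/2}$ this yields the integrand $\U^{-(\Dim/2-\sdc)}\F^{-\sdc}$ on the simplex, i.e.\ \eqref{eq:FI-UF}. For \eqref{eq:Lee-Pom} I would run the same rescaling on the candidate integral of $\G^{-\Dim/2}$: here $\G=\U+\F$ is \emph{inhomogeneous}, but after $x_a=\lambda y_a$ it becomes $\lambda^{\Loops}(\U(y)+\lambda\F(y))$, so the remaining $\lambda$-integral $\int_0^\infty\lambda^{\sdc-1}(\U+\lambda\F)^{-\Dim/2}\,\dd\lambda$ is a Beta integral equal to $\frac{\Gamma(\sdc)\Gamma(\Dim/2-\sdc)}{\Gamma(\Dim/2)}\,\U^{-(\Dim/2-\sdc)}\F^{-\sdc}$. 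Matching this against \eqref{eq:FI-UF} reproduces precisely the prefactor $\Gamma(\Dim/2)/\Gamma(\Dim/2-\sdc)$.

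The main obstacle is analytic rather than algebraic: each step is only literally valid on a restricted domain of $(\Dim,\nu)$ — the Schwinger trick needs $\Re\nu_a>0$, the Gaussian integral needs convergence after Wick rotation together with a careful treatment of the Minkowski signature and the $\iu\eps$-prescription, and the two $\lambda$-integrals need $\Re\sdc>0$ and $\Re(\Dim/2-\sdc)>0$ respectively. The real work is to verify that interchanging integrals, shifting the loop momenta and rotating the momentum contours are all simultaneously justified on a common nonempty domain, and then to invoke meromorphic continuation to extend the equalities to all of $\C^{1+N}$; this is where Speer's framework does the heavy lifting.
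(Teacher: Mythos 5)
Your proposal is correct and follows essentially the same route as the paper's proof in the appendix: the Schwinger trick plus Gaussian integration over shifted loop momenta for \eqref{eq:parametric-exp}, then the insertion of $1=\int_0^\infty\delta(\rho-\sum_j x_j)\,\dd\rho$ with the homogeneity-based rescaling to obtain \eqref{eq:FI-UF}, and the same rescaling applied to the $\G^{-\Dim/2}$ integrand, where the resulting Beta integral reproduces the prefactor $\Gamma(\Dim/2)/\Gamma(\Dim/2-\sdc)$ and reduces \eqref{eq:Lee-Pom} to \eqref{eq:FI-UF}. Your closing remarks on restricted convergence domains and meromorphic continuation likewise match the paper's treatment via Speer's framework (its remark on meromorphicity and the footnotes in the proof).
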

The formulas \eqref{eq:parametric-exp} and \eqref{eq:FI-UF} are known since the sixties and we refer to \cite{Nak,Smirnov:AnalyticToolsForFeynmanIntegrals} for detailed discussions and for the original references. The trivial consequence~\eqref{eq:Lee-Pom} was popularized only much more recently by Lee and Pomeransky \cite{LeePom} and it is this representation that we will use in the following.
In appendix~\ref{sec:Momentum-space} we include proofs for these equations and provide further technical details.
\begin{figure}
	\centering%
	\includegraphics[width=0.35\textwidth]{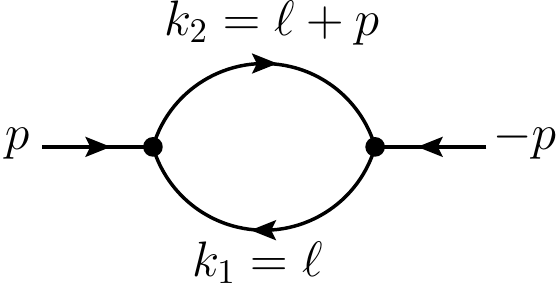}%
	\caption{The one-loop bubble graph with momentum flow.}%
	\label{fig:bubble}%
\end{figure}
\begin{example}
	\label{ex:bubble-parametric}%
	Consider the graph in figure~\ref{fig:bubble} with massless Feynman propagators, $\Den_1=-\LoopMom^2$ and $\Den_2=-(\LoopMom+\ExtMom)^2$.
%	$\spMat = \left(\begin{smallmatrix} -1 & 0 \\ -1 & -2 \\\end{smallmatrix}\right)$
%	and $\lambda = \left(\begin{smallmatrix} 0 & - p^2 \\ \end{smallmatrix} \right)$ in the massless case.
% \eqref{eq:M-from-A}--\eqref{eq:J}.
	We find $\loopMat = x_1 + x_2$, $\loopLin = -x_2 p$ and $\loopConst = -x_2 p^2$ according to \eqref{eq:def:M-Q-J}.
	Hence the graph polynomials \eqref{eq:def-U-F} become
	$\U = x_1+x_2$ and $\F = (-p^2)x_1 x_2$ such that
	\begin{equation}
		\FI(\nu_1,\nu_2)
		= \frac{\Gamma(\frac{\Dim}{2})}{\Gamma(\Dim-\nu_1-\nu_2)}
		\int_0^{\infty} \frac{x_1^{\nu_1-1}\dd x_1}{\Gamma(\nu_1)}
		\int_0^{\infty} \frac{x_2^{\nu_2-1}\dd x_2}{\Gamma(\nu_2)}
		\left( x_1 + x_2 -p^2 x_1 x_2 \right)^{-\Dim/2}
		.
		\label{eq:bubble-parametric}%
	\end{equation}
\end{example}
\begin{rem}[meromorphicity]\label{rem:continuation}
	Depending on the values of $\Dim$ and $\nu$, the integrals \eqref{eq:FI-momentum} and \eqref{eq:parametric-exp}--\eqref{eq:Lee-Pom} can be divergent. However, there exists a non-empty, open domain in $\C^{1+N} \ni (\Dim,\nu)$ where all of them are convergent and agree with each other.\footnote{%
	The only exception are cases of zero-scale subintegrals, like massless tadpoles. Such integrals are zero in dimensional regularization and therefore irrelevant for our considerations.
}
	From there, analytic continuation defines a unique, meromorphic extension of every Feynman integral to the whole parameter space $\C^{1+N}$. The poles are simple and located on affine hyperplanes defined by linear equations with integer coefficients. For these foundations of analytic regularization\footnote{%
	The widely used \emph{dimensional regularization} is the special case when $\nu\in\Z^N$ are all integers and only the dimension $\Dim$ remains as a regulator. In this case, the poles are not necessarily simple anymore.
} we refer to \cite{Speer:SingularityStructureGenericFeynmanAmplitudes,Speer:GeneralizedAmplitudes}.

	The uniqueness of the analytic continuation of a Feynman integral (as a function of $\Dim$ and $\nu$) is very important; in particular, it means that an identity between Feynman integrals is already proven once it has been established locally in the non-empty domain of convergence of the involved integral representations. 
	In other words, in any calculation with analytically regularized Feynman integrals, we may simply assume, without loss of generality, that the parameters are such that the integrals converge. The resulting relation then necessarily remains true everywhere by analytic continuation.
\end{rem}
\begin{example}
	The one-loop propagator in example~\ref{ex:bubble-parametric} can be computed in terms of $\Gamma$-functions:
	\begin{equation}
		\FI(\nu_1,\nu_2)
		= (-p^2)^{\Dim/2-\nu_1-\nu_2}
		\frac{
			\Gamma(\Dim/2-\nu_1)
			\Gamma(\Dim/2-\nu_2)
			\Gamma(\nu_1+\nu_2-\Dim/2)
		}{
			\Gamma(\nu_1)
			\Gamma(\nu_2)
			\Gamma(\Dim-\nu_1-\nu_2)
		}
		.
		\label{eq:bubble-gamma}%
	\end{equation}
	The integral \eqref{eq:bubble-parametric} converges only in a certain domain of $(\nu,\Dim)$, but there it evaluates to \eqref{eq:bubble-gamma}, which has a unique meromorphic continuation. Its poles lie on the infinite family of hyperplanes defined by $\set{\Dim/2-\nu_1=k}$, $\set{\Dim/2-\nu_2=k}$ and $\set{\nu_1+\nu_2-\Dim/2=k}$, indexed by $k \in \Z_{\leq 0}$.
\end{example}

%================================================================================
\subsection{Integral relations and the Mellin transform}
\label{sec:relations-Mellin}
%================================================================================

In this section we summarize how relations between $\nu$-shifted Feynman integrals can be identified with differential operators that annihilate $\G^{-\Dim/2}$. This method was suggested in \cite{Lee:ModernTechniques} (and in \cite{Baikov:ExplicitSolutions-multiloop} for the Baikov representation).

The parametric representation \eqref{eq:Lee-Pom} can be interpreted as a multi-dimensional Mellin transform. For our purposes, we slightly deviate from the standard definition, as for example given in \cite{Brychkov}, and include the factors $\Gamma(\nu_i)$ that occur in \eqref{eq:Lee-Pom}.
\begin{defn} 
    \label{def:Mellin transform}%
	Let $\nu=(\nu_{1},\ldots,\nu_{N})\in\C^{N}$. The twisted (multi-dimensional) Mellin transform of a function $f\colon \R_{+}^{N}\longrightarrow\C$ is defined as 
\begin{equation}
	\Mellin{f}(\nu)
	\defas 
	\left(\prod_{i=1}^{N}\int_{0}^{\infty} \frac{x_i^{\nu_{i}-1}\dd x_{i}}{\Gamma(\nu_i)}\right)f(x_{1},\ldots,x_{N}),
	\label{eq:Mellin}%
\end{equation}
	whenever this integral exists. As a special case we define
	\begin{equation}
		\FImel(\nu)
		\defas \Mellin{\G^{-\Dim/2}}(\nu)
		\quad\text{such that}\quad
		\FI (\nu)
		= \frac{\Gamma(\Dim/2)}{\Gamma(\Dim/2-\sdc)}
		  \FImel(\nu)
		.
		\label{eq:FI-Mellin}%
	\end{equation}
\end{defn}
Recall that, as mentioned in remark~\ref{rem:continuation}, we do not have to worry about the actual domain of convergence of \eqref{eq:Mellin} in the algebraic derivations below. The key features of the Mellin transform for us are the following elementary relations; see \cite{Brychkov} for their form without the $\Gamma$'s in \eqref{eq:Mellin}.
\begin{lem}
	\label{lem:Mellin-properties}
	Let $\alpha,\beta\in\C$, $\nu\in\C^{N}$, $1\leq i \leq N$ and $f,g\colon \R_+^{N} \longrightarrow \C$. Writing $\uv{i}$ for the $i$-{th} unit vector, the (twisted) Mellin transform has the following properties:
	\begin{enumerate}
		\item Linearity: 
		$
			\Mellin{\alpha f+\beta g}(\nu)
			=\alpha \Mellin{f} (\nu) + \beta \Mellin{g}(\nu)
		$,

		\item Multiplication:
		$
			\Mellin{ x_i f }(\nu)
			=\nu_i \Mellin{f} (\nu+\uv{i})
		$ and

		\item Differentiation:
		$
			\Mellin{ \partial_i f}(\nu)
			=-\Mellin{ f} (\nu-\uv{i})
		$.
	\end{enumerate}
\end{lem}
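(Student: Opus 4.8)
The plan is to verify each of the three properties directly from the definition~\eqref{eq:Mellin}, exploiting the functional equation $\Gamma(z+1)=z\Gamma(z)$ of the Gamma function to absorb the index shifts. Linearity is immediate: substituting $\alpha f+\beta g$ into \eqref{eq:Mellin} and using linearity of the iterated integral splits the transform into the two summands with the scalars pulled out front.

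For the multiplication rule I would isolate the $i$-th integral and observe that the extra factor $x_i$ merges with the measure to give $x_i^{\nu_i}=x_i^{(\nu_i+1)-1}$. To match the normalization that the definition expects at the shifted index $\nu+\uv{i}$, I rewrite
\begin{equation*}
	\frac{x_i^{\nu_i}}{\Gamma(\nu_i)}
	= \frac{\Gamma(\nu_i+1)}{\Gamma(\nu_i)}\cdot\frac{x_i^{(\nu_i+1)-1}}{\Gamma(\nu_i+1)}
	= \nu_i\,\frac{x_i^{(\nu_i+1)-1}}{\Gamma(\nu_i+1)},
\end{equation*}
using $\Gamma(\nu_i+1)=\nu_i\Gamma(\nu_i)$. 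The remaining $N-1$ integrals are untouched, so the right-hand side is exactly $\nu_i\Mellin{f}(\nu+\uv{i})$, as claimed.

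The differentiation rule is the only one requiring integration by parts. I would integrate the $i$-th variable of $\Mellin{\partial_i f}(\nu)$ by parts, producing a boundary term $\bigl[x_i^{\nu_i-1}f\bigr]_{0}^{\infty}$ together with $-(\nu_i-1)\int_0^\infty x_i^{(\nu_i-1)-1}f\,\dd x_i$. Since the measure at the shifted index $\nu-\uv{i}$ carries the factor $1/\Gamma(\nu_i-1)$, and $\Gamma(\nu_i)=(\nu_i-1)\Gamma(\nu_i-1)$, the prefactor $-(\nu_i-1)/\Gamma(\nu_i)$ collapses to $-1/\Gamma(\nu_i-1)$, yielding precisely $-\Mellin{f}(\nu-\uv{i})$. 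It is worth emphasizing that the twisting by the $\Gamma$'s is exactly what turns the usual Mellin differentiation rule, which carries a polynomial factor $-(\nu_i-1)$, into this clean pure shift.

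The main obstacle is justifying that the boundary term $\bigl[x_i^{\nu_i-1}f\bigr]_{0}^{\infty}$ vanishes. Here I would invoke remark~\ref{rem:continuation}: it suffices to establish the identity on the non-empty open domain of $(\Dim,\nu)$ where all relevant integrals converge, after which analytic continuation extends it to the whole of $\C^{N}$. On that domain the integrand decays fast enough as $x_i\to\infty$, while the factor $x_i^{\nu_i-1}$ with $\Re\nu_i>1$ suppresses the contribution as $x_i\to0$, so both endpoints contribute nothing and the interchange of the iterated integral with the $i$-th one-dimensional operation is equally unproblematic. Thus no genuine analytic subtlety survives, and the three properties hold as stated.
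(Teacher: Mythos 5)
Your proof is correct and takes essentially the same route as the paper's: linearity directly from the integral, the functional equation $\Gamma(\nu_i+1)=\nu_i\Gamma(\nu_i)$ to absorb the shift in the multiplication rule, and integration by parts with boundary terms killed by working inside the convergence domain (with remark~\ref{rem:continuation} handling the continuation) for the differentiation rule. The only cosmetic difference is in the boundary term at $x_i\to 0$: the paper deduces its vanishing from convergence of the shifted integral itself (if $x_i^{\nu_i-\epsilon-1}f$ has a finite limit for some $\epsilon>0$, then $x_i^{\nu_i-1}f\to 0$), whereas you invoke the cruder sufficient condition $\Re\nu_i>1$, but this does not affect the substance of the argument.
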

\begin{proof}
	The linearity is immediate from \eqref{eq:Mellin} and the functional equation $\Gamma(\nu_i+1) = \nu_i\Gamma(\nu_i)$ provides the multiplication rule $ \left[x_i^{\nu_i-1}/\Gamma(\nu_i) \right] x_i = \nu_i x_i^{\nu_i}/\Gamma(\nu_i+1)$. The differentiation rule is a consequence of integration by parts,
	\begin{equation*}
		\int_0^{\infty} \frac{x_i^{\nu_i-1} \dd x_i}{\Gamma(\nu_i)}
		\partial_i f
		= \left[ \frac{x_i^{\nu_i-1}}{\Gamma(\nu_i)} f \right]_{x_i=0}^{\infty}
		- \int_0^{\infty} \frac{x_i^{\nu_i-2} \dd x_i}{\Gamma(\nu_i-1)} f,
	\end{equation*}
	because the boundary terms vanish inside the convergence domain of both integrals. This just says that if $\lim_{x_i\rightarrow 0} (x_i^{\nu_i-\epsilon-1} f)$ is finite for some $\epsilon>0$, then $\lim_{x_i\rightarrow 0} (x_i^{\nu_i-1} f) = 0$ vanishes and the analogous argument applies to the upper bound $x_i\rightarrow \infty$.
\end{proof}

%================================================================================
\subsection{Operator algebras and annihilators}
\label{sec:annihilators}
%================================================================================

The Mellin transform relates differential operators acting on $\G^{-\Dim/2}$ with operators that shift the indices $\nu$ of the Feynman integrals $\FI(\nu)$. In this section we formalize this connection algebraically in the language of $D$-modules.
For the most part, we only need basic notions which we will introduce below, and point out \cite{Coutinho:Primer} as a particularly accessible introduction to the subject.
\begin{defn}
	\label{def:Weyl-algebra}%
	The Weyl algebra $\Weyl{N}$ in $N$ variables $x_{1},\ldots,x_{N}$ is the non-commutative algebra of polynomial differential operators
\begin{equation}
	\Weyl{N} 
	\defas \C\left\langle
		x_{1},\ldots,x_{N},\partial_{1},\ldots,\partial_{N}
		\ \ | \ \ 
		\partial_{i}x_{j}=x_{j}\partial_{i}+\delta_{ij}\ \text{for all}\ 1\leq i,j\leq N
	\right\rangle,
	\label{eq:Weyl-algebra}%
\end{equation}
such that the commutators are $[x_i,x_j]=[\partial_i,\partial_j]=0$ and $[\partial_i, x_j] = \delta_{i,j}$ (Kronecker delta).
\end{defn}
Note that with the multi-index notations
$
	x^{\alpha} = x_{1}^{\alpha_{1}}\!\cdots x_{N}^{\alpha_{N}}
$
and
$
	\partial^{\beta} = \partial_{1}^{\beta_{1}}\!\cdots\partial_{N}^{\beta_{N}}
$,
every operator $P\in \Weyl{N}$ can be written uniquely in the form 
\begin{equation*}
	P=\sum_{\alpha,\beta}
		c_{\alpha\beta} x^{\alpha} \partial^{\beta}
	\quad\text{with}\quad
	\alpha,\beta\in\N_0^N,
%	\label{eq:generic operator}%
\end{equation*}
by commuting all derivatives to the right (only finitely many of the coefficients $c_{\alpha\beta}\in\C$ are non-zero).
Extending the coefficients $c_{\alpha\beta}$ from $\C$ to polynomials $\C[s]$ in a further, commuting variable $s$, we obtain the algebra
$\Weyl{N}[s] \defas \Weyl{N} \tp_{\C} \C[s]$.
Later, we will also consider the case $\Weyl{N}_k \defas \Weyl{N} \tp_{\C} k$ of coefficients that are rational functions $k\defas \C(s)$.
\begin{defn}\label{def:f^s}
	Given a polynomial $f \in \C[x]$, the $\Weyl{N}[s]$-module $\C[s,x,1/f]\cdot f^s$ consists of elements of the form $(p/f^k) \cdot f^s$ (where $p\in \C[s,x]$, $k\in \N_0$) with the $\Weyl{N}[s]$-action 
	\begin{equation}
		q \left( \frac{p}{f^k} \cdot f^s \right)
		\defas \frac{q p}{f^k} \cdot f^s
		,\quad
		\partial_i \left( \frac{p}{f^k} \cdot f^s \right)
		\defas \frac{f(\partial_i p) + (s-k) p (\partial_i f)}{f^{k+1}} \cdot f^s
		\label{eq:f^s-def}%
	\end{equation}
	for any polynomial $q\in\C[s,x]$. This is just the natural action by multiplication and differentiation, $\partial_i \mapsto \partial/(\partial x_i)$. We abbreviate $f^{s+k} \defas f^k \cdot f^s$ for $k \in \Z$.
	The cyclic submodule generated by $f^s$ is denoted as $\Weyl{N}[s] f^s$.
\end{defn}
\begin{defn}
	\label{def:annihilator}%
	The $s$-parametric annihilators of a polynomial $f$ in $x_1,\ldots,x_N$ are the elements of the (left) ideal of operators in $\Weyl{N}[s]$ whose action on $f^{s}$ is zero:
	\begin{equation*}
		\Ann_{\Weyl{N}[s]}(f^{s})
		\defas \setexp{P\in \Weyl{N}[s]}{P f^{s}=0}.
		\label{eq:annihilator}%
	\end{equation*}
\end{defn}
\begin{example}
	Given $f\in\C[x_1,\ldots,x_N]$, we always have the trivial annihilators
	\begin{equation}
		f \partial_i - s (\partial_i f) 
		\in \Ann\left( f^s \right)
		\quad\text{for}\quad
		1 \leq i \leq N.
		\label{eq:triv-Ann}%
	\end{equation}
\end{example}
Note that an annihilator ideal is a module over $\Weyl{N}[s]$, that is, whenever $P f^s=0$, also $(QP)f^s = Q(P f^s) = 0$ for any operator $Q\in\Weyl{N}[s]$.
These ideals are studied in $D$-module theory~\cite{SaitoSturmfelsTakayama,Coutinho:Primer} and in principle annihilators can be computed algorithmically with computer algebra systems such as {\Singular}~\cite{Singular,Andetal,Andres:Master}. 
For the study of the Feynman integrals~\eqref{eq:Lee-Pom}, we set $s=-\Dim/2$ and $f=\G=\U+\F$ is the polynomial from \eqref{eq:def-U-F}. Via the Mellin transform, the elements of $\Weyl{N}[s] \G^s$ are the integrands of shifts of the Feynman integral $\Mellin{\G^s}$.
Due to lemma~\ref{lem:Mellin-properties}, every annihilator
$
	P \in \Ann_{\Weyl{N}[s]}\left(\G^s\right)
$
corresponds to an identity of Feynman integrals with shifted indices $\nu$.
\begin{defn}
	\label{def:Shifts}%
	The algebra $\Shifts{N}$ of shift operators in $N$ variables is defined by
\begin{equation}
	\Shifts{N}
	\defas
	\C\left\langle \PlusD{1},\ldots,\PlusD{N},\Minus{1},\ldots,\Minus{N}
	\ \ |\ \ 
	[\PlusD{i},\PlusD{j}]=[\Minus{i},\Minus{j}]=0
	\ \text{and}\ 
	[\PlusD{i},\Minus{j}] = \delta_{i,j}
	\right\rangle
	.
	\label{eq:Shifts}%
\end{equation}
\end{defn}
This algebra is clearly isomorphic to the Weyl algebra $\Weyl{N}$, since under the identifications $\PlusD{i} \leftrightarrow \partial_i$ and $\Minus{j} \leftrightarrow x_j$ the commutation relations are identical. In fact, a different isomorphism is given by $\PlusD{i} \leftrightarrow x_i$ and $\Minus{j} \leftrightarrow -\partial_j$, and it is this identification that corresponds to the Mellin transform (see lemma~\ref{lem:Mellin-properties}).
We therefore denote it by\footnote{
	The use of the symbol $\Mellin{\cdot}$ for both the analytic Mellin transform \eqref{eq:Mellin} and the isomorphism \eqref{eq:Mellin-iso} should not lead to any confusion. It is suggestive of, and justified by, corollary~\ref{cor:Mellin-compatibility}.
}
\begin{equation}
	\Mellin{\cdot} \colon \Weyl{N} \stackrel{\cong}{\longrightarrow} \Shifts{N}
	,\quad
	P \mapsto \Mellin{P} \defas
	\restrict{P}{
		x_i \mapsto \PlusD{i}, \partial_i \mapsto - \Minus{i}
		\ \text{for all $1\leq i \leq N$}
	}
	.
	\label{eq:Mellin-iso}%
\end{equation}

The conceptual difference between $\Shifts{N}$ and $\Weyl{N}$ is that we think of $\Weyl{N}$ as acting on functions $f(x)$ by differentiation, whereas $\Shifts{N}$ acts on functions $F(\nu)$ of a different set $\nu=(\nu_1,\ldots,\nu_N)$ of variables (the indices of Feynman integrals) by shifts of the argument and, in case of $\PlusD{i}$, a multiplication with $\nu_i$:
\begin{equation}
	(\Minus{i} F)(\nu)  \defas F(\nu - \uv{i})
	\quad\text{and}\quad
	(\PlusD{i}  F)(\nu) \defas \nu_i F(\nu + \uv{i})
	.
	\label{eq:def-shift-action}%
\end{equation}
These operators are used very frequently in the literature on IBP relations, as for example in \cite{ApplyingGroebner,Smirnov:AnalyticToolsForFeynmanIntegrals,Lee:GroupStructureIBP,Grozin:IBP}.
An important role is played by the operators
\begin{equation}
	\nOp{i} \defas \PlusD{i} \Minus{i},
	\quad\text{which act by simple multiplication:}\quad
	(\nOp{i} F)(\nu) = \nu_i F(\nu)
	.
	\label{eq:def-n}%
\end{equation}
Their commutation relations are
\begin{equation}
	[\PlusD{i}, \nOp{j}]
	=  \PlusD{i} \delta_{i,j}
	\quad\text{and}\quad
	[\Minus{i}, \nOp{j}]
	= - \Minus{i} \delta_{i,j}
	.
	\label{eq:n-commutators}%
\end{equation}
In terms of the $\Shifts{N}$ action \eqref{eq:def-shift-action}, we can rephrase the essence of lemma~\ref{lem:Mellin-properties} as
\begin{cor}
	\label{cor:Mellin-compatibility}%
	The (twisted) Mellin transform \eqref{eq:Mellin} is compatible with the actions of the algebras $\Weyl{N}$ and $\Shifts{N}$ under their identification \eqref{eq:Mellin-iso}. In other words,
	\begin{equation}
		\Mellin{P f^s}
		= \Mellin{P} \left[ \Mellin{f^s} \right]
		\quad\text{for all operators}\quad
		P \in \Weyl{N}[s]
		.
		\label{eq:Mellin-compatibility}%
	\end{equation}
\end{cor}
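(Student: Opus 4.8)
The plan is to bootstrap the identity \eqref{eq:Mellin-compatibility} from the three elementary rules of lemma~\ref{lem:Mellin-properties}, which are exactly the statement \eqref{eq:Mellin-compatibility} for the individual generators $x_i$ and $\partial_i$ of the Weyl algebra. Indeed, the multiplication rule reads $\Mellin{x_i f^s}(\nu) = \nu_i \Mellin{f^s}(\nu + \uv{i}) = (\PlusD{i}\Mellin{f^s})(\nu)$ by \eqref{eq:def-shift-action}, so $\Mellin{x_i f^s} = \Mellin{x_i}[\Mellin{f^s}]$ since $\Mellin{x_i} = \PlusD{i}$ under \eqref{eq:Mellin-iso}; likewise the differentiation rule gives $\Mellin{\partial_i f^s} = -\Minus{i}[\Mellin{f^s}] = \Mellin{\partial_i}[\Mellin{f^s}]$. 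Because \eqref{eq:Mellin-iso} is by construction an algebra isomorphism $\Weyl{N}\to\Shifts{N}$, extended $\C[s]$-linearly to $\Weyl{N}[s]$, it then suffices to promote these generator identities to arbitrary operators, which I would do by induction on the length of $P$ as a word in $x_1,\ldots,x_N,\partial_1,\ldots,\partial_N$.

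For the inductive step, write $P = g\,Q$ where $g$ is one of the generators and $Q\in\Weyl{N}[s]$ is a shorter word. Set $h \defas Q f^s$, which by definition~\ref{def:f^s} is an element of $\C[s,x,1/f]\cdot f^s$ and hence an honest function on $\R_+^N$ to which lemma~\ref{lem:Mellin-properties} applies. Applying the relevant single-generator rule to $h$ gives $\Mellin{g\,h} = \Mellin{g}[\Mellin{h}]$, and the inductive hypothesis $\Mellin{h} = \Mellin{Q}[\Mellin{f^s}]$ then yields
\[
\Mellin{P f^s} = \Mellin{g}\big[\Mellin{Q}[\Mellin{f^s}]\big] = (\Mellin{g}\,\Mellin{Q})[\Mellin{f^s}] = \Mellin{g Q}[\Mellin{f^s}] = \Mellin{P}[\Mellin{f^s}],
\]
where the second equality is associativity of the $\Shifts{N}$-action and the third uses that $\Mellin{\cdot}$ is an algebra homomorphism. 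Passing from words to a general $P = \sum_{\alpha\beta} c_{\alpha\beta}(s)\,x^{\alpha}\partial^{\beta}$ is then immediate from the linearity rule of lemma~\ref{lem:Mellin-properties}, the coefficients $c_{\alpha\beta}(s)$ passing through the Mellin integral as scalars.

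The one point that needs care, and which I regard as the main obstacle, is that the rules of lemma~\ref{lem:Mellin-properties} are being applied to the intermediate functions $h = Q f^s$ rather than to $f^s$ alone; in particular the differentiation rule rests on the vanishing of boundary terms in its integration by parts, which was justified only inside the common domain of convergence of the integrals involved. I would dispose of this exactly as in remark~\ref{rem:continuation}: perform the whole finite chain of manipulations for those $(\Dim,\nu)$ at which every twisted Mellin integral appearing in the induction converges absolutely, so that each invocation of lemma~\ref{lem:Mellin-properties} is legitimate, and then appeal to the uniqueness of meromorphic continuation to conclude that \eqref{eq:Mellin-compatibility} holds as an identity of meromorphic functions on all of $\C^{1+N}$. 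No analytic estimate beyond this is needed, since every individual step is a manipulation of convergent integrals.
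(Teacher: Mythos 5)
Your proof is correct and is essentially the paper's own argument made explicit: the paper states corollary~\ref{cor:Mellin-compatibility} as a direct rephrasing of lemma~\ref{lem:Mellin-properties}, and your induction on word length (generator rules for $x_i\mapsto\PlusD{i}$, $\partial_i\mapsto-\Minus{i}$, plus $\C[s]$-linearity) is exactly the routine iteration that this rephrasing presupposes. Your handling of convergence for the intermediate elements $Qf^s\in\C[s,x,1/f]\cdot f^s$ via analytic continuation is likewise precisely the mechanism the paper invokes in remark~\ref{rem:continuation}.
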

\begin{cor}
	\label{cor:FI-annihilators}%
	Every annihilator $P \in \Ann_{\Weyl{N}[\Dim]}\left( \G^{s} \right)$ of $\G^{s}=\G^{-\Dim/2}$ yields a shift relation $\Mellin{P} \in \Shifts{N}[s] \defas \Shifts{N} \tp \C[s]$ of the Feynman integral $\FImel$ from \eqref{eq:FI-Mellin}: $\Mellin{P} \FImel = 0$.
\end{cor}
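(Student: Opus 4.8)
The plan is to obtain the statement essentially for free from the compatibility result of Corollary~\ref{cor:Mellin-compatibility}, together with the defining identity $\FImel=\Mellin{\G^s}$ of \eqref{eq:FI-Mellin}. First I would record a small bookkeeping point: upon setting $s=-\Dim/2$ the coefficient rings coincide, $\C[\Dim]=\C[s]$, so that $\Weyl{N}[\Dim]=\Weyl{N}[s]$, and the Mellin isomorphism \eqref{eq:Mellin-iso} extends $\C[s]$-linearly (treating $s$ as central) to an isomorphism $\Weyl{N}[s]\stackrel{\cong}{\longrightarrow}\Shifts{N}[s]$. This guarantees that $\Mellin{P}$ is a well-defined element of $\Shifts{N}[s]$ for every $P\in\Ann_{\Weyl{N}[\Dim]}(\G^s)$, so that the conclusion $\Mellin{P}\FImel=0$ even makes sense.

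The core of the argument is then a single chain of equalities. By hypothesis $P\G^s=0$ in the module $\C[s,x,1/\G]\cdot\G^s$ of Definition~\ref{def:f^s}. Since that $\Weyl{N}[s]$-action is by literal multiplication and differentiation with $\partial_i\mapsto\partial/\partial x_i$, the identity $P\G^s=0$ is an honest identity of functions on the positive orthant $\R_+^N$; in particular the function $P\,\G^{-\Dim/2}$ vanishes pointwise there. Applying the analytic twisted Mellin transform \eqref{eq:Mellin} to both sides and invoking linearity (lemma~\ref{lem:Mellin-properties}) yields $\Mellin{P\G^s}=0$. Corollary~\ref{cor:Mellin-compatibility} rewrites the left-hand side as $\Mellin{P}\left[\Mellin{\G^s}\right]$, and \eqref{eq:FI-Mellin} identifies $\Mellin{\G^s}=\FImel$. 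Hence $\Mellin{P}\FImel=0$, which is the desired shift relation.

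The only step requiring genuine care — and the one I expect to be the main, if modest, obstacle — is the passage from the algebraic annihilation $P\G^s=0$ to a statement about the analytic Mellin integral, which a priori converges only on a nonempty open subset of $(\Dim,\nu)\in\C^{1+N}$. I would handle this exactly as in remark~\ref{rem:continuation}: restrict to the convergence region where $\Mellin{\G^{-\Dim/2}}$ and all the shifted integrals $\Mellin{(x^\alpha\partial^\beta)\,\G^{-\Dim/2}}$ appearing in the finite expansion of $P$ are absolutely convergent. There the integration-by-parts boundary terms underlying the differentiation rule of lemma~\ref{lem:Mellin-properties} indeed vanish, so that Corollary~\ref{cor:Mellin-compatibility} applies verbatim and the chain of equalities above is rigorous. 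The resulting identity $\Mellin{P}\FImel=0$ then extends to all of $\C^{1+N}$ by uniqueness of the meromorphic continuation, giving the shift relation in full generality.
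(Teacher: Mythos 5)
Your proposal is correct and follows exactly the paper's (implicit) argument: the corollary is an immediate consequence of Corollary~\ref{cor:Mellin-compatibility} applied to $P\G^s=0$ together with $\FImel=\Mellin{\G^s}$ from \eqref{eq:FI-Mellin}, with convergence issues delegated to remark~\ref{rem:continuation} just as you do. Your explicit handling of the analytic continuation step is precisely the justification the paper invokes when it says one need not worry about the domain of convergence in these algebraic derivations.
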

\begin{example}
	\label{ex:bubble-relations}%
	For the bubble graph in figure~\ref{fig:bubble} with $\G=x_1+x_2-p^2 x_1 x_2$ from example~\ref{ex:bubble-parametric},
\begin{equation*}
	(-p^2) x_1(s-x_1\partial_1) + (s-x_1\partial_1-x_2\partial_2)
	   \in \Ann(\G^s)
\end{equation*}
	is easily checked to annihilate $\G^s$. We therefore get the shift relation
\begin{equation*}
	(s+\nOp{1} +\nOp{2})\FImel
	= p^2 \PlusD{1} (s+\nOp{1})\FImel
	= p^2 (s+\nOp{1}+1)\PlusD{1} \FImel
	.
\end{equation*}
	According to \eqref{eq:def-n}, this relation can also be written as
	\begin{equation}
		(-p^2) \nu_1 \FImel(\nu_1+1,\nu_2)
		= - \frac{s+\nu_1+\nu_2}{s+\nu_1+1} \FImel(\nu_1,\nu_2)
		.
		\label{eq:bubble-relation-mel}%
	\end{equation}
\end{example}
We prefer to work with the modified Feynman integral $\FImel$ from \eqref{eq:FI-Mellin}, because it is directly related to the Mellin transform. However, it is straightforward to translate relations between $\FImel$ into relations for the actual Feynman integral $\FI$.
Namely, if 
$
	P
	=\sum_{\alpha,\beta} c_{\alpha,\beta} x^{\alpha} \partial^{\beta} 
	\in \Ann_{\Weyl{N}[s]}\left( \G^s \right)
$, we substitute \eqref{eq:FI-Mellin} to see
\begin{equation*}
	0 = \Mellin{P} \FImel
	=\sum_{\alpha,\beta} c_{\alpha,\beta}
	\left(\prod_{i,j=1}^N ( \PlusD{i})^{\alpha_i} (-\Minus{j})^{\beta_j}\right) \frac{\Gamma(-s-\sdc)}{\Gamma(-s)} \FI
\end{equation*}
and then recall from \eqref{eq:sdc} that $ \sdc  = \sum_i \nu_i + \Loops s$ to conclude
\begin{equation}
	0 = \frac{\Gamma(-s)\Mellin{P} \FImel}{\Gamma(-s-\sdc)} 
	=\sum_{\alpha,\beta} c_{\alpha,\beta}
	\frac{\Gamma(-s-\sdc-\abs{\alpha}+\abs{\beta})}{\Gamma(-s-\sdc)}
	\left(\prod_{i,j=1}^N ( \PlusD{i})^{\alpha_i} (-\Minus{j})^{\beta_j}\right)  \FI
	.
	\label{eq:FImel-relation-to-FI}%
\end{equation}
Notice that the fraction $\Gamma(-s-\sdc-\abs{\alpha}+\abs{\beta})/\Gamma(-s-\sdc)$ is a rational function in $\sdc$ (and thus in $\nu$ and $s$), due to the functional equation for $\Gamma$, since $\abs{\alpha}$ and $\abs{\beta}$ are integers.
\begin{example}
	Substituting $\FImel(\nu_1,\nu_2)=\FI(\nu_1,\nu_2) \Gamma(-2s-\nu_1-\nu_2)/\Gamma(-s)$ and $\FImel(\nu_1+1,\nu_2) =\FI(\nu_1+1,\nu_2) \Gamma(-2s-\nu_1-1-\nu_2)/\Gamma(-s)$ from \eqref{eq:FI-Mellin} into \eqref{eq:bubble-relation-mel} results in
	\begin{equation*}
		(-p^2) \nu_1 \FI(\nu_1+1,\nu_2) = \frac{(s+\nu_1+\nu_2)(2s+\nu_1+\nu_2+1)}{s+\nu_1+1} \FI(\nu_1,\nu_2)
		,
	\end{equation*}
	which also follows from the expression \eqref{eq:bubble-gamma} of $\FI(\nu)$ in terms of $\Gamma$-functions.
\end{example}
Let us recapitulate these observations: By the Mellin transform \eqref{eq:Mellin-iso}, every annihilator 
$P \in \Ann_{\Weyl{N}[s]}(\G^s)$
gives rise to a linear relation $\Mellin{P} \FImel=0$ of the rescaled Feynman integral $\FImel$ via a shift operator $\Mellin{P}\in \Shifts{N}[s]$.
Also we noted that according to $\FImel = \FI\, \Gamma(-s-\sdc)/\Gamma(-s) $ from \eqref{eq:FI-Mellin}, such a relation is equivalent to a relation of the original Feynman integral $\FI$ as in \eqref{eq:FImel-relation-to-FI}.

On the other hand, if we are given a shift relation $R\FImel = 0$ where $R\in \Shifts{N}[s]$, then $R=\Mellin{P}$ corresponds to the differential operator $P=\Mellin[-1]{R}\in \Weyl{N}[s]$ under the Mellin transform \eqref{eq:Mellin-iso}. From the vanishing $\Mellin{P \G^s}=0$ we can conclude that this operator must be an annihilator,
$P \in \Ann_{\Weyl{N}[s]}(\G^s)$. This follows from
\begin{thm}[{Inverse Mellin transform, e.g.~\cite[Theorem~3.5]{Brychkov}}]
	\label{thm:Mellin-inverse}%
	Suppose that the (twisted) Mellin transform $f^{\star}(\nu) \defas \Mellin{f}(\nu)$ of $f(x)$ from \eqref{eq:Mellin} converges in a domain of the form $a_i\leq \Realteil(\nu_i) \leq b_i$ for all $1\leq i \leq N$, where $a,b \in \R^N$.
	Then its inverse is given by
	\begin{equation}
		f(x)
		= \Mellin[-1]{f^\star}(x)
		= \left(\prod_{i=1}^N 
			\int_{\sigma_i+\iu\R}
			\hspace{-4mm}\dd \nu_i\ 
			\frac{\Gamma(\nu_i)}{ x_i^{\nu_i} \cdot 2\pi\iu}
		\right)
		f^{\star}(\nu),
		\quad\text{where}\quad
		x \in \R_+^N.
		\label{eq:Mellin-inverse}%
	\end{equation}
	This multiple integral along lines parallel to the imaginary axis converges for $a_i\leq \sigma_i \leq b_i$ ($1\leq i \leq N$) and does not depend on the concrete choice of $\sigma_i$.
\end{thm}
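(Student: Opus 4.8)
The plan is to deduce the statement from the classical one-dimensional Mellin inversion theorem, which itself reduces to Fourier inversion. First I would separate variables: by Fubini's theorem the $N$-fold twisted transform \eqref{eq:Mellin} factorizes over the coordinates $x_1,\ldots,x_N$, so it suffices to invert in each $x_i$ separately and then assemble the one-dimensional inversions into the product \eqref{eq:Mellin-inverse}. This reduces everything to the case $N=1$.

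In one variable I would absorb the $\Gamma$-twist by passing to the ordinary Mellin transform $\tilde f(\nu) \defas \Gamma(\nu)\, \Mellin{f}(\nu) = \int_0^\infty x^{\nu-1} f(x)\,\dd x$; then the integrand in \eqref{eq:Mellin-inverse} is simply $\Gamma(\nu) x^{-\nu} f^{\star}(\nu) = x^{-\nu}\tilde f(\nu)$, and the claim becomes the familiar formula $f(x) = \frac{1}{2\pi\iu}\int_{\sigma-\iu\infty}^{\sigma+\iu\infty} x^{-\nu}\tilde f(\nu)\,\dd\nu$. I would prove this by the exponential substitution $x = e^{-t}$, $\nu = \sigma + \iu\tau$, under which $\tilde f(\sigma+\iu\tau) = \int_{\R} e^{-\sigma t} f(e^{-t})\, e^{-\iu\tau t}\,\dd t$ is, up to normalization, the Fourier transform of $g(t) \defas e^{-\sigma t} f(e^{-t})$. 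The Fourier inversion theorem recovers $g$, and undoing the substitution (using $\dd\nu = \iu\,\dd\tau$ and $x^{-\nu} = e^{(\sigma+\iu\tau)t}$) returns exactly the asserted contour integral along $\Realteil(\nu)=\sigma$.

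It remains to address the two analytic assertions in the statement. The independence of $\sigma \in [a,b]$ follows from Cauchy's theorem: the integrand $x^{-\nu}\tilde f(\nu)$ is holomorphic throughout the open strip $a < \Realteil(\nu) < b$ (this is where $\tilde f$, equivalently $\Mellin{f}$, converges and is analytic), so shifting the vertical contour between two admissible abscissae encloses no singularity, provided the horizontal segments at $\Imaginaerteil(\nu)\to\pm\infty$ vanish in the limit. I expect the genuine work to lie precisely in these analytic hypotheses: one must upgrade the mere convergence of \eqref{eq:Mellin} on the closed strip to enough decay of $\tilde f$ along vertical lines so that the inverse integral converges absolutely and Fourier inversion applies pointwise, and one must assume the usual regularity of $f$ (e.g.\ continuity or bounded variation) so that no averaging over jump discontinuities intervenes. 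Since these are the standard conditions of the classical theory and the result is well known, I would cite \cite{Brychkov} for the sharp hypotheses rather than reproduce the estimates; in our applications the transform is only ever manipulated formally inside the convergence domain secured by remark~\ref{rem:continuation}.
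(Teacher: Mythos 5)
Your proposal cannot be checked against an in-paper argument, because the paper does not prove theorem~\ref{thm:Mellin-inverse} at all: it is quoted as a classical result, with hypotheses and proof delegated to the cited reference \cite{Brychkov}. What you have written is the standard textbook proof underlying that citation, and it is correct in outline: absorbing the $\Gamma$-twist to recover the ordinary Mellin transform $\tilde f(\nu)=\Gamma(\nu)\,\Mellin{f}(\nu)$, the substitution $x=e^{-t}$ converting Mellin inversion into Fourier inversion, and Cauchy's theorem for independence of the abscissa $\sigma$ within the analyticity strip. Your closing paragraph also rightly locates the genuine analytic content (decay of $\tilde f$ along vertical lines so that the inverse integral converges and pointwise Fourier inversion applies, plus regularity of $f$) exactly where the paper's citation leaves it; this is consistent with how the theorem is actually used, namely only inside the convergence domain secured by remark~\ref{rem:continuation}. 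One phrasing should be repaired: the $N$-fold transform \eqref{eq:Mellin} does not \emph{factorize} over the coordinates for a general $f$ --- it does so only for product functions. What Fubini gives you is that \eqref{eq:Mellin} is an \emph{iterated} one-dimensional transform, so inversion must be performed one variable at a time, treating the remaining variables as parameters; this requires the intermediate, partially inverted functions to again satisfy the one-dimensional hypotheses (absolute convergence and suitable decay, uniformly in the parameters), which is one more of the analytic details properly left to the reference rather than a flaw in the strategy.
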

So not only do we get relations for Feynman integrals from parametric annihilators of $\G^s$, but in fact \emph{every} relation of the form $P\FImel=0$ for a polynomial shift operator $P\in\Shifts{N}[s]$ does arise in this way.
\begin{cor}
	\label{cor:Ann-iso}%
	Let $\Ann_{\Shifts{N}[s]}(\FImel) \subseteq \Shifts{N}[s]$ denote the $\Shifts{N}[s]$-module of polynomial shift operators that annihilate a (rescaled) Feynman integral~\eqref{eq:FI-Mellin}.
	Then the Mellin transform~\eqref{eq:Mellin-iso} restricts to a bijection between these relations and parametric annihilators:
	\begin{equation}
		\Mellin{\cdot}\colon 
		\Ann_{\Weyl{N}[s]}\left( \G^s \right) 
		\stackrel{\isomorph}{\longrightarrow} 
		\Ann_{\Shifts{N}[s]}\left( \FImel \right)
		.
		\label{eq:Ann-iso}%
	\end{equation}
\end{cor}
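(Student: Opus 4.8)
The plan is to dispose of the easy half and then concentrate on surjectivity. Since $\Mellin{\cdot}\colon \Weyl{N}[s] \longrightarrow \Shifts{N}[s]$ is already an algebra isomorphism by \eqref{eq:Mellin-iso}, its restriction to any subset is automatically injective, so only well-definedness and surjectivity require an argument. For well-definedness I would simply invoke the compatibility \eqref{eq:Mellin-compatibility}: if $P\G^s = 0$ then $\Mellin{P}\FImel = \Mellin{P\G^s} = 0$, so $\Mellin{P}$ annihilates $\FImel$. This is exactly corollary~\ref{cor:FI-annihilators}, and together with injectivity it shows that the map of \eqref{eq:Ann-iso} is a well-defined injection.

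The substance is surjectivity. Given $R \in \Ann_{\Shifts{N}[s]}(\FImel)$, I set $P \defas \Mellin[-1]{R} \in \Weyl{N}[s]$, so that $R = \Mellin{P}$, and apply \eqref{eq:Mellin-compatibility} once more to obtain
\begin{equation*}
	\Mellin{P\G^s} = \Mellin{P}\FImel = R\FImel = 0
\end{equation*}
as a meromorphic function of $\nu$. The goal is to upgrade the vanishing of this Mellin transform to the vanishing of $P\G^s$ inside the module $\C[s,x,1/\G]\cdot\G^s$ of definition~\ref{def:f^s}. Writing $P\G^s = h\cdot\G^s$ with $h\in\C[s,x,1/\G]$, the vanishing of $P\G^s$ in the module is equivalent to $h=0$, since $\G^s$ is a free generator. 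To reach $h=0$ I would invoke theorem~\ref{thm:Mellin-inverse}: by remark~\ref{rem:continuation} we may fix kinematics and work with $(s,\nu)$ in a non-empty domain where the twisted Mellin transform of $h\cdot\G^s$ converges on a product of vertical strips $a_i \le \Realteil(\nu_i)\le b_i$. There the function $h(s,x)\,\G(x)^s$ on $\R_+^N$ is recovered from its (vanishing) transform by the inversion formula \eqref{eq:Mellin-inverse}, forcing $h(s,x)\,\G(x)^s = 0$ for all $x\in\R_+^N$. As $\G(x)^s\neq 0$ there, we get $h(s,x)=0$ on the open set $\R_+^N$, so the rational function $h$ vanishes identically. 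Hence $P\G^s=0$, i.e.\ $P\in\Ann_{\Weyl{N}[s]}(\G^s)$, and $\Mellin{P}=R$ exhibits $R$ as the image of an annihilator.

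The main obstacle is the analytic input needed to apply theorem~\ref{thm:Mellin-inverse}: one must ensure that $h\cdot\G^s$, which carries the negative power $\G^{-k}$ coming from the denominators in definition~\ref{def:f^s}, is a function whose twisted Mellin transform actually converges on some product of strips. This is where the meromorphicity theory behind remark~\ref{rem:continuation} does the work. Choosing $\Realteil(s)$ sufficiently negative (equivalently $\Dim$ large) makes $\G^{s-k}$ decay fast enough at large $x$, while $\Realteil(\nu_i)$ large controls the behaviour at the boundary of $\R_+^N$; the resulting transform, a priori defined only on that domain, then extends to the meromorphic function already known to be identically zero. Once convergence on a strip is secured, injectivity of the Mellin transform is immediate, and the rest is the elementary observation that a rational function vanishing on $\R_+^N$ must be zero.
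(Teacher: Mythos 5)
Your proposal is correct and follows essentially the same route as the paper: well-definedness is corollary~\ref{cor:FI-annihilators} (Mellin compatibility), injectivity is automatic from the algebra isomorphism \eqref{eq:Mellin-iso}, and surjectivity is obtained by setting $P=\Mellin[-1]{R}$ and invoking the inverse Mellin transform of theorem~\ref{thm:Mellin-inverse} to upgrade $\Mellin{P\G^s}=0$ to $P\G^s=0$. Your added care in writing $P\G^s=h\cdot\G^s$ with $h\in\C[s,x,1/\G]$ and arguing that a rational function vanishing on $\R_+^N$ is identically zero only makes explicit what the paper leaves implicit.
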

Instead of focussing on the annihilators themselves, we can also look at the $\Weyl{N}[s]$-module $\Weyl{N}[s] \cdot \G^s \isomorph \Weyl{N}[s]/\Ann_{\Weyl{N}[s]}(\G^s)$ of the integrands and the $\Shifts{N}[s]$-module $\Shifts{N}[s] \cdot \FImel \isomorph \Shifts{N}[s]/\Ann_{\Shifts{N}[s]}(\FImel)$ of all (shifted) Feynman integrals. The Mellin transform gives an isomorphism
\begin{equation}
	\Mellin{\cdot}\colon
	\Weyl{N}[s]\cdot \G^s \stackrel{\isomorph}{\longrightarrow} \Shifts{N}[s] \cdot \FImel
	.
	\label{eq:integrands-FIs-iso}%
\end{equation}

%================================================================================
\subsection{On the correspondence to momentum space}
\label{sec:ibp-momentum-space}%
%================================================================================

In this section we first recall the integration by parts (IBP) relations for Feynman integrals that are derived in momentum space, following \cite{Grozin:IBP}. We then note that these provide a special set of parametric annihilators and discuss some open questions in regard of this comparison of IBPs in parametric and momentum space.

Since the denominators $\Den_a$ from \eqref{eq:FI-momentum} are quadratic forms in the $\AllMoms=\Loops+\ExtMoms$ momenta $q=(q_1,\ldots,q_{\AllMoms}) \defas (\LoopMom_1,\ldots,\LoopMom_{\Loops},\ExtMom_1,\ldots,\ExtMom_{\ExtMoms})$, we can write them in the form
\begin{equation}
	\Den_{a}
	=\sum_{ \set{i,j} \in \bilabels} \spMat_{a}^{\set{i,j}} s_{\set{i,j}}+\lambda_{a}
	\label{eq:bilinear-decomposition}%
\end{equation}
such that the coefficients $\spMat_a^{\set{i,j}}$ and $\lambda_a$ are independent of loop momenta and the pairs
\begin{equation}
	\bilabels
	\defas \setexp{ \set{i,j} }{1\leq i \leq \Loops\  \text{and}\ 1\leq j \leq \AllMoms}
	\label{eq:bilabels}%
\end{equation}
label the $\abs{\bilabels} = \frac{\Loops(\Loops+1)}{2}+\Loops\ExtMoms$ loop-momentum dependent scalar products 
\begin{equation}
	s_{\set{i,j}} \defas q_{i}q_{j} = q_{j}q_{i}.
	\label{eq:scalar-products}%
\end{equation}
In order to express the IBP relations coming from momentum space in terms of the integrals \eqref{eq:FI-momentum}, we need to assume, for this and the following section, that we consider $N=\abs{\bilabels}$ denominators such that the $N\times N$ square matrix $\spMat$ defined by \eqref{eq:bilinear-decomposition} is invertible.\footnote{%
	For integrals associated to Feynman graphs, the number of edges is often less than $\abs{\bilabels}$. In this case, one augments the list of inverse propogators by an appropriate choice of additional quadratic forms in the loop momenta, called \emph{irreducible scalar products} (ISPs), to achieve $N=\abs{\bilabels}$.
	See example~\ref{ex:doubletri-ISP}.
}
We think of $\spMat_{a}^{\set{i,j}}$ as the element of $\spMat$ in row $a$ and column $\set{i,j}$ and write $\spMat_{\set{i,j}}^{a}$ for the entry in row $\set{i,j}$ and column $a$ of $\spMat^{-1}$,
such that the inverse of \eqref{eq:bilinear-decomposition} can be written as
\begin{equation}
	s_{\set{i,j}} = \sum_{a=1}^{N}\spMat_{\set{i,j}}^{a}\left(\Den_{a}-\lambda_{a}\right)
	\quad\text{for all}\quad
	\set{i,j} \in \bilabels
	.
	\label{eq:sp-from-props}%
\end{equation}
We are interested in relations of the Feynman integral $\FI$ from \eqref{eq:FI-momentum}, that is,
\begin{equation}
	\FI(\nu_{1},\ldots,\nu_{N})
	= \left(\prod_{j=1}^{\Loops}\int\frac{\dd[\Dim] \LoopMom_{j}}{\iu \pi^{\Dim/2}}\right) f
	\quad\text{with the integrand}\quad
	f = \prod_{i=1}^{N} \Den_{i}^{-\nu_{i}}
	.
	\label{eq:momentum-integrand}%
\end{equation}
\begin{defn}
	\label{def:mom-ibp-generators}%
	The \emph{momentum space IBP relations} of $\FI(\nu_{1},\ldots,\nu_{N})$ are those relations between scalar Feynman integrals that are obtained from Stokes' theorem
	\begin{equation}
		\left(\prod_{n=1}^{\Loops}\int \dd[\Dim] \LoopMom_{n}\right) \momIBP{i}{j} f
		= 0,
		\label{eq:momentum-stokes}%
	\end{equation}
	where the operators $\momIBP{i}{j}$ are defined in terms of the momenta\footnote{%
		Recall that $q_{1},\ldots, q_{\Loops}$ denote the loop momenta
		whereas $q_{\Loops+1},\ldots, q_{\AllMoms}$ are the external momenta.%
	} as
	\begin{equation}
		\momIBP{i}{j} 
		\defas 
		\frac{\partial}{\partial q_{i}} \cdot q_{j}
		=
		\sum_{\mu=1}^{\Dim} \frac{\partial}{\partial q_{i}^{\mu}}q_{j}^{\mu}
		\quad\text{for}\quad
		i\in\set{1,\ldots,\Loops},
		j\in\set{1,\ldots,M}.
		\label{eq:momIBP}%
	\end{equation}
\end{defn}
The following, explicit form of these relations as difference equations is essentially due to Baikov~\cite{Baikov:ExplicitSolutions-nloop,Baikov:ExplicitSolutions-multiloop}; see also Grozin~\cite{Grozin:IBP}. For completeness, we include the proof in appendix~\ref{sec:Momentum-space}.
\begin{prop}\label{prop:IBP_Grozin}
	Given a set of $N=\abs{\bilabels}$ denominators $\Den$ such that the matrix $\spMat$ defined by \eqref{eq:bilinear-decomposition} is invertible,
	every momentum-space IBP relation can be written explicitly as
	\begin{equation}
		\shiftIBP{i}{j} \FI \left(\nu_{1},\ldots ,\nu_{N}\right) = 0
		\label{eq:shift-IBP-relation}%
	\end{equation}
	where $\shiftIBP{i}{j}$ denotes shift operators, indexed by $1\leq i \leq \Loops$ and $1\leq j \leq \AllMoms$, that are given by
	\begin{equation}
		\shiftIBP{i}{j}
		\defas \begin{cases}
		\displaystyle
			d\delta_{ij}-\sum_{a,b=1}^{N} C_{aj}^{bi}\PlusD{a}\left( \Minus{b}-\lambda_{b}\right)
			&\text{for $j\leq \Loops$ and}
		\\\displaystyle
			-\sum_{a,b=1}^{N} C_{aj}^{bi} \PlusD{a}\left(\Minus{b}-\lambda_{b}\right)-\sum_{a=1}^{N}\sum_{m=\Loops+1}^{\AllMoms}A_{a}^{\set{i,m}} q_{j}q_{m}\PlusD{a}
			&\text{for $j>\Loops$.}
		\end{cases}
		\label{eq:shift-IBP}%
	\end{equation}
	The coefficients $C^{bi}_{aj}$ are defined as
	\begin{equation}
		C_{aj}^{bi} 
		\defas
		\begin{cases}
			\sum_{m=1}^{\AllMoms} \spMat_{a}^{\{i,m\}} \spMat_{\set{m,j}}^{b}\left(1+\delta_{mi}\right) & \text{if $j\leq\Loops$ and}
		\\
			\sum_{m=1}^{\Loops} \spMat_{a}^{\{i,m\}} \spMat_{\set{m,j}}^{b}\left(1+\delta_{mi}\right) & \text{if $j>\Loops$.}
		\end{cases}
		\label{eq:IBP-coeffs}%
	\end{equation}
\end{prop}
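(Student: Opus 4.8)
The plan is to evaluate the vanishing integral \eqref{eq:momentum-stokes} directly, expanding the operator $\momIBP{i}{j}$ by the Leibniz rule and then recognising each resulting integrand manipulation as a shift operator acting on $\FI$. First I would write, for the integrand $f=\prod_a \Den_a^{-\nu_a}$ from \eqref{eq:momentum-integrand},
\begin{equation*}
	\momIBP{i}{j} f
	= \sum_{\mu=1}^{\Dim} \frac{\partial}{\partial q_i^\mu}\!\left( q_j^\mu f \right)
	= \Dim\, \delta_{ij}\, f + \sum_{\mu=1}^{\Dim} q_j^\mu \frac{\partial f}{\partial q_i^\mu}
	,
\end{equation*}
where the first term comes from $\sum_\mu \partial q_j^\mu/\partial q_i^\mu = \Dim\,\delta_{ij}$ and is nonzero only for the loop index $j=i\leq \Loops$; this is the source of the $\Dim\delta_{ij}$ contribution in \eqref{eq:shift-IBP}.

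For the remaining term I would differentiate $f$ and use the bilinear form \eqref{eq:bilinear-decomposition} of the denominators. The elementary identity $\sum_\mu q_j^\mu\,\frac{\partial}{\partial q_i^\mu}(q_k\cdot q_l) = \delta_{ik}(q_j\cdot q_l)+\delta_{il}(q_j\cdot q_k)$, summed over the unordered pairs indexing $\bilabels$, collapses to
\begin{equation*}
	\sum_{\mu=1}^{\Dim} q_j^\mu \frac{\partial \Den_a}{\partial q_i^\mu}
	= \sum_{m=1}^{\AllMoms} (1+\delta_{mi})\,\spMat_a^{\set{i,m}}\, (q_j\cdot q_m)
	,
\end{equation*}
the factor $(1+\delta_{mi})$ recording the diagonal pair $\set{i,i}$ (i.e.\ $q_i^2$). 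Hence $\sum_\mu q_j^\mu\,\frac{\partial f}{\partial q_i^\mu}=\sum_a(-\nu_a)\Den_a^{-1}\sum_m(1+\delta_{mi})\spMat_a^{\set{i,m}}(q_j\cdot q_m)\,f$.

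The final step converts integrand factors into shift operators via \eqref{eq:def-shift-action}. Raising an exponent, $\Den_a^{-1} f$ together with the prefactor $-\nu_a$, realises $-\PlusD{a}$, while multiplying by a denominator $\Den_b$ realises $\Minus{b}$ and a constant simply multiplies $\FI$. It therefore remains to rewrite the scalar products $q_j\cdot q_m=s_{\set{m,j}}$ in terms of denominators by \eqref{eq:sp-from-props}. When $j\leq\Loops$ every such product is loop-momentum dependent, so \eqref{eq:sp-from-props} applies for all $m$, and collecting $\spMat_a^{\set{i,m}}\spMat_{\set{m,j}}^{b}(1+\delta_{mi})$ produces exactly $-\sum_{a,b}C_{aj}^{bi}\PlusD{a}(\Minus{b}-\lambda_b)\FI$ with $C_{aj}^{bi}$ as in \eqref{eq:IBP-coeffs}. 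When $j>\Loops$ I would split the $m$-sum: for $m\leq\Loops$ the products are still loop dependent and again give the $C$-term, now with $m$ restricted to $\set{1,\ldots,\Loops}$, whereas for $m>\Loops$ the invariants $q_j\cdot q_m$ are constant external data that cannot be eliminated and survive as the second sum in \eqref{eq:shift-IBP}. Assembling the three pieces reproduces \eqref{eq:shift-IBP} exactly.

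The main obstacle is the combinatorial bookkeeping of the unordered index set $\bilabels$: getting the symmetry factor $(1+\delta_{mi})$ and the index placement in $\spMat_{\set{m,j}}^{b}$ right, and cleanly separating the loop-momentum-dependent scalar products (which feed back into denominators) from the constant external invariants that remain, precisely according to whether $j\leq\Loops$ or $j>\Loops$.
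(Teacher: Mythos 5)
Your proposal is correct and follows essentially the same route as the paper's own proof: expand $\momIBP{i}{j}$ by the Leibniz rule to get the $\Dim\delta_{ij}$ term, differentiate $f$ using the bilinear decomposition \eqref{eq:bilinear-decomposition} to produce $\sum_m(1+\delta_{mi})\spMat_a^{\set{i,m}}(q_j\cdot q_m)$, convert the loop-momentum-dependent scalar products back into denominators via \eqref{eq:sp-from-props} (splitting $m\leq\Loops$ from $m>\Loops$ when $j>\Loops$), and identify $\nu_a/\Den_a$ and $\Den_b$ with the shift operators $\PlusD{a}$ and $\Minus{b}$. No gaps beyond those already present at the paper's level of rigor.
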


\begin{cor}
	\label{cor:parIBP}%
	To every difference equation $\shiftIBP{i}{j} \FI=0$ from momentum space IBP, there corresponds a parametric annihilator $\parIBP{i}{j} \in \Ann_{\Weyl{N}[s]}\left( \G^s \right)$ of the form
	\begin{align}
		\parIBP{i}{j} 
		& = d\delta_{ij}+\sum_{a,b=1}^{N} C_{aj}^{bi}x_{a}\left(\partial_{b}+\lambda_{b}H\right)
		\quad\text{for}\quad i, j\leq \Loops \quad\text{and}
		\label{eq:parIBP<=L}\\
\parIBP{i}{j}
		& = \sum_{a,b=1}^{N} C_{aj}^{bi}x_{a}\left(\partial_{b}+\lambda_{b}H\right)-\sum_{a=1}^{N}\sum_{m=\Loops+1}^{\AllMoms} \spMat_{a}^{\set{i,m}}q_{j}q_{m}x_{a}H
			\quad\text{for}\quad
			i\leq \Loops<j,
		\label{eq:parIBP>L}%
	\end{align}
	where
	$
		H \defas \frac{(\Loops+1)\Dim}{2} + \sum_{c=1}^{N}x_{c}\partial_{c}
	$.
\end{cor}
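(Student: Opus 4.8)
The plan is to obtain $\parIBP{i}{j}$ from the momentum-space operator $\shiftIBP{i}{j}$ entirely through the Mellin dictionary, without ever differentiating $\G^s$ by hand. By Corollary~\ref{cor:Ann-iso} the map $\Mellin{\cdot}$ is a bijection between $\Ann_{\Weyl{N}[s]}(\G^s)$ and the shift relations of $\FImel$, so it suffices to prove $\Mellin{\parIBP{i}{j}}\FImel = 0$. The relation of Proposition~\ref{prop:IBP_Grozin} instead annihilates the unrescaled integral $\FI$, and the two are linked by the scalar factor $g(\sdc)\defas \Gamma(-s)/\Gamma(-s-\sdc)$ of \eqref{eq:FI-Mellin} (with $s=-\Dim/2$ and $\sdc=\sum_a\nOp{a}+\Loops s$ acting by multiplication), so $\FI=g(\sdc)\FImel$. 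The whole proof then reduces to the single conjugation identity
\[
	\Mellin{\parIBP{i}{j}} = g(\sdc)^{-1}\, \shiftIBP{i}{j}\, g(\sdc)
	\qquad\text{in } \Shifts{N}[s].
\]
Granting this, $g(\sdc)\Mellin{\parIBP{i}{j}}\FImel = \shiftIBP{i}{j}\,g(\sdc)\FImel = \shiftIBP{i}{j}\FI = 0$, and cancelling the multiplier $g(\sdc)$ (nonzero as a meromorphic function, cf.\ remark~\ref{rem:continuation}) yields $\Mellin{\parIBP{i}{j}}\FImel = 0$, hence $\parIBP{i}{j}\in\Ann_{\Weyl{N}[s]}(\G^s)$.

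To check the conjugation identity I would push $g(\sdc)$ through each monomial of $\shiftIBP{i}{j}$ in \eqref{eq:shift-IBP}. From the shift action \eqref{eq:def-shift-action} one has $\PlusD{a}\,g(\sdc)=g(\sdc+1)\PlusD{a}$ and $\Minus{b}\,g(\sdc)=g(\sdc-1)\Minus{b}$, while the functional equation $\Gamma(z+1)=z\Gamma(z)$ gives $g(\sdc+1)=(-s-\sdc-1)\,g(\sdc)$ and the reordering $(-s-\sdc-1)\PlusD{a}=\PlusD{a}(-s-\sdc)$ follows from $\sdc\PlusD{a}=\PlusD{a}(\sdc-1)$. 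Every monomial of $\shiftIBP{i}{j}$ carries a $\PlusD{a}$ on the left and no unpaired $\Minus{b}$ ever occurs, so the conjugation lands back in $\Shifts{N}[s]$: the constant $\Dim\delta_{ij}$ and each balanced term $\PlusD{a}\Minus{b}$ (net $\sdc$-shift zero) are fixed, whereas each purely multiplicative $\lambda_b$ and each kinematic factor $q_jq_m$ attached to a lone $\PlusD{a}$ (net $\sdc$-shift $+1$) is multiplied on the right by $-s-\sdc$.

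It then remains to recognise $g(\sdc)^{-1}\shiftIBP{i}{j}g(\sdc)$ as $\Mellin{\parIBP{i}{j}}$. Applying the algebra isomorphism \eqref{eq:Mellin-iso} termwise to \eqref{eq:parIBP<=L}--\eqref{eq:parIBP>L} gives $\Mellin{x_a\partial_b}=-\PlusD{a}\Minus{b}$ and, using \eqref{eq:def-n}, $\Dim=-2s$ and $\sum_c\nOp{c}=\sdc-\Loops s$,
\[
	\Mellin{x_a H}
	= -\PlusD{a}\Bigl((\Loops+1)s + \textstyle\sum_c \nOp{c}\Bigr)
	= -\PlusD{a}\,(s+\sdc).
\]
Hence $\Mellin{x_a\lambda_b H}=-\lambda_b\PlusD{a}(s+\sdc)$ and $\Mellin{\spMat_a^{\set{i,m}}q_jq_m x_a H}=-\spMat_a^{\set{i,m}}q_jq_m\PlusD{a}(s+\sdc)$, which are precisely the terms of $-s-\sdc$ type produced by the conjugation, while the $\Dim\delta_{ij}$ and $\PlusD{a}\Minus{b}$ parts match directly. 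For $i\leq\Loops<j$ the computation is identical, the extra external-momentum piece being governed by the same matrix $\spMat$, so the two presentations agree term by term.

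The delicate point, and the conceptual heart of the statement, is exactly this last matching: in momentum space the scalars $\lambda_b$ and $q_jq_m$ are inert constants, but the $\Gamma$-quotient relating $\FI$ and $\FImel$ turns them into the $\sdc$-dependent multiplier $-s-\sdc=\Mellin{H}$, which is just what the Euler-type operator $H=\tfrac{(\Loops+1)\Dim}{2}+\sum_c x_c\partial_c$ encodes. Keeping the bookkeeping of these $\sdc$-shifts straight---tracking which monomials raise, lower or preserve $\sdc$, and the order in which scalar multipliers and shift operators compose---is where all the care is needed; with that in hand the corollary is an immediate consequence of Corollary~\ref{cor:Ann-iso}.
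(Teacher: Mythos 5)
Your proposal is correct and takes essentially the same approach as the paper: your conjugation identity $\Mellin{\parIBP{i}{j}} = g(\sdc)^{-1}\,\shiftIBP{i}{j}\,g(\sdc)$ is precisely the paper's substitution of $\FI = \tfrac{\Gamma(-s)}{\Gamma(-s-\sdc)}\FImel$ into $\shiftIBP{i}{j}\FI=0$, with the same two key observations (balanced terms $\PlusD{a}\Minus{b}$ pass through unchanged, while each lone $\PlusD{a}$ acquires the factor $-s-\sdc$, which is $\Mellin{H}$) and the same conclusion via the inverse Mellin transform of corollary~\ref{cor:Ann-iso} / theorem~\ref{thm:Mellin-inverse}.
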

\begin{proof}
	First recall the rescaling \eqref{eq:FI-Mellin} between the Feynman integral $\FI$ and the Mellin transform $\FImel$ of $\G^s$. As we saw in \eqref{eq:FImel-relation-to-FI}, this means that
	\begin{equation*}
		\frac{\Gamma(-s-\sdc)}{\Gamma(-s)}
		\PlusD{a} \FI
		= \Gamma(-s-\sdc) \PlusD{a} \frac{\FImel}{\Gamma(-s-\sdc)}
		= \left( -s-\sdc-1 \right) \PlusD{a} \FImel
		= \PlusD{a} (-s-\sdc) \FImel,
	\end{equation*}
	and so if we substitute \eqref{eq:FI-Mellin} into $\shiftIBP{i}{j} \FI=0$ for the operators from \eqref{eq:shift-IBP}, then apart from the substitution $\PlusD{a}\Minus{b} \mapsto x_a(-\partial_b)$ which does not change $\sdc$, the remaining terms with shifts $\PlusD{a}$ do increment $\sdc$ by one and thus acquire an additional factor of
	\begin{equation*}
		-s-\sdc
		=(\Loops+1)(-s) - \sum_i \nOp{i}
		\mapsto
		(\Loops+1)(-s) + \sum_{c=1}^N x_c \partial_c
		= H.
	\end{equation*}
	This proves that $\Mellin{\parIBP{i}{j}}\FImel = 0$ for the operators in \eqref{eq:parIBP<=L} and \eqref{eq:parIBP>L} and theorem~\ref{thm:Mellin-inverse} concludes the proof.
\end{proof}

Note that the proof of the identity $\parIBP{i}{j} \G^s = 0$ given in corollary~\ref{cor:parIBP} rests on the inverse Mellin transform. An alternative, direct algebraic proof is given in appendix \ref{sec:Algebraic proof}.\footnote{%
	For an algebraic proof of the analogous statement in the Baikov representation, see \cite[section~9]{Grozin:IBP}.
}
\begin{defn}
	\label{def:Mom}%
	By $\Mom$ we denote the left $\Weyl{N}[s]$-module generated by the annihilators $\parIBP{i}{j}$ from corollary~\ref{cor:parIBP}, corresponding to the momentum space IBP identities:
	\begin{equation}
		\Mom \defas \sum_{i,j} \Weyl{N}[s] \cdot \parIBP{i}{j}
		\quad\subseteq\quad \Ann_{\Weyl{N}[s]}\left( \G^s \right)
		.
		\label{eq:def-Mom}%
	\end{equation}
\end{defn}
Since the $\parIBP{i}{j}$ are first order differential operators, we have the inclusions
\begin{equation}
	\Mom 
	\subseteq \Ann^1_{\Weyl{N}[s]}(\G^s)
	\subseteq \Ann_{\Weyl{N}[s]}(\G^s),
	\label{eq:Mom-Ann1-Ann}%
\end{equation}
where $\Ann^1$ denotes the $\Weyl{N}[s]$-module generated by all first order annihilators. Note that for a generic polynomial $\G$, one would not expect that all of its annihilators can be obtained from linear ones ($\Ann^1\subsetneq \Ann$). It is therefore interesting that we observe the equality $\Ann^1=\Ann$ in all cases of Feynman integrals that we checked.
\begin{quest}\label{con:Ann=Ann1}
	For a Feynman integral with a complete set of irreducible scalar products, is the second inclusion in \eqref{eq:Mom-Ann1-Ann} an equality? In other words, are the $s$-parametric annihilators of Lee-Pomeransky polynomials $\G$ linearly generated?
\end{quest}
Regarding the first inclusion in \eqref{eq:Mom-Ann1-Ann}, we do know that it is strict (see the example in appendix~\ref{sec:Comparing}). 
However, it seems that $\Mom$ does provide all identities once we enlarge the coefficients to rational functions in the dimension $s=-\Dim/2$ and the indices $\nu_e$.
Let us write $\theta=(\theta_1,\ldots,\theta_N)$ and $\theta_e \defas x_e \partial_e$ such that $\nu_e=\Mellin{-\theta_e}$. 
\begin{quest}\label{con:Ann=Mom}
	Given any annihilator $P\in\Ann_{\Weyl{N}[s]}(\G^s)$, does there exist a polynomial $q \in \C[s,\theta]$ such that $q P \in \Mom$? In other words, does
	\begin{equation}
		\C(s,\theta) \tp_{\C[s,\theta]} \Ann_{\Weyl{N}[s]}(\G^s)
		=
		\C(s,\theta) \tp_{\C[s,\theta]} \Mom 
		\quad\text{hold?}
		\label{eq:Ann=Mom}%
	\end{equation}
\end{quest}
To test this conjecture, we should take all known shift relations for Feynman integrals, and check if they can be realized as elements of $\Mom$ (after localizing at $\C(s,\theta)$). In the remainder of this section, we will address such a relation, namely the one originating from the well-known dimension shifts.

\subsection{Dimension shifts}
\label{sec:Dimension shifts}

The representation $\FI= \Mellin{e^{-\F/\U}\cdot \U^s}$ from \eqref{eq:parametric-exp} shows, through corollary~\ref{cor:Mellin-compatibility}, that
\begin{equation}
	\FI(\Dim)
%	= \U(\PlusD{1},\ldots,\PlusD{N}) \FI(\Dim+2)
	= \Mellin{\U} \FI(\Dim+2)
	\label{eq:FI(d+2)}%
\end{equation}
where $\Mellin{\U} = \U(\PlusD{1},\ldots,\PlusD{N})$ is obtained from the polynomial $\U(x_1,\ldots,x_N)$ by substituting $x_i \mapsto \PlusD{i}$.
This \emph{raising} dimension shift was pointed out by Tarasov \cite{Tarasov:ConnectionBetweenFeynmanIntegrals},%
\footnote{%
	Tarasov considered the special case where all inverse propagators are of the form $\Den_e = k_e^2 - m_e^2$ and hence $\U$ is just the graph (first Symanzik) polynomial from \eqref{eq:graph-polynomials}. 
}
and had been observed before in special cases \cite{DerkachovHonkonenPismak:3loopWalk}.
For $\FImel(\Dim)=[\Gamma(\Dim/2-\sdc)/\Gamma(\Dim/2)]\, \FI(\Dim)$, the relation takes the form
\begin{equation}
	\FImel (\Dim)
	= \frac{s}{s+\sdc} \U(\PlusD{1},\ldots,\PlusD{N}) \FImel(\Dim+2)
	.
	\label{eq:FImel(d+2)}%
\end{equation}
At the same time, the representation $\FImel=\Mellin{\G^s}$ implies also that
\begin{equation}
	\FImel(\Dim) 
	= \G(\PlusD{1},\ldots,\PlusD{N}) \FImel(\Dim+2)
	= \Mellin{\G} \FImel(\Dim+2)
	.
	\label{eq:FImel(d+2)G}%
\end{equation}
\begin{rem}\label{rem:HG+sU}%
	The equality of \eqref{eq:FImel(d+2)} and \eqref{eq:FImel(d+2)G} implies, via the Mellin transform, that
\begin{equation*}
	H(s) \G + s\U 
	= -(s-\sum_a x_a \partial_a + \Loops s) \G + s \U 
	\in \Ann(\G^{s-1})
	.
\end{equation*}
Indeed, 
$
	H(s) \G^s
	= s(\Loops\U + (\Loops+1)\F) \G^{s-1}-s(\Loops+1)\G^s
	= -s\U \G^{s-1}
$
follows from the homogeneity of $\U$ and $\F$, see \eqref{eq:G-euler}.\footnote{%
	In fact, 
$
	H(s)\G + s\U = \sum_e x_e \left( \G \partial_e - (s-1) (\partial_e \G) \right)
$
follows from the trivial annihilators \eqref{eq:triv-Ann} of $\G^{s-1}$.
}
\end{rem}
A \emph{lowering} dimension shift, expressing $\FI(\Dim+2)$ in terms of $\FI(\Dim)$, corresponds to a Bernstein-Sato operator of $\G$ under the Mellin transform $\FImel(\Dim)=\Mellin{\G^{-\Dim/2}}$:\footnote{%
	Tkachov proposed in \cite{Tkachov:WhatsNext} to use a generalization of \eqref{eq:Bernstein-Sato} to several polynomials (the individual Symanzik polynomials $\U$ and $\F$, instead of $\G=\U+\F$), which in the physics literature is referred to as Bernstein-Tkachov theorem. However, this result is in fact due to Sabbah \cite{Sabbah:ProximiteII} (see also \cite{Gyoja:BernsteinSatoSeveral}).
}
\begin{defn}\label{def:Bernstein-Sato}
	A Bernstein-Sato operator $P(s)\in\Weyl{N}[s]$ for a non-constant polynomial $f$ is a polynomial differential operator such that there exists a polynomial $b(s) \in \C[s]$ with
\begin{equation}
	P(s) f^{s+1} = b(s) f^{s}
	.
	\label{eq:Bernstein-Sato}%
\end{equation}
\end{defn}
Such operators always exist, and the \emph{Bernstein-Sato polynomial} is the unique monic polynomial $b(s)$ of smallest degree for which \eqref{eq:Bernstein-Sato} has a non-zero solution \cite{Bernshtein:AnalyticContinuation,SatoShintaniMuro:Translation}. Given a solution of \eqref{eq:Bernstein-Sato} for $f=\G$, we get a lowering dimension shift relation:
\begin{equation}
	\FImel(\Dim+2)
	= \frac{1}{b(s-1)} \Mellin{P(s-1)} \FImel(\Dim)
	.
	\label{eq:Bernstein-shift-FI}%
\end{equation}
\begin{cor}\label{cor:dim-shift-closed}
	If we allow the coefficients to be rational functions $k=\C(s)$, every integral in $\Dim+2k$ dimensions can be written as an integral in $\Dim$ dimensions. In other words, the multiplication with $f$ is invertible on $\Weyl{N}_k f^s$. Put still differently, $f^s$ generates the full module
	\begin{equation}
		\Weyl{N}_k \cdot f^s
		= k[x,1/f] \cdot f^s
		.
		\label{eq:dim-shift-closed}%
	\end{equation}
\end{cor}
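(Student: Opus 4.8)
The plan is to deduce the statement from the existence of a Bernstein--Sato operator, which is the only nontrivial input. The inclusion $\Weyl{N}_k \cdot f^s \subseteq k[x,1/f]\cdot f^s$ is immediate: by the action in Definition~\ref{def:f^s}, applying any generator $x_i$ or $\partial_i$ to an element $\tfrac{p}{f^k}f^s$ again lands in $k[x,1/f]\cdot f^s$, so the cyclic module generated by $f^s$ sits inside the ambient module. The content is the reverse inclusion, and since $\Weyl{N}_k \cdot f^s$ is closed under multiplication by $k[x]$, it suffices to exhibit every negative power $f^{s-m}=\tfrac{1}{f^m}f^s$ (for $m\in\N_0$) as an element of $\Weyl{N}_k \cdot f^s$.

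First I would invoke the Bernstein--Sato equation \eqref{eq:Bernstein-Sato}, $P(s)\, f^{s+1} = b(s)\, f^s$ with $0\neq b(s)\in\C[s]$, whose existence is recalled after Definition~\ref{def:Bernstein-Sato}. The decisive step is to substitute $s\mapsto s-1$ to get
\begin{equation*}
	P(s-1)\, f^{s} = b(s-1)\, f^{s-1}.
\end{equation*}
Because $b(s-1)$ is a nonzero polynomial, it is a unit in $k=\C(s)$, so
\begin{equation*}
	f^{s-1} = \frac{1}{b(s-1)}\, P(s-1)\, f^{s} \in \Weyl{N}_k \cdot f^s,
\end{equation*}
as $b(s-1)^{-1}P(s-1)\in\Weyl{N}_k$. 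Iterating with $s\mapsto s-j$ for $j=1,\ldots,m$ yields operators $Q_j\defas b(s-j)^{-1}P(s-j)\in\Weyl{N}_k$ satisfying $Q_j\, f^{s-j+1}=f^{s-j}$, whence $f^{s-m}=Q_m\cdots Q_1\, f^s\in\Weyl{N}_k\cdot f^s$, the composite lying in $\Weyl{N}_k$ because this algebra is closed under products. Multiplying by an arbitrary $p\in k[x]$ then realizes every $\tfrac{p}{f^m}f^s$, which gives the forward inclusion and hence the equality \eqref{eq:dim-shift-closed}. The invertibility of multiplication by $f$ then follows, since on $k[x,1/f]\cdot f^s$ it is inverted by multiplication by $1/f$.

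The step requiring care is the substitution $s\mapsto s-1$ in the functional equation. I would justify it by reading \eqref{eq:Bernstein-Sato} as an identity of the (multivalued) function $x\mapsto f(x)^s$ for generic complex $s$: expanding both sides in the basis $\set{x^{\alpha}f^{-\beta}}$, each coefficient is a polynomial in $s$, so the identity, holding for all $s\in\C$, holds as a formal polynomial identity into which $s-1$ may be substituted. The other point worth stressing is that passing to $k=\C(s)$ is essential: over $\C[s]$ the element $b(s-1)$ is not invertible, which is precisely why the dimension shift closes up only once the coefficients are allowed to be rational in $s=-\Dim/2$, equivalently in $\Dim$.
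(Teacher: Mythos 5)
Your proof is correct and follows essentially the same route as the paper's: both iterate the Bernstein--Sato equation \eqref{eq:Bernstein-Sato} with shifted arguments, writing $f^{s-n} = \frac{P(s-n)}{b(s-n)}\cdots\frac{P(s-1)}{b(s-1)}\,f^s$ and using that each $b(s-j)$ is a unit in $k=\C(s)$. The extra details you supply (the trivial inclusion, closure under $k[x]$, and the justification of the substitution $s\mapsto s-1$) are all sound elaborations of what the paper leaves implicit.
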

\begin{proof}
	By \eqref{eq:Bernstein-Sato}, $f^{s-n} = \frac{P(s-n)}{b(s-n)} \cdots \frac{P(s-1)}{b(s-1)} f^s \in \Weyl{N}_k \cdot f^s$ for all $n\in\N$.
\end{proof}
In general, computing a Bernstein operator is not at all trivial. 
But in the case of a complete set of irreducible scalar products ($N=\abs{\bilabels}$ and $\spMat$ is invertible), an explicit formula for the lowering dimension shift follows from Baikov's representation \cite{Baikov:ExplicitSolutions-multiloop} of Feynman integrals. We use the form~\eqref{eq:Baikov-FI} given by Lee in \cite{Lee:LL2010,Lee:DimensionalRecurrenceAnalyticalProperties}:

Recall that $(q_1,\ldots, q_{\AllMoms})=(\LoopMom_{1},\ldots,\LoopMom_{\Loops},\ExtMom_{1},\ldots,\ExtMom_{E})$ denotes the combined loop- and external momenta ($\AllMoms = \Loops+\ExtMoms$). We introduce the Gram determinants
\begin{equation}
	\Gram_n(s) \defas \det
	\begin{pmatrix}
		q_n \cdot q_n & \cdots & q_n \cdot q_{\AllMoms} \\
		\vdots & \ddots & \vdots \\
		q_{\AllMoms} \cdot q_n & \cdots & q_{\AllMoms} \cdot q_{\AllMoms} \\
	\end{pmatrix}
%	= \det \left( q_i \cdot q_j \right)_{n \leq i,j\leq \AllMoms }
	= \det \left( s_{\set{i,j}} \right)_{n \leq i,j\leq \AllMoms }
	\label{eq:Gram-det}%
\end{equation}
and remark that $\Gram \defas \Gram_{\Loops+1}(s) = \det (\ExtMom_i \cdot \ExtMom_j)_{1\leq i,j\leq \ExtMoms}$ depends only on the external momenta.
Furthermore, note that $\Gram_1(s)$ is a polynomial in the scalar products $s_{\set{i,j}}=q_i \cdot q_j$. By \eqref{eq:sp-from-props}, we can think of it also as a polynomial in the denominators $\Den$.
\begin{defn}\label{def:Baikov}
	The \emph{Baikov polynomial} $\BP(y) \in \C[y_1,\ldots,y_N]$ is the polynomial defined by $\BP(\Den_1,\ldots,\Den_N) = \Gram_1(s)$.
\end{defn}
\begin{thm}[Baikov representation \cite{Lee:LL2010}]
	\label{thm:Baikov}%
	The Feynman integral \eqref{eq:FI-momentum} can be written as
	\begin{equation}
		\FI(\Dim)
	= \frac{c \cdot \pi^{-\Loops \ExtMoms/2-\Loops(\Loops-1)/4}}{\Gamma\left( \frac{\Dim-\ExtMoms-\Loops+1}{2} \right) \cdots \Gamma\left( \frac{\Dim-\ExtMoms}{2} \right)}
	\cdot \frac{(-1)^{\Loops\Dim/2}}{\Gram^{(\Dim-E-1)/2}}
	\left( \prod_{e=1}^{N} \int \frac{\dd y_e}{y_e^{\nu_e}} \right)
	\cdot \left\{ \BP(y) \right\}^{(\Dim-N-1)/2}
		\label{eq:Baikov-FI}%
	\end{equation}
	where $c\in\Q$ is a rational constant and the Baikov polynomial $\BP(y)$ has degree at most $\AllMoms=\Loops+\ExtMoms$.
	The contour of integration in \eqref{eq:Baikov-FI} is such that $\BP$ vanishes on its boundary.
\end{thm}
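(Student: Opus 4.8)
The plan is to establish \eqref{eq:Baikov-FI} by carrying out Baikov's change of variables, trading the $\Loops\Dim$ components of the loop momenta for the $N=\abs{\bilabels}$ loop-dependent scalar products $s_{\set{i,j}}$ and then for the denominators. The starting point is that the integrand $\prod_a\Den_a^{-\nu_a}$ of \eqref{eq:FI-momentum} is a function of the products $q_i\cdot q_j$ alone, and by \eqref{eq:sp-from-props} the invertibility of $\spMat$ lets me pass freely between the $s_{\set{i,j}}$ and the denominators $y_e\defas\Den_e$ through an affine-linear map with constant Jacobian $\det\spMat$. The only genuine work is therefore to compute the Jacobian of $\prod_{n=1}^{\Loops}\dd[\Dim]\LoopMom_n\to\prod_{\set{i,j}\in\bilabels}\dd s_{\set{i,j}}$ after integrating out the directions on which the integrand does not depend.

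To do this I would split each loop momentum as $\LoopMom_i=\LoopMom_i^{\parallel}+\LoopMom_i^{\perp}$ with $\LoopMom_i^{\parallel}$ in the span $V=\lin(\ExtMom_1,\ldots,\ExtMom_{\ExtMoms})$ and $\LoopMom_i^{\perp}$ in its orthogonal complement $V^{\perp}$ of dimension $\Dim-\ExtMoms$, so that $\dd[\Dim]\LoopMom_i=\dd[\ExtMoms]\LoopMom_i^{\parallel}\,\dd[\Dim-\ExtMoms]\LoopMom_i^{\perp}$. The parallel part is fixed by the $\Loops\ExtMoms$ products $\LoopMom_i\cdot\ExtMom_j=\LoopMom_i^{\parallel}\cdot\ExtMom_j$; expanding $\LoopMom_i^{\parallel}$ in the basis $(\ExtMom_j)$ makes this a linear change whose Jacobian is a power of $\Gram$, contributing $\abs{\Gram}^{-\Loops/2}$. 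The perpendicular parts enter the integrand only through the $\Loops\times\Loops$ Gram matrix $h_{ij}\defas\LoopMom_i^{\perp}\cdot\LoopMom_j^{\perp}$, so here I would invoke the matrix analogue of spherical coordinates: integrating $\Loops$ vectors in an $(\Dim-\ExtMoms)$-dimensional space against a function of their Gram matrix produces the angular volume $\pi^{\Loops(\Dim-\ExtMoms)/2-\Loops(\Loops-1)/4}\big/\prod_{k=0}^{\Loops-1}\Gamma\!\big(\tfrac{\Dim-\ExtMoms-k}{2}\big)$ together with the radial Jacobian $(\det h)^{(\Dim-\ExtMoms-\Loops-1)/2}$. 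The product of $\Gamma$-functions is exactly the denominator of the prefactor in \eqref{eq:Baikov-FI}, and the surviving $\pi$-power, after cancelling the $\pi^{-\Loops\Dim/2}$ coming from the measure in \eqref{eq:FI-momentum}, leaves precisely $\pi^{-\Loops\ExtMoms/2-\Loops(\Loops-1)/4}$.

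It then remains to re-express $\det h$ through the Baikov polynomial. Ordering the momenta as $(\LoopMom_1,\ldots,\LoopMom_{\Loops},\ExtMom_1,\ldots,\ExtMom_{\ExtMoms})$ and taking the Schur complement of the invertible external block $(\ExtMom_i\cdot\ExtMom_j)$, with determinant $\Gram$, in the full Gram matrix $\Gram_1(s)$ from \eqref{eq:Gram-det} gives $\Gram_1(s)=\Gram\cdot\det h$, since the Schur complement of $\LoopMom_i\cdot\LoopMom_j$ is $\LoopMom_i\cdot\LoopMom_j-\LoopMom_i^{\parallel}\cdot\LoopMom_j^{\parallel}=h_{ij}$. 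Substituting $\det h=\Gram_1/\Gram=\BP(\Den)/\Gram$ and collecting the two powers of $\Gram$ (from the parallel Jacobian and from this substitution) yields $\Gram^{-(\Dim-\ExtMoms-1)/2}$ together with $\BP^{(\Dim-\ExtMoms-\Loops-1)/2}=\BP^{(\Dim-\AllMoms-1)/2}$, matching the denominator $\Gram^{(\Dim-\ExtMoms-1)/2}$ and the exponent of $\BP$ in \eqref{eq:Baikov-FI}. The final change $s_{\set{i,j}}\mapsto y_e$ via \eqref{eq:sp-from-props} contributes only the constant $\det\spMat$, which I absorb into $c$. The degree bound is then immediate: $\BP(\Den)=\Gram_1(s)$ is the determinant of the $\AllMoms\times\AllMoms$ matrix \eqref{eq:Gram-det}, each of whose entries is, by \eqref{eq:sp-from-props}, of degree at most one in the $\Den_a$, so $\deg\BP\le\AllMoms$.

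The main obstacle is the careful bookkeeping of the matrix polar-coordinate Jacobian and of the Minkowski signature. Concretely, I must justify the generalized spherical-coordinate formula for the perpendicular integration (including the range of $\det h$ over positive-definite matrices and the attendant convergence, which is what forces the contour in \eqref{eq:Baikov-FI} to be chosen so that $\BP$ vanishes on its boundary), and I must track all factors of $\iu$ and all signs introduced by the Lorentzian metric and by the Wick rotation that reduces the Minkowski integral to the Euclidean one on which the matrix formula applies; these are exactly what assemble into the global sign $(-1)^{\Loops\Dim/2}$ and the rational constant $c$. Everything else is the routine linear algebra sketched above.
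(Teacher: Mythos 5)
Your proposal is correct and arrives at the same formula as the paper, but by a genuinely different route. The paper's proof (appendix~\ref{sec:Baikov}, following Lee) is recursive: it integrates out one loop momentum at a time, decomposing $q_n$ into components parallel and perpendicular to the span of \emph{all} remaining momenta $q_{n+1},\ldots,q_{\AllMoms}$ --- loop and external alike --- so that each step needs only ordinary spherical coordinates in the single radial variable $q_{\perp}^2=G_n/G_{n+1}$ from \eqref{eq:Gram-recursion}, and the prefactor of \eqref{eq:Baikov-FI} emerges from a telescoping cancellation of the intermediate Gram determinants in \eqref{eq:measure-s1j}--\eqref{eq:measure-all-s}. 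You instead decompose all loop momenta at once, but only relative to the external span: the parallel parts are traded for the $\Loops\ExtMoms$ products $\LoopMom_i\cdot\ExtMom_j$ (Jacobian $\Gram^{-\Loops/2}$), and the perpendicular parts are integrated in one stroke via the matrix polar-coordinate (Wishart/multivariate-Gamma) formula for $\Loops$ vectors in $\Dim-\ExtMoms$ dimensions, with the Schur-complement identity $\Gram_1=\Gram\cdot\det h$ converting $\det h$ into $\BP/\Gram$. Your bookkeeping checks out: the $\pi$-powers, the string of $\Gamma$-factors, and the combined power $\Gram^{-(\Dim-\ExtMoms-1)/2}$ all agree with the paper, and your boundary argument (the boundary of the positive-definite cone for $h$ lies in $\{\det h=0\}\subseteq\{\BP=0\}$) is equivalent to the paper's observation that $G_n=0$ for some $n$ forces $G_1=0$. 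What your route buys is brevity and a structural explanation for why the matrix $Q=V-BG^{-1}B^{\Transpose}$ of section~\ref{sec:Dimension shifts} appears; what it costs is self-containedness, since the multivariate integration formula you invoke is exactly what the paper's recursion proves from scratch (and is itself usually proven by such a recursion). The Wick-rotation bookkeeping you defer is carried out explicitly at the end of the paper's proof and indeed yields only $(-1)^{\Loops\Dim/2}$ times constants absorbed into $c$, as you anticipate.

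One correction to your claim of ``matching'': your derivation gives the exponent $(\Dim-\AllMoms-1)/2=(\Dim-\Loops-\ExtMoms-1)/2$ for $\BP$, which is what the paper's own proof produces in \eqref{eq:measure-all-s}. The exponent $(\Dim-N-1)/2$ printed in the statement of theorem~\ref{thm:Baikov}, with $N=\abs{\bilabels}$, coincides with this only at one loop; it is a typo for $(\Dim-\AllMoms-1)/2$, so your exponent is the correct one even though it does not literally match the printed statement.
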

%\todo{Example?}
We include the proof in appendix~\ref{sec:Baikov}. For us, the interesting feature of this alternative formula is that the dimension appears with a positive sign in the exponent of the integrand. We can therefore directly read off
\begin{cor}[lowering dimension shift \cite{Lee:DimensionalRecurrenceAnalyticalProperties}]
	A Feynman integral in $\Dim+2$ dimensions can be expressed as an integral in $\Dim$ dimensions by\footnote{
		The first fraction can also be written as $2^{\Loops}/\left(\Dim-\Loops-\ExtMoms+1 \right)_{\Loops}$ in terms of the Pochhammer symbol (raising factorial) $a_{\Loops} = a (a+1) \cdots (a+\Loops-1)$.}
	\begin{equation}
		\FI(\Dim+2)
		= 
		\frac{(-1)^{\Loops}}{
			\left(\frac{\Dim-\Loops-\ExtMoms+1}{2} \right)
			\cdots
			\left(\frac{\Dim-\ExtMoms}{2} \right)
		}
		\frac{\BP(\Minus{1},\ldots,\Minus{N})}{\Gram} \FI(\Dim)
		.
		\label{eq:FI(d-2)}%
	\end{equation}
\end{cor}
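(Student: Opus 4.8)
The plan is to read the identity directly off the Baikov representation of Theorem~\ref{thm:Baikov}, as the sentence preceding the statement already announces. I would split the right-hand side of \eqref{eq:Baikov-FI} as $\FI(\Dim)=K(\Dim)\,J(\Dim)$, separating the $\nu$-independent prefactor
\[
K(\Dim)\defas\frac{c\,\pi^{-\Loops\ExtMoms/2-\Loops(\Loops-1)/4}}{\Gamma\!\left(\tfrac{\Dim-\ExtMoms-\Loops+1}{2}\right)\cdots\Gamma\!\left(\tfrac{\Dim-\ExtMoms}{2}\right)}\cdot\frac{(-1)^{\Loops\Dim/2}}{\Gram^{(\Dim-\ExtMoms-1)/2}}
\]
from the integral $J(\Dim)\defas\left(\prod_{e=1}^{N}\int\tfrac{\dd y_e}{y_e^{\nu_e}}\right)\BP(y)^{(\Dim-N-1)/2}$. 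The decisive structural feature is that $\Dim$ enters the integrand only through the exponent of $\BP$, and that $\Dim\mapsto\Dim+2$ raises this exponent by exactly one, so that $\BP(y)^{(\Dim+2-N-1)/2}=\BP(y)\cdot\BP(y)^{(\Dim-N-1)/2}$.

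Next I would identify multiplication of the Baikov integrand by $\BP(y)$ with the action of the shift operator $\BP(\Minus{1},\ldots,\Minus{N})$ on $\FI(\Dim)$. By definition \eqref{eq:def-shift-action}, $\Minus{e}$ sends $\nu_e\mapsto\nu_e-1$; since $\nu_e$ occurs in $J(\Dim)$ only through the factor $y_e^{-\nu_e}$, applying $\Minus{e}$ is the same as inserting one factor of $y_e$ into the integrand, while $K(\Dim)$ is carried through unchanged because it does not depend on $\nu$ (here one uses that $\Gram$ depends only on the external momenta). As the operators $\Minus{e}$ commute and $\BP$ is a polynomial, expanding $\BP(\Minus{1},\ldots,\Minus{N})$ monomial by monomial and summing by linearity gives $\BP(\Minus{1},\ldots,\Minus{N})\FI(\Dim)=K(\Dim)\,J(\Dim+2)$. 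Comparing with $\FI(\Dim+2)=K(\Dim+2)\,J(\Dim+2)$ then yields $\FI(\Dim+2)=\tfrac{K(\Dim+2)}{K(\Dim)}\,\BP(\Minus{1},\ldots,\Minus{N})\,\FI(\Dim)$.

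It remains to evaluate the prefactor ratio $K(\Dim+2)/K(\Dim)$. The sign contributes $(-1)^{\Loops(\Dim+2)/2-\Loops\Dim/2}=(-1)^{\Loops}$; the outer Gram power contributes $\Gram^{(\Dim-\ExtMoms-1)/2-(\Dim-\ExtMoms+1)/2}=\Gram^{-1}$; and each of the $\Loops$ Gamma arguments increases by $1$, so by the functional equation $\Gamma(z+1)=z\,\Gamma(z)$ the product of Gamma factors, sitting in the denominator of $K$, produces the reciprocal of $\left(\tfrac{\Dim-\Loops-\ExtMoms+1}{2}\right)\cdots\left(\tfrac{\Dim-\ExtMoms}{2}\right)$. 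Assembling these three contributions reproduces exactly the coefficient in \eqref{eq:FI(d-2)}, completing the proof. I expect no serious obstacle here: no integration by parts or boundary terms arise, since the argument only compares integrands, and the single genuinely careful step is the bookkeeping that matches the falling sequence of half-integer Gamma arguments against the stated denominator and confirms that no hidden $\nu$-dependence remains in $K$.
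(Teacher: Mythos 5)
Your proposal is correct and follows essentially the same route as the paper's own proof: both read the identity off the Baikov representation \eqref{eq:Baikov-FI}, identify multiplication of the integrand by $y_e$ with the action of $\Minus{e}$ (hence $\BP(y)$ with $\BP(\Minus{1},\ldots,\Minus{N})$), and absorb the change $\Dim\mapsto\Dim+2$ of the $\nu$-independent prefactor via $\Gamma(z+1)=z\,\Gamma(z)$, the shift of the $\Gram$ exponent, and the sign $(-1)^{\Loops}$. Your version merely makes the bookkeeping explicit through the split $\FI(\Dim)=K(\Dim)\,J(\Dim)$, which is a fine elaboration of the paper's terser argument.
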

\begin{proof}
	According to \eqref{eq:Baikov-FI}, $\FI(\Dim+2)$ is obtained by multiplying the integrand of $\FI(\Dim)$ with $(-1)^{\Loops} \BP(y)/\Gram$ and adjusting the $\Gamma$-factors in the prefactor as $\Gamma\left(\frac{\Dim+2-\ExtMoms}{2}\right) = \frac{\Dim-\ExtMoms}{2} \Gamma\left( \frac{\Dim-\ExtMoms}{2} \right)$ and so on. Multiplying the integrand of the Baikov representation with $y_e$ is equivalent to decrementing $\nu_e$, hence the multiplication of the integrand by $\BP(y)$ can be written as the action of $\BP(\Minus{1},\ldots,\Minus{N})$ on the integral.
\end{proof}

The equation~\eqref{eq:FI(d-2)} can be thought of as a Bernstein equation for $\FI(\Dim)$, or, equivalently, as a special type of integral relation: Combining \eqref{eq:FI(d-2)} with the raising dimension shift \eqref{eq:FI(d+2)}, we find that
\begin{equation}
	\left\{
		\frac{\BP(\Minus{1},\ldots,\Minus{N})}{\Gram}
		\U(\PlusD{1},\ldots,\PlusD{N})
		-
		  \frac{\ExtMoms-\Dim+2}{2}
		  \cdots
		  \frac{\ExtMoms-\Dim+(\Loops+1)}{2}
	\right\}
	\FI(\Dim)
	=0
	.
	\label{eq:Bernstein-Sato-FI}%
\end{equation}

We ask if this annihilator is contained in $\Mom$; in other words, whether the lowering dimension shift relation \eqref{eq:FI(d-2)} is a consequence of the momentum space IBP identities. This is what we will establish in the following proposition~\ref{prop:d-2 in Mom}.
First let us write the Baikov polynomial $\BP(y)$ explicitly: The block decomposition $(s_{i,j})_{1\leq i,j\leq\AllMoms} = \left(\begin{smallmatrix} V & B \\ B^{\Transpose} & G \\ \end{smallmatrix} \right)$ with $V=(\LoopMom_{i}\cdot \LoopMom_{j})_{i,j \leq \Loops}$, $B=(\LoopMom_{i} \cdot \ExtMom_{j})_{i\leq\Loops,j\leq\ExtMoms}$ and $G=(\ExtMom_{i}\cdot\ExtMom_{j})_{i,j\leq\ExtMoms}$ shows that
\begin{equation*}
	\frac{\BP(y)}{\Gram}
	= \det Q(y)
	\quad\text{where}\quad
	Q(\Den) \defas V - B G^{-1} B^{\Transpose}
\end{equation*}
is an $\Loops \times \Loops$ matrix whose entries are quadratic in the denominators. By \eqref{eq:sp-from-props},
\begin{equation}
	Q_{i,j}
	= \spMat^a_{\set{i,j}}( \Den_a - \lambda_a)
	- \spMat^a_{\set{i,r}}( \Den_a - \lambda_a) 
	  G^{-1}_{r,s}
	  \spMat^b_{\set{j,s}}( \Den_b - \lambda_b) 
	\label{eq:Qij-spMat}%
\end{equation}
where we suppress the explicit summation signs over $a=1,\ldots,N$ in the first summand and over $a,b=1,\ldots,N$ and $r,s=\Loops+1,\ldots,\AllMoms$ in the second summand.
\begin{prop}
	\label{prop:d-2 in Mom}%
	The annihilator \eqref{eq:Bernstein-Sato-FI}, corresponding to the lowering dimension shift \eqref{eq:FI(d-2)}, is contained in the ideal of shift operators generated by the momentum space IBP's from proposition~\ref{prop:IBP_Grozin}:
	\begin{equation}
		\det Q(\Minus{1},\ldots,\Minus{N})
		\cdot \U(\PlusD{1},\ldots,\PlusD{N})
		- \prod_{j=2}^{\Loops+1} \frac{\left(\ExtMoms+j-\Dim\right)}{2}
		\in 
		\sum_{i,j} \Shifts{N}[\Dim] \cdot \shiftIBP{i}{j}.
		\label{eq:FI-dimAnn-in-Mom}%
	\end{equation}
\end{prop}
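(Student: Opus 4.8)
The plan is to package all momentum-space IBP generators of proposition~\ref{prop:IBP_Grozin} into one matrix identity and then extract the dimension-shift annihilator as a column determinant. Write $\mathcal{J}\defas\sum_{i,j}\Shifts{N}[\Dim]\cdot\shiftIBP{i}{j}$ for the left ideal appearing on the right-hand side of \eqref{eq:FI-dimAnn-in-Mom}. First I would form the $\Loops\times\AllMoms$ matrix $\mathsf{P}$ of raising operators with entries $\mathsf{P}_{im}\defas\sum_{a}\spMat_a^{\set{i,m}}(1+\delta_{im})\PlusD{a}$ and the $\AllMoms\times\AllMoms$ Gram matrix $R$ whose $(m,j)$ entry represents the scalar product $s_{\set{m,j}}$, that is $R_{mj}=\sum_b\spMat^b_{\set{m,j}}(\Minus{b}-\lambda_b)$ whenever $\min(m,j)\leq\Loops$, while the external-external block is the constant matrix $G=(\ExtMom_i\cdot\ExtMom_j)$. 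Reading off \eqref{eq:shift-IBP}--\eqref{eq:IBP-coeffs} verbatim then gives the \emph{exact} operator identity $\mathsf{P}R=D-\mathcal{S}$, where $D\defas(\Dim\,I_{\Loops}\mid 0)$ and $\mathcal{S}\defas(\shiftIBP{i}{j})_{i\leq\Loops,\,j\leq\AllMoms}$ collects all IBP generators.

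Second I would carry out a Schur complement on the external block. Splitting $\mathsf{P}=(\mathsf{P}_L\mid\mathsf{P}_E)$ and $R=\left(\begin{smallmatrix}V&B\\B^{\Transpose}&G\end{smallmatrix}\right)$, the loop block is $\mathsf{P}_L=-2\,\loopMat(\PlusD{1},\ldots,\PlusD{N})$ by \eqref{eq:def:M-Q-J} and \eqref{eq:bilinear-decomposition}, so that $\det\mathsf{P}_L=(-2)^{\Loops}\U(\PlusD{1},\ldots,\PlusD{N})$, while the Schur complement $Q=V-BG^{-1}B^{\Transpose}$ is precisely the matrix \eqref{eq:Qij-spMat} with $\det Q(\Minus{1},\ldots,\Minus{N})=\BP(\Minus{1},\ldots,\Minus{N})/\Gram$. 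Because all lowering operators commute with one another, the factorization $R\,T=\left(\begin{smallmatrix}Q\\0\end{smallmatrix}\right)$ holds \emph{on the nose} for the column $T\defas\left(\begin{smallmatrix}I_{\Loops}\\-G^{-1}B^{\Transpose}\end{smallmatrix}\right)$; associativity then upgrades this to the exact identity $\mathsf{P}_L\,Q=\mathsf{P}(RT)=(\mathsf{P}R)T=\Dim\,I_{\Loops}-\mathcal{S}T$.

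Third I would apply the column determinant $\coldet$ to this $\Loops\times\Loops$ identity. On the left-hand side, a Capelli-type identity---valid because the entries of $\mathsf{P}_L$ are linear in the mutually commuting $\PlusD{a}$, those of $Q$ are quadratic in the mutually commuting $\Minus{b}$, and $[\PlusD{a},\Minus{b}]=\delta_{ab}$---rewrites $\coldet(\mathsf{P}_L Q)$ as $(-2)^{\Loops}\U(\PlusD{1},\ldots,\PlusD{N})\det Q(\Minus{1},\ldots,\Minus{N})$ plus a scalar correction recording the canonical commutators. On the right-hand side, I would reduce $\coldet(\Dim\,I_{\Loops}-\mathcal{S}T)$ modulo $\mathcal{J}$: each entry of $\mathcal{S}T$ is a left-combination of generators $\shiftIBP{i}{j}$ right-multiplied by the lowering block $-G^{-1}B^{\Transpose}$, so in every monomial of the determinant I would commute the generators to the rightmost slot, after which any term still containing a generator lies in $\mathcal{J}$ and only a scalar survives. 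Equating the two evaluations and absorbing the factor $(-2)^{\Loops}$ together with the reordering of $\U(\PlusD{1},\ldots,\PlusD{N})$ past $\det Q(\Minus{1},\ldots,\Minus{N})$ into the constant should produce exactly $\det Q(\Minus{1},\ldots,\Minus{N})\,\U(\PlusD{1},\ldots,\PlusD{N})-\prod_{j=2}^{\Loops+1}\tfrac{\ExtMoms+j-\Dim}{2}\in\mathcal{J}$, which is \eqref{eq:FI-dimAnn-in-Mom}.

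The main obstacle is the non-commutative bookkeeping that fixes this constant. Neither the Schur elimination through $T$ nor the factorization of $\coldet(\mathsf{P}_L Q)$ would be legitimate over a commutative ring: each time a generator is commuted past a factor built from $B$ or $G^{-1}B^{\Transpose}$ it leaves behind a lower-order operator, and it is the accumulation of these $[\PlusD{a},\Minus{b}]=\delta_{ab}$ corrections that converts the naive leading scalar into the Pochhammer product $\prod_{j=2}^{\Loops+1}(\ExtMoms+j-\Dim)/2$. Already for $\Loops=1$ the bare factorization yields only $1-\Dim/2$, and the missing $\ExtMoms/2$ materializes solely upon reducing the commutator $[\shiftIBP{1}{m},G^{-1}B^{\Transpose}]$ modulo $\mathcal{J}$; for general $\Loops$ I would control these corrections by an induction peeling off one loop momentum at a time, letting the recursive Schur structure of $Q$ contribute one factor $(\ExtMoms+j-\Dim)/2$ per loop while verifying that all residual operator terms close up into $\mathcal{J}$.
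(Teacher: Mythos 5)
Your exact algebraic skeleton is sound up to and including the identity $\mathsf{P}_L Q = \Dim\, I_{\Loops} - \mathcal{S}T$: the packaging $\mathsf{P}R = D - \mathcal{S}$ holds on the nose precisely because the generators $\shiftIBP{i}{j}$ are written with raising operators to the left, and the Schur step is exact since all entries of $R$ and $T$ are built from the mutually commuting $\Minus{b}$. This is a genuinely different organization from the paper, which works instead with the product $\tilde{Q}\tilde{\loopMat} = Q(\Minus{1},\ldots,\Minus{N})\cdot\loopMat(\PlusD{1},\ldots,\PlusD{N})$, i.e.\ with the lowering-operator matrix on the \emph{left}. That choice of order is not cosmetic, and it is where your proposal has two genuine gaps. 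The first concerns the reduction of $\coldet\left(\Dim I_{\Loops} - \mathcal{S}T + \mathrm{corr}\right)$ modulo $\mathcal{J}$: the entries of $\mathcal{S}T$ contain terms $\shiftIBP{i}{m}\cdot(G^{-1}B^{\Transpose})_{mk}$, which lie in the \emph{right} ideal generated by the $\shiftIBP{i}{j}$, not in the left ideal $\mathcal{J}$. Your plan to commute the generators rightward is the correct idea, but the commutators it produces are not generic ``lower-order operators'' to be tamed by an unproven induction: the argument only closes because they are exact \emph{scalars}. Indeed, using $\sum_a\spMat^a_{\set{k,s}}\spMat_a^{\set{i,l}} = \delta_{\set{k,s},\set{i,l}}$ (which kills the first-order part) and $G^{-1}G=I$, one finds $\commu{\shiftIBP{i}{m}}{(G^{-1}B^{\Transpose})_{mk}} = -\delta_{ik}$ for each $m>\Loops$, whence every entry of $\Dim I_{\Loops} - \mathcal{S}T$ is congruent mod $\mathcal{J}$ to $(\Dim-\ExtMoms)\,\delta_{ik}$; this is exactly where the $\ExtMoms$ in the Pochhammer product originates. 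This computation --- the counterpart of the paper's evaluation $C^{ak}_{ai} = (\AllMoms+1)\,\delta_i^k$ --- is the actual content of the proof and is missing from your proposal.

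The second gap is decisive even if the first is filled. Turnbull-type identities factor $\coldet(\mathsf{P}_L Q + \mathrm{corr})$ as $\det\mathsf{P}_L\cdot\det Q$ \emph{in the same order as the matrix product}, so your route terminates in a congruence for $\U(\PlusD{1},\ldots,\PlusD{N})\cdot\det Q(\Minus{1},\ldots,\Minus{N})$, whereas the proposition asserts one for $\det Q(\Minus{1},\ldots,\Minus{N})\cdot\U(\PlusD{1},\ldots,\PlusD{N})$. For $\Loops=1$ the discrepancy $\commu{\det Q}{\U}$ happens to equal the constant $1$, which is why your one-loop bookkeeping works out; but for $\Loops>1$ this commutator is a genuine operator, not a scalar, and congruence modulo a \emph{left} ideal is not stable under reordering of factors. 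Your phrase ``absorbing \ldots the reordering of $\U$ past $\det Q$ into the constant'' therefore conceals a claim, namely $\commu{\U}{\det Q}\equiv\mathrm{const} \bmod \mathcal{J}$, whose proof is of the same order of difficulty as the proposition itself. The paper avoids both problems simultaneously by choosing the order $\tilde{Q}\tilde{\loopMat}$ from the start: Turnbull's identity \eqref{eq:Turnbull} then produces exactly $\det\tilde{Q}\cdot\det\tilde{\loopMat}$ as required by \eqref{eq:FI-dimAnn-in-Mom}, and the entry-level reordering (lowering past raising) puts every ideal term in the form (lowering coefficient)$\,\cdot\,$(generator), manifestly in $\mathcal{J}$, with all commutator constants tracked at the level of matrix entries. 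In short, the exactness you gain in the packaging $\mathsf{P}R = D - \mathcal{S}$ is paid back with interest at the determinant stage, and the ordering mismatch is not resolved by your sketch.
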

\begin{proof}
	Let us abbreviate $\tilde{\loopMat} \defas \loopMat(\PlusD{1},\ldots,\PlusD{N})$ from \eqref{eq:M-from-A} such that $\U=\det \Lambda$ and similarly $\tilde{Q} \defas Q(\Minus{1},\ldots,\Minus{N})$ for the matrix \eqref{eq:Qij-spMat}. Using \eqref{eq:Qij-spMat} and \eqref{eq:Shifts}, we compute
	\begin{align*}
		\commu{\tilde{Q}_{i,j}}{\PlusD{c}}
		&=
			\spMat^a_{\set{i,j}} \commu{\Minus{a}}{\PlusD{c}} 
			- G^{-1}_{r,s} \spMat^a_{\set{i,r}} \spMat^b_{\set{j,s}}
			\left(
				(\Minus{a}-\lambda_a) \commu{\Minus{b}-\lambda_b}{\PlusD{c}}
				+ \commu{\Minus{a}-\lambda_a}{\PlusD{c}} (\Minus{b}-\lambda_b)
			\right)
		\\
		&= -\spMat^c_{\set{i,j}} + G^{-1}_{r,s} \left(
			\spMat^a_{\set{i,r}} \spMat^c_{\set{j,s}} (\Minus{a} - \lambda_a)
			+\spMat^c_{\set{i,r}} \spMat^b_{\set{j,s}} (\Minus{b} - \lambda_b)
		\right)
		.
	\end{align*}
	Contracting with the matrix $\spMat_c^{\set{k,l}}$ (for $k,l\leq \Loops$) by summing over $c$, we conclude that
	\begin{equation*}
		\commu{\tilde{Q}_{i,j}}{\spMat_c^{\set{k,l}} \PlusD{c}}
		= -\delta_{\set{i,j},\set{k,l}} + G^{-1}_{r,s} \left(
			\spMat^a_{\set{i,r}} \delta_{\set{j,s},\set{k,l}} (\Minus{a} - \lambda_a)
			+\delta_{\set{i,r},\set{k,l}} \spMat^b_{\set{j,s}} (\Minus{b} - \lambda_b)
		\right)
		.
	\end{equation*}
	Note that the indices $r$ and $s$ take values $>\Loops$, whereas $i$ and $j$ are $\leq \Loops$. Hence $\delta_{\set{j,s},\set{k,l}}=\delta_{\set{i,r},\set{k,l}} = 0$. So, recalling \eqref{eq:M-from-A}, we finally arrive at
	\begin{equation}
		\commu{\tilde{Q}_{i,j}}{\tilde{\loopMat}_{k,l}}
		= \frac{1+\delta_{k,l}}{2} \delta_{\set{i,j},\set{k,l}}
		= \frac{\delta_{i,k}\delta_{j,l} + \delta_{i,l}\delta_{j,k}}{2}
		.
		\label{eq:Qij,Lkl}%
	\end{equation}
	Recall \eqref{eq:def-U-F}, that $\U(x) = \det \loopMat(x)$, such that 
	\begin{equation*}
		\det Q(\Minus{1},\ldots,\Minus{N}) \cdot \U(\PlusD{1},\ldots,\PlusD{N})
		= \det \tilde{Q} \cdot \det \tilde{\loopMat}.
	\end{equation*}
	We can now invoke an identity of Turnbull \cite{Turnbull:SymmetricCayleyCapelli}, see also \cite{FoataZeilberger:CombinatorialCapelliTurnbull} for a combinatorial and \cite{CaraccioloSportielloSokal:NonDetI} for an algebraic proof, which relates this product of determinants to a determinant of the product $\tilde{Q} \cdot \tilde{\loopMat}$. This is non-trivial, because the elements of these two matrices do not commute, according to \eqref{eq:Qij,Lkl}.
	Turnbull's identity, as stated in \cite[Proposition~1.4]{CaraccioloSportielloSokal:NonDetI}, applies precisely to this kind of very mild non-commutativity \eqref{eq:Qij,Lkl} and states that
	\begin{equation}
		\det \tilde{Q} \cdot \det \tilde{\loopMat}
		= \coldet \left( \tilde{Q} \cdot \tilde{\loopMat} + Q_{col} \right)
		,
		\quad\text{where}\quad
		(Q_{col})_{i,j}
		\defas -\frac{\Loops-i}{2} \delta_{i,j}
		\label{eq:Turnbull}%
	\end{equation}
	is a simple diagonal matrix and $\coldet$ denotes the column-ordered determinant
	\begin{equation}
		\coldet A
		\defas \sum_{\sigma \in S_N} \sign(\sigma) A_{\sigma(1),1} \cdots A_{\sigma(N),N}
		.
		\label{eq:coldet}%
	\end{equation}
	So let us now compute the entries of the product of $\tilde{Q}$ from \eqref{eq:Qij-spMat} with $\tilde{\loopMat}$. Firstly,
	\begin{equation*}
		\spMat^b_{\set{j,s}}(\Minus{b}-\lambda_b) \tilde{\loopMat}_{j,k}
		= -\frac{1}{2} \spMat^b_{\set{j,s}}(1+\delta_{j,k})\spMat^{\set{j,k}}_c (\Minus{b}-\lambda_b) \PlusD{c}
		= -\frac{1}{2} C_{cs}^{bk} \left\{ \PlusD{c}(\Minus{b}-\lambda_b)-\delta_{b,c} \right\}
	\end{equation*}
	according to \eqref{eq:IBP-coeffs}. Note that $C_{bs}^{bk} = \sum_{b,j} (1+\delta_{j,k}) \spMat_b^{\set{j,k}} \spMat^b_{\set{j,s}} = \sum_j (1+\delta_{j,k}) \delta_{\set{j,k},\set{j,s}} = 0$ due to $s>\Loops \geq k$. So we can rewrite, due to \eqref{eq:shift-IBP},
	\begin{align*}
		\spMat^b_{\set{j,s}}(\Minus{b}-\lambda_b) \tilde{\loopMat}_{j,k}
		=
		\frac{1}{2} \shiftIBP{k}{s} 
		+ \frac{1}{2}\sum_{m>\Loops} \spMat^{\set{k,m}}_b \PlusD{b} (q_s \cdot q_m)
		.
	\end{align*}
	Note that $q_s \cdot q_m = G_{s,m}$ such that contraction of the second summand with $G^{-1}_{r,s}$ produces $\delta_{r,m}$. So the sum over $m$ collapses, and up to the term with $\shiftIBP{k}{s}$, $\tilde{Q}_{i,j} \tilde{\loopMat}_{j,k}$ is
	\begin{multline*}
		-\frac{1+\delta_{j,k}}{2} \spMat^a_{\set{i,j}} (\Minus{a} - \lambda_a) \spMat_c^{\set{j,k}} \PlusD{c}
		-\frac{1}{2} \spMat^a_{\set{i,r}}(\Minus{a} - \lambda_a) \spMat^{\set{k,r}}_b \PlusD{b}
		=-\frac{1}{2} C^{ak}_{bi} (\Minus{a}-\lambda_a) \PlusD{b}
		\\
		=\frac{1}{2} \left\{ \shiftIBP{k}{i} -\Dim \delta^{k}_{i} + C^{ak}_{ai} \right\}
		=\frac{1}{2} \left\{ \shiftIBP{k}{i} -\Dim \delta^{k}_{i} + (\AllMoms+1) \delta^k_i \right\},
	\end{multline*}
	where the term
	$C_{ai}^{ak}
	= \sum_{a,j} (1+\delta_{j,k}) \spMat_a^{\set{j,k}} \spMat^a_{\set{j,i}}
	= \sum_j (1+\delta_{j,k}) \delta_{\set{j,k},\set{j,i}}
	%= \delta_{i,k} \sum_j (1+\delta_{j,k})
	=\delta_{i}^{k} (\AllMoms+1)$
	comes from commuting $\Minus{a}$ with $\PlusD{b}$.
	Putting our results together, we arrive at
	\begin{equation}
		\tilde{Q}_{i,j} \tilde{\loopMat}_{j,k}
		= \frac{1}{2} \left\{ 
			(\AllMoms+1-\Dim)\delta^{k}_{i}
			+ \shiftIBP{k}{i} 
			- \spMat^a_{\set{i,r}} (\Minus{a}-\lambda_a) G^{-1}_{r,s} \shiftIBP{k}{s}
		\right\}
		.
		\label{eq:Qij-loopMatjk}%
	\end{equation}
	So if we ignore all terms that lie in the (left) ideal generated by the momentum space (shift) operators $\shiftIBP{i}{j}$, the column determinant \eqref{eq:coldet} of the matrix $\tilde{Q} \cdot \tilde{\loopMat} + Q_{col}$ from \eqref{eq:Turnbull} can be replaced by an ordinary determinant $\det B$ of the diagonal matrix
	\begin{equation*}
		B_{i,j} 
		= \frac{\delta_{i,j}}{2} \left(\AllMoms+1-\Dim -\Loops+i \right)
		= \frac{\delta_{i,j}}{2} \left(\ExtMoms+1-\Dim+i \right),
	\end{equation*}
	such that indeed we conclude with the result that
	\begin{equation*}
		\det \tilde{Q} \cdot \det \tilde{\loopMat}
		\equiv \prod_{i=1}^{\Loops} \frac{\ExtMoms+1-\Dim+i}{2}
		\mod
		\sum_{i,j} \Shifts{N}[\Dim] \cdot \shiftIBP{i}{j}.
		\qedhere
	\end{equation*}
\end{proof}
The Mellin transform $\FI(\Dim)=\Mellin{\U^s e^{-\F/\U}}$ identifies \eqref{eq:Bernstein-Sato-FI} with the annihilator
\begin{equation}
	\left\{ \det Q(-\partial) \,\cdot\, \U - \tilde{b}(s) \right\} \bullet \U^s e^{-\F/\U}
	= 0,
	\quad\text{where}\quad
	\tilde{b}(s) \defas \prod_{j=2}^{\Loops+1} \left(s+\frac{\ExtMoms+j}{2} \right)
	.
	\label{eq:Bernstein-Sato-FI-par}%
\end{equation}
To phrase this in terms of $\FImel(\Dim) = \Mellin{\G^s} = \FI(\Dim) \cdot \Gamma(-s-\sdc)/\Gamma(-s)$, we can use $\U \G^s = -H(s+1) \G^{s+1}/(s+1)$ from remark~\ref{rem:HG+sU} to conclude that
\begin{equation}
	\Gamma(H(s)) \cdot \det Q(-\partial) \cdot \frac{1}{\Gamma(H(s)-\Loops-1)} \cdot \G^{s+1} 
	= -(s+1)\tilde{b}(s) \G^{s},
	\label{eq:Bernstein-Sato-par}%
\end{equation}
where $H(s)=\Mellin[-1]{-s-\sdc} = -s(\Loops+1)+\sum_{i=1}^N \theta_i$. Recall from \eqref{eq:FImel-relation-to-FI} that the left-hand side of \eqref{eq:Bernstein-Sato-par} can be written, in terms of the homogeneous components $Q_r$ (with degree $r$) of $\det Q(-\partial) = \sum_r Q_r$, as
\begin{equation*}
	\sum_r \frac{\Gamma(H(s))}{\Gamma(H(s)-L-1+r)} \, Q_r
	= \sum_{r \leq \Loops+1} \left[ \prod_{i=1}^{\Loops+1-r} (H(s)-i) \right] Q_r
	+ \sum_{r > \Loops+1} \left[ \prod_{i=0}^{r-\Loops-1} \frac{1}{H(s)+i} \right] Q_r
	.
\end{equation*}
If $r \leq \Loops + 1$, this is a polynomial differential operator, and we thus obtained an explicit Bernstein-Sato operator as in definition~\ref{def:Bernstein-Sato}.
\begin{cor}
	If the degree of the Baikov polynomial $\BP(y)$ is not more than $\Loops+1$, then the Bernstein-Sato polynomial $b(s)$ of the Lee-Pomeransky polynomial $\G$ is a divisor of $(s+1)\tilde{b}(s)$. In particular, all roots of $b(s)/(s+1)$ are simple and at half-integers.
\end{cor}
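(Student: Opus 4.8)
The plan is to exhibit, under the degree hypothesis, an \emph{explicit} polynomial differential operator that realizes $(s+1)\tilde{b}(s)$ as a $b$-function of $\G$, and then to invoke the minimality defining the Bernstein-Sato polynomial. The whole argument is a harvest of the bookkeeping carried out just above the corollary; almost nothing new needs to be computed.

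First I would observe that the hypothesis $\deg\BP\leq\Loops+1$ controls precisely which homogeneous components $Q_r$ of $\det Q(-\partial)=\sum_r Q_r$ can appear. Since $\BP(y)=\Gram\cdot\det Q(y)$ with $\Gram$ a nonzero constant, we have $\deg\BP=\deg\det Q$, so the hypothesis forces $Q_r=0$ for every $r>\Loops+1$. Consequently, in the expansion of the left-hand side of \eqref{eq:Bernstein-Sato-par} displayed immediately before the corollary, only the terms with $r\leq\Loops+1$ survive, and each of these carries the \emph{polynomial} coefficient $\prod_{i=1}^{\Loops+1-r}(H(s)-i)$ rather than a factor involving the inverse of $H(s)$. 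Hence the operator
\[
	P(s)\defas\sum_{r\leq\Loops+1}\Bigl[\textstyle\prod_{i=1}^{\Loops+1-r}(H(s)-i)\Bigr]Q_r
\]
lies in $\Weyl{N}[s]$, and by \eqref{eq:Bernstein-Sato-par} it satisfies $P(s)\,\G^{s+1}=-(s+1)\tilde{b}(s)\,\G^{s}$.

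This is exactly a Bernstein-Sato equation \eqref{eq:Bernstein-Sato} for $f=\G$ with $b$-polynomial $-(s+1)\tilde{b}(s)$. The $b$-functions achievable in this way form an ideal of $\C[s]$ whose monic generator is, by definition, the Bernstein-Sato polynomial $b(s)$; therefore $b(s)$ must divide $(s+1)\tilde{b}(s)$, which is the first claim. For the ``in particular'' statement I would recall the classical fact that $(s+1)$ always divides the Bernstein-Sato polynomial of a non-constant polynomial; dividing through, $b(s)/(s+1)$ is then a divisor of $\tilde{b}(s)=\prod_{j=2}^{\Loops+1}\bigl(s+\tfrac{\ExtMoms+j}{2}\bigr)$. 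The latter is squarefree, with pairwise distinct roots $-\tfrac{\ExtMoms+j}{2}$ all lying in $\tfrac12\Z$, so every root of $b(s)/(s+1)$ is simple and a half-integer.

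The only genuinely delicate point is the first step: one must verify that the degree bound on $\BP$ is \emph{exactly} the condition under which the rational coefficients $\Gamma(H(s))/\Gamma(H(s)-\Loops-1+r)$ collapse to polynomials in $H(s)$, so that $P(s)$ is a bona fide element of $\Weyl{N}[s]$ and not merely of the localized algebra $\Weyl{N}_k$ from corollary~\ref{cor:dim-shift-closed}. This is immediate from the $\Gamma$-function manipulation already performed before the corollary, once one notes that a surviving component $Q_r$ with $r>\Loops+1$ would contribute a factor of the form $\prod_{i}(H(s)+i)^{-1}$ that cannot be cleared from the operator. Everything after this reduction is a formal consequence of the definition of the Bernstein-Sato polynomial together with the explicit linear factorization of $\tilde{b}(s)$.
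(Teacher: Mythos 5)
Your proof is correct and takes essentially the same route as the paper: the text preceding the corollary already notes that when all homogeneous components $Q_r$ with $r>\Loops+1$ vanish (which is exactly what $\deg\BP\leq\Loops+1$ forces, since $\BP(y)=\Gram\cdot\det Q(y)$), the left-hand side of \eqref{eq:Bernstein-Sato-par} is a polynomial differential operator realizing $-(s+1)\tilde{b}(s)$ as a $b$-function of $\G$, and the divisibility then follows from minimality of the Bernstein--Sato polynomial. The only points you make explicit that the paper leaves implicit --- the ideal structure of achievable $b$-functions and the classical fact that $(s+1)$ divides $b(s)$ for non-constant $\G$ --- are exactly the right ones to fill in.
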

Note that $\deg \BP(y) \leq \min\set{2\Loops,\AllMoms} = \Loops + \min\set{\Loops,\ExtMoms}$ by definition~\ref{def:Baikov} and equation \eqref{eq:Qij-spMat}, so in particular, the corollary applies to all propagator graphs ($\ExtMoms=1$) and to all graphs with one loop ($\Loops=1$).

%================================================================================
\section{Euler characteristic as number of master integrals}
\label{sec:Euler-NoM}%
%================================================================================

Here we will show, using the theory of Loeser and Sabbah \cite{LoeserSabbah:IrredTore}, that the number of master integrals equals the Euler characteristic of the complement of the hypersurface defined by $\G=0$ inside the torus $\Gm^N$ (we write $\Gm = \Aff\setminus\set{0}$ for the multiplicative group and $\Aff$ for the affine line).
For a full understanding of this section, some knowledge of basic $D$-module theory is indispensable; but we tried to include sufficient detail for the main ideas to become clear to non-experts as well. In particular, we will give self-contained proofs that only use $D$-module theory at the level of \cite{Coutinho:Primer}.
\begin{defn}
	By $V_{\G}$ we denote the vector space of all Feynman integrals associated to $\G$, over the field $\C(s,\nu) \defas \C(s,\nu_1,\ldots,\nu_N)$ of rational functions (in the dimension and indices).
	More precisely, with $\FImel_{\G} \defas \Mellin{\G^s}$,
	\begin{equation}
		V_{\G}
		\defas
		\sum_{n \in \Z^N} \C(s,\nu) \cdot \FImel_{\G}(\nu+n)
		=
		\C(s,\nu) \tp_{\C[s,\nu]} \left( \Shifts{N}[s] \cdot \FImel_{\G} \right)
		.
		\label{eq:def-V_G}%
	\end{equation}
	The \emph{number of master integrals} is the dimension of this vector space:
	\begin{equation}
		\NoM{\G}
		\defas \dim_{\C(s,\nu)} V_{\G}
		.
		\label{eq:def-number-of-masters}%
	\end{equation}
\end{defn}
Note that this is the same as the dimension of the space $\sum_n \C(s,\nu) \FI_{\G}(\nu+n)$ of Feynman integrals \eqref{eq:FI-momentum}, because the ratios $\FI_{\G}(\nu+n)/\FImel_{\G}(\nu+n) = \Gamma(-s)/\Gamma(-s-\sdc-\abs{n})$ with $\abs{n}=n_1+\ldots+n_N$ are all related by a rational function in $\C(s,\nu)$, see \eqref{eq:FImel-relation-to-FI}.
\begin{rem}[\textbf{Warning}]
	The phrase ``master integrals'' is used in the existing physics literature to denote various different quantities, and none of those notions coincides exactly with ours.
	The main sources for discrepancies are:
	\begin{enumerate}
		\item Almost always the integrals are considered only for integer indices $\nu \in \Z^N$, instead of as functions of arbitrary indices. In this setting, integrals with at least one $\nu_e=0$ can be identified with quotient graphs (``subtopologies'') and are often discarded from the counting of master integrals.
		\item We only discuss relations of integrals that are expressible as linear shift operators acting on a single integral. 
	This setup cannot account for relations of integrals of different graphs (with some fixed values of the indices), as for example discussed in \cite{KniehlKotikov:EffectiveMass}.
	It also excludes symmetry relations, which are represented by permutations of the indices $\nu_e$.
	\item Some authors do not count integrals if they can be expressed in terms of $\Gamma$-functions or products of simpler integrals, for example \cite{KniehlKotikov:EffectiveMass,KalmykovKniehl:CountingMasters}.
	\end{enumerate}
	Taking care of these subtleties, we will demonstrate in section~\ref{sec:examples} that our definition gives results that do match the counting of master integrals obtained by other methods.
\end{rem}

A fundamental result for methods of integration by parts reduction is that the number of master integrals is finite. This was proven in \cite{SmiPet} for the case of integer indices $\nu \in \Z^N$, using the momentum space representation. Below we will show that this result holds much more generally, for unconstrained $\nu$, and that it becomes a very natural statement once it is viewed through the parametric representation.
Notably, it remains true for Mellin transforms $\Mellin{\G^s}$ of \emph{arbitrary} polynomials $\G$---the fact that $\G$ comes from a (Feynman) graph is completely irrelevant for this section.

Recall that, by the Mellin transform, we can rephrase statements about integrals in terms of the parametric integrands. In line with \eqref{eq:integrands-FIs-iso} and \eqref{eq:def-V_G}, we can rewrite \eqref{eq:def-number-of-masters} as
\begin{equation*}
	\NoM{\G} 
	= \dim_{\C(s,\theta)} \left(
		\C(s,\theta)
		\tp_{\C[s,\theta]}
		\Weyl{N}[s] \cdot \G^s
	\right)
	,
%	\label{eq:NoM-as-Mellin}
\end{equation*}
where $\C[s,\theta] = \C[s,\theta_1,\ldots,\theta_N]$ denotes the polynomials in the dimension $s=-\Dim/2$ and the operators $\theta_e \defas x_e \partial_e = \Mellin[-1]{-\nu_e}$, and $F \defas \C(s,\theta)$ stands for their fraction field (the rational functions in these variables). Since $F$ contains $k \defas \C(s)$, we can equivalently work over this base field throughout and write
\begin{equation}
	\NoM{\G} 
	= \dim_F (F\tp_R\M)
	\label{eq:NoM-as-abstract-Mellin}%
\end{equation}
in terms of $R \defas k[\theta]$ and the module $\M = \Weyl{N}_k \!\cdot\! \G^s$ over the Weyl algebra $\Weyl{N}_k \defas \Weyl{N} \tp_{\C} k = \Weyl{N}[s] \tp_{\C[s]} k$ over the field $k=\C(s)$.
Crucially, $\Weyl{N}_k \!\cdot\! \G^s$ is a \emph{holonomic} $\Weyl{N}_k$-module, which is a fundamental result due to Bernstein~\cite{Bernshtein:AnalyticContinuation}.

Holonomic modules are, in a precise sense, the most constrained, and behave in many ways like finite-dimensional vector spaces. For example, sub- and quotient modules, direct and inverse images of holonomic modules are again holonomic \cite{Kashiwara:RationalityOfRoots,Kashiwara:OnHolonomicSystemsII}, and holonomic modules in zero variables are precisely the finite-dimensional vector spaces. The holonomicity of the parametric integrand was already exploited in \cite{KashiwaraKawai:HolonomicSystemsFI} to show that Feynman integrals fulfill a holonomic system of differential equations, and it is also a key ingredient in the proof in \cite{SmiPet}.

The number defined in \eqref{eq:NoM-as-abstract-Mellin} has been studied by Loeser and Sabbah~\cite{LoeserSabbah:IrredTore} in a slightly different setting, namely for holonomic modules over the algebra 
\begin{equation}
	\WeylT{N}_k
	\defas k[x_1^{\pm 1},\ldots,x_N^{\pm 1}] \langle \partial_1,\ldots,\partial_N \rangle
	= k[x^{\pm 1}] \tp_{k[x]} \Weyl{N}_k 
	= \Weyl{N}_k [x^{-1}]
	\label{eq:WeylT}%
\end{equation}
of linear differential operators on the torus $\Gm[k]^N$. Note that $\WeylT{N}_k$ is just the localization of $\Weyl{N}_k$ at the coordinate hyperplanes $x_i=0$; that is, the coefficients of the derivations are extended from polynomials $\OO(\Aff^N_k)=k[x]$ to rational functions $\OO(\Gm[k]^N)=k[x^{\pm 1}]$ whose denominator is a monomial in the coordinates $x_i$. Equivalently, we can also view $\WeylT{N}_k= \GmA^{\ast} \Weyl{N}_k$ as the pull-back under the (open) inclusion
\begin{equation*}
	\GmA\colon \Gm[k]^N \injects \Aff^N_k
	.
%	\label{eq:GmA}%
\end{equation*}
The pull-back along $\GmA$ turns every $\Weyl{N}_k$-module $\M$ into a $\WeylT{N}_k$-module $\GmA^{\ast} \M$, namely the localization $\GmA^{\ast} \M = k[x^{\pm 1}] \tp_{k[x]} \M = \M[x^{-1}]$. Importantly, if $\M$ is holonomic, so is its pull-back $\GmA^{\ast} \M$. The starting point for this section is
\begin{thm}[Loeser \& Sabbah \cite{LoeserSabbah:IrredTore,LoeserSabbah:IrredToreII}]
	\label{thm:L-S}%
	Let $\M$ denote a holonomic $\Weyl{N}_k$-module. Then $F \tp_R \M$ is a finite-dimensional vector space over $F$. Moreover, its dimension is given by the Euler characteristic $\dim_F (F \tp_R \M) = \chi\left(\GmA^{\ast} \M \right)$.
\end{thm}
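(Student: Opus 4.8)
The plan is to compute both sides through a single object --- the de Rham complex of $\GmA^{\ast}\M$ on the torus, reorganised as a Koszul complex on the commuting Euler operators $\theta_e = x_e\partial_e$ --- and to exhibit $\dim_F(F\tp_R\M)$ and $\chi(\GmA^{\ast}\M)$ as the two ends of a one-parameter family of Euler characteristics that I will argue is constant. Concretely, over $R = k[\theta]$ I would introduce the shifted Koszul complexes $K_\bullet(\theta_1 - c_1,\ldots,\theta_N - c_N;\, \GmA^{\ast}\M)$ depending on a point $c$ of the base $\operatorname{Spec} R$, observe that the fibre at the origin $c=0$ recovers the de Rham complex while the fibre at the generic point recovers $F\tp_R\GmA^{\ast}\M$, and finish by showing the Euler characteristic is independent of $c$.

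First I would reduce from $\Weyl{N}_k$ to the torus. The localisation map $\M \to \GmA^{\ast}\M = \M[x^{-1}]$ has kernel and cokernel supported on the coordinate hyperplanes $\{x_e = 0\}$. On any $\Weyl{N}_k$-module annihilated by a power of $x_e$ one computes directly that $\theta_e = x_e\partial_e$ acts with integer eigenvalues (from $\theta_e(\partial_e^j v) = -(j+1)\partial_e^j v$ whenever $x_e v = 0$); such a module is therefore $(\theta_e - n)$-torsion for integers $n$, hence annihilated by $F = \C(s,\theta)$, whose point is generic in each $\theta_e$. Since $F$ is flat over $R$, this gives $F\tp_R\M \isomorph F\tp_R\GmA^{\ast}\M$, so it suffices to work with the holonomic $\WeylT{N}_k$-module $\GmA^{\ast}\M$ on $\Gm^N$.

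Next I would place both invariants inside one family. Because $\Gm^N$ is affine and the $\theta_e$ pairwise commute, the de Rham complex of $\GmA^{\ast}\M$, written with the logarithmic forms $\dd x_e/x_e$, is exactly the Koszul complex $K_\bullet(\theta;\GmA^{\ast}\M)$, so that $\chi(\GmA^{\ast}\M) = \sum_i (-1)^i \dim_k H_i\bigl(K_\bullet(\theta;\GmA^{\ast}\M)\bigr)$ is the fibre at $c=0$ of the family above. At the generic point flatness of $F$ over $R$ kills the higher $\operatorname{Tor}_i^R(F,\GmA^{\ast}\M)$, so the generic fibre is concentrated in homological degree zero and equals $F\tp_R\GmA^{\ast}\M$. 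Thus the asserted identity $\dim_F(F\tp_R\M) = \chi(\GmA^{\ast}\M)$ reduces to the claim that the integer-valued function $c\mapsto \sum_i(-1)^i\dim_{\kappa(c)}\operatorname{Tor}_i^R(\kappa(c),\GmA^{\ast}\M)$ takes the same value at $c=0$ and at the generic point. At $c=0$ both finiteness and the value $\chi(\GmA^{\ast}\M)$ hold because the de Rham cohomology of a holonomic module --- a pushforward to a point, holonomic in zero variables and hence finite-dimensional over $k$ --- is finite; finiteness of $F\tp_R\M$ itself is then forced by the constancy established below, which also discharges the first assertion of the theorem.

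The hard part is the $c$-independence of this Euler characteristic. The substitution $\theta_e\mapsto \theta_e - c_e$ is the de Rham complex of $\GmA^{\ast}\M$ tensored with the rank-one connection $x^{c}$, that is, a character twist, and its geometric content is that such a twist leaves the characteristic cycle of the holonomic module unchanged. I would therefore invoke the index theorem for holonomic modules (Dubson--Kashiwara), which expresses the de Rham Euler characteristic as an intersection number depending only on the characteristic cycle; constancy of the cycle along the family then gives constancy of $\chi$. A more self-contained route, adequate at the level used here, is to organise the shifted Koszul complexes into a single bounded complex of coherent sheaves on $\operatorname{Spec} R$ and to combine generic flatness with the invariance of the Euler characteristic in a flat family of holonomic modules. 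In either approach the genuinely delicate point is to control finiteness and flatness \emph{uniformly} in $c$, and this is precisely the service provided by holonomicity, concretely through the existence of $b$-functions in each $\theta_e$-direction that bound the $\operatorname{Tor}$-groups across the whole base.
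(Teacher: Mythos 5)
Your setup is sound and, up to the point you yourself flag as hard, parallels the paper's framework: the reduction $F\tp_R\M\isomorph F\tp_R\GmA^{\ast}\M$ is correct (every element of the kernel and cokernel of $\M\to\M[x^{-1}]$ is killed by a product of factors $\theta_e+j$ with integers $j\geq1$, which are invertible in $F$), and on the torus the de Rham complex written in the basis $\dd x_e/x_e$ is literally the Koszul complex $\Koszul{\GmA^{\ast}\M}{\theta}$ (remark~\ref{rem:theta-qis}), whose shift by $c$ computes $\operatorname{Tor}^R_{\bullet}(\kappa(c),\GmA^{\ast}\M)$. The gap is that the constancy of the resulting Euler characteristic between the closed fibre $c=0$ and the \emph{generic} fibre is asserted rather than proven, and that constancy is essentially the whole theorem. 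Your Dubson--Kashiwara route (even granting the index theorem on the non-proper torus over the non-algebraically-closed field $k=\C(s)$) gives constancy of $\chi(c)$ over \emph{closed} points $c$, since the twist by $x^c$ preserves characteristic cycles; but $\dim_F(F\tp_R\M)$ is the fibre at the generic point, and since $\GmA^{\ast}\M$ is not finitely generated over $R$ there is no coherence or generic flatness with which to transport a value from closed points to the generic point. This failure is real, not pedantic: already for $\M=\OO(\Gm[k])=k[x_1^{\pm1}]$, which as an $R$-module is $\bigoplus_{n\in\Z}R/(\theta_1-n)$, the fibre dimension $\dim_{\kappa(c)}\bigl(\kappa(c)\tp_R\M\bigr)$ equals $1$ on the Zariski-dense set of integer points and $0$ at the generic point; only the alternating sum of all Tor groups is constant, and proving that this alternating sum does not jump when passing from closed points to the generic point of a non-coherent module is exactly the finiteness statement you are trying to establish.

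Your second route collapses at the first step for the same reason: the terms of the Koszul complex are direct sums of copies of $\GmA^{\ast}\M$, so it is \emph{not} a bounded complex of coherent sheaves on $\operatorname{Spec} R$, and ``generic flatness plus invariance of $\chi$ in a flat family'' has nothing to act on. Your closing remark about ``$b$-functions in each $\theta_e$-direction bounding the Tor groups'' is the right instinct, but it is precisely the unproven heart of the matter, and it is what the paper actually carries out: by induction on $N$ (applying the induction hypothesis to the holonomic $\WeylT{N-1}_k$-modules $\ker\partial_N$ and $\M/\partial_N\M$ and splicing with the long exact sequence \eqref{eq:les-relative-dR}), the problem is reduced to one variable, where theorem~\ref{thm:L-S:1} uses a $b$-function argument to construct a \emph{finitely generated} $k[\theta_1]$-lattice $\Nmod_{j_0}\subseteq\M$ satisfying $\ker\partial_1=\ker\bigl(\restrict{\theta_1}{\Nmod_{j_0}}\bigr)$, $\M/\partial_1\M\isomorph\Nmod_{j_0}/\theta_1\Nmod_{j_0}$ and $\M(\theta_1)=\Nmod_{j_0}(\theta_1)$. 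Only after this replacement by a coherent module does an argument of your type become valid (a finite free resolution shows that the Tor Euler characteristic of a finitely generated $R$-module is its generic rank, independent of $c$). To repair your proof you must either produce such a lattice uniformly in all $N$ variables, or restructure the argument as an induction on $N$ as the paper does; as written, both assertions of the theorem---including the finite-dimensionality of $F\tp_R\M$, which you defer to the same constancy claim (the paper gets it independently from lemma~\ref{lem:L-S:holonomicity})---rest on an unsupported step.
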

In appendix~\ref{sec:L-S}, we provide a self-contained proof of this crucial theorem, simpler and more explicit than in \cite{LoeserSabbah:IrredToreII}.
For now, let us content ourselves with reducing it to the known situation on the torus.
\begin{proof}
	We can invoke $\dim_F(F \tp_R \GmA^{\ast} \M) = \chi(\GmA^{\ast}\M) < \infty$ from \cite[Th\'{e}or\`{e}me~2]{LoeserSabbah:IrredToreII}. To conclude, we just need to note that $F \tp_R \M$ and $F \tp_R \M[x^{-1}] = F \tp_R \GmA^{\ast}\M$ are isomorphic vector spaces (over $F$). This is clear since each coordinate $x_i$ is invertible after localizing at $F$: due to $\partial_i x_i = 1+x_i \partial_i$, we find that $(1+x_i \partial_i)^{-1} \tp \partial_i \in F\tp_R \Weyl{N}_k$ is an inverse to $1\tp x_i$.
\end{proof}
This result not only implies the mere finiteness of the number of master integrals, but in addition gives a formula for this number---it is the Euler characteristic, given by
\begin{equation}
	\chi\left( \M' \right)
	\defas
	\chi(\DR(\M'))
	= \sum_{i} (-1)^i \dim_k H^i \big( \DR( \M')  \big)
	,
	\label{eq:def-Euler}%
\end{equation}
of the algebraic de Rham complex of $\M'\defas \GmA^{\ast} \M$. This is the complex
\begin{equation}
	\DR(\M')
	\defas
	\left( \Omega^\bullet_{\Gm[k]^N} \otimes_{\OO(\Gm[k]^N)} \M'[N], \dd \right)
%	\Omega^\bullet_{\Gm[k]^N} \otimes_{k[x^{\pm 1}]} \M'[N]
	\label{eq:def-DR}%
\end{equation}
of $\M'$-valued differential forms on the torus $\Gm[k]^N$, with the connection $\dd (\omega \tp m) = \dd \omega \tp m + \sum_{i=1}^N (\dd x_i \wedge \omega) \tp \partial_i m$. Note that the $r$-forms $\omega$ are shifted to sit in degree $r-N$ of the complex, which is thus supported in degrees between $-N$ and $0$; hence \eqref{eq:def-Euler} is a finite sum over $0 \leq i \leq N$. The extremal cohomolgy groups are easily identified as
\begin{equation}
	H^{-N}\left( \DR(\M') \right)
	= \bigcap_{i=1}^N \ker \partial_i
	\quad\text{and}\quad
	H^0\left( \DR(\M') \right)
	\isomorph \left. \M' \middle/ \sum_{i=1}^N \partial_i \M' \right.
	= \pi_{\ast} \M'
	,
\end{equation}
with the latter also known as push-forward of $\M'$ under the projection $\pi\colon \Gm[k]^N \surjects \Aff_k^0$ to the point.
Since holonomicity is preserved under direct images, we conclude that $\dim_k H^0\left( \DR(\M') \right)$ is finite.\footnote{%
	Recall that a holonomic module over the point $\Aff^0_k$ is the same as a finite-dimensional $k$-vector space.
}
In fact, the same is true for the other de Rham cohomology groups, which shows that \eqref{eq:def-Euler} is indeed well-defined.\footnote{%
	The de Rham complex $\DR(\Weyl{N}_k)$ is a resolution of $k[x]$ by free $\Weyl{N}_k$-modules, such that $H^{\bullet} (\DR(\M'))$ are the (left) derived functors of $\pi_{\ast} \M'=H^0(\DR(\M'))$. In the language of derived categories, saying that $\pi_{\ast} \M'$ is holonomic actually means precisely that $\DR(\M')$ is a complex with cohomology groups that are holonomic modules over the point---that is, finite-dimensional vector spaces over $k$.
}

Since we are interested in Feynman integrals, we consider the special case where the $\Weyl{N}_k$-module $\M$ is simply $\M=\Weyl{N}_k \cdot \G^s$ from definition~\ref{def:f^s}. Its elements can be written uniquely in the form $h\cdot \G^s$, where $h \in k[x,\G^{-1}]$, such that $\Weyl{N}_k \cdot \G^s \isomorph k[x,\G^{-1}]$ by \eqref{eq:dim-shift-closed} are isomorphic as $k[x]$-modules: $x_i (h \G^s) = (x_i h) \G^s$.
The action \eqref{eq:f^s-def} of the derivatives, however, is twisted by a term proportional to $s$:
$\partial_i (h \G^s) = \G^s (\partial_i h + sh(\partial_i \G)/\G)$.
Despite this twist, we find that the Euler characteristic stays the same:
\begin{prop}\label{prop:chi(G^s)=chi(1/G)}
	Let $\G\in \C[x_1,\ldots,x_N]$ be a polynomial and set $k=\C(s)$. Then the Euler characteristics of the algebraic de Rham complexes of the holonomic $\Weyl{N}_k$-module $\GmA^{\ast} \Weyl{N}_k \G^s$ and the holonomic $\Weyl{N}_{\C}$-module $\C[x^{\pm 1},\G^{-1}]=\OO(\Gm[k]^N \setminus \Vanish(\G))$ coincide:
	\begin{equation}
		\chi(\GmA^{\ast} \Weyl{N}_{k} \G^s)
		= \chi\left(\C[x^{\pm 1},\G^{-1}]\right)
		.
		\label{eq:chi(G^s)}%
	\end{equation}
\end{prop}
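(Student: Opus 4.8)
The plan is to reduce \eqref{eq:chi(G^s)} to a comparison of two de Rham complexes built on the \emph{same} underlying module that differ only in their connection, and then to exploit that the de Rham Euler characteristic is constant in a family over the affine $s$-line. First I would use the identification $\Weyl{N}_k \G^s = k[x,1/\G]\cdot\G^s$ from \eqref{eq:dim-shift-closed} to see that, as an $\OO(\Gm[k]^N)$-module, $\GmA^{\ast}\Weyl{N}_k\G^s$ is isomorphic to $k[x^{\pm 1},\G^{-1}]$ via $h\G^s \mapsto h$. Under this isomorphism the $i$-th derivation acts not as $\partial_i$ but as the twisted operator $\partial_i + s\,(\partial_i\G)/\G$, so the de Rham complex \eqref{eq:def-DR} becomes the Koszul complex of the commuting operators $\set{\partial_i + s\,(\partial_i\G)/\G}_{i=1}^N$ on $k[x^{\pm 1},\G^{-1}]$, whereas the de Rham complex of the right-hand side $\C[x^{\pm 1},\G^{-1}]$ is the Koszul complex of the untwisted $\set{\partial_i}$. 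Thus \eqref{eq:chi(G^s)} is exactly the assertion that twisting the connection by $s\,\dd\log\G$ leaves the Euler characteristic unchanged.

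Next I would spread both complexes out over $\C[s]$: let $K^\bullet$ be the Koszul complex of $\set{\partial_i + s\,(\partial_i\G)/\G}_{i=1}^N$ acting on $\C[s][x^{\pm 1},\G^{-1}]$, a bounded complex of free modules over that ring. Its fibre at the generic point, $K^\bullet\tp_{\C[s]}\C(s)$, is the twisted complex computing $\chi(\GmA^{\ast}\Weyl{N}_k\G^s)$, while its fibre at the point $s_0=0$ specializes the operators to $\set{\partial_i}$ and is precisely the de Rham complex of $\C[x^{\pm 1},\G^{-1}]$, computing $\chi(\C[x^{\pm 1},\G^{-1}])$. The crucial input is the following elementary fact over the principal ideal domain $\C[s]$: if a bounded complex has cohomology modules $H^i$ that are finitely generated over $\C[s]$, then for \emph{every} $s_0\in\C$ the fibre Euler characteristic equals the alternating sum of the generic ranks,
\begin{equation*}
	\sum_i (-1)^i \dim_\C H^i\!\left(K^\bullet\tp_{\C[s]}\C_{s_0}\right)
	= \sum_i (-1)^i \operatorname{rank}_{\C[s]} H^i(K^\bullet)
	.
\end{equation*}
This holds because, over a principal ideal domain, the universal coefficient sequence makes each elementary torsion summand of $H^i(K^\bullet)$ supported at $s_0$ contribute one dimension to the $i$-th fibre cohomology and one to the $(i-1)$-st via $\operatorname{Tor}_1$, with opposite parity, so that all torsion cancels in the alternating sum. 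Applying this at $s_0=0$ and comparing with the generic fibre yields $\chi(\C[x^{\pm 1},\G^{-1}]) = \sum_i(-1)^i\operatorname{rank}_{\C[s]}H^i(K^\bullet) = \chi(\GmA^{\ast}\Weyl{N}_k\G^s)$, which is \eqref{eq:chi(G^s)}.

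The hard part will be the finite-generation hypothesis required to apply this fact: a priori the cohomology of $K^\bullet$ sits inside the very large modules $\bigwedge^j(\C[s][x^{\pm 1},\G^{-1}])^N$, and one must show that each $H^i(K^\bullet)$ is finitely generated as a $\C[s]$-module. This is a relative holonomicity (coherence over the base) statement for the family $\Weyl{N}[s]\G^s$: generic finiteness over $\C(s)$ is furnished by Theorem~\ref{thm:L-S} together with the holonomicity of $\Weyl{N}_k\G^s$ (Bernstein), and the content is to upgrade this to coherence over $\C[s]$, ruling out wild torsion in the cohomology. Once this finiteness is secured the argument is immediate. I expect this coherence step to be the only substantial obstacle; alternatively, one can reach the same conclusion by observing that the twist by the order-zero term $s\,(\partial_i\G)/\G$ leaves the principal symbols, and hence the characteristic cycle, of $\GmA^{\ast}\Weyl{N}_k\G^s$ equal to that of $\C[x^{\pm 1},\G^{-1}]$, and then invoking that the de Rham Euler characteristic depends only on the characteristic cycle.
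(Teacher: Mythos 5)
Your reduction of \eqref{eq:chi(G^s)} to a comparison of the twisted and untwisted Koszul complexes on the same ring, and the universal-coefficient bookkeeping over the principal ideal domain $\C[s]$, are both correct. The fatal problem is the coherence hypothesis that you yourself flag as ``the only substantial obstacle'': it is not merely hard, it is false. Take $N=1$ and $\G=1+x_1$. For every \emph{integer} $s_0$, multiplication by the unit $(1+x_1)^{s_0}$ of $\C[x_1^{\pm1},(1+x_1)^{-1}]$ conjugates $\partial_1+s_0(\partial_1\G)/\G$ into $\partial_1$, so the fibre of $K^\bullet$ at $s_0$ is isomorphic to the untwisted de Rham complex and has cohomology dimensions $(1,2)$ in degrees $(-1,0)$; at non-integer $s_0$ the dimensions are $(0,1)$. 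Thus the fibre cohomology jumps at infinitely many points of the $s$-line. If the $H^i(K^\bullet)$ were finitely generated over $\C[s]$, the universal-coefficient sequence would confine all jumps to the finite support of their torsion --- a contradiction. So finite generation cannot be ``secured''; the torsion of these cohomology modules is genuinely spread over infinitely many points.

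Moreover, generic finiteness (theorem~\ref{thm:L-S}) is not an adequate substitute. Infinite direct sums of cyclic torsion modules would in fact still cancel in the alternating sum; the real danger is that nothing in the bare $\C[s]$-module structure excludes a divisible submodule, e.g.\ a copy of $\C(s)$, inside some $H^i(K^\bullet)$. Writing $\C_{s_0}\defas\C[s]/(s-s_0)$, such a piece contributes $1$ to the generic rank while both $-\otimes_{\C[s]}\C_{s_0}$ and $\operatorname{Tor}_1^{\C[s]}(-,\C_{s_0})$ kill it, so it is invisible in every fibre and the asserted equality of Euler characteristics fails for such a complex. What actually saves the argument is extra structure on $H^i(K^\bullet)$: Malgrange's action \eqref{eq:Malgrange-action} of a new variable $t$ with $s=-\partial_t t$ commutes with the $x$-Koszul differential and makes each $H^i(K^\bullet)$ a holonomic module in the single variable $t$, and for \emph{such} modules the index of $s-s_0$ does equal the generic rank even though they are not finitely generated over $\C[s]$ --- this is exactly the one-variable Loeser--Sabbah statement, theorem~\ref{thm:L-S:1}, whose proof replaces finite generation by a finitely generated lattice. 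Once you add this ingredient you have, in substance, reconstructed the paper's proof, which combines Malgrange's trick with the quasi-isomorphism $\DR(\M' f^s)\qiso\DR(\M')$ coming from the injectivity of $\partial_t$ (corollary~\ref{cor:Koszul-qis}) and the Loeser--Sabbah theorem applied in the $t$-direction. Your closing fallback via characteristic cycles is sound, and is precisely the remark the paper makes immediately after the proposition: the twist $\partial_i\mapsto\partial_i+s(\partial_i\G)/\G$ is an automorphism of $\WeylT{N}_k[\G^{-1}]$ (note: not of $\WeylT{N}_k$ itself) preserving principal symbols, so the characteristic cycles agree and the Dubson--Kashiwara formula yields \eqref{eq:chi(G^s)}; but that route is an appeal to those cited theorems rather than a repair of your specialization argument.
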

In particular, we can dispose of the parameter $s$ completely and compute with the algebraic de Rham complex of $\C[x^{\pm 1},\G^{-1}]$, which is the ring of regular functions of the complement of the hypersurface $\Vanish(\G)=\setexp{x}{\G(x)=0}$ in the torus $\Gm^N$. 
Combining theorem~\ref{thm:L-S} with proposition~\ref{prop:chi(G^s)=chi(1/G)}, we thus obtain our main result:
\begin{cor}\label{cor:number-of-masters}
	The number of master integrals of an integral family with $N$ denominators is
	\begin{equation}
		\NoM{\G}
		= \chi \left(\C[x^{\pm 1},\G^{-1}] \right),
		\label{eq:NoM-alg}%
	\end{equation}
	the Euler characteristic of the algebraic de Rham complex of the complement of the hypersurface $x_1\cdots x_N \cdot \G=0$ inside the affine plane $\Aff^N$.
	Via Grothendieck's comparison isomorphism, this is the same as the topological Euler characteristic, up to a sign:\footnote{
		This sign arises from the shift by $N$ in the definition \eqref{eq:def-DR} of the de Rham complex $\DR$.
	}
	\begin{equation}
		\NoM{\G}
		= (-1)^N \chi \left(\C^N \setminus \set{x_1\cdots x_N \cdot \G=0} \right)
		= (-1)^N \chi \left(\Gm^N \setminus \set{\G=0} \right)
		.
		\label{eq:NoM-topo}%
	\end{equation}
\end{cor}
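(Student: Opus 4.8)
The plan is to assemble the pieces already in place. The first equality~\eqref{eq:NoM-alg} is essentially formal: by the reformulation~\eqref{eq:NoM-as-abstract-Mellin} we have $\NoM{\G}=\dim_F(F\tp_R\M)$ with $\M=\Weyl{N}_k\cdot\G^s$ holonomic; Theorem~\ref{thm:L-S} rewrites this dimension as the Euler characteristic $\chi(\GmA^{\ast}\M)$, and Proposition~\ref{prop:chi(G^s)=chi(1/G)} identifies that Euler characteristic with $\chi(\C[x^{\pm 1},\G^{-1}])$. Concatenating these, $\NoM{\G}=\dim_F(F\tp_R\M)=\chi(\GmA^{\ast}\M)=\chi(\C[x^{\pm 1},\G^{-1}])$, which is~\eqref{eq:NoM-alg}.

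For the second statement~\eqref{eq:NoM-topo} I would pass from the algebraic de Rham Euler characteristic to the topological one. First observe that $\C[x^{\pm 1},\G^{-1}]=\OO(U)$ is the ring of regular functions on $U\defas\Gm^N\setminus\Vanish(\G)=\C^N\setminus\set{x_1\cdots x_N\cdot\G=0}$, which is smooth (an open subvariety of the torus) and affine (the complement of a principal hypersurface in $\Aff^N$). Viewing $\OO(U)$ as the tautological $\WeylT{N}_{\C}$-module, its de Rham complex~\eqref{eq:def-DR} is, in degree $r-N$, the module $\Omega^r_{\Gm^N}(\Gm^N)\tp_{\OO(\Gm^N)}\OO(U)=\Omega^r(U)$ of global algebraic $r$-forms on $U$. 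Because $U$ is smooth and affine, the complex $(\Omega^{\bullet}(U),\dd)$ of global forms already computes the algebraic de Rham cohomology $H^{\bullet}_{\mathrm{dR}}(U)$, with no hypercohomology correction.

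It then remains to track the degree shift and invoke Grothendieck's comparison theorem. The shift by $N$ gives $H^i(\DR(\OO(U)))=H^{i+N}_{\mathrm{dR}}(U)$, hence $\chi(\C[x^{\pm 1},\G^{-1}])=\sum_i(-1)^i\dim_{\C}H^{i+N}_{\mathrm{dR}}(U)=(-1)^N\chi_{\mathrm{dR}}(U)$. Grothendieck's comparison isomorphism identifies $H^{\bullet}_{\mathrm{dR}}(U)$ with the singular cohomology $H^{\bullet}(U^{\mathrm{an}};\C)$ of the complex points, so $\chi_{\mathrm{dR}}(U)$ equals the topological Euler characteristic $\chi(U)$; together with the two descriptions $U=\Gm^N\setminus\set{\G=0}=\C^N\setminus\set{x_1\cdots x_N\cdot\G=0}$ this yields~\eqref{eq:NoM-topo}.

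The only point requiring genuine care is the bookkeeping of this shift and the accompanying sign $(-1)^N$. The substantive work---showing that the relevant dimension is an Euler characteristic at all---has already been carried out in Theorem~\ref{thm:L-S} and Proposition~\ref{prop:chi(G^s)=chi(1/G)}, so the main (and comparatively modest) obstacle here is to recognize $\C[x^{\pm 1},\G^{-1}]$ as $\OO(U)$, to justify that global forms on the affine $U$ compute $H^{\bullet}_{\mathrm{dR}}(U)$, and to translate the algebraic de Rham Euler characteristic into the topological one via Grothendieck comparison.
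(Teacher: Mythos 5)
Your proposal is correct and takes essentially the same route as the paper: the first equality is obtained exactly as the paper does, by concatenating \eqref{eq:NoM-as-abstract-Mellin}, theorem~\ref{thm:L-S} and proposition~\ref{prop:chi(G^s)=chi(1/G)}, and your passage to \eqref{eq:NoM-topo} via smooth-affineness of $U=\Gm^N\setminus\Vanish(\G)$ and Grothendieck's comparison theorem, with the sign $(-1)^N$ coming from the degree shift in \eqref{eq:def-DR}, is precisely what the paper asserts in its footnote. The details you add (identifying $\C[x^{\pm 1},\G^{-1}]=\OO(U)$, affineness of the principal hypersurface complement, and the fact that global forms on a smooth affine variety compute de Rham cohomology) are the standard justifications the paper leaves implicit.
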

\begin{rem}
	We stress that this geometric interpretation of the number of master integrals is valid for dimensionally regulated Feynman integrals, that is, we consider them as meromorphic functions in $\Dim$ (and $\nu$). This is reflected in our treatment of $s=-\Dim/2$ as a symbolic parameter.
	
	If, instead, one specializes to a fixed dimension like $\Dim=2$ ($s=-1$) or $\Dim=4$ ($s=-2$), then \eqref{eq:dim-shift-closed} is no longer true in general.\footnote{%
		It fails precisely if, for some $r \in \N$, $s-r$ is a zero of the Bernstein-Sato polynomial of $\G$.
	}
	It can thus happen that $\Weyl{N} \cdot \G^s \subsetneq \C[x,\G^{-1}]$ is a proper subalgebra (note $k=\C(s)=\C$). While theorem~\ref{thm:L-S} still applies and relates the number of master integrals in a fixed dimension to the Euler characteristic of $\WeylT{N} \cdot \G^s$, this is not always equal to the topological Euler characteristic \eqref{eq:NoM-topo}.
	This is expected, since the number of master integrals is known to be different in fixed dimensions \cite{Tancredi:IBPinteger}.
\end{rem}
\begin{proof}[Proof of proposition~\ref{prop:chi(G^s)=chi(1/G)}]
	Given an $\Weyl{N}_{\C}$-module $\M$ and a polynomial $f \in \C[x]$, set $\M' \defas \M[f^{-1}]$ and consider the $\Weyl{N}_{\C}$-module $\M'f^s$ formed by products of $f^s$ with elements $m\in \M'[s] \defas \M' \tp_{\C} \C[s]$. As vector spaces, $\M' f^s \isomorph \M'[s]$ via $mf^s \mapsto m$, but the $\Weyl{N}_{\C}$ action on $\M'f^s$ has twisted derivatives to take into account the factor $f^s$:
	\begin{equation}
		x_i \bullet mf^s \defas x_i m f^s
		\quad\text{and}\quad
		\partial_i \bullet m f^s
		\defas \left\{ (\partial_i m) + s m \frac{\partial_i f}{f} \right\} f^s
		.
		\label{eq:f^s-twist}%
	\end{equation}
	Following Malgrange \cite{Malgrange:BernsteinIsolee}, we introduce the action of a further variable $t$ by setting
	\begin{equation}
		t \bullet m(s) f^s
		\defas m(s+1) f^{s+1}
		\quad\text{and}\quad
		\partial_t \bullet m(s) f^s
		\defas -s m(s-1) f^{s-1}
		,
		\label{eq:Malgrange-action}%
	\end{equation}
	where we use the intuitive abbreviation $f^{s+r}\defas f^r \cdot f^s$ for $r\in\Z$.
	One easily verifies $[\partial_t,t] = 1$ and $[\partial_i,\partial_t]=[\partial_i,t]=[x_i,\partial_t]=[x_i,t] = 0$, such that $\M' f^s$ becomes an $\Weyl{N+1}_{\C}$-module in the $N+1$ variables $(x_1,\ldots,x_N,t)$. Note that $\partial_t t = -s$, so
	\begin{equation*}
		\frac{\M' f^s}{\partial_t \M' f^s}
		= \frac{\M' f^s}{s t^{-1} \M' f^s}
		= \frac{\M' f^s}{s \M' f^s}
		\isomorph
		\M'
	\end{equation*}
	is an isomorphism of $\Weyl{N}_{\C}$-modules.\footnote{%
		This is also clear from the fact that $\M' f^s/(\partial_t \M' f^s) = \pi_{\ast} (\M' f^s)$ is the push-forward of $\M' f^s$ under the projection $\pi\colon \Aff^{N+1} \longrightarrow \Aff^N$ forgetting the last coordinate. Namely, since $\M' f^s = F_{\ast} \M'$ as we discuss below, $\pi_{\ast} (F_{\ast} \M') = (\pi \circ F)_{\ast} \M' = \id_{\ast} \M' = \M'$.
	}
	Since $\partial_t$ is injective on $\M' f^s$ (it raises the degree in $s$), the de Rham complex $\DR(\M'f^s)$ is quasi-isomorphic to $\DR(\M' f^s/\partial_t \M' f^s) = \DR(\M')$, see corollary~\ref{cor:Koszul-qis}. So we can conclude the equality
	\begin{equation}
		\chi(\M' f^s)
		= \chi(\M')
		,
		\label{}
	\end{equation}
	once we assume that $\M$ is holonomic to ensure that these Euler characteristics are well defined. Indeed, the holonomicity of $\M'$ and $\M' f^s$ holds because
	\begin{itemize}
		\item $\M'=j^{\ast} \M$ is the pull-back of $\M$ under the inclusion $j\colon \Aff^N_{\C} \setminus \set{f=0} \injects \Aff^N_{\C}$,
		\item $\M' f^s = F_{\ast} \M'$ is the push-forward of $\M'$ under the closed embedding $F\colon \Aff^N_{\C} \injects \Aff^{N+1}_{\C}$ which sends $x$ to $(x, f(x))$.\footnote{%
		It follows from \eqref{eq:Malgrange-action} that $\M'f^s=\bigoplus_{n \geq 0} \partial_t^n \M' \isomorph F_{\ast} \M$ as $\C[x]$-modules, since $\partial^n_t \M'=\partial^n_t t^n \M' \equiv s^n \M' \mod s^{<n} \M'$. Furthermore, the derivatives act on $F_{\ast} \M$ by
		$\partial_i \bullet m(s)
		= (\partial_i - (\partial_i f) \partial_t)m(s)
		= (\partial_i +s (\partial_i f) t^{-1})m(s)
		= (\partial_i +s (\partial_i f)/f)m(s-1)
		$
		in accordance with \eqref{eq:f^s-twist}.
	}
	\end{itemize}
	Alternatively, the filtration $\Gamma_j \M' f^s \defas f^{-j} \sum_{i\leq j} s^i \Gamma_{2j(\deg f)-i} \M$ induced by any good filtration $\Gamma_{\bullet}$ on $\M$ directly shows the holonomicity of $\M' f^s$, since its dimension grows like $j^{N+1}$ for large $j$.
	We now invoke the theory of Loeser-Sabbah to deduce that
	\begin{equation*}
		\chi(\M'f^s)
		= \dim_{{\C}(\theta,t\partial_t)} \M'f^s (\theta, t\partial_t)
		= \dim_{k(\theta)} \Nmod (\theta)
		= \chi(\Nmod)
	\end{equation*}
	where $\Nmod \defas \M'f^s (t\partial_t)$ denotes the algebraic Mellin transform \eqref{eq:def-alg-Mellin} of $\M'f^s$ with respect to the coordinate $t$. But note that, according to \eqref{eq:Malgrange-action}, localizing at $t\partial_t = -s-1$ just extends the coefficients to $k=\C(s)$. So $\Nmod = \M' \tp_{\C} \C(s)f^s = \M\tp_{\C} k f^s$ is just the holonomic $\Weyl{N}_{k}$-module on the left-hand side of \eqref{eq:chi(G^s)}, because, over $k$, $f$ is invertible by corollary~\ref{cor:dim-shift-closed}.
	We have proven $\chi(\M \tp_{\C} k f^s) = \chi(\M[f^{-1}])$, and the special case of $\M=\C[x^{\pm 1}]$ with $f = \G$ proves the claim.
\end{proof}
\begin{rem}
	More abstractly, Proposition~\ref{prop:chi(G^s)=chi(1/G)} can also be seen as an application of the theory of characteristic cycles \cite{Ginsburg:CharVarVanCyc}: 
It is known that the Euler characteristic only depends on the characteristic cycle of a $\Weyl{N}_k$-module, which follows from the Dubson-Kashiwara formula \cite[equation~(6.6.4)]{Laumon:DeriveeFiltres}.
Therefore it is sufficient to show that the $\Weyl{N}_k$-modules $k[x^{\pm 1},\G^{-1}]$ and $k[x^{\pm 1}] \G^s$ have the same characteristic cycles, via \cite[Theorem~3.2]{Ginsburg:CharVarVanCyc}. This follows from the fact that these modules are identical up to the twist by the isomorphism $\partial_i \mapsto \partial_i + s (\partial_i \G)/\G$ of $\WeylT{N}_k[\G^{-1}]$.
\end{rem}

\subsection{No master integrals}
\label{sec:no-masters}

Corollary~\ref{cor:number-of-masters} shows in particular that there are no master integrals, $\NoM{\G}=0$, precisely when the Euler characteristic $\chi(\Gm^{N}\setminus \Vanish(f)) \defas \chi(\C[x^{\pm 1},f^{-1}]) = \chi( (\C^{\ast})^N \setminus \Vanish(f))$ vanishes for $f=\G$. For example, this happens if $f$ is homogeneous in a generalized sense: Suppose we can find $\lambda_0,\ldots,\lambda_N \in \Z$, not all zero, such that
\begin{equation}
	f\left(x_1 t^{\lambda_1},\ldots,x_N t^{\lambda_N}\right)
	= t^{\lambda_0} f(x_1,\ldots,x_N)
	\quad\text{in}\quad
	\C[x,t^{\pm 1}]
	;
	\label{eq:quasi-hom}%
\end{equation}
which is equivalent (apply $\partial_t$ and set $t=1$) to the existence of a linear annihilator,
\begin{equation}
	P^{\lambda}_s\bullet f^s = 0,
	\quad\text{of the form}\quad
	P^{\lambda}_s \defas \sum_{i=1}^N \lambda_i \theta_i - s\lambda_0
	\in \Z[s,\theta] \setminus \set{0}
	.
	\label{eq:lin-euler-ann}%
\end{equation}
\begin{lem}\label{lem:chi=0-poly-ann}
	Given $f\in\C[x_1,\ldots,x_N]$, the Mellin transform $\Mfrak \defas \M \tp_{\C(s)[\theta]} \C(s,\theta)$ of $\M \defas \WeylT{N}_k\!\cdot\! f^s$ is zero if, and only if, $f^s$ is annihilated by a polynomial in the Euler operators $\theta$:
	\begin{equation}
		\Ann_{\Weyl{N}_k}(f^s) \cap \C[s,\theta] \neq \set{0}.
		\label{eq:euler-poly-ann}%
	\end{equation}
\end{lem}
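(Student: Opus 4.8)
The plan is to translate the vanishing of the Mellin transform into an $R$-torsion statement and then transport torsion from the cyclic generator $f^s$ to the whole module using the commutation relations of the Euler operators. Write $k \defas \C(s)$, $R \defas \C(s)[\theta]$ and $F \defas \C(s,\theta)$. Since $\Mfrak = F \tp_R \M$ is the localization of the $R$-module $\M$ at the multiplicative set $R \setminus \set{0}$, and $R$ is a domain, I first record the elementary equivalence
\[
	\Mfrak = 0
	\quad\Longleftrightarrow\quad
	\M \text{ is a torsion } R\text{-module,}
\]
that is, every element of $\M$ is annihilated by some nonzero element of $R$. The lemma then reduces to the claim that $\M$ is $R$-torsion precisely when its generator $f^s$ already is.

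For the forward implication, if $\Mfrak = 0$ then in particular the generator $f^s$ is $R$-torsion, so there is a nonzero $p \in \C(s)[\theta]$ with $p\, f^s = 0$. Multiplying $p$ by a common denominator $d(s)\in\C[s]\setminus\set{0}$ of its coefficients -- a nonzero central scalar of $k$ -- produces $\tilde p \defas d(s)\, p \in \C[s,\theta]\setminus\set{0}$ that still satisfies $\tilde p\, f^s = 0$. Because the localization map $\Weyl{N}_k\cdot f^s \injects \WeylT{N}_k\cdot f^s = \M$ is injective (it is the localization of the domain $k[x,1/f]$ at the coordinate monomials), and it is $\Weyl{N}_k$-linear, the relation $\tilde p\, f^s = 0$ holds already in $\Weyl{N}_k\cdot f^s$. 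Hence $\tilde p \in \Ann_{\Weyl{N}_k}(f^s)\cap\C[s,\theta]$, as required.

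The converse is the main step. Suppose $p \in \C[s,\theta]\setminus\set{0}$ annihilates $f^s$. The key computation is the behaviour of a polynomial $r(\theta)\in R$ under the generators of $\WeylT{N}_k$, which follows directly from $\theta_i = x_i\partial_i$ and the Weyl relations:
\[
	r(\theta)\, x_j = x_j\, r(\theta + \uv{j}),
	\qquad
	r(\theta)\, \partial_j = \partial_j\, r(\theta - \uv{j}),
	\qquad
	r(\theta)\, x_j^{-1} = x_j^{-1}\, r(\theta - \uv{j}).
\]
Consequently, for any monomial $Q = x^{\alpha}\partial^{\beta}$ with $\alpha\in\Z^N$ and $\beta\in\N_0^N$ -- these form a $k$-basis of $\WeylT{N}_k$ -- the integer-shifted polynomial $p_Q(s,\theta) \defas p(s,\theta-\alpha+\beta)$ satisfies $p_Q\, Q = Q\, p$, and therefore $p_Q\,(Q f^s) = Q\,(p\, f^s) = 0$. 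Since shifting the variables $\theta$ by an integer vector is an automorphism of $\C[s,\theta]$, each $p_Q$ is nonzero, so every generator $Q f^s$ is $R$-torsion. As $R$ is a domain, the torsion elements form an $R$-submodule (hence a $k$-subspace) of $\M$; because the $Q f^s$ span $\M$ over $k$, all of $\M$ is torsion and $\Mfrak = 0$.

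The only genuine obstacle is this propagation of torsion from the single generator $f^s$ to all of $\M$, and its entire content is the shift rule displayed above; once these three commutation identities are in hand, the rest is bookkeeping. The remaining points to handle with care are purely formal: the passage between $\Weyl{N}_k$ and its localization $\WeylT{N}_k$, needed so that $\Ann_{\Weyl{N}_k}(f^s)$ and the annihilator of the generator of $\M$ agree, and the clearing of denominators in $s$, which is what upgrades a torsion witness in $\C(s)[\theta]$ to the required nonzero element of $\C[s,\theta]$.
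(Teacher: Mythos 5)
Your proof is correct and follows essentially the same route as the paper's: the forward direction is the same trivial localization argument, and the converse propagates the polynomial annihilator to a spanning set of $\M$ by integer shifts of $\theta$ obtained from commuting past monomials. The only (cosmetic) difference is that you use the PBW spanning set $x^{\alpha}\partial^{\beta} f^s$ and shift only in $\theta$, whereas the paper writes elements of $\M$ in the normal form $g\cdot f^{s+r}$ (via corollary~\ref{cor:dim-shift-closed}) and shifts the annihilator in both $\theta$ and $s$; your variant is marginally more self-contained since it needs only the commutation relations of $\WeylT{N}_k$.
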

\begin{proof}
	Clearly, $\Mfrak=\set{0}$ requires $f^s$ to be mapped to zero in the localization $\Mfrak$ of $\M$ at $\C[s,\theta]\setminus\set{0}$, and therefore the existence of a non-zero polynomial $P(\theta,s)\in \C[s,\theta]$ with $P(\theta,s)\bullet f^s =0$. Conversely, given such an operator, its shifts $P(\theta-\alpha,s+r)$ by $(r,\alpha)\in\Z^{1+N}$ annihilate the elements $x^{\alpha} \cdot f^{s+r}$, which are therefore all mapped to zero in $\Mfrak$. By linearity, this proves $\Mfrak=\set{0}$, because every element of $\Mfrak$ can be written as $g f^{s+r} \tp h$ for some $r \in \Z$, $h \in \C[s,\theta]\setminus\set{0}$ and a Laurent polynomial $g \in k[x^{\pm 1}]$.
\end{proof}
In particular, the presence of a linear annihilator \eqref{eq:lin-euler-ann} implies $\Mfrak=\set{0}$ and hence $\chi(\Gm^N\setminus\Vanish(f))=0$ via corollary~\ref{cor:number-of-masters}. Note that we could equally phrase this in terms of the hypersurface $\Vanish(f) \subset \Gm^N$ itself as $\chi(\Vanish(f))=0$, because the Euler characteristics are related through $\chi(\Vanish(f))=-\chi(\Gm^N\setminus\Vanish(f))$ (see section~\ref{sec:Grothendieck}).

When $f=\G$ comes from Feynman graph $G$ (as in the next section), it is not difficult to see that the homogeneity \eqref{eq:quasi-hom} occurs precisely when $G$ has a \emph{tadpole}.\footnote{%
	A tadpole here means a proper subgraph $\gamma \subsetneq G$ which shares only a single vertex with the rest of $G$ and does not depend on masses or external momenta. In this case, $\G_G=\U_{\gamma} \F_{G/\gamma}$ factorizes such that the variables $x_i$ with $i\in\gamma$ only appear in the homogeneous polynomial $\U_{\gamma}$ of degree $\lambda_0\defas\Loops_{\gamma}$. Thus we obtain \eqref{eq:quasi-hom} by setting  $\lambda_e = 1$ if $e\in \gamma$ and $\lambda_e=0$ otherwise.
}
If this is the case, the integrals from proposition~\ref{prop: parametric reps} do not converge for any values of $s$ and $\nu$. In fact, $\Mfrak=\set{0}$ dictates that the only value one can assign to $\Mellin{f^s}(\nu)$ which is consistent with integration by parts relations is zero. This reasoning explains a common practice in Feynman integral calculations, namely that Feynman integrals associated to graphs with tadpoles are declared to vanish.

The purpose of this section is to show that the simple homogeneity condition~\eqref{eq:quasi-hom} is not only sufficient for a vanishing Mellin transform, but it is also necessary:
\begin{prop}\label{prop:chi-0-hom}
	Let $f \in \C[x_1,\ldots,x_N]$ denote a polynomial. Then the hypersurface $\set{f=0}$ inside the torus $\Gm^N$ has vanishing Euler characteristic precisely when there are $\lambda_0,\ldots,\lambda_N \in \Z$, not all zero, such that \eqref{eq:quasi-hom} holds.
\end{prop}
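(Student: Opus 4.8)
The plan is to prove the non-trivial implication, that a vanishing Euler characteristic forces the generalized homogeneity \eqref{eq:quasi-hom}. The converse is already in hand just before the proposition: condition \eqref{eq:quasi-hom} yields the linear annihilator \eqref{eq:lin-euler-ann}, hence $\Mfrak=\set{0}$, and therefore $\chi(\Gm^N\setminus\Vanish(f))=0$ by corollary~\ref{cor:number-of-masters}. It also remains to record the purely linear-algebraic fact that \eqref{eq:quasi-hom} is equivalent to $\dim\NewPol{f}<N$: comparing coefficients in $\sum_i\lambda_i\theta_i f=\lambda_0 f$ shows that such a relation holds exactly when the support of $f$ lies on a hyperplane $\set{a\colon\langle\lambda,a\rangle=\lambda_0}$, and whenever the support fails to be full-dimensional such a hyperplane exists with rational, hence integral, normal $\lambda$.

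For the forward direction I would first translate $\chi=0$ back into algebra. By corollary~\ref{cor:number-of-masters} together with lemma~\ref{lem:chi=0-poly-ann}, the vanishing of $\chi(\Gm^N\setminus\Vanish(f))$ is equivalent to $\Mfrak=\set{0}$, and hence to the existence of a non-zero polynomial annihilator $b(s,\theta)\in\Ann_{\Weyl{N}_k}(f^s)\cap\C[s,\theta]$. From $b$ I would extract an algebraic relation among the logarithmic derivatives $g_i\defas\theta_i f/f\in\C[x^{\pm 1},f^{-1}]$. Writing $b=\sum_\alpha c_\alpha(s)\theta^\alpha$ and using $\theta_i\bullet(hf^s)=(\theta_i h+s g_i h)f^s$, the identity $b\bullet f^s=0$ becomes a relation in $\C[s][x^{\pm 1},f^{-1}]$; collecting the terms of top degree in $s$, where each $\theta_i$ contributes a factor $s g_i$, produces a non-zero polynomial $R$ with $R(g_1,\dots,g_N)=0$. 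Equivalently, the logarithmic Gauss map $\gamma\defas(g_1,\dots,g_N)\colon\Gm^N\setminus\Vanish(f)\to\C^N$ is not dominant.

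The heart of the argument is then to show that non-dominance of $\gamma$ forces $\NewPol{f}$ to be lower-dimensional, which by the remark above completes the proof. I would argue the contrapositive: if $\dim\NewPol{f}=N$, then $\gamma$ is dominant. Choose a regular triangulation of $\NewPol{f}$, induced by a generic height function, that contains a full-dimensional simplex $\sigma$ spanned by $N+1$ affinely independent support points, and let $w$ be the linear functional cutting out the corresponding lower face. Degenerating along the cocharacter $x_i=\xi_i\tau^{-w_i}$ and letting $\tau\to 0$, the rescaled polynomial $\tau^{-m}f$ tends to the simplex truncation $f_\sigma=\sum_{a\in\sigma}c_a x^a$, so that $\gamma(x(\tau))\to\gamma_{f_\sigma}(\xi)$ for generic $\xi$; hence the image closure of $\gamma$ contains that of $\gamma_{f_\sigma}$. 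Factoring out one monomial and applying an invertible monomial substitution turns $f_\sigma$ into an affine-linear form, whose logarithmic Gauss map is visibly dominant (it is the standard affine-chart parametrization). Therefore $\gamma$ is dominant, and the contrapositive follows.

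The step I expect to be the main obstacle is precisely this last one: proving that a full-dimensional Newton polytope makes $\gamma$ dominant \emph{uniformly in the coefficients} $c_a$. A direct leading-order estimate of the logarithmic Hessian $\theta_i\theta_j\log f$ is inconclusive, since complex coefficients can produce accidental cancellations (a logarithmic analogue of the Gordan--Noether phenomenon), so full-dimensionality of $\NewPol{f}$ cannot simply be read off term by term. The toric degeneration circumvents this by reducing to the simplex polynomial $f_\sigma$, whose dominance holds for arbitrary non-zero coefficients; the care needed is in verifying that the limit of $\gamma$ along the one-parameter subgroup really equals $\gamma_{f_\sigma}$ and that the image closure specializes correctly. (Alternatively, one may bypass the annihilator and identify $(-1)^N\chi$ with the number of critical points of a generic master function $x^{\nu}f^s$, which vanishes exactly when $\gamma$ is not dominant; the same toric degeneration then finishes the argument.)
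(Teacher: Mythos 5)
Your reduction of $\chi=0$ to the existence of a non-zero annihilator $b(s,\theta)\in\C[s,\theta]$ is the same first move as the paper (lemma~\ref{lem:chi=0-poly-ann}), and your extraction of a non-trivial relation $R(g_1,\ldots,g_N)=0$ among the logarithmic derivatives---by collecting the top power of $s$, using $\theta^\alpha\bullet f^s=s^{\abs{\alpha}}g^\alpha f^s+\bigo{s^{\abs{\alpha}-1}}$---is correct; so is the treatment of the simplex polynomial itself (a monomial substitution of finite index reduces it to an affine-linear form, whose logarithmic Gauss map is dominant). The genuine gap is the bridge: your toric degeneration cannot produce the simplex truncation $f_\sigma$ from $f$ itself. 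A substitution $x_i=\xi_i\tau^{-w_i}$ with the coefficients $c_a$ held fixed gives $f(\xi\tau^{-w})=\sum_a c_a\xi^a\tau^{-\langle w,a\rangle}$, so the terms surviving after normalization are exactly those on the \emph{face} of $\NewPol{f}$ extremized by $w$---necessarily a proper face when $w\neq 0$. A full-dimensional cell $\sigma$ of a regular triangulation is not a face of $\NewPol{f}$; it is the projection of a lower face of the \emph{lifted} polytope $\conv\setexp{(a,h(a))}{a\in\NewPol{f}\cap\Z^N}\subset\R^{N+1}$, and reaching it as a limit forces you to deform the coefficients, $c_a\mapsto c_a\tau^{h(a)}$. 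That deformation changes the polynomial and hence its Gauss map: what the argument then yields is dominance of $\gamma_{f_\tau}$ for \emph{generic} $\tau$ (non-dominance is a Zariski-closed condition on $\tau$), i.e.\ dominance for generic coefficients---precisely the Kouchnirenko-type genericity statement (theorem~\ref{thm:Kouchnirenko}) that the proposition is meant to transcend; the value of $\tau$ corresponding to $f$ may lie in the bad locus. Meanwhile, the degenerations that \emph{are} legitimate, namely to facial truncations $f_F$ for proper faces $F$, give nothing: every such $f_F$ satisfies an Euler relation $\sum_i w_i\theta_i f_F=c\,f_F$, so the image of its Gauss map lies in the hyperplane $\set{\langle w,y\rangle=c}$ and is never dominant.

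The paper closes this gap by induction on $N$ instead. It restricts the polynomial annihilator $P(\theta,s)$ to an honest facet polynomial $f_\sigma$ (a legitimate cocharacter limit, exactly as in \eqref{eq:leading-power-facet}), applies the induction hypothesis to $g=f_\sigma|_{x_N=1}$ in $N-1$ variables to conclude that the remainder of $P$ upon division by the linear form $P^{\lambda}_s$ vanishes, so that $P=P^{\lambda}_s\cdot Q$; and then---this is the step your strategy has no analogue of---uses the minimality of $\deg P$ together with a degree-in-$s$ comparison applied to $P^{\lambda}_s\bullet(a f^{s+r})=0$ to upgrade mere divisibility by $P^{\lambda}_s$ to the sought statement $P^{\lambda}_s\bullet f^s=0$, i.e.\ to homogeneity of $f$ itself. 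If you wish to keep the Gauss-map formulation, you would need a proof that a full-dimensional Newton polytope forces dominance of the logarithmic Gauss map for \emph{all} coefficient choices; by the critical-point interpretation you mention, that claim is equivalent to the hard direction of the proposition, so it cannot be obtained from genericity or deformation arguments alone.
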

Geometrically, the homogeneity \eqref{eq:quasi-hom} can be interpreted as follows: dividing by the greatest common divisor, we may assume that $\lambda_0,\ldots,\lambda_N$ are relatively prime. Thus we may extend $(\lambda_1,\ldots,\lambda_N)$ to a basis of the lattice $\Z^N$ and hence construct a matrix $A \in \GL{N}{\Z}$ with first row $A_{1i} = \lambda_i$.
In the associated coordinates $y$, defined by
\begin{equation*}
%	\C[x^{\pm 1}] \ni x_i \mapsto \Phi_A^{\ast}(x_i) \defas
	x_i=\prod_{j=1}^N y_j^{A_{ji}}
	\quad\text{and}\quad
	y_i=\prod_{j=1}^N x_j^{A^{-1}_{ji}}
	\quad\text{where}\quad
	A^{-1}_{ji} \defas \left( A^{-1} \right)_{ji},
%	\in \C[y^{\pm 1}]
\end{equation*}
the polynomial $f$ takes the form $f(x)=y_1^{\lambda_0} g(\bar{y})$ for some Laurent polynomial $g \in \C[\bar{y}^{\pm1}]$ in the remaining variables $\bar{y}=(y_2,\ldots,y_N)$. In particular, the hypersurface $\set{f=0}=\set{g=0}$ can be defined by an equation independent of the coordinate $y_1$.
\begin{cor}\label{cor:chi=0-product}
	Let $f \in \C[x_1,\ldots,x_N]$ denote a polynomial. Then $\Vanish(f) \subset \Gm^N$ has Euler characteristic zero if and only if it is isomorphic to a product of $\Gm$ times a hypersurface $\set{g=0} \subset \Gm^{N-1}$.
\end{cor}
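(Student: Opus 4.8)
The plan is to derive both directions of the equivalence directly from Proposition~\ref{prop:chi-0-hom}, using the explicit monomial change of coordinates set up in the paragraph preceding the statement; the corollary is essentially a dictionary between the algebraic homogeneity condition \eqref{eq:quasi-hom} and the geometric product decomposition. I would dispose of the easy implication first: if $\Vanish(f) \cong \Gm \times \set{g=0}$ with $\set{g=0}\subset\Gm^{N-1}$, then multiplicativity of the Euler characteristic over products (recalled in section~\ref{sec:Grothendieck}) together with $\chi(\Gm)=\chi(\C\setminus\set{0})=0$ immediately yields $\chi(\Vanish(f))=\chi(\Gm)\cdot\chi(\set{g=0})=0$. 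This half needs neither Proposition~\ref{prop:chi-0-hom} nor any special structure of $f$.

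For the forward implication I would invoke Proposition~\ref{prop:chi-0-hom} to obtain integers $\lambda_0,\ldots,\lambda_N$, not all zero, satisfying \eqref{eq:quasi-hom}. After discarding the trivial case of a constant $f$ (where $\Vanish(f)=\varnothing=\Gm\times\varnothing$), every monomial $x^\alpha$ occurring in $f$ satisfies $\sum_i \lambda_i\alpha_i=\lambda_0$, by matching powers of $t$ in \eqref{eq:quasi-hom}. The key bookkeeping step is the reduction to a \emph{primitive} weight vector: since $\gcd(\lambda_1,\ldots,\lambda_N)$ divides every $\lambda_i$, it divides $\sum_i\lambda_i\alpha_i=\lambda_0$, so after dividing all of $\lambda_0,\ldots,\lambda_N$ by $\gcd(\lambda_0,\ldots,\lambda_N)$ — which preserves \eqref{eq:quasi-hom} — the vector $(\lambda_1,\ldots,\lambda_N)$ is nonzero and primitive. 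It can then be completed to a $\Z$-basis of $\Z^N$, producing a matrix $A\in\GL{N}{\Z}$ whose first row is $(\lambda_1,\ldots,\lambda_N)$.

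In the associated coordinates $y$ attached to $A$, I would verify that the one-parameter subgroup $x_i\mapsto x_i t^{\lambda_i}$ acts as $y_1\mapsto y_1 t$ and $y_k\mapsto y_k$ for $k\geq 2$, because $\sum_j \lambda_j (A^{-1})_{jk}=(AA^{-1})_{1k}=\delta_{1k}$. Homogeneity \eqref{eq:quasi-hom} thereby becomes homogeneity of degree $\lambda_0$ in $y_1$ alone, forcing $f=y_1^{\lambda_0}\,g(\bar{y})$ with $\bar{y}=(y_2,\ldots,y_N)$ and $g$ the Laurent polynomial obtained by setting $y_1=1$. Since $y_1$ is invertible on $\Gm^N$, the factor $y_1^{\lambda_0}$ never vanishes, so under the torus automorphism induced by $A$ we get $\Vanish(f)=\set{g=0}=\Gm\times\set{g=0}$, the desired product (with $\set{g=0}\subset\Gm^{N-1}$ allowed to be empty, matching the constant case).

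I expect the only real obstacle to be bookkeeping rather than substance: making the $\gcd$ reduction and the completion to a lattice basis precise, and confirming that $g$ is a genuine Laurent polynomial cutting out a hypersurface in $\Gm^{N-1}$ (possibly empty). The mathematical content that $\chi=0$ \emph{forces} the homogeneity is exactly Proposition~\ref{prop:chi-0-hom}, which I may assume, so no further geometry or $D$-module theory is needed here.
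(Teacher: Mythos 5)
Your proof is correct and follows essentially the same route as the paper: Proposition~\ref{prop:chi-0-hom} supplies the homogeneity \eqref{eq:quasi-hom}, the monomial change of coordinates $A\in\GL{N}{\Z}$ described in the paragraph preceding the corollary converts it into the product decomposition $\Vanish(f)\isomorph\Gm\times\set{g=0}$, and multiplicativity of $\chi$ together with $\chi(\Gm)=0$ handles the converse. Your observation that $\gcd(\lambda_1,\ldots,\lambda_N)$ divides $\lambda_0$ (so that after dividing by the overall gcd the weight vector $(\lambda_1,\ldots,\lambda_N)$ is genuinely primitive and hence extendable to a $\Z$-basis) is a welcome point of care that the paper's wording glosses over.
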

To prove proposition~\ref{prop:chi-0-hom}, we will look at the \emph{Newton polytope} $\NewPol{f}$ of $f$, which is defined as the convex hull of the exponents of monomials that appear in $f$:
\begin{equation}
	\NewPol{\sum_{\alpha \in \Z^N} c_{\alpha} x^{\alpha}}
	\defas \conv \setexp{\alpha \in \Z^N}{c_{\alpha} \neq 0}
	\subset \R^N
	.
	\label{eq:def-Newton}%
\end{equation}
Since every monomial $x^{\alpha}$ is an eigenvector of the operators \eqref{eq:lin-euler-ann}, $P^{\lambda}_s(\theta)\bullet x^{\alpha} = P^{\lambda}_s(\alpha) x^{\alpha}$, it is annihilated by $P^{\lambda}_1$ exactly when $\alpha$ belongs to $F_{\lambda} \defas\set{P^{\lambda}_1(\alpha)=0}$, the hyperplane $F_{\lambda}= \setexp{\alpha}{\alpha_1\lambda_1+\ldots+\alpha_N \lambda_N=\lambda_0}$.
In particular, $0=P^{\lambda}_s \bullet f^s = s f^{s-1} P^{\lambda}_1 \bullet f$ is equivalent to the Newton polytope $\NewPol{f} \subset F_{\lambda}$ being contained in that hyperplane. We can therefore reformulate the equivalent conditions \eqref{eq:quasi-hom} and \eqref{eq:lin-euler-ann} as
\begin{equation}
	\dim \NewPol{f} < N.
	\label{eq:NewPol<N}%
\end{equation}
Such polytopes have zero $N$-dimensional volume, and we call them \emph{degenerate}.

\begin{proof}[Proof of proposition~\ref{prop:chi-0-hom}]
	We proceed by induction over the dimension $N$, and we will assume $f$ to be non-constant (the proposition holds trivially for any constant $f\in \C$).
	In the case $N=1$, the variety $\Vanish(f) \subset \C^{\ast}$ is a finite set and hence its Euler characteristic coincides with its cardinality. Therefore, $\chi(\Vanish(f))=0$ if and only if $f$ has no zero inside the torus. This is only possible if $f$ is proportional to a monomial $x_1^r$; in particular $f$ must be homogeneous and we are done.
	
	Now consider $N>1$ and assume that $\chi(\Gm^N\setminus\Vanish(f))=\chi(\Vanish(f))=0$.
	Recall that \eqref{eq:quasi-hom} is equivalent to degeneracy \eqref{eq:NewPol<N} of $\NewPol{f}$, so we only need to rule out the non-degenerate case. We achieve this by exploiting the hypothesis $\dim \NewPol{f}=N$ to construct a linear annihilator $P^{\lambda}_s$ of $f^s$, which implies \eqref{eq:NewPol<N} in contradiction to the non-degeneracy of $\NewPol{f}$.

	To start, we use lemma~\ref{lem:chi=0-poly-ann} to find a polynomial $0\neq P(\theta,s) \in \C[s,\theta]$ such that $P(\theta,s)\bullet f^s = 0$, and we choose one with minimal total degree in $s$ and $\theta$.
	Then pick an $(N-1)$ dimensional face $\sigma = \NewPol{f} \cap F_{\lambda}$, which we can write as the intersection of $\NewPol{f}$ with a hyperplane $F_{\lambda}$ for some integers $\lambda_0,\ldots,\lambda_N$ such that $\NewPol{f} \subseteq \setexp{\alpha}{P^{\lambda}_1(\alpha) \leq 0}$.
	Under the rescaling \eqref{eq:quasi-hom}, all monomials of $f=\sum_{\alpha} c_{\alpha} x^{\alpha}$ with $\alpha \in \sigma \subset F_{\lambda}$ acquire a factor of $t^{\lambda_0}$, while the remaining monomials with $\alpha \in \NewPol{f} \setminus \sigma$ come with a smaller exponent $\sum_{i=1}^N \alpha_i \lambda_i<\lambda_0$ of $t$:
	\begin{equation}
		f(x_1 t^{\lambda_1},\ldots,x_N t^{\lambda_N})
		= t^{\lambda_0} f_{\sigma}(x) \left( 1+\bigo{t^{-1}} \right),
		\quad\text{where}\quad
		f_{\sigma}(x) \defas \sum_{\alpha \in \Z^N \cap \sigma} c_{\alpha} x^{\alpha}
		\label{eq:leading-power-facet}%
	\end{equation}
	and $\bigo{t^{-1}}$ denotes a rational function in $t^{-1} \C(x)[t^{-1}]$. Note that $P(\theta,s)\bullet f^s(\set{x_i t^{\lambda_i}}) = 0$ is still zero, because the rescaling of $x$ commutes with the Euler operators $\theta_i\bullet h(x_i t^{\lambda_i}) = (\theta_i \bullet h(x_i))|_{x_i\mapsto x_i t^{\lambda_i}}$. Therefore, applying $P(\theta,s)$ to the $s$-th power of the right-hand side of \eqref{eq:leading-power-facet} and dividing by $t^{s\lambda_0}$ yields
	\begin{equation*}
		0 = P(\theta,s) \bullet f_{\sigma}^s(x) \left( 1+\bigo{t^{-1}} \right)^s
		= P(\theta,s)\bullet f_{\sigma}^s + \bigo{t^{-1}}f_{\sigma}^s,
	\end{equation*}
	where $\bigo{t^{-1}}$ on the right-hand side denotes a formal series in $t^{-1} \C(x,s) [[t^{-1}]]$. In particular, the coefficient of $t^0$ must vanish, and we conclude that $P(\theta,s) \bullet f_{\sigma}^s=0$. Label the variables such that $\lambda_N \neq 0$, then we can divide $P(\theta,s)$ by the linear form $P^{\lambda}_s(\theta,s)$ from \eqref{eq:lin-euler-ann}, as a polynomial in $\theta_N$, to obtain a decomposition
	\begin{equation*}
		P(\theta,s) = P(\theta',0,s) + P^{\lambda}_s (\theta,s) \cdot Q(\theta,s)
	\end{equation*}
	for some polynomial $Q(\theta,s)\in\C[\theta,s]$, such that the first summand depends only on $\theta'\defas(\theta_1,\ldots,\theta_{N-1})$ and $s$.
	Since $\NewPol{f_{\sigma}}=\sigma \subset F_{\lambda}$ is contained in the hyperplane $F_{\lambda}=\setexp{\alpha}{P^{\lambda}_1(\alpha)=0}$, we see $P^{\lambda}_s \bullet f^s_{\sigma} = f^{s-1} P^{\lambda}_1 \bullet f_{\sigma} = 0$ and thus $Q(\theta,s)$ drops out in
	\begin{equation*}
		0 = P(\theta,s) \bullet f^s_{\sigma} = P(\theta',0,s) \bullet f^s_{\sigma}
		= P(\theta',0,s) \bullet g^s
		,
	\end{equation*}
	where $g \defas f|_{x_N=1} \in \C[x_1,\ldots,x_{N-1}]$ is a polynomial in less than $N$ variables. If $P(\theta',0,s)$ were non-zero, lemma~\ref{lem:chi=0-poly-ann} would show $\chi(\Gm^{N-1}\setminus \Vanish(g))=0$, such that we could apply our induction hypothesis to $g$ and conclude that $g$ is homogeneous in our generalized sense. We saw that this is equivalent to the degeneracy of $\NewPol{g}$, which contradicts that $\NewPol{g} \isomorph \NewPol{f_{\sigma}}=\sigma$ is of dimension $N-1$.\footnote{%
		Observe that $\NewPol{g}$ is the orthogonal projection of $\NewPol{f_{\sigma}}$ onto the coordinate hyperplane $\set{\alpha_N=0}$. This projection restricts to an isomorphism between $\set{\alpha_N=0}$ and $F_{\lambda}$ (because $\lambda_N \neq 0$), and therefore $\dim \NewPol{g}=\dim \NewPol{f_{\sigma}}$.
	}

	Therefore, $P(\theta',0,s)$ must be zero and we conclude that $P(\theta,s)=P^{\lambda}_s \cdot Q$ has a linear factor $P^{\lambda}_s(\theta,s)$.\footnote{%
		The existence of a linear annihilator could also be deduced from \cite[Th\'{e}or\`{e}me~9.2]{GabberLoeser:FaisceauxEll}.
	}
	Now set $m\defas Q(\theta,s)\bullet f^s$, which is non-zero, because $P(\theta,s)$ was chosen as an annihilator of $f^s$ of minimal degree. We may write this element in the form $m=a\cdot f^{s+r}$ for some $r\in\Z$ and a Laurent polynomial $a \in \C(s)[x^{\pm}]$. After multiplying with with a polynomial in $\C[s]$, we may even assume $0 \neq a \in \C[s,x^{\pm 1}]$ with $P^{\lambda}_s \bullet af^{s+r} = 0$. Applying the Leibniz rule and dividing by $af^{s+r}$, we find
	\begin{equation*}
		0
		%= \frac{1}{a} \cdot P^{\lambda}_s \bullet (a\cdot f^{s+r})
		= \frac{P^{\lambda}_0 \bullet a}{a} - s\lambda_0 + (s+r) \frac{P^{\lambda}_0 \bullet f}{f}
		.
	\end{equation*}
	Since the degree of $P^{\lambda}_0 \bullet a$ in $s$ is at most the degree (in $s$) of $a$ itself, this first summand on the right has a finite limit as $s\rightarrow 0$. We therefore must have a cancellation of the terms linear in $s$, $P^{\lambda}_0 \bullet f = \lambda_0 f$, which yields the sought-after $P^{\lambda}_s \bullet f^s = 0$.
\end{proof}

\begin{rem}
	In summary, we showed that the following six conditions on a polynomial $f\in\C[x_1,\ldots,x_N]$ are equivalent:
	\begin{inparaenum}[(1)]
		\item homogeneity \eqref{eq:quasi-hom} of $f$,
		\item existence \eqref{eq:lin-euler-ann} of an annihilator of $f^s$ linear in $\theta$,
		\item existence  \eqref{eq:euler-poly-ann} of an annihilator of $f^s$ polynomial in $\theta$,
		\item degeneracy \eqref{eq:NewPol<N} of the Newton polytope $\NewPol{f}$,
		\item vanishing of the Euler characteristic $\chi(\Gm^N\setminus\Vanish(f))=\chi(\Vanish(f))=0$ and
		\item divisibility of $\Vanish(f)$ by $\Gm$ as stated in corollary~\ref{cor:chi=0-product}.
	\end{inparaenum}
\end{rem}
To conclude, let us interpret our observation in the light of the well-known result, due to Kouchnirenko \cite[Th\'{e}or\`{e}me~IV]{Kouchnirenko:NewtonMilnor} and Khovanskii \cite[Theorem~2~in~section~3]{Khovanskii:NewtonGenusComplete}, that relates the Euler characteristic to the volume of the Newton polytope:
\begin{thm}
	For almost all polynomials $f \in \C[x_1,\ldots,x_N]$ with a fixed Newton polytope, we have $\chi(\Gm^N\setminus\Vanish(f))=(-1)^N \cdot N! \cdot \Vol \NewPol{f}$.
	\label{thm:Kouchnirenko}%
\end{thm}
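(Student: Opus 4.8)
The plan is to compute the signed Euler characteristic $(-1)^N\chi(\Gm^N\setminus\Vanish(f))$ by counting the critical points of a generic \emph{master function} on the complement, and then to evaluate that count with the Bernstein--Kouchnirenko theorem. Here the phrase ``almost all'' is made precise by \emph{nondegeneracy with respect to the Newton polytope} $P\defas\NewPol{f}$: for every face $\sigma$ of $P$ the truncation $f_\sigma=\sum_{\alpha\in\sigma}c_\alpha x^\alpha$ has no critical point in the torus. This is a Zariski-open, dense condition on the coefficients $(c_\alpha)_{\alpha\in P\cap\Z^N}$, and under it $U\defas\Gm^N\setminus\Vanish(f)$ is a smooth affine variety of dimension $N$; in fact it is \emph{very affine}, since $x\mapsto(x_1,\ldots,x_N,f(x))$ realizes $U$ as a closed subvariety of $\Gm^{N+1}$.

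First I would fix generic exponents $a=(a_0,\ldots,a_N)$ and consider $\Phi_a\defas f^{a_0}x_1^{a_1}\cdots x_N^{a_N}$, a multivalued function on $U$ whose logarithmic derivative $\omega_a\defas\dd\log\Phi_a$ is a well-defined regular $1$-form. The key input is the Morse-theoretic principle for very affine varieties (due to Varchenko and Orlik--Terao for arrangements, and to Huh in general): for generic $a$ the zeros of $\omega_a$ are finite, nondegenerate, and all of real Morse index $N$, so that
\begin{equation*}
	(-1)^N\chi(U)=\#\setexp{p\in U}{\omega_a(p)=0}.
\end{equation*}
Writing $\theta_i=x_i\partial_i$, the equation $\omega_a=0$ unwinds to the square system $a_0\,\theta_i f+a_i\,f=0$ for $1\leq i\leq N$ on the torus $\Gm^N$.

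Next I would evaluate this count. Each equation $a_0\,\theta_i f+a_i\,f=0$ is a Laurent polynomial whose Newton polytope equals $P$, since $\NewPol{\theta_i f}\subseteq P=\NewPol{f}$ and no cancellation occurs for generic $a_i$. Bernstein--Kouchnirenko then gives the number of solutions in $\Gm^N$ as the (normalized) mixed volume $\mathrm{MV}(P,\ldots,P)=N!\Vol P$, the genericity of $a$ guaranteeing that the bound is attained. Finally I must confirm that every solution lies in $U$, i.e.\ that $f\neq0$ there: if $f(p)=0$ at a solution then $\theta_i f(p)=0$ for all $i$, so $p$ would be a torus critical point of $f$ lying on $\Vanish(f)$, contradicting the nondegeneracy that keeps $\Vanish(f)$ smooth inside $\Gm^N$. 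Combining the two observations yields $\chi(U)=(-1)^N\,N!\,\Vol P$.

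The hard part is the Morse-theoretic identity $(-1)^N\chi(U)=\#\set{\omega_a=0}$: one must know that a generic master function on a smooth very affine variety behaves like a ``Morse function'' all of whose critical points have index equal to the complex dimension, so that they contribute with a uniform sign and no lower-index cells intervene. This is exactly the content of the very-affine critical-point (Gauss--Bonnet) theorem, and it is the only genuinely nontrivial ingredient; the remaining steps are the classical Bernstein--Kouchnirenko count and an elementary nondegeneracy check. An alternative, more classical route avoids this input altogether: compactify $U$ inside a smooth projective toric variety whose fan refines the normal fan of $P$, stratify the closure of $\Vanish(f)$ by the torus orbits, and use additivity of the Euler characteristic together with induction on $\dim P$ over the faces of $P$, so that the volume emerges from the orbit-by-orbit lattice contributions, as in Khovanskii's original argument.
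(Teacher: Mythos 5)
The paper offers no proof of this statement: theorem~\ref{thm:Kouchnirenko} is imported as a known result of Kouchnirenko and Khovanskii, and the paper's own work in that section concerns only the degenerate case $\Vol\NewPol{f}=0$ (proposition~\ref{prop:chi-0-hom}), proved by entirely different means (Euler-operator annihilators and induction on the Newton polytope). So your argument is necessarily a different route from anything in the paper, and it also differs from the classical proofs: Khovanskii's original argument is essentially the toric compactification and orbit stratification you sketch at the end, whereas your main line --- realize $U=\Gm^N\setminus\Vanish(f)$ as a smooth very affine variety via $x\mapsto(x_1,\ldots,x_N,f(x))$, invoke the critical-point theorem of Varchenko/Orlik--Terao/Huh to get $(-1)^N\chi(U)=\#\set{\omega_a=0}$, and count the critical equations $a_0\theta_i f+a_i f=0$ by Bernstein--Kouchnirenko --- is the modern ``maximum likelihood degree'' proof. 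That strategy is sound, and citing Huh's theorem as the one nontrivial input is legitimate; its proof (via Chern--Schwartz--MacPherson classes) does not rely on the volume formula, so there is no circularity.

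Two points need repair before this is a complete proof. First, your nondegeneracy condition is misstated: demanding that every truncation $f_\sigma$ have \emph{no critical point at all} in the torus is not a dense condition --- it is typically empty. If $0$ is a vertex of $\NewPol{f}$, the truncation there is a nonzero constant, so every torus point is critical; and already for $f=1+x+x^2$ the face $\sigma=\NewPol{f}$ contributes the critical point $x=-1/2$. The correct (Kouchnirenko) condition is that for every face $\sigma$ the system $f_\sigma=\partial_1 f_\sigma=\cdots=\partial_N f_\sigma=0$ has no solution in $\Gm^N$, i.e.\ critical points are forbidden only \emph{on the zero locus} of $f_\sigma$; this is Zariski-open and dense, and it is all your argument actually uses (your exclusion of solutions with $f(p)=0$ invokes exactly smoothness of $\Vanish(f)\cap\Gm^N$). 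Second, the assertion that ``genericity of $a$ guarantees the Bernstein bound is attained'' hides the real work, because the coefficients of your system are tied to those of $f$ and are not free. One must check Bernstein's face condition: for a proper face $\sigma$ with supporting weight $w$ and level $m_w$, truncation commutes with the $\theta_i$, and the quasi-homogeneity $\sum_i w_i\theta_i f_\sigma=m_w f_\sigma$ shows that a common zero $p\in\Gm^N$ of the truncated system with $f_\sigma(p)\neq 0$ forces the linear relation $\sum_i w_i a_i+m_w a_0=0$, which fails for generic $a$ since only finitely many $w$ occur, while a common zero with $f_\sigma(p)=0$ is a singular point of $\Vanish(f_\sigma)\cap\Gm^N$, excluded by the corrected nondegeneracy. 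With these two fixes (and the observation that the resulting solutions are simple, so the Bernstein count matches the reduced critical-point count), the argument is complete.
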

For polynomials $f$ whose non-zero coefficients are sufficiently generic, proposition~\ref{prop:chi-0-hom} follows from theorem~\ref{thm:Kouchnirenko}. Our proof shows that when $\Vol \NewPol{f}=0$, the theorem applies without any constraints on the non-zero coefficients of $f$. In fact, this statement extends to the case when $N!\Vol \NewPol{f}=1$, because it is known that $(-1)^N\cdot \chi(\Gm^N\setminus\Vanish(f))$ is always bounded from above by $N!\Vol \NewPol{f}$, for all $f$, leaving only the possibilities $\set{0,1}$ for the signed Euler characteristic $(-1)^N\cdot \chi(\Gm^N\setminus\Vanish(f))$.

\subsection{Graph polynomials}
\label{sec:graph-polys}%

Our discussion so far applies to all integrals of the type \eqref{eq:FI-momentum}---the defining data is thus the set $\Den=(\Den_1,\ldots,\Den_N)$ of denominators, which is sometimes also called an \emph{integral family} \cite{ManteuffelStuderus:Reduze2}.
The denominators can be arbitrary quadratic forms in the loop momenta; the decomposition \eqref{eq:def:M-Q-J} then defines the associated polynomials $\U$, $\F$ and $\G$ through \eqref{eq:def-U-F}. In particular, the denominators do \emph{not} have to be related to the momentum flow through a (Feynman) graph in any way.

However, we will from now on consider the most common case in applications: integrals associated to a Feynman graph with Feynman propagators. 

\begin{defn}
Given a connected Feynman graph $G$ with $N$ internal edges, $\ExtMoms+1$ external legs and $\Loops$ loops, imposing momentum conservation at each vertex determines the momenta $\EdgeMom_e$ flowing through each edge $e$ in terms of the $\ExtMoms$ external and $\Loops$ loop momenta.\footnote{%
	Momentum conservation implies that the sum $p_1+\cdots+p_{\ExtMoms+1}=0$ of the incoming momenta on all external legs vanishes; hence only $\ExtMoms$ of them are independent.
}
The \emph{Symanzik polynomials} $\U_G$ and $\F_G$ of the graph $G$ are the polynomials $\U$ and $\F$ from \eqref{eq:def-U-F} for the set $\Den=(\Den_1,\ldots,\Den_N)$ of inverse Feynman propagators,\footnote{%
	The infinitesimal imaginary part $\iu\FeynEps$ is irrelevant for our purpose of counting integrals and will be henceforth ignored.
}
\begin{equation}
	\frac{1}{\Den_e} = \frac{1}{-\EdgeMom_e^2+m_e^2-\iu\FeynEps}
	\quad(1\leq e \leq N)
	,
	\label{eq:Feynman-prop}%
\end{equation}
where $m_e$ is the mass associated to the particle propagating along edge $e$. 
	The number $\NoM{G}$ of master integrals of the Feynman graph $G$ is defined in terms of \eqref{eq:def-number-of-masters} as
	\begin{equation}
		\NoM{G}
		\defas \NoM{\G_G}
		\quad\text{where}\quad
		\G_G \defas \U_G+\F_G
		.
		\label{eq:NoM-graph}%
	\end{equation}
\end{defn}
This class of integrals (using only the propagators in the graph) is sometimes referred to as \emph{scalar integrals} and might appear to be insufficient for applications, since in general one needs to augment the inverse propagators by additional denominators, called \emph{irreducible scalar products} (ISPs), in order to be able to express arbitrary numerators of the momentum space integrand in terms of the integrals \eqref{eq:FI-momentum}; see example~\ref{ex:doubletri-ISP}.
Therefore, one might expect that, in order to count all these integrals, one ought to replace $\G_G$ in \eqref{eq:NoM-graph} by the polynomial associated to the full set of denominators, including the ISPs.

However, it is well-known since \cite{Tarasov:ConnectionBetweenFeynmanIntegrals} that all such integrals with ISPs are in fact linear combinations of scalar integrals in higher dimensions $\Dim+2k$, for some $k\in \N$.
Furthermore, those can be written as scalar integrals in the original dimension $\Dim$ by corollary~\ref{cor:dim-shift-closed}. Therefore, \eqref{eq:NoM-graph} is the correct definition to count the number of master integrals of (any integral family determined by) a Feynman graph.

We can therefore invoke the following, well-known combinatorial formulas for the Symanzik polynomials \cite{BognerWeinzierl:GraphPolynomials,Smirnov:AnalyticToolsForFeynmanIntegrals}, which go back at least to \cite{Nak}.
\begin{prop}\label{prop:graph-polynomials}
	The Symanzik polynomials of a graph $G$ can be written as
\begin{equation}
	\U_G = \sum_{T} \prod_{e \notin T} x_e
	\quad\text{and}\quad
	\F_G = \U_G \sum_{e=1}^N x_e m_e^2
	     - \sum_F p_F^2 \prod_{e \notin T} x_e
	,
	\label{eq:graph-polynomials}%
\end{equation}
	where $T$ runs over the spanning trees of $G$ and $F$ enumerates the spanning two-forests of $G$ ($p_F$ denotes the sum of all external momenta flowing into one of the components of $F$).
\end{prop}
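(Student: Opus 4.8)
The plan is to reduce both identities to the weighted matrix-tree theorem by making the dependence of $\loopMat$, $\loopLin$ and $\loopConst$ on the combinatorics of $G$ explicit. Fix an orientation of $G$ and a basis of $\Loops$ independent loops; momentum conservation then routes the edge momenta as $\EdgeMom_e = \sum_{i=1}^{\Loops} \eta_{ei}\LoopMom_i + c_e$, where $\eta \in \set{0,\pm 1}^{N\times\Loops}$ is the cycle matrix recording how the $i$-th loop momentum traverses edge $e$, and $c_e$ is the orientation-weighted sum of external momenta through $e$. Substituting the Feynman propagators \eqref{eq:Feynman-prop}, i.e. $\Den_e = -\EdgeMom_e^2 + m_e^2$, into \eqref{eq:def:M-Q-J} and collecting the quadratic, linear and constant terms in the $\LoopMom_i$, one reads off $\loopMat = \eta^{\Transpose} X \eta$ with $X \defas \operatorname{diag}(x_1,\ldots,x_N)$, together with $\loopLin_i = -\sum_e x_e \eta_{ei} c_e$ and $\loopConst = \sum_e x_e m_e^2 - \sum_e x_e c_e^2$. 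After this step the statement is pure linear algebra over $\C[x]$.

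For $\U_G = \det\loopMat = \det(\eta^{\Transpose} X \eta)$ I would apply the Cauchy--Binet formula,
\[
	\det(\eta^{\Transpose} X \eta) = \sum_{\abs{S}=\Loops} (\det \eta_S)^2 \prod_{e\in S} x_e,
\]
the sum ranging over $\Loops$-element edge subsets $S$ with $\eta_S$ the corresponding $\Loops\times\Loops$ submatrix. The combinatorial input is the lemma that $\det\eta_S = \pm 1$ when the complementary edge set $E\setminus S$ is a spanning tree and $\det\eta_S = 0$ otherwise: indeed $\eta_S$ is invertible exactly when no nonzero element of the cycle space is supported on $E\setminus S$, that is, when $E\setminus S$ is acyclic, which together with $\abs{E\setminus S}=N-\Loops=V-1$ (for $V$ vertices) forces $E\setminus S$ to be a spanning tree; total unimodularity of $\eta$ then gives the value $\pm 1$. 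Since $(\det\eta_S)^2\in\set{0,1}$, only cotrees survive and $\U_G = \sum_T \prod_{e\notin T} x_e$ follows.

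For $\F_G$ I would split $\loopConst$ into its mass and external parts. The mass contribution $\U_G \sum_e x_e m_e^2$ already appears verbatim in the statement. The remaining piece is $\F_G - \U_G\sum_e x_e m_e^2 = \loopLin^{\Transpose}\operatorname{adj}(\loopMat)\,\loopLin - \U_G\sum_e x_e c_e^2$, where $\operatorname{adj}(\loopMat) = \U_G\loopMat^{-1}$ is the adjugate, so that both summands have degree $\Loops+1$ in $x$. Expanding the adjugate by the Cauchy--Binet formula for minors (equivalently, Chaiken's all-minors matrix-tree theorem) replaces the single spanning-tree sum by a sum over spanning \emph{two}-forests $F=(T_1,T_2)$: each surviving term carries a monomial $\prod_{e\notin F} x_e$ of degree $\Loops+1$ together with a bilinear expression in the $c_e$ which, upon invoking momentum conservation, collapses to $-\ExtMom_F^2$, the squared external momentum flowing across the cut that separates the two tree components. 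Assembling these gives $-\sum_F \ExtMom_F^2 \prod_{e\notin F} x_e$ and completes the proof.

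The main obstacle is this last step: the sign bookkeeping in the all-minors expansion, and the verification that $\loopLin^{\Transpose}\operatorname{adj}(\loopMat)\,\loopLin - \U_G\sum_e x_e c_e^2$ reorganizes \emph{exactly} into the two-forest sum with coefficient $\ExtMom_F^2$ rather than some uncontrolled quadratic in the $c_e$. A cleaner, fully self-contained alternative that avoids the all-minors theorem is induction on $N$ by deletion--contraction: one checks directly from the definition \eqref{eq:def-U-F} that, for an edge $e$ that is neither a self-loop nor a bridge, $\U_G = x_e\U_{G-e} + \U_{G/e}$ holds together with the analogous (slightly more involved) recursion for $\F_G$, and then matches these against the recursions satisfied by the spanning-tree and spanning-two-forest sums, the base cases of a single edge, a self-loop and a bridge being immediate.
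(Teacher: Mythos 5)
Your attempt cannot be measured against an in-paper argument, because the paper does not prove this proposition: it invokes \eqref{eq:graph-polynomials} as a ``well-known combinatorial formula'' with citations to \cite{Nak,BognerWeinzierl:GraphPolynomials,Smirnov:AnalyticToolsForFeynmanIntegrals}, and the appendix only records the linear-algebra data \eqref{eq:M-from-A}--\eqref{eq:J} and the homogeneity corollary. What you propose is, in essence, the standard proof from those references, so the right comparison is against that.

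Your first half is complete and correct. Writing $\EdgeMom_e=\sum_i \eta_{ei}\LoopMom_i+c_e$, reading off $\loopMat=\eta^{\Transpose}X\eta$, $\loopLin=-\eta^{\Transpose}Xc$ and $\loopConst=\sum_e x_e(m_e^2-c_e^2)$ from \eqref{eq:def:M-Q-J}, applying Cauchy--Binet, and using that an $\Loops\times\Loops$ minor of the cycle matrix is $\pm1$ or $0$ according to whether the complementary edges form a spanning tree does give $\U_G=\sum_T\prod_{e\notin T}x_e$. One point you should make explicit: the claim $\det\eta_S\in\set{0,\pm1}$ requires the columns of $\eta$ to be a \emph{fundamental} cycle basis (or at least a basis of the cycle lattice); an arbitrary independent family of circuits only guarantees integer minors. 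Since the polynomials \eqref{eq:def-U-F} do not depend on the routing, you may fix a fundamental routing without loss of generality, but that choice is part of the proof.

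The $\F$ half, by contrast, is a plan rather than a proof, and you say so yourself: the entire content of the second Symanzik formula is the claim that
\begin{equation*}
	\loopLin^{\Transpose}\operatorname{adj}(\loopMat)\,\loopLin-\U_G\sum_e x_e c_e^2
	= -\det\begin{pmatrix}\loopMat & \loopLin\\ \loopLin^{\Transpose} & \sum_e x_e c_e^2\end{pmatrix}
	= -\sum_F p_F^2 \prod_{e\notin F}x_e
	,
\end{equation*}
and that is precisely the step you defer. To close it along your primary route you need: (i) the bordered-determinant identity above; (ii) a Cauchy--Binet expansion of that determinant, whose last row and column have Minkowski-vector entries $-c_e$, so the expansion must be justified by bilinearity, each $(\Loops+1)$-subset $S$ of edges contributing the Minkowski square of the vector $\sum_{e\in S}\pm\det(\eta_{S\setminus\set{e}})\,c_e$; and (iii) the identification, using momentum conservation, of the surviving subsets with complements $F=E\setminus S$ that are spanning two-forests, and of that vector with $\pm p_F$, i.e.\ the verification that the cut edges all enter with coherent signs. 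None of this is conceptually hard, but it is exactly where all the signs live, so as it stands the two-forest formula is asserted, not derived. The deletion--contraction fallback is likewise viable (and is a legitimate self-contained alternative), but there too the real work --- the recursion for $\F_G$ and the behaviour of $p_F^2$ under contraction, including the bridge and self-loop cases --- is not carried out. In short: right strategy, matching the literature the paper cites; the spanning-tree half is done; the two-forest half still needs to be written. (Minor aside: you silently corrected the paper's typo $\prod_{e\notin T}$ to $\prod_{e\notin F}$ in the second sum of \eqref{eq:graph-polynomials}, which is indeed what is meant.)
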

\hide{In particular it is well-known that for all connected graphs, $\U_G \neq 0$ and thus $\loopMat$ is invertible and \eqref{eq:def-U-F} makes sense.}
In section~\ref{sec:sunrise}, we will use these formulas to count the sunrise integrals.
\begin{example}\label{ex:bubble-polys}
	The graph polynomials of the bubble graph (figure~\ref{fig:bubble}) are, in general kinematics,
	\begin{equation}
		\U = x_1 + x_2
		\quad\text{and}\quad
		\F = (x_1 + x_2)(x_1 m_1^2 + x_2 m_2^2) - p^2 x_1 x_2
		.
		\label{eq:bubble-polys}%
	\end{equation}
\end{example}

It is important to keep in mind that, even with a fixed graph, the number of master integrals will vary depending on the kinematical configuration---e.g.\ whether a propagator is massive or massless, or whether an external momentum is non-exceptional or sits on a specific value (like zero or various \emph{thresholds}).
We will always explicitly state any assumptions on the kinematics, and hence stick with the simple notation \eqref{eq:NoM-graph}.

\subsection{The Grothendieck ring of varieties}
\label{sec:Grothendieck}

Since we are from now on only interested in the Euler characteristic, we can simplify calculations by abstracting from the concrete variety $\Vanish(\G) \defas \set{\G=0}$ to its class $[\G]$ in the Grothendieck ring $K_0(\text{Var}_{\C})$. This ring is the free Abelian group generated by isomorphism classes $[X]$ of
%separated schemes $X$ of finite type over $k$,
varieties over $\C$,
modulo the inclusion-exclusion relation $[X]=[X\setminus Z] + [Z]$ for closed subvarieties $Z \subset X$. 
It is a unital ring for the product $[X]\cdot[Y] = [X \times Y]$ with unit $1=[\Aff^0]$ given by the class of the point.
Crucially, the Euler characteristic factors through the Grothendieck ring, since it is compatible with these relations: $\chi(X) = \chi(X\setminus Z) + \chi(Z)$ and $\chi(X\times Y) = \chi(X)\cdot\chi(Y)$.
The class $\Lef = [\Aff^1]$ of the affine line is called \emph{Lefschetz motive} and fulfils $\chi(\Lef)=1$.
For several polynomials $P_1,\ldots,P_n$, we write $\Vanish(P_1,\ldots,P_n) \defas \set{P_1=\cdots=P_n=0}$.

If a variety is described by polynomials that are linear in one of the variables, we can eliminate this variable to reduce the ambient dimension.\footnote{%
	Such \emph{linear reductions} were first investigated by Stembridge in \cite{Stembridge:CountingPoints} and have led, via the $c_2$-invariant \cite{Schnetz:Fq} of Schnetz, to the discovery of graph hypersurfaces that are not of mixed Tate type \cite{BrownSchnetz:K3phi4}.
}
Let us state such a relation explicitly, since our setting is slightly different than usual: For us, the natural ambient space is the torus $\Gm^N$ and not the affine plane $\Aff^N$.

\begin{lem}\label{lem:fibration-torus}
	Let $A,B \in \C[x_1,\ldots,x_{N-1}]$ and consider the linear polynomial $A+x_N B$. Then
\begin{equation}
	[\Gm^N \setminus \Vanish\left(A+x_N B\right)]
	=\Lef \cdot \left[ \Gm^{N-1} \setminus \Vanish\left(A,B\right) \right]
	 -[ \Gm^{N-1} \setminus \Vanish(A) ]
	 -[ \Gm^{N-1} \setminus \Vanish(B) ]
	\label{eq:fibration-torus}%
\end{equation}
holds in the Grothendieck ring. In particular, the Euler characteristic is
\begin{equation}
	\chi \left( \Gm^N \setminus \Vanish\left(A+x_N B\right) \right)
	=- \chi\left( \Gm^{N-1} \setminus \Vanish\left(A\cdot B\right) \right)
	.
	\label{eq:fibration-torus-euler}%
\end{equation}
\end{lem}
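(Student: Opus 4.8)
The plan is to view $\Gm^N = \Gm^{N-1} \times \Gm$, with $x_N$ the coordinate on the second factor, and to project the complement $\Gm^N \setminus \Vanish(A+x_N B)$ onto the base $\Gm^{N-1}$. Over a base point $\bar{x}=(x_1,\ldots,x_{N-1})$ the fibre is $\setexp{x_N \in \Gm}{A(\bar x) + x_N B(\bar x) \neq 0}$, and its class depends only on whether $A(\bar x)$ and $B(\bar x)$ vanish. I would therefore stratify the base into the four locally closed pieces $U \defas \Gm^{N-1}\setminus\left(\Vanish(A)\cup\Vanish(B)\right)$, $V_A \defas \Vanish(A)\setminus\Vanish(B)$, $V_B \defas \Vanish(B)\setminus\Vanish(A)$ and $W \defas \Vanish(A,B)$, and use the scissor relation to write the total class as the sum over the four preimages.

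Next I would compute the fibre class on each stratum. Over $W$ the polynomial $A+x_N B$ vanishes identically, so that preimage contributes $0$. Over $V_A$ we have $A+x_N B = x_N B$ with $B\neq 0$, which is nonzero for every $x_N\in\Gm$; hence the whole torus fibre survives and the piece is $V_A\times\Gm$, of class $[V_A](\Lef-1)$, and symmetrically $[V_B](\Lef-1)$ over $V_B$ since there $A+x_N B = A \neq 0$. The only subtle stratum is the open one $U$: there the equation has the single root $x_N = -A/B$, which is a nowhere-vanishing regular function on $U$, so the removed locus is the graph of a morphism $U\to\Gm$ and is isomorphic to $U$. Using only the multiplicativity $[U\times\Gm]=[U](\Lef-1)$ together with the scissor relation, the piece over $U$ has class $[U](\Lef-1)-[U]=[U](\Lef-2)$. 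Summing, the total class equals $[U](\Lef-2)+\left([V_A]+[V_B]\right)(\Lef-1)$.

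Finally I would substitute the inclusion-exclusion expansions $[U]=a - v_A - v_B + v_{AB}$, $[V_A]=v_A-v_{AB}$, $[V_B]=v_B-v_{AB}$ (writing $a \defas [\Gm^{N-1}]$, $v_A \defas [\Vanish(A)]$, $v_B \defas [\Vanish(B)]$ and $v_{AB} \defas [\Vanish(A,B)]$) into this sum. After expansion the $\Lef$-coefficient collapses to $a-v_{AB}$ and the constant part to $-2a+v_A+v_B$, giving $\Lef(a-v_{AB})-2a+v_A+v_B$, which is precisely $\Lef\cdot[\Gm^{N-1}\setminus\Vanish(A,B)] - [\Gm^{N-1}\setminus\Vanish(A)] - [\Gm^{N-1}\setminus\Vanish(B)]$; this proves \eqref{eq:fibration-torus}. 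For \eqref{eq:fibration-torus-euler} I would apply the ring homomorphism $\chi$ with $\chi(\Lef)=1$ and then use $\chi(\Vanish(A\cdot B))=\chi(\Vanish(A))+\chi(\Vanish(B))-\chi(\Vanish(A,B))$, upon which the right-hand side telescopes into $-\chi\left(\Gm^{N-1}\setminus\Vanish(A\cdot B)\right)$; notably this needs no vanishing of $\chi(\Gm^{N-1})$.

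The main obstacle I anticipate is the open stratum $U$, where the excised fibre point $-A/B$ moves with $\bar x$. The graph-of-a-section observation is exactly what lets me replace this varying family by the constant class $\Lef-2$ using nothing more than the product formula in $K_0(\mathrm{Var}_{\C})$, so that no appeal to Zariski-local triviality of the fibration is required. The remaining steps are routine inclusion-exclusion bookkeeping.
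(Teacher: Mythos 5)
Your proof is correct and follows essentially the same route as the paper: project along $x_N$ onto $\Gm^{N-1}$, stratify the base by the vanishing of $A$ and $B$, compute the fibre class on each stratum (empty over $\Vanish(A,B)$, full $\Gm$ where exactly one of $A,B$ vanishes, $\Gm$ minus the point $-A/B$ on the open stratum), and finish with inclusion-exclusion. Your explicit graph-of-a-section justification for the open stratum and the observation that \eqref{eq:fibration-torus-euler} follows without invoking $\chi(\Gm)=0$ are minor refinements of the paper's terser argument, not a different method.
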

\begin{proof}
Consider the hypersurface $\Vanish(\G)\subset \Gm^{N}$, defined by $\G \defas A+x_N B$, under the projection $\pi\colon \Gm^N \longrightarrow \Gm^{N-1}$ that forgets the last coordinate $x_N$.

As long as $AB\neq 0$, the unique solution of $\G=0$ in the fibre is $x_N=-A/B$. If $A=0$, the solution $x_N=0$ is not in $\Gm$ and the fibre is empty, and it is also empty whenever $B=0$. The only exception to this emptiness is over the intersection $A=B=0$, where $x_N$ is arbitrary and the fibre is the full $\Gm$. This fibration proves \eqref{eq:fibration-torus}; equivalently, we can write it via $[\Vanish(A\cdot B)] = [\Vanish(A)] + [\Vanish(B)] - [\Vanish(A,B)]$ and $[\Gm]=\Lef - 1$ as
\begin{equation*}
	[\Gm^N \setminus \Vanish\left(A+x_N B\right)]
	=
	[\Gm]\left(
		[ \Gm^{N-1} \setminus \Vanish(A) ]
		+[\Gm^{N-1} \setminus \Vanish(B) ]
	\right)
	-\Lef \cdot [ \Gm^{N-1} \setminus \Vanish(A\cdot B) ]
	.
\end{equation*}
Applying the Euler characteristic proves \eqref{eq:fibration-torus-euler} due to $\chi(\Aff^1)=1$ and $\chi(\Gm)=0$.
\end{proof}
\begin{cor}
	Set $\widetilde{\U} \defas \restrict{\U}{x_N=1}$ and $\widetilde{\F} \defas \restrict{\F}{x_N=1}$. Then
\begin{align}
	\chi\left( \Gm^N \setminus\!\!\Vanish (\G) \right)
	&= \chi \left(\Gm^{N-1} \setminus\!\!\Vanish( \widetilde{\U},\widetilde{\F}) \right)
	- \chi \left( \Gm^{N-1} \setminus\!\!\Vanish( \widetilde{\U}) \right)
	- \chi \left( \Gm^{N-1} \setminus\!\!\Vanish( \widetilde{\F}) \right)
	\label{eq:chi(G)->U,F}%
	\\
	&= - \chi \left( \Gm^{N-1} \setminus\!\!\Vanish( \widetilde{\U}\cdot \widetilde{\F} ) \right)
	.
	\label{eq:chi(A+xB)->chi(A*B)}%
\end{align}
\end{cor}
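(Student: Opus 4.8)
The plan is to reduce the corollary to a single application of Lemma~\ref{lem:fibration-torus} by exploiting the homogeneity of the Symanzik polynomials. The structural fact I would lean on is that $\U$ is homogeneous of degree $\Loops$ and $\F$ is homogeneous of degree $\Loops+1$, so that their degrees differ by exactly one. This is visible from \eqref{eq:graph-polynomials}, since the complement of a spanning tree has $\Loops$ edges and the complement of a spanning two-forest has $\Loops+1$ edges, and it holds in general from \eqref{eq:def-U-F} because $\loopMat$ is linear in $x$.

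First I would perform the monomial change of coordinates on the torus $\Gm^N$ given by $u_i \defas x_i/x_N$ for $1\le i\le N-1$ together with $t\defas x_N$, whose inverse is $x_i = u_i t$ and $x_N = t$. Its exponent matrix lies in $\GL{N}{\Z}$, so this is an automorphism of $\Gm^N$; in particular it preserves classes in the Grothendieck ring and Euler characteristics, and it identifies $\Gm^N$ with $\Gm^{N-1}_u\times\Gm_t$. Using the two homogeneity degrees, $\G(u_1t,\ldots,u_{N-1}t,t) = t^\Loops\widetilde{\U}(u) + t^{\Loops+1}\widetilde{\F}(u) = t^\Loops\bigl(\widetilde{\U}(u)+t\,\widetilde{\F}(u)\bigr)$, where $\widetilde{\U}$ and $\widetilde{\F}$ are precisely the dehomogenizations $\restrict{\U}{x_N=1}$ and $\restrict{\F}{x_N=1}$ with variables renamed $x_i\mapsto u_i$. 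Because $t$ is a unit on the torus, the prefactor $t^\Loops$ does not affect the vanishing locus, so $\Vanish(\G)$ is carried isomorphically onto $\Vanish\bigl(\widetilde{\U}+t\,\widetilde{\F}\bigr)$.

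The crucial gain is that $\widetilde{\U}(u)+t\,\widetilde{\F}(u)$ is \emph{linear} in the coordinate $t$, with $A=\widetilde{\U}$ and $B=\widetilde{\F}$ both lying in $\C[u_1,\ldots,u_{N-1}]$. I would then invoke Lemma~\ref{lem:fibration-torus} with $t$ in the role of the distinguished variable $x_N$: its Grothendieck-ring form \eqref{eq:fibration-torus}, after applying $\chi$ and using $\chi(\Lef)=1$, gives \eqref{eq:chi(G)->U,F}, while the Euler-characteristic form \eqref{eq:fibration-torus-euler} gives \eqref{eq:chi(A+xB)->chi(A*B)} at once. The computation is short, and I expect the only delicate points to be (i) checking that the monomial substitution is a genuine automorphism of the torus, so that it preserves $\chi$, and (ii) confirming that the rescaling by $t^\Loops$ is legitimate precisely because $\deg\F-\deg\U=1$ and $t\in\Gm$ is invertible, so that the restriction $x_N=1$ in the definitions of $\widetilde{\U}$ and $\widetilde{\F}$ really coincides with the dehomogenization produced by $x_i=u_it$.
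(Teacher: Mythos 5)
Your proof is correct and follows essentially the same route as the paper: both exploit the homogeneity degrees $\deg\U=\Loops$ and $\deg\F=\Loops+1$ to rescale the first $N-1$ variables by $x_N$ (a torus automorphism), turning $\G$ into $x_N^{\Loops}(\widetilde{\U}+x_N\widetilde{\F})$, and then apply lemma~\ref{lem:fibration-torus} to the linear polynomial $\widetilde{\U}+x_N\widetilde{\F}$. The two delicate points you flag are exactly the ones the paper's proof relies on implicitly.
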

\begin{proof}
	Recall that $\U$ and $\F$ are homogeneous of degrees $\Loops$ and $\Loops+1$, respectively (corollary~\ref{cor:U-F-homogeneity}). Since multiplication with $x_N \in \Gm$ is invertible, we can rescale all variables $x_i$ with $i<N$ by $x_N$. This change of coordinates transforms $\G$ into $x_N^{\Loops} (\widetilde{\U}+x_N \widetilde{\F})$. Since $x_N \neq 0$, this shows that
$
[\Gm^N \setminus \Vanish(\G)]
= [\Gm^N \setminus \Vanish(\widetilde{\U} + x_N \widetilde{\F})]
$
such that the claim is just a special case of lemma~\ref{lem:fibration-torus}.
\end{proof}
\begin{example}\label{ex:series-parallel-graph}
	Both graphs consisting of a pair of massless edges,
	\begin{equation}
		G_{\mathrm{series}}=\Graph[0.5]{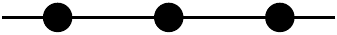}
		\quad\text{and}\quad
		G_{\mathrm{parallel}}=\Graph[0.5]{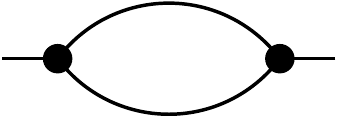},
		\label{eq:series-parallel-graph}%
	\end{equation}
	with the external momentum $p$ such that $p^2 \neq 0$, have a single master integral $\NoM{G}=1$.
\end{example}
\begin{proof}
	According to \eqref{eq:graph-polynomials}, the graph polynomials of the graphs in \eqref{eq:series-parallel-graph} are
	$ \U_{\mathrm{series}} = 1$,
	$ \F_{\mathrm{series}} = - p^2 (x_1 + x_2)$,
	$ \U_{\mathrm{parallel}} = x_1 + x_2$,
	$ \F_{\mathrm{parallel}} = -p^2 x_1 x_2$ such that
	\begin{equation*}
		\G_{\mathrm{series}}
		= 1 - p^2 (x_1 + x_2)
		\quad\text{and}\quad
		\G_{\mathrm{parallel}}
		= x_1 + x_2 -p^2 x_1 x_2
		.
	\end{equation*}
	In both cases, the number of master integrals \eqref{eq:NoM-topo} is
	$
	 	\NoM{G}
		=-\chi(\Gm\setminus \Vanish(\widetilde{\U}\widetilde{\F})) 
		= \chi (\Gm \cap \Vanish(\widetilde{\U} \widetilde{\F}))
		= \chi(\Gm \cap \Vanish(1+x_1))
		= \chi(\set{-1})
		=1
	$ according to \eqref{eq:chi(A+xB)->chi(A*B)}.
\end{proof}
Much more on the Grothendieck ring calculus of graph hypersurfaces $\Vanish(\U)$ can be found, for example, in \cite{AluffiMarcolli:DeletionContraction} and \cite{BrownSchnetz:K3phi4}. These techniques can be used to prove some general statements about the counts of master integrals. Let us give just one example:
\begin{figure}
	\centering\vspace{-6mm}
	$
	G=\Graph[0.6]{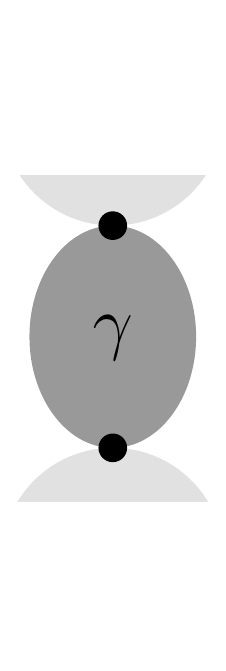}
	\qquad\mapsto\qquad
	G'=\Graph[0.6]{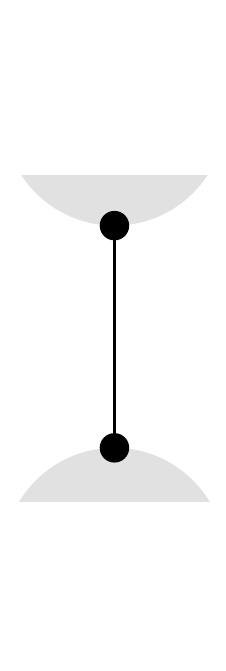}
	$\vspace{-8mm}
	\caption{A 1-scale subgraph $\gamma$ of $G$ is replaced by a single edge in $G'$.}%
	\label{fig:1-scale}%
\end{figure}
\begin{lem}\label{lem:1-scale}
	Let $G$ be a Feynman graph with a subgraph $\gamma$ such that all propagators in $\gamma$ are massless and $\gamma$ has only two vertices which are connected to external legs or edges in $G\setminus\gamma$.\footnote{%
		Such a graph $\gamma$ is called \emph{massless propagator} or \emph{$p$-integral}.
	}
	Write $G'$ for the graph obtained from $G$ by replacing $\gamma$ with a single edge (see figure~\ref{fig:1-scale}), then
	\begin{equation}
		\NoM{G} = \NoM{\gamma} \cdot \NoM{G'}
		.
		\label{eq:1-scale}%
	\end{equation}
\end{lem}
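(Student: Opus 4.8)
The plan is to reduce everything to Euler characteristics via corollary~\ref{cor:number-of-masters} and to fibre the relevant torus complement so that the count for $G$ splits into those of $\gamma$ and $G'$. Write $n$ for the number of edges of $\gamma$, with Schwinger parameters $\alpha=(x_1,\ldots,x_n)$, and let $\beta$ collect the parameters of $H\defas G\setminus\gamma$; thus $G$ has $N$ edges and $G'$ has $N-n+1$ edges, namely those of $H$ together with the new edge $e_0$, whose parameter I call $z$. Since $\gamma$ is a massless two--point graph, its only scale is the square $p^2$ of the momentum routed between the two junction vertices, so $\F_\gamma=-p^2\,W_\gamma$ for a polynomial $W_\gamma$ and $\G_\gamma=\U_\gamma-p^2 W_\gamma$, where $\U_\gamma$ and $W_\gamma$ are homogeneous of degrees $\Loops_\gamma$ and $\Loops_\gamma+1$.

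The key input is an insertion formula for the Lee--Pomeransky polynomial. Writing $\G_{G'}(z,\beta)=z\,\G_{G'}^{(1)}(\beta)+\G_{G'}^{(0)}(\beta)$, which is affine--linear in $z$ because every Symanzik monomial is squarefree, I claim
\begin{equation*}
	\G_G(\alpha,\beta)
	= \U_\gamma(\alpha)\cdot\G_{G'}\!\left(\tfrac{W_\gamma(\alpha)}{\U_\gamma(\alpha)},\,\beta\right)
	= \G_{G'}^{(0)}(\beta)\,\U_\gamma(\alpha) + \G_{G'}^{(1)}(\beta)\,W_\gamma(\alpha).
\end{equation*}
For the first Symanzik part this is the spanning--forest identity $\U_G=\U_\gamma\,\U_{G'/e_0}+W_\gamma\,\U_{G'\setminus e_0}$, obtained from proposition~\ref{prop:graph-polynomials} by splitting each spanning tree of $G$ according to whether its restriction to $\gamma$ connects or separates the two junction vertices; the $\F$--part follows from the same bookkeeping together with the single--scale property, i.e.\ that the blob $\gamma$ enters the rest only through the effective parameter $W_\gamma/\U_\gamma$. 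Establishing this identity cleanly is the main obstacle, and I would verify it first on the elementary cases $\gamma=$ single edge and $\gamma=$ one--loop bubble before giving the general forest argument.

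Granting the insertion formula, I fibre $X\defas\Gm^N\setminus\Vanish(\G_G)$ over the $H$--torus $\Gm^{N-n}_\beta$ by $(\alpha,\beta)\mapsto\beta$. For fixed $\beta$ the fibre is $\Gm^n_\alpha\setminus\Vanish(c_0\U_\gamma+c_1 W_\gamma)$ with $c_i\defas\G_{G'}^{(i)}(\beta)$, and there are three cases. If $c_0,c_1\neq0$ the fibre is $\Gm^n\setminus\Vanish(\U_\gamma+\lambda W_\gamma)$ with $\lambda=c_1/c_0\neq0$; rescaling $\alpha\mapsto\mu\alpha$ shows $\Vanish(\U_\gamma+\lambda W_\gamma)\isomorph\Vanish(\U_\gamma+\lambda\mu W_\gamma)$, so its Euler characteristic is independent of $\lambda\neq0$ and equals $\chi(\Gm^n\setminus\Vanish(\G_\gamma))=(-1)^n\NoM{\gamma}$. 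If exactly one $c_i$ vanishes the fibre is $\Gm^n\setminus\Vanish(\U_\gamma)$ or $\Gm^n\setminus\Vanish(W_\gamma)$, whose Euler characteristic is zero because $\U_\gamma$ and $W_\gamma$ are homogeneous (proposition~\ref{prop:chi-0-hom}); and if $c_0=c_1=0$ the fibre is all of $\Gm^n$, again with $\chi=0$. By additivity and fibrewise multiplicativity of the Euler characteristic (section~\ref{sec:Grothendieck}), only the generic stratum contributes:
\begin{equation*}
	\chi(X)
	= (-1)^n\NoM{\gamma}\cdot\chi\!\left(\Gm^{N-n}\setminus\Vanish(\G_{G'}^{(0)}\G_{G'}^{(1)})\right).
\end{equation*}

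Finally I identify the remaining base factor with $\NoM{G'}$. Applying the fibration lemma~\ref{lem:fibration-torus} to $\G_{G'}=\G_{G'}^{(0)}+z\,\G_{G'}^{(1)}$ gives $\chi(\Gm^{N-n+1}\setminus\Vanish(\G_{G'}))=-\chi(\Gm^{N-n}\setminus\Vanish(\G_{G'}^{(0)}\G_{G'}^{(1)}))$, so by corollary~\ref{cor:number-of-masters} the base factor equals $(-1)^{N-n}\NoM{G'}$. Combining,
\begin{equation*}
	\chi(X)=(-1)^n\NoM{\gamma}\cdot(-1)^{N-n}\NoM{G'}=(-1)^N\NoM{\gamma}\,\NoM{G'},
\end{equation*}
and a last application of corollary~\ref{cor:number-of-masters} yields $\NoM{G}=(-1)^N\chi(X)=\NoM{\gamma}\,\NoM{G'}$, as claimed. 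The only genuinely non--routine step is the combinatorial insertion formula; the rest is an assembly of the torus fibration, the homogeneity criterion, and the $\Gm$--fibration lemma already available in the paper.
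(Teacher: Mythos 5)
Your proposal is correct, and its combinatorial backbone is identical to the paper's: your insertion formula $\G_G=\G_{G'}^{(0)}\U_\gamma+\G_{G'}^{(1)}W_\gamma$ is exactly the paper's identity $\G_G=\U_\gamma\,\G_{G/\gamma}+\F_{\gamma}'\,\G_{G\setminus\gamma}$, since contraction--deletion for the massless edge $e_0$ gives $\G_{G'}^{(0)}=\G_{G/\gamma}$ and $\G_{G'}^{(1)}=\G_{G\setminus\gamma}$ (note the linearity in $z$ uses that $e_0$ is massless, not just squarefreeness of Symanzik monomials, because the mass term $\U_{G'}\sum_e x_e m_e^2$ would otherwise produce $z^2$). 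The step you flag as the main obstacle is settled in the paper by precisely the argument you sketch: each spanning tree of $G$ restricts on $\gamma$ either to a spanning tree or to a two-forest separating the two junction vertices, the two-point property of $\gamma$ excluding all other cases; so this is no gap. Where you genuinely diverge is the Euler-characteristic bookkeeping. The paper rescales the $\gamma$-variables by $x_1$ so that $\G_G$ becomes linear in $x_1$, eliminates $x_1$ with lemma~\ref{lem:fibration-torus}, separates variables in the product $\widetilde{\U}_\gamma\widetilde{\F}_\gamma'\cdot\G_{G/\gamma}\G_{G\setminus\gamma}$, and reassembles both factors with \eqref{eq:chi(A+xB)->chi(A*B)}; you instead fibre $X=\Gm^N\setminus\Vanish(\G_G)$ directly over the $(G\setminus\gamma)$-torus and stratify the base by the vanishing of $c_0=\G_{G/\gamma}$ and $c_1=\G_{G\setminus\gamma}$, using homogeneity to kill the non-generic strata. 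Your route is sound and avoids the linearization trick, but you should make its one implicit step explicit: the multiplicativity $\chi(E)=\chi(B)\chi(F)$ over the generic stratum. Here this is immediate, because your rescaling furnishes a global trivialization: $(\alpha,\beta)\mapsto\bigl((c_1/c_0)\alpha,\beta\bigr)$ maps the part of $X$ lying over $\set{c_0c_1\neq0}$ isomorphically onto $\set{c_0c_1\neq0}\times\bigl(\Gm^n\setminus\Vanish(\U_\gamma+W_\gamma)\bigr)$, using that $\U_\gamma$ and $W_\gamma$ are homogeneous of degrees $\Loops_\gamma$ and $\Loops_\gamma+1$; a further rescaling identifies this standard fibre with $\Gm^n\setminus\Vanish(\G_\gamma)$ since $p^2\neq0$. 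With that remark added, your proof is complete and of comparable length to the paper's, the trade-off being a stratified (non-linear) fibration argument in place of the paper's purely linear-reduction calculus from section~\ref{sec:Grothendieck}.
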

\begin{proof}
	Every spanning tree $T$ of $G$ restricts on $\gamma$ either to a spanning tree or to a spanning two-forest. In the first case, $T\setminus \gamma$ is a spanning tree of $G/\gamma$ (the graph where $\gamma$ is contracted to a single vertex); in the second case, $T\setminus \gamma$ is a spanning tree of $G\setminus \gamma$. Note that the two-forests $T\cap\gamma$ in the second case determine $\F_{\gamma}$ from \eqref{eq:graph-polynomials}, since all propagators in $\gamma$ are massless. Therefore, we find
	$
		\U_G = \U_{\gamma} \cdot \U_{G/\gamma}
		     + \F_{\gamma}' \cdot \U_{G\setminus \gamma}
	$
	where we set $\F_{\gamma}'\defas \restrict{\F_{\gamma}}{p^2=-1}$.
	Going through the same considerations for $\F_G$ shows that
	\begin{equation*}
		\G_G 
		     = \U_{\gamma} \cdot \G_{G/\gamma}
		     + \F_{\gamma}'\cdot \G_{G\setminus \gamma}
		.
%		\label{eq:1-scale-G}%
	\end{equation*}
	Now label the edges in $\gamma$ as $1,\ldots,N_{\gamma}$ and rescale all Schwinger parameters $x_e$ with $2\leq e \leq N_{\gamma}$ by $x_1$. Due to the homogeneity of $\U_{\gamma}$ and $\F_{\gamma}$ from corollary~\ref{cor:U-F-homogeneity}, we see that $[\Gm^N \setminus \Vanish(\G_G)] = [\Gm^N \setminus \Vanish(A+x_1 B)]$ where $A=\widetilde{U}_{\gamma} \G_{G/\gamma}$ and $B=\widetilde{\F}_{\gamma}' \G_{G\setminus \gamma}$ in terms of $\widetilde{U}_{\gamma} \defas \restrict{\U_{\gamma}}{x_1=1}$ and $\widetilde{\F}_{\gamma}' \defas \restrict{\F_{\gamma}'}{x_1=1}$.
	Applying \eqref{eq:fibration-torus-euler}, we obtain a separation of variables:
	\begin{align*}
		\chi\left(\Gm^N \setminus \Vanish(\G_G)\right)
		&= - \chi\left( \Gm^{N-1} \setminus \Vanish(\widetilde{\U}_{\gamma} \widetilde{\F}_{\gamma}' \G_{G/\gamma} \G_{G\setminus \gamma}) \right)
		\\
		&= - \chi\left( \Gm^{N_{\gamma}-1} \setminus \Vanish(\widetilde{\U}_{\gamma} \widetilde{\F}_{\gamma}') \right)
		\cdot\chi\left( \Gm^{N-N_{\gamma}}\setminus \Vanish(\G_{G/\gamma} \G_{G\setminus \gamma}) \right)
		\\
		&= - \chi\left( \Gm^{N_{\gamma}} \setminus \Vanish(\G_{\gamma}) \right)
		\cdot\chi\left( \Gm^{N-N_{\gamma}+1}\setminus \Vanish(\G_{G'}) \right)
		.
	\end{align*}
	In the last line we used \eqref{eq:chi(A+xB)->chi(A*B)}, upon noting the contraction-deletion formula $\G_{G'} = \G_{G/\gamma} + x_0 \G_{G\setminus \gamma}$ in terms of the additional Schwinger parameter $x_0$ for the (massless) edge that replaces $\gamma$ in $G'$ (this formula is easily checked by considering which spanning trees and forests contain this edge or not).
	Note that for the subgraph $\gamma$, the value of $p^2$ does not matter for $\Vanish(\widetilde{\F}_{\gamma})=\Vanish(\widetilde{\F}_{\gamma}')$, as long as $p^2\neq 0$. Finally, recall \eqref{eq:NoM-topo}.
\end{proof}
Of course, this result is well-known on the function level: If $G$ has a $1$-scale subgraph $\gamma$, then the Feynman integral of $G$ factorizes into the product
\begin{equation}
	\FI_G(\nu)
	= \restrict{\FI_{\gamma}(\nu_{\gamma})}{p^2=-1}
	\cdot
	  \FI_{G'}(\sdc_{\gamma},\nu')
	\label{eq:1-scale-integrals}%
\end{equation}
of the integrals of $\gamma$ and $G'$. Here, we denote by $\nu_{\gamma}$ and $\nu'$ the indices corresponding to the edges in $\gamma$ and outside $\gamma$, respectively, such that $\nu=(\nu_{\gamma},\nu')$. 
Note that the edge replacing $\gamma$ in $G'$ (see figure~\ref{fig:1-scale}) gets the index $\sdc_{\gamma} = \sum_{e \in \gamma} \nu_e - \Loops_{\gamma}\cdot(\Dim/2)$ from \eqref{eq:sdc}, which depends on the indices of $\gamma$ ($\Loops_{\gamma}$ denotes the loop number of $\gamma$).
\begin{figure}
	\centering\vspace{-6mm}
	$
	\Graph[0.6]{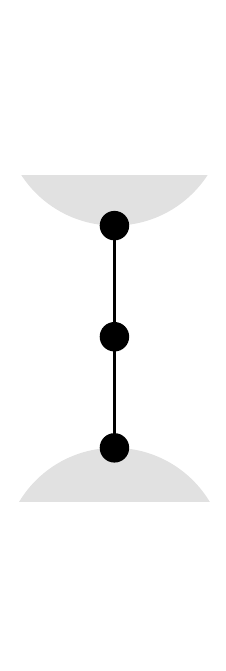}
	\xmapsto{(S)}
	\Graph[0.6]{edge}
	\reflectbox{$\xmapsto{\reflectbox{\scriptsize$(P)$}}$}
	\Graph[0.6]{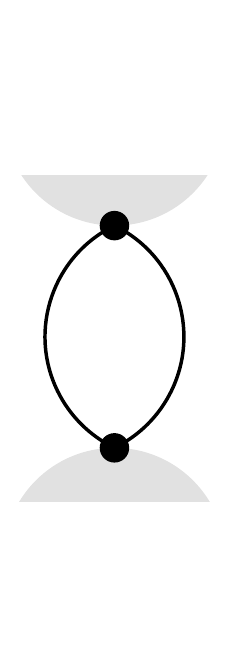}
	$\vspace{-8mm}
	\caption{The series (S) and parallel (P) operations consist of replacing a sequential or parallel pair of massless edges with a single edge.}%
	\label{fig:series-parallel}%
\end{figure}
\begin{cor}\label{cor:series-parallel}
	Let $G$ be a graph with a pair $\set{e,f}$ of massless edges in series or in parallel. Then $\NoM{G} = \NoM{G'}$ where $G'$ is the graph obtained by replacing the pair with a single edge.
	In other words, repeated application of the series-parallel operations from figure~\ref{fig:series-parallel} does not change the number of master integrals.
\end{cor}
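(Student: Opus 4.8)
The plan is to deduce this statement directly from Lemma~\ref{lem:1-scale}, taking the subgraph $\gamma$ to be the pair $\set{e,f}$ itself. First I would verify that such a pair satisfies the hypotheses of that lemma: in both the series and the parallel configuration, $\gamma$ consists of two massless edges and attaches to the rest of $G$ through exactly two vertices. For the series pair, the shared middle vertex has degree two inside $\gamma$ and is incident neither to an external leg nor to any edge of $G\setminus\gamma$, so only the two outer endpoints connect $\gamma$ to the remainder; for the parallel pair, the two common endpoints are the only vertices shared with $G\setminus\gamma$. Hence $\gamma$ is a massless propagator (a $p$-integral) with exactly two external vertices, as required, and $G'$ is precisely the graph obtained by contracting $\gamma$ to a single edge.

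The second ingredient is the value $\NoM{\gamma}=1$. Up to isomorphism, $\gamma$ is one of the two graphs $G_{\mathrm{series}}$ or $G_{\mathrm{parallel}}$ from Example~\ref{ex:series-parallel-graph}, both of which were shown there to have a single master integral. I would note that the kinematic proviso $p^2\neq 0$ used in that example is automatically met here: in the factorization underlying the proof of Lemma~\ref{lem:1-scale} the subgraph always enters through $\F_{\gamma}'=\restrict{\F_{\gamma}}{p^2=-1}$, i.e.\ it is evaluated at $p^2=-1\neq 0$. Thus $\NoM{\gamma}=1$ holds irrespective of the kinematics of the ambient graph $G$.

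Combining these two facts, Lemma~\ref{lem:1-scale} yields
\begin{equation*}
	\NoM{G} = \NoM{\gamma}\cdot\NoM{G'} = 1\cdot\NoM{G'} = \NoM{G'}
	,
\end{equation*}
which is the assertion for a single series or parallel reduction. The final claim, that iterating the series-parallel operations of figure~\ref{fig:series-parallel} leaves the count invariant, then follows by induction on the number of reductions, since each individual step preserves $\NoM{\cdot}$.

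I do not expect a genuine obstacle, as the corollary is essentially a specialization of Lemma~\ref{lem:1-scale}. The only point warranting care is the topological check that a series or parallel pair really meets the ``exactly two attaching vertices'' condition in all cases---in particular that the middle vertex of a series pair carries no external leg and has no further incident edges in $G\setminus\gamma$, which is already built into the meaning of a series reduction. If that failed, the configuration would no longer be a pure series or parallel pair, and one would have to treat the offending vertex as an external vertex of a differently shaped subgraph before invoking the lemma.
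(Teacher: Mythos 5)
Your proposal is correct and follows exactly the paper's own route: the paper's proof is literally ``Combine lemma~\ref{lem:1-scale} with example~\ref{ex:series-parallel-graph}'', which is precisely the combination you carry out. Your additional checks---that a series or parallel pair qualifies as a $p$-integral subgraph with exactly two attaching vertices, and that the normalization $p^2=-1$ in $\F_{\gamma}'$ makes the $p^2\neq 0$ proviso of the example automatic---are valid elaborations of details the paper leaves implicit.
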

\begin{proof}
	Combine lemma~\ref{lem:1-scale} with example~\ref{ex:series-parallel-graph}.
\end{proof}

%================================================================================
\section{Tools and examples}
\label{sec:examples}
%================================================================================

The Euler characteristic of a singular hypersurface can be computed algorithmically via several methods.\footnote{%
	We will not discuss Kouchnirenko's theorem~\ref{thm:Kouchnirenko} here, because in most examples we found that it does not apply. It seems that coefficients of graph polynomials are often not sufficiently generic.
}
In this section we demonstrate how some of these techniques can be used to compute the number of master integrals in various examples.

We begin with methods based on fibrations. In particular, the Euler characteristic can be computed very easily for the class of \emph{linearly reducible} graphs, see section~\ref{sec:lin-red}.
However, the decomposition of the Euler characteristic of the total space $E$ of a fibration $E\longrightarrow B$ with fibre $F$ into the product
\begin{equation}
	\chi(E) = \chi(B)\cdot \chi(F)
	\label{eq:Euler-fibration-product}%
\end{equation}
is true in general and not restricted to the linear case.
In section~\ref{sec:sunrise} we use a quadratic fibration in order to count the master integrals of all sunrise graphs.

Apart from these geometric approaches, which seem to work very well for Feynman graphs, there are general algorithms for the computation of de Rham cohomology and the Euler characteristic of hypersurfaces. In section~\ref{sec:programs} we discuss some available implementations of these algorithms in computer algebra systems.

In the final section~\ref{sec:comparison}, we comment on the relation of our result to other approaches in the physics literature.

\subsection{Linearly reducible graphs}
\label{sec:lin-red}

\begin{figure}
\centering%
\begin{align*}
	P_1 &= \Graph[0.95]{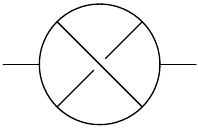}
	& 
	P_2 &= \Graph[0.95]{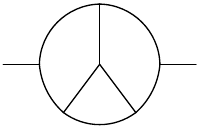}
	&
	P_3 &= \Graph[0.95]{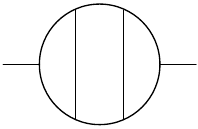}
	&
	P_4 &= \Graph[0.95]{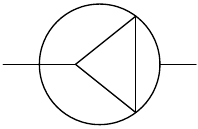}
	\\
	P_5 &= \Graph[0.95]{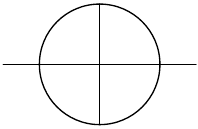}
	&
	P_6 &= \Graph[0.95]{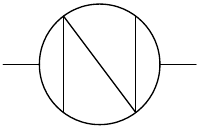}
	&
	P_7 &= \Graph[0.95]{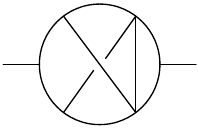}
	& &
\end{align*}\vspace{-3mm}
\begin{align*}
	F_1 &= \Graph[0.85]{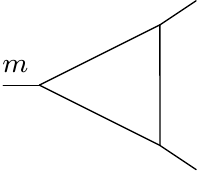}
	&
	F_2 &= \Graph[0.85]{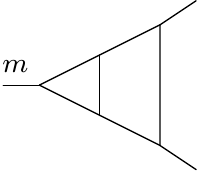}
	&
	F_3 &= \Graph[0.85]{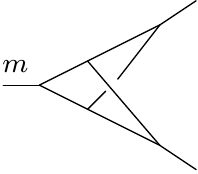}
	&
	F_4 &= \Graph[0.85]{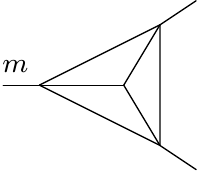}
	&
	F_5 &= \Graph[0.85]{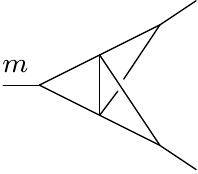}
	\\
	F_6 &= \Graph[0.85]{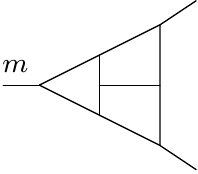}
	&
	F_7 &= \Graph[0.85]{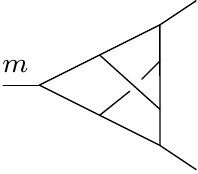}
	&
	F_8 &= \Graph[0.45]{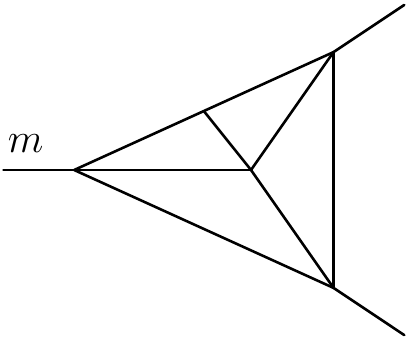}
	&
	F_9 &= \Graph[0.45]{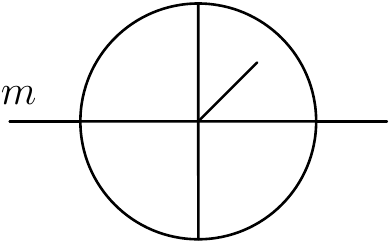}
	&
& %	F_{10} &= \Graph[0.85]{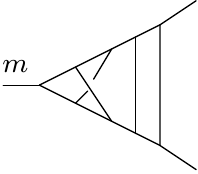}
\end{align*}\vspace{-4mm}
\caption{Some linearly reducible propagators ($P_i$) and form factors ($F_i$). All internal edges are massless, and the form factors have two massless external legs ($p_1^2=p_2^2=0$) and one massive leg $p_3^2\neq 0$, indicated by the label $m$.
}%
	\label{fig:linred-graphs}%
\end{figure}

If the polynomial $\Vanish(f)=a+x_N b$ is linear in a variable $x_N$, we saw in lemma~\ref{lem:fibration-torus} that we can easily eliminate this variable $x_N$ in the computation of the Euler characteristic (or the class in the Grothendieck ring) of the hypersurface $\Vanish(f)$ (or its complement).
Analogous formulas also exist in the case of a variety $\Vanish(f_1,\ldots,f_n)$ of higher codimension, given that all of the defining polynomials $f_i=a_i + x_N b_i$ are linear in $x_N$. Such \emph{linear reductions} have been used heavily in the study of graph hypersurfaces, and are straightforward to implement on a computer \cite{Stembridge:CountingPoints,Schnetz:Fq}.

If such linear reductions can be applied repeatedly until all Schwinger parameters have been eliminated, the graph is called \emph{linearly reducible} \cite{Brown:TwoPoint}. Linear reducibility is particularly common among graphs with massless propagators; we give some examples in figure~\ref{fig:linred-graphs}.
%The associated Feynman integrals can be evaluated with hyperlogarithms \cite{Brown:TwoPoint,Brown:PeriodsFeynmanIntegrals}.

At the end of a full linear reduction, the class of $\Gm^N\setminus \Vanish(\G)$ in the Grothendieck ring is expressed as a polynomial in the Lefschetz motive $\Lef$. To get the Euler characteristic, one then merely needs to substitute $\Lef=[\Aff^1] \mapsto \chi(\Aff^1)=1$.
\begin{example}\label{ex:prop2l-class}
	Consider the $2$-loop propagator $\WS{3}'$ from figure~\ref{fig:wheels}. Linear reductions give
	\begin{equation}
		[\Gm^5 \setminus \Vanish(\G_{\WS{3}'})]
		= -\Lef^4+5\Lef^3-13\Lef^2+21\Lef-15
		\label{eq:prop2l-class}%
	\end{equation}
	in the Grothendieck ring. Substituting $\Lef \mapsto 1$ shows that $\NoM{\WS{3}'}=3$ via \eqref{eq:NoM-topo}.
\end{example}
In this way, we calculated the Euler characteristics for the graphs in figure~\ref{fig:linred-graphs}, using an implementation of the linear reductions similar to the method of Stembridge~\cite{Stembridge:CountingPoints}.
Our results are listed in table~\ref{tab:linred-NoM}.
\begin{table}
\centering%
\begin{tabular}{rcccccccccccccccc}
	$G$ & $P_1$ & $P_2$ & $P_3$ & $P_4$ & $P_5$ & $P_6$ & $P_7$ & $F_1$ & $F_2$ & $F_3$ & $F_4$ & $F_5$ & $F_6$ & $F_7$ & $F_8$ & $F_9$ \\%& $F_{10}$ \\
	\midrule
	$\NoM{G}$ & 16 & 10 & 10 & 10 & 10 & 15 & 22 & 1 & 4 & 5 & 4 & 5 & 20 & 24 & 12 & 13 \\%& 116 (Az?) \\
%& 21 (Az: 39?) \\
\end{tabular}%
\caption{The number $\NoM{G}$ of master integrals, computed as the Euler characteristic~\ref{eq:NoM-topo}, for the graphs in figure~\ref{fig:linred-graphs}.
}%
	\label{tab:linred-NoM}%
\end{table}
\hide{
\begin{table}
\centering%
\begin{tabular}{rccccccc}
	$G$ & $\Graph[0.7]{propj}$ & $\Graph[0.7]{propl}$ & $\Graph[0.7]{propm}$ & $\Graph[0.7]{propn}$ & $\Graph[0.7]{propo}$ & $\Graph[0.7]{propp}$ & $\Graph[0.7]{propq}$ \\ %& $\Graph[0.7]{}$\\
	\midrule
	$\NoM{G}$ & 16 & 10 & 10 & 10 & 10 & 15 & 22 \\%& 21 (Az: 39?) \\
\end{tabular}
\\
\begin{align*}
	F_1 &= \Graph[0.85]{ffu}
	&
	F_2 &= \Graph[0.85]{ffv}
	&
	F_3 &= \Graph[0.85]{ffw}
	&
	F_4 &= \Graph[0.85]{ffx}
	&
	F_5 &= \Graph[0.85]{ffy}
	\\
	F_6 &= \Graph[0.85]{ffz}
	&
	F_7 &= \Graph[0.85]{fftheta}
	&
	F_8 &= \Graph[0.85]{ffxi}
	&
	F_9 &= \Graph[0.45]{KleinerBoels}
	&
	F_{10} &= \Graph[0.45]{KaspersGraph}
\end{align*}
\\
\begin{tabular}{rcccccccccc}
	Graph $G$ & $F_1$ & $F_2$ & $F_3$ & $F_4$ & $F_5$ & $F_6$ & $F_7$ & $F_8$ & $F_9$ & $F_{10}$ \\
	\midrule
	$\NoM{G}$ & 1 & 4 & 5 & 4 & 5 & 20 & 24 & 116 (Az?) & 13 & 12 \\
\end{tabular}
\caption{The number $\NoM{G}$ of master integrals, computed as the Euler characteristic~\ref{eq:NoM-topo}, for massless propagators (top) and form factors (below). All internal edges are massless, and the form factors have two massless external legs ($p_1^2=p_2^2=0$) and one massive leg $p_3^2\neq 0$, indicated by the label $m$.
}%
	\label{tab:massless-examples}%
\end{table}}

Beyond the computation of such results for individual graphs, it is possible to obtain results for some infinite families of linearly reducible graphs. In particular, efficient computations are possible for graphs of \emph{vertex width} three \cite{Brown:PeriodsFeynmanIntegrals}. For example, the class in the Grothendieck ring of $\Vanish(\U)$ was computed for all wheel graphs in \cite{BrownSchnetz:K3phi4}.
It is possible to adapt such calculations to our setting (where the ambient space is $\Gm^N$ instead of $\Aff^N$). For example, we could prove
\begin{prop}\label{prop:wheels}
	The number of master integrals of the massless propagators obtained by cutting a wheel $\WS{\Loops}$ with $\Loops$ loops, either at a rim or a spoke (see figure~\ref{fig:wheels}), is
\begin{equation}
	\NoM{\WS{\Loops}'}
	= \NoM{\WS{\Loops}''}
	= \frac{\Loops(\Loops-1)}{2}
	.
	\label{eq:wheels}%
\end{equation}
\end{prop}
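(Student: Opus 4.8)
The plan is to compute the signed Euler characteristic $(-1)^N\chi(\Gm^N\setminus\Vanish(\G))$ directly in the Grothendieck ring $K_0(\text{Var}_\C)$ and then read off the master count from \eqref{eq:NoM-topo}, following the approach of Brown and Schnetz \cite{BrownSchnetz:K3phi4} for wheel hypersurfaces. Cutting $\WS{\Loops}$ at a rim or a spoke produces an $(\Loops-1)$-loop propagator with $N=2\Loops-1$ massless internal edges, so proposition~\ref{prop:graph-polynomials} gives $\G=\U+\F$ with $\F=(-p^2)\sum_F\prod_{e\notin F}x_e$, summed over the spanning $2$-forests $F$ separating the two external vertices. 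First I would apply the fibration identity \eqref{eq:fibration-torus} and its corollary \eqref{eq:chi(A+xB)->chi(A*B)} to pass to the single polynomial $\widetilde{\U}\,\widetilde{\F}$ in the torus $\Gm^{N-1}$, which is the object amenable to iterated elimination.

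Second, I would exploit linear reducibility. Wheels have vertex width three \cite{Brown:PeriodsFeynmanIntegrals}, so that after eliminating each Schwinger parameter via lemma~\ref{lem:fibration-torus}, the polynomials occurring at every stage of the reduction stay within a finite family of Dodgson-type polynomials indexed by the bounded set of boundary vertices of the partially reduced configuration. Combined with the cyclic symmetry of the wheel, this lets me organize the computation as a transfer matrix: adding one further wedge (a spoke together with its adjacent rim edge) to pass from $\WS{\Loops}$ to $\WS{\Loops+1}$ acts linearly on the finite vector of Grothendieck classes of the intermediate loci, and closing the chain into a cycle realizes $[\Gm^N\setminus\Vanish(\G_{\WS{\Loops}'})]$ as (essentially) a trace of a power of this matrix. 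The outcome is a class that is a polynomial in the Lefschetz motive $\Lef$ whose coefficients obey a linear recurrence in $\Loops$.

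Third, I would solve this recurrence in closed form and substitute $\Lef\mapsto 1$ to descend from the motivic class to the Euler characteristic, anchoring the induction on example~\ref{ex:prop2l-class}, where $\WS{3}'$ yields $3=\binom{3}{2}$. The signed Euler characteristic should then collapse to $\binom{\Loops}{2}=\Loops(\Loops-1)/2$ for both cuts. The rim cut and the spoke cut are genuinely different graphs and a priori require separate reductions; I would establish their agreement either by carrying both transfer matrices through, or, more slickly, by invoking planar self-duality of the wheel (under which rim and spoke edges are interchanged and the hypersurface complements have equal Euler characteristic), reducing the two cases to one.

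The main obstacle will be the bookkeeping of the iterated linear reduction: one must track the finitely many intermediate polynomials precisely enough to pin down the transfer matrix and to prove that the relevant state space is genuinely finite-dimensional and stabilizes as $\Loops$ grows, rather than merely appealing to vertex-width-three reducibility in the abstract. Once the transfer matrix and its spectrum are in hand, solving the recurrence, fixing the base case, and reconciling the rim- and spoke-cut computations are comparatively routine.
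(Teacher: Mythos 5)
There is nothing in the paper to compare your argument against: after stating proposition~\ref{prop:wheels} the authors write only that ``the proof of this and related results will be presented elsewhere,'' with the single hint that the Grothendieck-ring computation of the wheel hypersurface classes in \cite{BrownSchnetz:K3phi4} can be adapted from the ambient space $\Aff^N$ to the torus $\Gm^N$. Your plan is exactly that adaptation in outline: pass to $\widetilde{\U}\,\widetilde{\F}$ via \eqref{eq:chi(A+xB)->chi(A*B)}, run iterated linear reductions (wheels having vertex width three, per \cite{Brown:PeriodsFeynmanIntegrals}), organize the answer as a recursion in $\Loops$ with values in $\Z[\Lef]$, and specialize $\Lef \mapsto 1$. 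Your two supporting points are also sound: example~\ref{ex:prop2l-class} is the correct anchor $\NoM{\WS{3}'}=3$, and the rim/spoke agreement does follow from planar self-duality of the wheel, because $\U_{G^{*}}(x)=\bigl(\prod_e x_e\bigr)\U_G(x^{-1})$ and the inversion $x\mapsto x^{-1}$ is an automorphism of $\Gm^N$ but not of $\Aff^N$ (this is precisely where the torus setting pays off), combined with the contraction--deletion identity $\G_{\WS{\Loops}'}=\restrict{\U_{\WS{\Loops}}}{x_e=1}$ (with $-p^2=1$), which exhibits the Lee--Pomeransky polynomials of both cut graphs as specializations of the one wheel Symanzik polynomial.

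The genuine gap is the one you concede in your last paragraph: the transfer matrix is never constructed, and it is the entire content of the proposition. Which Dodgson-type loci form the finite state space, what the linear action of adjoining a wedge is, how the chain closes (for the cut wheel the reduction runs along a fan, so one gets a matrix element between boundary vectors rather than a trace), what recurrence the coefficients of the class in $\Lef$ satisfy, and why its solution evaluates at $\Lef\mapsto 1$ to $\Loops(\Loops-1)/2$ --- none of this is carried out, and no step of it is routine enough to wave through. As written, your argument establishes the claim only for $\Loops=3$; to make it a proof you would have to perform the reduction explicitly, as Brown and Schnetz do for $[\Vanish(\U_{\WS{\Loops}})]$ in affine space, but now for $\restrict{\U_{\WS{\Loops}}}{x_e=1}$ inside $\Gm^{2\Loops-1}$. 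That said, the strategy is the right one --- which, to be fair, leaves your proposal in the same state of completion as the paper itself.
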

\begin{figure}
	\begin{align*}
	\WS{3}' &= \Graph[0.5]{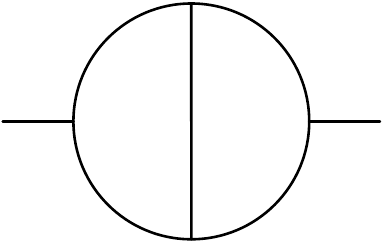}
	&
	\WS{4}' &= \Graph[0.5]{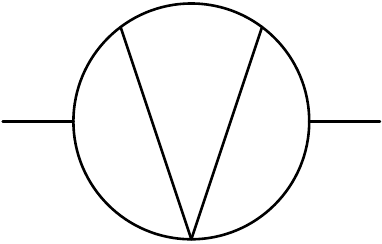}
	&
	\WS{5}' &= \Graph[0.5]{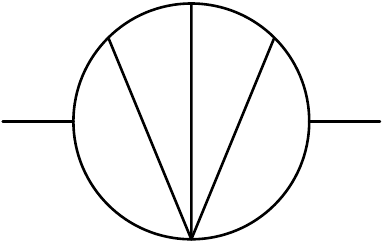}
	&
	\WS{L}' &= \Graph[0.5]{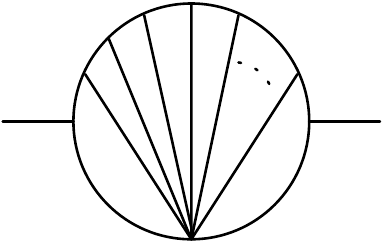}
	\\
	& &
	\WS{4}''&= \reflectbox{$\Graph[1]{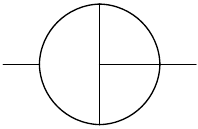}$}
	&
	\WS{5}''&= \Graph[1]{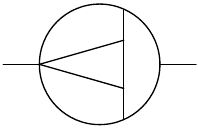}
	&
	\WS{\Loops}''&= \Graph[0.5]{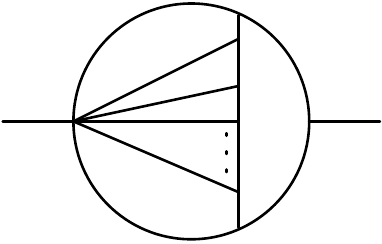}
	\end{align*}\vspace{-4mm}
	\caption{The propagator graphs $\WS{\Loops}'$ ($\WS{\Loops}''$) with $\Loops-1$ loops are obtained from cutting a rim (spoke) of the wheel $\WS{\Loops}$ with $\Loops$ loops.}%
	\label{fig:wheels}%
\end{figure}
The proof of this and related results will be presented elsewhere.

\subsection{Sunrise graphs}
\label{sec:sunrise}

In \cite{KalmykovKniehl:CountingMastersSunrise}, the number of master integrals was computed for all sunrise integrals; based on a Mellin-Barnes representation and the \emph{differential reduction} \cite{KalmykovKniehl:MellinBarnes,BytevKalmykovKniehl:DiffRedOneVar} of an explicit solution in terms of Lauricella hypergeometric functions.
To our knowledge, this has hitherto been the only non-trivial%
\footnote{%
	We consider families that arise simply by duplication of massless propagators, like those shown in \cite[figure~2]{BytevKalmykovKniehl:DiffRedOneVar}, as trivial (due to corollary~\ref{cor:series-parallel}).
}
infinite family of Feynman integrals with explicitly known master integral counts.
Their first result can be phrased as\footnote{%
	Beware that the number $2^{\Loops+1}-\Loops-2$ given in \cite[equation~(4.5)]{KalmykovKniehl:CountingMastersSunrise} counts only \emph{irreducible} master integrals, which means that it discards the $\Loops+1$ integrals associated to the subtopologies obtained by contracting any of the edges. Our conventions, however, do take these integrals into account.
}
\begin{figure}
	\centering
	$\Sun{1} = \Graph[0.6]{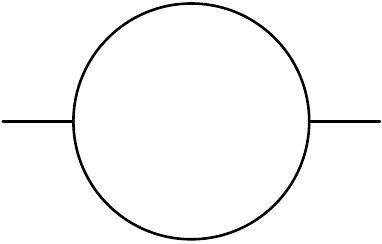}$
	\quad
	$\Sun{2} = \Graph[0.6]{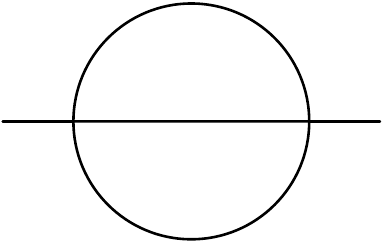}$
	\quad
	$\Sun{3} = \Graph[0.6]{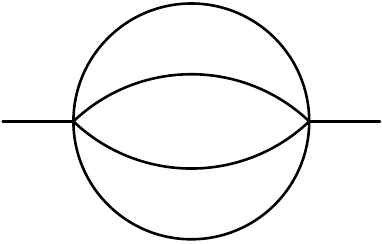}$
	\quad
	$\Sun{\Loops} = \Graph[0.6]{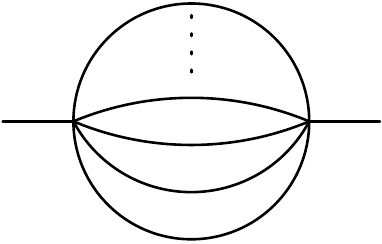}$
	\caption{The sunrise graphs $\Sun{\Loops}$ with $\Loops$ loops.}%
	\label{fig:sunrise}%
\end{figure}
\begin{prop}\label{prop:sunrise-massive}
	The $\Loops$-loop sunrise graph $\Sun{\Loops}$ from figure~\ref{fig:sunrise} with $\Loops+1$ non-zero masses (and non-exceptional external momentum) has $\NoM{\Sun{\Loops}} = 2^{\Loops+1}-1$ master integrals.
\end{prop}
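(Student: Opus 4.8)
The plan is to evaluate the topological Euler characteristic in \eqref{eq:NoM-topo} by combining the fibration formula \eqref{eq:chi(A+xB)->chi(A*B)} with a quadratic fibration for the second Symanzik polynomial. By proposition~\ref{prop:graph-polynomials}, the sunrise $\Sun{\Loops}$ with $N=\Loops+1$ edges has $\U = \sum_{i=1}^N \prod_{j\neq i} x_j$ (spanning trees are single edges) and $\F = \U \cdot \sum_{e=1}^N m_e^2 x_e - p^2 \, x_1\cdots x_N$ (the only spanning two-forest is the empty one, separating the two vertices). Writing $\tilde{\U} \defas \restrict{\U}{x_N=1}$ and $\tilde{\F} \defas \restrict{\F}{x_N=1}$, formula~\eqref{eq:chi(A+xB)->chi(A*B)} reduces the computation to $\chi\big(\Gm^{N-1}\setminus\Vanish(\tilde{\U}\tilde{\F})\big)$. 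The first key point is that $\Vanish(\tilde{\U})$ and $\Vanish(\tilde{\F})$ are \emph{disjoint} inside the torus: on $\set{\tilde{\U}=0}$ one has $\tilde{\F} = -p^2 x_1\cdots x_{N-1}$, which is nonzero because $p^2\neq 0$ (non-exceptional momentum) and the coordinates are invertible. Hence, using $\chi(\Gm^{N-1})=0$, I get $\chi\big(\Gm^{N-1}\setminus\Vanish(\tilde{\U}\tilde{\F})\big) = -\chi(\Vanish(\tilde{\U})) - \chi(\Vanish(\tilde{\F}))$, so that $\NoM{\Sun{\Loops}} = (-1)^{N-1}\big(-\chi(\Vanish(\tilde{\U})) - \chi(\Vanish(\tilde{\F}))\big)$.

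The factor $\chi(\Vanish(\tilde{\U}))$ is elementary: after the coordinate change $u_i=1/x_i$ the hypersurface $\set{\tilde{\U}=0}$ becomes the torus slice $\set{\sum_i u_i = -1} \subset \Gm^{N-1}$. Projecting off one coordinate gives the recursion $v_m = \chi(\Gm^{m-1}) - v_{m-1} = -v_{m-1}$ with $v_1=1$, whence $v_m = (-1)^{m-1}$ and $\chi(\Vanish(\tilde{\U})) = (-1)^{N}$.

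The heart of the argument is $\chi(\Vanish(\tilde{\F}))$. Setting $A = 1+\sum_{i=1}^{N-1}1/x_i$ and $B = m_N^2 + \sum_{i=1}^{N-1} m_i^2 x_i$, the equation $\tilde{\F}=0$ becomes, after dividing by the invertible monomial $x_1\cdots x_{N-1}$, simply $AB = p^2$. I would fibre $\set{AB=p^2}\subset\Gm^{N-1}$ over $\Gm^{N-2}$ by forgetting $x_{N-1}$; clearing denominators turns the fibre equation into the quadratic
\[
	m_{N-1}^2 A' \, x_{N-1}^2 + (A'B' + m_{N-1}^2 - p^2)\, x_{N-1} + B'
\]
in $x_{N-1}$, where $A',B'$ denote the analogous expressions in one fewer variable. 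The Euler characteristic of the total space equals the $\chi$-integral over the base of the number of \emph{distinct nonzero} roots. This count is $2$ generically and drops by additivity over the loci $\set{A'=0}$ (a root escapes to infinity), $\set{B'=0}$ (a root hits $0$ and leaves the torus), and the discriminant locus (the two roots collide). A short bookkeeping shows the $\set{A'=B'=0}$ contributions cancel, leaving the clean identity $\chi(\Vanish(\tilde{\F})) = -\chi(\set{A'=0}) - \chi(\set{B'=0}) - \chi(\set{\mathrm{disc}=0})$. The first two terms are torus slices of the previous type, each with $\chi = (-1)^{N-1}$, while the discriminant, viewed as a quadratic in $A'B'$, factors as $\set{A'B'=c_+}\cup\set{A'B'=c_-}$ for two constants $c_{\pm}=(m_{N-1}\pm\sqrt{p^2})^2$ that are distinct and nonzero for non-exceptional kinematics. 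These are two copies of the \emph{same} type of variety in one fewer variable. This self-similarity gives, for the generic value $e_k \defas \chi(\set{AB=q})$ in $k$ variables, the recursion $e_k = 2(-1)^{k+1} - 2e_{k-1}$ with $e_0=0$, whose solution is $e_k = (-1)^{k+1}(2^{k+1}-2)$. Thus $\chi(\Vanish(\tilde{\F})) = e_{N-1} = (-1)^{N}(2^{N}-2)$, and substituting both Euler characteristics yields $\NoM{\Sun{\Loops}} = (-1)^{N-1}\big(-(-1)^N - (-1)^N(2^N-2)\big) = 2^N - 1 = 2^{\Loops+1}-1$.

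The hard part is the quadratic-fibration step: one must track precisely, over a stratification of the base, when a root leaves the torus ($B'=0$), escapes to infinity ($A'=0$), or collides with the other root (discriminant $=0$), and then verify that all the genericity this requires is guaranteed by the hypotheses. Concretely, I need $p^2\neq 0$, the masses nonzero, and the external momentum to avoid the finitely many thresholds that would make some $c_{\pm}$ vanish, coincide, or land on a lower-level bad value; since every such condition is closed and there are only finitely many of them across the whole recursion, they are all satisfied for non-exceptional momentum and nonzero masses. The single delicate point is to confirm that $e_k(q)$ is genuinely independent of the generic parameter $q$ at every level, which is what makes the recursion $e_k = 2(-1)^{k+1}-2e_{k-1}$ well posed.
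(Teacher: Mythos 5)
Your proof is correct and follows essentially the same route as the paper's own: the same elimination of one Schwinger parameter via the linear fibration, the same quadratic fibration in the last variable with discriminant factoring as $\left(yz-(p+m)^2\right)\left(yz-(p-m)^2\right)$ into two self-similar hypersurfaces, and the same stratified fibre count, including the cancellation of the $\set{A'=B'=0}$ stratum and the genericity bookkeeping for the shifted momenta. The only cosmetic difference is that you run the recursion directly on the Euler characteristics $e_k$ of the quadrics $\set{AB=q}$ and convert to $\NoM{\Sun{\Loops}}$ at the end, whereas the paper phrases the identical recursion as $\NoM{\Sun{\Loops}} = 2\,\NoM{\Sun{\Loops-1}}+1$ with base case $\NoM{\Sun{1}}=3$.
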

We will now demonstrate that this result can be obtained from a straightforward computation of the Euler characteristic, according to corollary~\ref{cor:number-of-masters}.
\begin{proof}
	The graph polynomials \eqref{eq:graph-polynomials} for the sunrise graph are
\begin{equation}
	\U = \sum_{i=1}^{\Loops+1} \prod_{j \neq i} x_j
	= \left( \prod_{i=1}^{\Loops+1} x_i \right) \left( \sum_{i=1}^{\Loops+1} \frac{1}{x_i} \right)
	\quad\text{and}\quad
	\F = (-p)^2 \prod_{i=1}^{\Loops+1} x_i + \U \sum_{i=1}^{\Loops+1} x_i m_i^2
	.
	\label{eq:sunrise-polys}%
\end{equation}
We note that for the first term in \eqref{eq:chi(G)->U,F}, we find that $\U=\F=0$ imply $\prod_{i} x_i = 0$, which has no solutions in the torus---hence, this term contributes $\chi(\Gm^{\Loops}) = 0$. We thus obtain
\begin{equation}
	(-1)^{\Loops} \NoM{\Sun{\Loops}}
	= \chi \left( \Gm^{\Loops} \setminus \Vanish\left( 1+\sum_{i=1}^{\Loops} x_i^{-1} \right) \right)
	+ \chi\left( \Gm^{\Loops} \setminus X^{\Loops}_{p^2} \right),
	\label{eq:sunrise-chi-decomposition}%
\end{equation}
where we introduced the notation
\begin{equation}
	X^{\Loops}_{p^2}
	\defas
	\Vanish\left(-p^2+\Big[m_{\Loops+1}^2+\sum_{i=1}^{\Loops} m_i^2 x_i\Big] \cdot \Big[1+\sum_{i=1}^{\Loops} x_i^{-1} \Big] \right)
	\subset \Gm^{\Loops}
	.
	\label{eq:sunrise-surface}%
\end{equation}
The first Euler characteristic in \eqref{eq:sunrise-chi-decomposition} is readily evaluated to $(-1)^{\Loops}$ by applying \eqref{eq:fibration-torus-euler} repeatedly (being on the torus, we may replace $x_i^{-1}$ by $x_i$), so we conclude that
\begin{equation}
	\NoM{\Sun{\Loops}}
	= 1
	+ (-1)^{\Loops} \cdot \chi\left( \Gm^{\Loops} \setminus X^{\Loops}_{p^2} \right)
	.
	\label{eq:sunrise-chi-decomposition-2}%
\end{equation}
Now let us consider the projection $\pi\colon \Gm^{\Loops} \longrightarrow \Gm^{\Loops-1}$ that forgets $x_{\Loops}$.
Set $y \defas m_{\Loops+1}^2+\sum_{i<\Loops} m_i^2 x_i$ and $z \defas 1+\sum_{i<\Loops} x_i^{-1}$, such that $X_{p^2}^{\Loops} = \set{x_{\Loops} p^2 = (1+z x_{\Loops})(y+ m_{\Loops}^2 x_{\Loops})} \subset \Gm^{\Loops}$. We note that the discriminant $D$ of this quadric in $x_{\Loops}$ factorizes into
\begin{equation}
	D=(m_{\Loops}^2 - p^2 + yz)^2-4m_{\Loops}^2 yz
	= \left( yz - [p+m_{\Loops}]^2 \right) \cdot \left( yz - [p-m_{\Loops}]^2 \right)
	,
	\label{eq:sunrise-discriminant}%
\end{equation}
such that $\Gm^{\Loops-1} \subset \Vanish(D) = X^{\Loops-1}_{(p+m_{\Loops})^2} \cupdot X^{\Loops-1}_{(p-m_{\Loops})^2}$ is the disjoint union of two hypersurfaces.\footnote{%
	We assume $p^2 \neq 0$ and $m_{\Loops}^2 \neq 0$, which guarantees that $(p+m_{\Loops})^2 \neq (p-m_{\Loops})^2$.
}
Since the factors are related to the $(\Loops-1)$-loop sunrise by \eqref{eq:sunrise-chi-decomposition-2}, we find
\begin{equation}
	\chi(\Vanish(D))
	=2\cdot (-1)^{\Loops}\cdot (\NoM{\Sun{\Loops-1}}-1)
	.
	\label{eq:sun-discriminant-as-NoM}%
\end{equation}
Over a point $x'\in X^{\Loops-1}_{p\pm m_{\Loops}} \subset \Vanish(D)$ in the discriminant, the fibre of $\pi^{-1}(x')$ has precisely one solution $(x',x_{\Loops})$ in $X^L_{p^2}$, determined by $x_{\Loops}=-y/[m_{\Loops} (m_{\Loops}\pm p)]$:
\begin{equation}
	\left[(\Gm^{\Loops} \setminus X^{\Loops}_{p^2}) \cap \Vanish(D)\right]
	= ([\Gm]-1)\cdot [\Vanish(D)]
	.
	\label{eq:sun-over-discriminant}%
\end{equation}
If $D(x')\neq 0$ is non-zero and also $yz \neq 0$, then the fibre $\pi^{-1}(x')$ has precisely two distinct solutions $x_L$ in the quadric $X_{p^2}^{\Loops}$. Hence, $\chi(\pi^{-1}(x'))=\chi(\Gm)-2=-2$ and thus
\begin{equation}
	\chi\left( (\Gm^{\Loops}\setminus X_{p^2}^{\Loops}) \setminus \!\!\Vanish(yzD) \right)
	= -2 \chi\left( \Gm^{\Loops-1} \setminus \!\!\Vanish(yzD) \right)
	= 2 \chi(\Vanish(D)) + 2\chi(\Vanish(yz))
	,
	\label{eq:sun-outside-discriminant}%
\end{equation}
where used that $\Vanish(D) \cap \Vanish(yz)=\emptyset$ for non-exceptional values of $p^2$, such that $(p \pm m_{\Loops})^2 \neq 0$ in \eqref{eq:sunrise-discriminant}.
The reason that we need to exclude the case when $yz=0$ in \eqref{eq:sun-outside-discriminant} is that for $y=0$, one of the solutions of $X_{p^2}^{\Loops} = \set{x_{\Loops}p^2=(1+z x_{\Loops})m_{\Loops}^2 x_{\Loops}}$ is $x_{\Loops}=0 \notin \Gm$; whereas for $z=0$ the equation for $X_{p^2}^{\Loops}=\set{x_{\Loops}p^2=y+m_{\Loops}^2 x_{\Loops}}$ becomes linear. 
In both cases, there is only one solution in the fibre, and there is none if both $y=z=0$ vanish (we assume $p^2 \neq m_{\Loops}^2$):
\begin{equation}
	[(\Gm^{\Loops} \setminus X_{p^2}^{\Loops}) \cap \Vanish(yz)]
	= ([\Gm]-2) \cdot [\Vanish(yz)] + [\Vanish(y)] + [\Vanish(z)]
	.
	\label{eq:sunrise-exceptional}%
\end{equation}
We can now combine \eqref{eq:sun-over-discriminant}--\eqref{eq:sunrise-exceptional} via $[Y]=[Y\cap\Vanish(D)]+[Y\cap\Vanish(yz)]+[Y\setminus \Vanish(D\cdot yz)]$ for $Y=\Gm^{\Loops}\setminus X_{p^2}^{\Loops}$ into the reduction formula
\begin{equation}
	\chi( \Gm^{\Loops} \setminus X^{\Loops}_{p^2} )
	= \chi(\Vanish(D))
	+ \chi(\Vanish(y))
	+ \chi(\Vanish(z))
	= 2\cdot(-1)^{\Loops} \cdot \NoM{\Sun{\Loops-1}}
	.
	\label{eq:sun-chi-recursion}%
\end{equation}
Here, we inserted \eqref{eq:sun-discriminant-as-NoM} and used $\chi(\Vanish(y))=\chi(\Vanish(z))=-\chi(\Gm^{\Loops-1}\setminus\Vanish(z))=(-1)^{\Loops}$, which follows from repeated application of \eqref{eq:fibration-torus-euler}---just as above, when we computed the first term in \eqref{eq:sunrise-chi-decomposition}.
According to \eqref{eq:sunrise-chi-decomposition-2}, we can write the reduction as the recursion
\begin{equation*}
	\NoM{\Sun{\Loops}}
	= 2\NoM{\Sun{\Loops-1}} + 1,
\end{equation*}
which is obviously solved by the claimed $\NoM{\Sun{\Loops}}=2^{\Loops+1}-1$. It merely remains to verify the base case $\Loops=1$, and indeed, $\NoM{\Sun{1}}=1+\chi(X_{p^2}^1)=3$ follows easily from \eqref{eq:sunrise-chi-decomposition-2} since $X_{p^2}^1 = \set{x_1 p^2 = (1+x_1)(m_2^2+m_1^2 x_1)}\subset \Gm$ consists of precisely two points.
\end{proof}
It should be clear that our calculation can be adapted to the situation when some masses are zero. Let us demonstrate how to obtain another result of \cite{KalmykovKniehl:CountingMastersSunrise}:\footnote{%
	The additional term $-\delta_{0,\Loops-R}$ in \cite[equation~(4.13)]{KalmykovKniehl:CountingMastersSunrise} subtracts a \emph{reducible} integral that can be attributed to a subtopology. Our counting, however, accounts for all master integrals.
}
\begin{prop}\label{prop:sunrise-mixed}
	The $\Loops$-loop sunrise graph with $R \leq \Loops$ non-zero masses, $\Loops+1-R$ vanishing masses and non-exceptional external momentum, has $\NoM{\Sun{L}} = 2^{R}$ master integrals.
\end{prop}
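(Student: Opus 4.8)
The plan is to reduce to the fully massive case of proposition~\ref{prop:sunrise-massive}, which I may now assume, by a recursion that peels off one massless edge at a time. Label the edges so that $m_1,\ldots,m_R$ are the non-vanishing masses and $m_{R+1}=\cdots=m_{\Loops+1}=0$; since $R\leq\Loops$ there is always at least one massless edge. Dividing the Lee--Pomeransky polynomial by the monomial $\prod_i x_i$, which is invertible on the torus, I would first rewrite the hypersurface using \eqref{eq:sunrise-polys} as
\begin{equation*}
	\Vanish(\G_{\Sun{\Loops}}) = \Vanish(uv-p^2) \subset \Gm^{\Loops+1},
	\quad\text{where}\quad
	u \defas \sum_{i=1}^{\Loops+1} x_i^{-1},
	\quad
	v \defas 1+\sum_{i=1}^{R} m_i^2 x_i,
\end{equation*}
since $\G = \U\,(1+\sum_i m_i^2 x_i) - p^2 \prod_i x_i = (\prod_i x_i)(uv-p^2)$. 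The decisive structural feature is that the massless variables $x_{R+1},\ldots,x_{\Loops+1}$ occur only in $u$, never in $v$.

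Next I would take the projection $\pi\colon \Gm^{\Loops+1}\surjects\Gm^{\Loops}$ that forgets the massless coordinate $x_{\Loops+1}$. Writing $u=u'+x_{\Loops+1}^{-1}$ with $u'\defas\sum_{i\leq\Loops}x_i^{-1}$, the equation $uv=p^2$ rearranges to $v = x_{\Loops+1}(p^2-u'v)$, which over a base point has exactly one solution $x_{\Loops+1}\in\Gm$ when $v\neq0$ and $u'v\neq p^2$, and no solution otherwise (note that $u'v=p^2$ already forces $v\neq 0$). A fibrewise Euler characteristic count — the complement fibre being $\Gm$ with at most one point removed — then yields
\begin{equation*}
	\chi\big(\Gm^{\Loops+1}\setminus\Vanish(uv-p^2)\big)
	= -\chi\big(\Gm^{\Loops}\setminus\Vanish(v)\big)
	+ \chi\big(\Vanish(u'v-p^2)\big).
\end{equation*}
Here the point is that $\Vanish(u'v-p^2)\subset\Gm^{\Loops}$ is precisely the Lee--Pomeransky hypersurface of the $(\Loops-1)$-loop sunrise $\Sun{\Loops-1}$ carrying the same masses $m_1,\ldots,m_R$ (now with $\Loops-R$ massless edges). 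Combining this with \eqref{eq:NoM-topo} and $\chi(\Gm^{\Loops})=0$, I expect to reach the clean recursion
\begin{equation*}
	\NoM{\Sun{\Loops}}
	= \NoM{\Sun{\Loops-1}}
	+ (-1)^{\Loops}\,\chi\big(\Gm^{\Loops}\setminus\Vanish(v)\big).
\end{equation*}

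It then remains to evaluate the correction term, splitting into two cases. If $R<\Loops$, then $v$ is independent of at least one coordinate, so $\Gm^{\Loops}\setminus\Vanish(v)$ carries a $\Gm$-factor and its Euler characteristic vanishes; hence $\NoM{\Sun{\Loops}}=\NoM{\Sun{\Loops-1}}$, i.e.\ stripping a massless edge does not change the count. If $R=\Loops$, so that the reduced graph is fully massive, then $v=1+\sum_{i=1}^{\Loops} m_i^2 x_i$ has all coefficients non-zero and repeated application of \eqref{eq:fibration-torus-euler} gives $\chi(\Gm^{\Loops}\setminus\Vanish(v))=(-1)^{\Loops}$, whence $\NoM{\Sun{\Loops}}=\NoM{\Sun{\Loops-1}}+1=(2^{\Loops}-1)+1=2^{\Loops}$ by proposition~\ref{prop:sunrise-massive}. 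An induction on $\Loops\geq R$ (base case $\Loops=R$, feeding in the fully massive result) then gives $\NoM{\Sun{\Loops}}=2^R$ in general. I anticipate the main obstacle to be the careful bookkeeping in the fibrewise Euler characteristic computation — in particular checking that the exceptional loci $\set{v=0}$ and $\set{u'v=p^2}$ contribute nothing, and correctly identifying $\Vanish(u'v-p^2)$ with the hypersurface of $\Sun{\Loops-1}$ — rather than any conceptual difficulty.
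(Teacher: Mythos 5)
Your proof is correct, but it follows a genuinely different route from the paper's. The paper first invokes corollary~\ref{cor:series-parallel} to collapse all massless edges into a single one, reducing to the case $\Loops=R$ with exactly one massless edge, and then re-runs the \emph{quadratic} fibration from the proof of proposition~\ref{prop:sunrise-massive} (projecting out a \emph{massive} variable, with the discriminant factorization and the shifted momenta $(p\pm m_{\Loops})^2$), observing that the only change to \eqref{eq:sun-chi-recursion} is that now $\chi(\Vanish(y))=0$, because $y$ has lost its constant term $m_{\Loops+1}^2$ and become homogeneous; this yields the doubling recursion $\NoM{\Sun{\Loops}}=2\NoM{\Sun{\Loops-1}}$ with base case $\NoM{\Sun{1}}=2$. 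You instead peel off the \emph{massless} edges one at a time by a \emph{linear} fibration: after multiplying by the unit $x_{\Loops+1}\prod_{i\leq\Loops}x_i$, your equation becomes $A+x_{\Loops+1}B=0$ with $A$ and $B$ proportional to $v$ and $u'v-p^2$, so your ``fibrewise count'' --- the step you anticipated to be the main obstacle --- is exactly lemma~\ref{lem:fibration-torus}, i.e.\ \eqref{eq:fibration-torus-euler}, combined with the disjointness $\Vanish(v)\cap\Vanish(u'v-p^2)=\emptyset$ that you correctly noted. Your recursion then changes nothing for $R<\Loops$ and adds $1$ at $\Loops=R$, delegating all of the doubling to proposition~\ref{prop:sunrise-massive}, used as a black box. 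What each approach buys: the paper's proof is shorter on the page because it reuses the series-parallel machinery and the internals of the massive proof, but it thereby inherits the discriminant analysis and the momentum shifts (hence the caveats on $(p\pm m_{\Loops})^2$ being non-exceptional); your proof avoids quadratic fibres entirely, never shifts the external momentum $p$ along the recursion (so non-exceptionality bookkeeping is trivial), and needs only the \emph{statement} of the massive case. One loose end: your base case ``$\Loops=R$, feeding in the fully massive result'' presupposes $R\geq 1$; for $R=0$ your recursion bottoms out at $\Sun{0}$, a single massless edge, where $\G=1-p^2x_1$ gives $\chi(\Gm\setminus\Vanish(\G))=-1$ and hence $\NoM{\Sun{0}}=1=2^{0}$ by a direct one-line check --- the same degenerate case the paper dispatches parenthetically via the series-parallel reduction. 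Also note that your derivation of the ``clean recursion'' uses $\chi(\Gm^{\Loops})=0$, which requires $\Loops\geq 1$; this is consistent since the base case at $\Loops=R\geq1$ (or the direct check at $\Loops=0$) is handled separately.
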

\begin{proof}
	By corollary~\ref{cor:series-parallel}, we may replace all massless edges by a single (massless) edge without changing the number of master integrals; hence we can assume $\Loops=R \geq 1$ (the totally massless case $R=0$ reduces to the trivial case of a single edge).
	Label the edges such that the massless edge is $m_{\Loops+1}=0$.

	We can apply the exact same recursion as in the proof of proposition~\ref{prop:sunrise-massive}; the only difference to \eqref{eq:sun-chi-recursion} is that now,
	$\chi(\Vanish(y)) = 0$ vanishes because $y=\sum_{i<\Loops} m_i^2 x_i$ has become homogeneous in $x$ such that $[\Vanish(y)]=[\Gm]\cdot[\Vanish(y)\cap\set{x_1=1}]$. Therefore, \eqref{eq:sun-chi-recursion} takes the form
$
		\chi(\Gm^{\Loops}\setminus X_{p^2}^{\Loops})
		= (-1)^{\Loops}\cdot \left( 2\NoM{\Sun{\Loops-1}}-1 \right)
$
	and yields, via \eqref{eq:sunrise-chi-decomposition-2}, the recursion
	\begin{equation*}
		\NoM{\Sun{\Loops}} = 2 \NoM{\Sun{\Loops-1}}.
	\end{equation*}
	We are done after verifying the base case: Indeed, $\NoM{\Sun{1}}=1+\chi(X_{p^2}^1) = 2$ from \eqref{eq:sunrise-chi-decomposition-2} is clear since $X_{p^2}^1 = \set{p^2 x_1 = (1+x_1) m_1^2 x_1}$ is the single point $x_1=p^2/m_1^2-1$ in $\Gm$.
\end{proof}

\subsection{General algorithms}
\label{sec:programs}

The computer algebra system {\Macaulay} \cite{Macaulay2} provides the function
\href{http://www2.macaulay2.com/Macaulay2/doc/Macaulay2-1.9.2/share/doc/Macaulay2/CharacteristicClasses/html/\_\_\_Euler.html}{\texttt{Euler}} in the package \href{http://www2.macaulay2.com/Macaulay2/doc/Macaulay2-1.9.2/share/doc/Macaulay2/CharacteristicClasses/html/index.html}{\texttt{CharacteristicClasses}}. It implements the algorithm of \cite{Helmer:EulerCSMandSegre} for the computation of the Euler characteristic.
This program requires projective varieties as input, so we need to homogenize $\G$ to $\tilde{\G} = x_0 \U + \F$, and can then use one of
\begin{align}
	[\Aff^{N} \setminus \Vanish(x_1 \cdots x_N  \G)]
	&=
	[\PP^N \setminus \Vanish (x_1\cdots x_N \tilde{\G})]
	- [\PP^{N-1} \setminus \Vanish(x_1\cdots x_N \F)]
	\\
	&= 
	[\PP^N \setminus \Vanish (x_0 x_1\cdots x_N \tilde{\G})]
	\label{eq:homogenize}%
\end{align}
to express the sought after number of master integrals as the Euler characteristic of a projective hypersurface complement. We found that this algorithm performs well for small numbers of variables (edges): The examples in table~\ref{tab:massive-examples} require not more than a couple of minutes of runtime. For more variables, however, the computations tend to rapidly become much more time consuming and often impracticable.
Apart from the results in table~\ref{tab:massive-examples}, we also verified proposition~\ref{prop:sunrise-massive} for the sunrise graphs $\Sun{\Loops}$ using \href{http://www2.macaulay2.com/Macaulay2/doc/Macaulay2-1.9.2/share/doc/Macaulay2/CharacteristicClasses/html/\_\_\_Euler.html}{\texttt{Euler}} for up to six loops.

\begin{table}
\centering%
\begin{tabular}{rcccccccc}
	Graph $G$ & $\Graph[0.85]{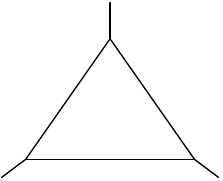}$ & $\Graph[0.85]{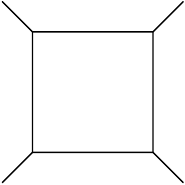}$ & $\Graph[0.6]{Wheel2}$ & $\Graph[0.85]{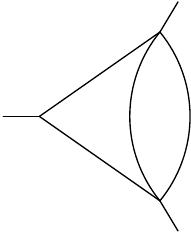}$ & $\Graph[0.85]{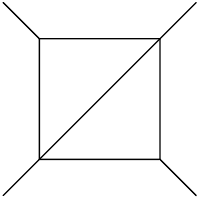}$ \\
\midrule
$\NoM{G}$ massless & 4 & 11 & 3 & 4 & 20\\
$\NoM{G}$ massive & 7 & 15 & 30 & 19 & 55\\
%	$G_5 = \Graph[0.8]{}$
%& 75; 227 \\
\end{tabular}%
\caption{Counts of master integrals according to \eqref{eq:NoM-topo} computed with {\Macaulay}'s \href{http://www2.macaulay2.com/Macaulay2/doc/Macaulay2-1.9.2/share/doc/Macaulay2/CharacteristicClasses/html/\_\_\_Euler.html}{\texttt{Euler}} for some graphs for massless and massive internal propagators (as no symmetries are regarded, all masses can be assumed to be different from each other). All external momenta are assumed to be non-degenerate (non-zero and not on any internal mass shell) in both cases.}%
	\label{tab:massive-examples}%
\end{table}
\begin{example}\label{ex:bubble-Euler}
	Consider the one-loop sunrise graph $\Sun{1}$ with $m_1^2=m_2^2=-p^2=1$, which is a non-degenerate kinematic configuration. According to example~\ref{ex:bubble-polys}, its Lee-Pomeransky polynomial is $\G = (x_1+x_2)(x_1+x_2+1)+x_1 x_2$. The {\Macaulay} script
	\begin{MacaulayInput}
load "CharacteristicClasses.m2"
R=QQ[x0,x1,x2]
I=ideal(x0*x1*x2*((x1+x2)*x0+(x1+x2)^2+x1*x2))
Euler(I)
	\end{MacaulayInput}
	computes the output $0$ for $\chi(\Vanish(x_0 x_1 x_2 \tilde{\G}) \cap \PP^2)$. Using $\chi(\PP^2)=3$ and \eqref{eq:homogenize}, we conclude $\NoM{\Sun{1}} = 3-0=3$ in agreement with proposition~\ref{prop:sunrise-massive}.
\end{example}
Recall that the number of master integrals depends on the kinematical configuration; in table~\ref{tab:massive-examples} we give the results both for massless and for massive internal propagators. In particular, note how the massless 2-loop propagator $\WS{3}'$ from example~\ref{ex:prop2l-class} with only $\NoM{\WS{3}'}=3$ master integrals grows to carry $\NoM{\WS{3}'}=30$ master integrals in the fully massive case.

Furthermore, {\Macaulay} also provides an implementation (the command \href{http://www2.macaulay2.com/Macaulay2/doc/Macaulay2-1.10/share/doc/Macaulay2/Dmodules/html/\_de\_\_Rham.html}{\texttt{deRham}}) of the algorithm \cite{OakuTakayama:dR-via-Dmod} of Oaku and Takayama for the computation of the individual de Rham cohomology groups. This uses $D$-modules and Gr\"{o}bner bases and tends to demand more resources than the method discussed above.
\begin{example}\label{ex:bubble-deRham}
	Consider again the massive one-loop sunrise from example~\ref{ex:bubble-polys}. The program
	\begin{MacaulayInput}
load "Dmodules.m2"
R=QQ[x1,x2]
f=x1*x2*(x1+x2+(x1+x2)^2+x1*x2)
deRham f
	\end{MacaulayInput}
	computes the following cohomology groups of $X=\Gm^2 \setminus \Vanish(\G)$:
	$H^0(X) \isomorph \Q$, $H^1(X) \isomorph \Q^3$ and $H^2(X)\isomorph\Q^5$. Hence, $\NoM{\Sun{1}}=\chi(X)=5-3+1=3$ as in example~\ref{ex:bubble-Euler}.
\end{example}
The same functionality is provided by {\Singular}'s \texttt{deRham.lib} library via the command
\href{https://www.singular.uni-kl.de/Manual/4-0-3/sing\_2398.htm}{\texttt{deRhamCohomology}}. The {\Singular} analogue of example~\ref{ex:bubble-deRham} is
\begin{MacaulayInput}
LIB "deRham.lib";
ring R = 0,(x1,x2),dp;
list L = (x1*x2*(x1+x2+(x1+x2)^2+x1*x2));
deRhamCohomology(L);
\end{MacaulayInput}

\subsection{Comparison to other approaches}
\label{sec:comparison}

We successfully reproduced all of our results above (the wheels $\WS{\Loops}'$ with $\Loops\leq 6$, the sunrises $\Sun{\Loops}$ with $\Loops \leq 4$ loops and the graphs from figure~\ref{fig:linred-graphs} and table~\ref{tab:massive-examples}) with the program {\Azurite} \cite{GeorgoudisLarsenZhang:Azurite}, which provides an implementation of Laporta's approach \cite{Laporta:HighPrecision}. While it employs novel techniques to boost performance, in the end it solves linear systems of equations between integrals obtained from annihilators of the integrand of Baikov's representation \eqref{eq:Baikov-FI} in order to count the number of master integrals.

The observed agreement with our results is to be expected, since the identification of integral relations with parametric annihilators that we elaborated on in section~\ref{sec:annihilators} works equally for the Baikov representation, which can also be interpreted as a Mellin transform.
Note, however, that we must use the options
\texttt{Symmetry -> False} and \texttt{GlobalSymmetry -> False}
for {\Azurite} in order to switch off the identification of integrals that differ by a permutation of the edges. The reason being that, in our approach, all edges $e$ carry their own index $\nu_e$ and no relation between these indices for different edges is assumed.

Unfortunately, due to the way {\Azurite} treats subsectors, this can occasionally lead to an apparent mismatch. However, this is rather a technical nuisance than an actual disagreement.
\begin{figure}
	\centering%
	$G=\!\!\!\!\Graph[0.75]{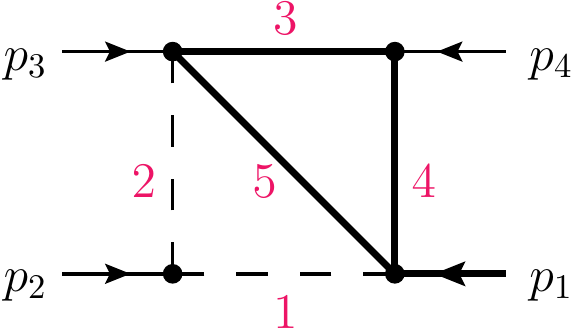}$
	\quad
	$G/\set{1,2}=\Graph[0.6]{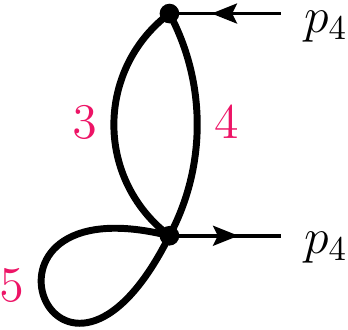}$
	$\mapsto \begin{cases}
		\Graph[0.4]{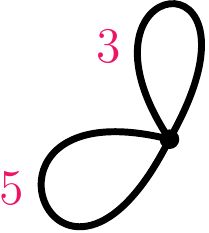} & 3 \\
		\Graph[0.4]{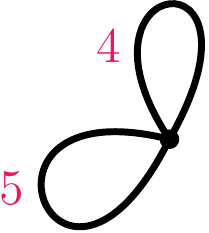} & 4 \\
	\end{cases}$
	\quad
	$G'=\Graph[0.6]{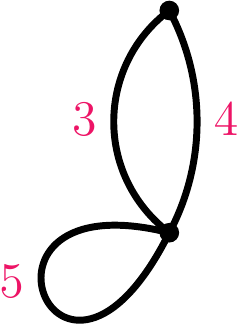}$
	\caption{A graph $G$ with a massive loop (edges 3, 4 and 5), two massless propagators $\set{1,2}$ and four external legs ($p_1^2\neq 0$ massive and $p_2^2=p_3^2=p_4^2=0$ massless). Since the only momentum $p_4$ running through the graph after contracting $1$ and $2$ is null ($p_4^2=0$), the associated integral is the same as for $G'$.}%
	\label{fig:lorenzo}%
\end{figure}
\begin{example}\label{ex:lorenzo}
	For the graph $G$ in figure~\ref{fig:lorenzo}, the Euler characteristic gives $\NoM{G} = 15$, whereas both {\Reduze} \cite{ManteuffelStuderus:Reduze2} and {\Azurite} produce $16$ master integrals. The problem arises from the \emph{subsector} where the edges $1$ and $2$ are contracted: As shown in figure~\ref{fig:lorenzo}, it does have a remaining external momentum $p_4$, such that the momenta running through edges $3$ and $4$ are different---however, since $p_4^2=0$, the graph polynomials (and hence the Feynman integrals) are identical to those of the vacuum graph $G'$ in figure~\ref{fig:lorenzo}. Since edges $3$ and $4$ in $G'$ have the same mass, they can be combined and thus $G'$ clearly has only a single master integral: the product of two tadpoles.

	But {\Azurite} and {\Reduze} instead consider the subsectors of $G/\set{1,2}$ obtained by contracting a further edge (3 or 4), and obtain the two tadpoles (see figure~\ref{fig:lorenzo}) consisting only of edges $\set{4,5}$ and $\set{3,5}$, respectively, as master integrals.
	Of course, these would be recognized as identical if symmetries were allowed; but the point is that even without using symmetries, there is only a single master integral for $G'$ (as computed by the Euler characteristic).
\end{example}
Our results are also consistent with the conclusions obtained within the differential reduction approach \cite{KalmykovKniehl:MellinBarnes}; indeed, we demonstrated in section~\ref{sec:sunrise} how the master integral counts of \cite{KalmykovKniehl:CountingMastersSunrise} for the sunrise graphs emerge directly from the computation of the Euler characteristic.
Let us point out again, however, that some care is required for these comparisons, since those works refer to \emph{irreducible} master integrals, which excludes integrals that can be expressed with gamma functions.
In particular, the fact that the two loop sunrise $\Sun{2}$ with one massless line has $\NoM{\Sun{2}}=4$ master integrals (see proposition~\ref{prop:sunrise-mixed}) is consistent with \cite{KalmykovKniehl:CountingMasters}. We are counting all master integrals and are not concerned here with the much more subtle question addressed by the observation that two of these integrals may be expressed with gamma functions.

Finally, let us note that also the work of Lee and Pomeransky \cite{LeePom} addresses a different problem: Considering only integer indices $\nu \in \Z^N$, how many \emph{top-level} master integrals are there for a graph $G$? This means that integrals obtained from subsectors (graphs $G/e$ with at least one edge $e$ contracted) are discarded. Geometrically, the number of the remaining master integrals is identified with the dimension of the cohomology group $H^N(\C^N\setminus \Vanish(\G))$.\footnote{%
	Actually, they initially refer to a different, relative cohomology group; but in the description of their implementation in {\Mint} they seem to work with this total cohomology group.
}
In most cases, the program {\Mint} computes this number correctly, which then agrees with the other mentioned methods.\footnote{%
	Occasional mismatches are known, like for the graph $F_9$ from figure~\ref{fig:linred-graphs} that was addressed in \cite[section~4.1]{BoelKniehlYang:Master4Sudakov}. These discrepancies are due to an error in the implementation of {\Mint} that misses contributions from critical points at infinity (we thank Yang Zhang for bringing this to our attention).
}
We refer to \cite[section~4]{BoelKniehlYang:Master4Sudakov} and \cite[section~6]{KalmykovKniehl:CountingMastersSunrise} for detailed discussions of this comparison.
Note that the dimension (and a basis) of the top cohomology group can also be computed with the command
\href{https://www.singular.uni-kl.de/Manual/4-0-3/sing\_625.htm}{\texttt{deRhamCohom}} from the {\Singular} library \texttt{dmodapp.lib}.

A direct comparison of {\Mint} with our results is not possible, since the concept of \emph{top-level} integrals does not make sense in our setting of arbitrary, non-integer indices $\nu$. Here, there is no relation at all between integrals of a quotient graph $G/e$ and integrals of $G$ (the former do not depend on $\nu_e$ at all; the latter do).

\begin{rem}\label{rem:top-level}
	Using the inclusion-exclusion principle, one might be tempted to define
\begin{equation}
	\NoMtop{G}
	\defas \sum_{\gamma \subseteq G} (-1)^{\abs{\gamma}} \NoM{G/\gamma}
	= \NoM{G} - \sum_e \NoM{G/e} + \sum_{e<f} \NoM{G/\set{e,f}} - \cdots
	\label{eq:NoMtop-def}%
\end{equation}
	as the number of \emph{top-level} master integrals, since it subtracts from all master integrals $\NoM{G}$ the integrals associated to subsectors (and corrects for double counting). Note that if $\gamma$ contains a loop, the corresponding term in the sum should be set to zero (we only consider contractions with the same loop number as $G$). The reverse relation,
	\begin{equation}
		\NoM{G}
		= \sum_{\gamma \subseteq G} \NoMtop{G/\gamma}
		,
		\label{eq:NoM-from-top}%
	\end{equation}
	is consistent with the intuition that the total set of master integrals is obtained as the union of all top-level masters. 
	By
	$\G_{G/\gamma}=\restrict{\G}{x_e=0\forall e \in\gamma}$,
	we find that
	\begin{equation}
		\NoMtop{G}
		= (-1)^{N} \chi\left( \Aff^{N} \setminus \Vanish(\G) \right)
		\label{eq:NoM-top-chi}%
	\end{equation}
	is the Euler characteristic of the hypersurface complement inside affine space (as compared to the torus $\Gm^N$ as ambient space).
	However, this number can take negative values and may thus not be interpreted as a dimension. For example, $\NoMtop{G/\set{1,2}}=-1$ for the graph from figure~\ref{fig:lorenzo}, which is required for consistency of \eqref{eq:NoM-from-top} to get
	\begin{equation*}
		1
		=\NoM{G/\set{1,2}}
		=\NoMtop{G/\set{1,2}} 
		+\NoMtop{G/\set{1,2,3}} 
		+\NoMtop{G/\set{1,2,4}} 
		= -1 + 1 + 1
		.
	\end{equation*}
\end{rem}

%================================================================================
\section{Outlook}
\label{sec:Conclusions}
%================================================================================

We have studied linear relations between Feynman integrals that arise from parametric annihilators of the integrand $\G^s$ in the Lee-Pomeransky representation. Seen as a multivariate (twisted) Mellin transform, the integration bijects these special partial differential operators with relations of various shifts (in the indices) of a Feynman integral. In particular, every classical IBP relation (derived in momentum space) is of this type.

The question whether \emph{all} shift relations of Feynman integrals (equivalently, all parametric annihilators of $\G^s$) follow from momentum space relations remains open (see question~\ref{con:Ann=Mom}). We showed that the well-known lowering and raising operators with respect to the dimension are consequences of the classical IBPs. A next step would be to clarify if the same applies to the relations implied by the trivial annihilators \eqref{eq:triv-Ann}.
Similarly, question~\ref{con:Ann=Ann1} asking whether the annihilator $\Ann(\G^s)=\Ann^1(\G^s)$ is linearly generated, remains to be settled.
A positive answer to either of these would imply that the labor-intensive computation of the parametric annihilators could be simplified considerably ($\Mom$ is known explicitly, and $\Ann^1$ can be calculated through syzygies).

The main insight of this article is a statement on the number of master integrals, which we define as the dimension of the vector space of the corresponding family of Feynman integrals over the field of rational functions in the dimension and the indices. Since we treat all indices $\nu_a$ as independent variables, this definition does not account for symmetries (automorphisms) of the underlying graph.
A natural next step is to incorporate these into our setup by studying the action of the corresponding permutation group.
The widely used partition of master integrals into \emph{top-level} and \emph{subsector} integrals, however, seems to be more difficult to take into account, as we touched on in remark~\ref{rem:top-level}.

Our result shows that the number of master integrals is not only finite, but identical to the Euler characteristic of the complement of the hyperspace $\set{\G=0}$ determined by the Lee-Pomeransky polynomial $\G$. This statement follows from a theorem of Loeser and Sabbah. We exemplified several methods to compute this number and found agreement with other established methods.
We expect that, combining the available tools for the computation of the Euler characteristic, it should be possible to compile a program for the efficient calculation of the number of master integrals for a wide range of Feynman graphs.

Let us conclude by emphasizing, once again, that the main objects of the approach elaborated here---the $s$-parametric annihilators generating the integral relations, and the Euler characteristic giving the number of master integrals---are well studied objects in the theory of $D$-modules and furthermore algorithms for their automated computation are available in principle.

In particular, we hope that this parametric, $D$-module theoretic and geometric approach can also shed light on the problems most relevant for perturbative calculations in QFT: the construction of a basis of master integrals, and the actual reduction of arbitrary integrals to such a basis. For this perspective, we would like to point out that our approach of treating the indices $\nu_a$ as free variables, in particular not tied to take integer values, is desirable in order to deal with dimensionally regulated integrals in position space, and for the ability to integrate out one-scale subgraphs (both situations introduce non-integer indices). For a recent step into this direction, see \cite{Tarasov:BoxArbitrary}.

\appendix
%================================================================================
\section{Integral representations}
\label{sec:app-representations}
%================================================================================

In this appendix we add technical details on the material of section~\ref{sec:Ann and rel}. We summarize the various well-known parametric representations, including their proofs, and the explicit relation to momentum space via propositions~\ref{prop: parametric reps} and~\ref{prop:IBP_Grozin}. Furthermore, we give an alternative, algebraic proof for corollary~\ref{cor:U-F-homogeneity}.

%================================================================================
\subsection{Momentum space and Schwinger parameters}
\label{sec:Momentum-space}%
%================================================================================

\begin{figure}\centering%
	\includegraphics[width=0.5\textwidth]{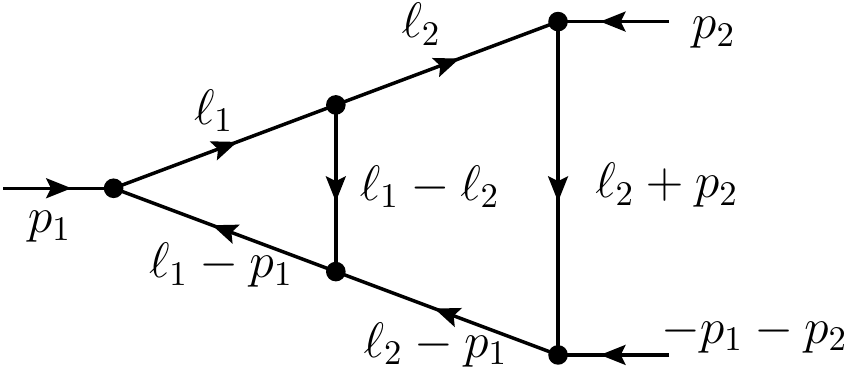}%
	\caption{A two-loop Feynman graph with a choice of loop momenta and the resulting momentum flow.}%
	\label{fig:momentum-flows}%
\end{figure}

As in section \ref{sec:graph-polys} we consider a connected Feynman graph $G$ with $N$ internal edges, $\ExtMoms+1$ external legs and loop-number $\Loops$, which is related by Euler's formula $\Loops=N-V+1$ to the number $V$ of vertices of $G$. Let us consider the case where each edge $e$ of $G$ is associated with a Feynman propagator,\footnote{We use the signature $(1,-1,\ldots,-1)$ for the Minkowski metric.}
\begin{equation*}
	\frac{1}{\Den_{e}}
	= \frac{1}{-\EdgeMom_{e}^{2}+m_{e}^{2} - \iu\FeynEps}
	\quad
	(1\leq e \leq N),
\end{equation*}
which depends on the mass $m_e$ of the particle $e$ and the $\Dim$-dimensional momentum $\EdgeMom_e \in \R^{\Dim}$ flowing through this edge.
Enforcing momentum conservation at each vertex fixes all $\EdgeMom_e$ in terms of $\ExtMoms$ independant external momenta $\ExtMom_1,\ldots,\ExtMom_\ExtMoms$ and $\Loops$ free loop momenta $\LoopMom_1,\ldots,\LoopMom_{\Loops}$.
Note that the actual number of external legs of $G$ is $\ExtMoms+1$, since overall momentum conservation $\sum_{i=1}^{\ExtMoms+1} \ExtMom_i=0$ imposes one relation among the external momenta. Taking only the inverse Feynman propagators $\Den_e$ as denominators, equation~\eqref{eq:FI-momentum} defines the Feynman integral associated to $G$.
\begin{example}
\label{ex:doubletri}%
	The graph in figure~\ref{fig:momentum-flows} has $V=5$ vertices, $N=6$ internal edges and $\Loops=2$ loops. It depends on two independent external momenta $p_1$ and $p_2$. A choice of loop momenta and the resulting momentum flow is depicted in figure~\ref{fig:momentum-flows}. With all masses zero, this yields
	\begin{multline*}
		\FI(\nu_1,\ldots,\nu_6)
		= \int_{\R^{\Dim}} \frac{\dd[\Dim] \LoopMom_1}{\iu \pi^{\Dim/2}}
		  \int_{\R^{\Dim}} \frac{\dd[\Dim] \LoopMom_2}{\iu \pi^{\Dim/2}}
		  \frac{1}{
			  [-\LoopMom_1^2-\iu\FeynEps]^{\nu_1}
			  [-\LoopMom_2^2-\iu\FeynEps]^{\nu_2}
			  [-(\LoopMom_1-\LoopMom_2)^2-\iu\FeynEps]^{\nu_3}
		  }\\
		\times \frac{1}{
			  [-(\LoopMom_2+p_1)^2-\iu\FeynEps]^{\nu_4}
			  [-(\LoopMom_2-p_2)^2-\iu\FeynEps]^{\nu_5}
			  [-(\LoopMom_1-p_2)^2-\iu\FeynEps]^{\nu_6}
	  	  }
		.
	\end{multline*}
\end{example}
Typically, the number $\abs{\bilabels}=\Loops(\Loops+1)/2+\Loops\ExtMoms$ of independent scalar products $s_{\set{i,j}}$ in \eqref{eq:scalar-products} is larger than the number of edges in a graph $G$.
We can then extend the initial set of denominators (given as the inverse propagators of the graph) by a suitable choice of additional quadratic (or linear) forms in the loop momenta, such that we reach a set of $\abs{\bilabels}$ denominators with the property that the matrix $\spMat$ defined by
\begin{equation*}
	\Den_{a}
	=\sum_{ \set{i,j} \in \bilabels} \spMat_{a}^{\set{i,j}} s_{\set{i,j}}+\lambda_{a}
\end{equation*}
becomes invertible. This means that all loop-momentum dependent scalar products can be written as linear combinations of the denominators, see \eqref{eq:sp-from-props}. The additional denominators introduced in this way are called \emph{irreducible scalar products}.
\begin{example}\label{ex:doubletri-ISP}
	We again consider the graph from figure~\ref{fig:momentum-flows} with $6$ internal edges labelled as in example~\ref{ex:doubletri}. In the massless case, the inverse propagators are just $\Den_e = -\EdgeMom_e^2$ and their explicit decomposition into the $\abs{\bilabels}=7$ scalar products takes the form
	\begin{equation*}
		\begin{pmatrix}
			\Den_1 \\
			\Den_2 \\
			\Den_3 \\
			\Den_4 \\
			\Den_5 \\
			\Den_6 \\
		\end{pmatrix}
%		=
%		\begin{pmatrix}
%			- \EdgeMom_1^2 \\
%			- \EdgeMom_2^2 \\
%			- \EdgeMom_3^2 \\
%			- \EdgeMom_4^2 \\
%			- \EdgeMom_5^2 \\
%			- \EdgeMom_6^2 \\
%		\end{pmatrix}
		=
		\begin{pmatrix}
			- \LoopMom_1^2 \\
			- \LoopMom_2^2 \\
			- (\LoopMom_1-\LoopMom_2)^2 \\
			- (\LoopMom_2+\ExtMom_1)^2 \\
			- (\LoopMom_2-\ExtMom_2)^2 \\
			- (\LoopMom_1-\ExtMom_2)^2 \\
		\end{pmatrix}
		=
		\underbrace{\begin{pmatrix}
			-1 & 0 & 0 & 0 & 0 & 0 & 0 \\
			 0 & 0 & -1& 0 & 0 & 0 & 0 \\
			 -1& 2 & -1& 0 & 0 & 0 & 0 \\
			 0 & 0 & -1& 0 & 0 &-2 & 0 \\
			 0 & 0 & -1& 0 & 0 & 0 & 2 \\
			 -1& 0 & 0 & 0 & 2 & 0 & 0 \\
		\end{pmatrix}}_{\spMat}
		\begin{pmatrix}
			\LoopMom_1^2 \\
			\LoopMom_1 \LoopMom_2 \\
			\LoopMom_2^2 \\
			\LoopMom_1 \ExtMom_1 \\
			\LoopMom_1 \ExtMom_2 \\
			\LoopMom_2 \ExtMom_1 \\
			\LoopMom_2 \ExtMom_2 \\
		\end{pmatrix}
		+
		\underbrace{\begin{pmatrix}
			0 \\ 0 \\ 0 \\ -\ExtMom_1^2 \\ -\ExtMom_2^2 \\ -\ExtMom_2^2 \\
		\end{pmatrix}}_{\lambda}
		.
	\end{equation*}
	The matrix $\spMat$ has rank $6$ and annihilates $(0,0,0,1,0,0,0)^\Transpose$. Thus we can choose $\Den_7 = \LoopMom_1 \ExtMom_1$ as an irreducible scalar product to complete the basis of quadratic forms in the loop momenta. The matrix $\spMat$ then acquires an additional row $(0,0,0,1,0,0,0)$ and becomes invertible.
\end{example}
Note that we assume that the inverse propagators of $G$ (the initial set of denominators) are linearly independent (that is, the $N \times \abs{\bilabels}$ matrix $\spMat$ of the inverse propagators has full rank $N$) in order to be able to extend them to a basis of quadratic forms by choosing $\abs{\bilabels}-N$ irreducible scalar products.%
\footnote{%
	If there are linear dependencies between the inverse propagators, these relations
	imply that the Feynman integral can be expressed in terms of contracted graphs with
	linearly independent inverse propagators.
	For example, if $\alpha \Den_1+\beta \Den_2 = 1$, then iterated use of
		$1/(\Den_1 \Den_2)=\alpha/\Den_2 + \beta/\Den_1$
	allows one to ultimately eliminate one of $\Den_1$ or $\Den_2$. Therefore, requesting
	linear independence is no restriction.
}

For each denominator $\Den_a$ we introduce a scalar $x_{a}$, which is known as Schwinger-, Feynman- or $\alpha$-parameter. In definition \ref{def:M-Q-J} we have introduced the decomposition 
\begin{equation}
	\sum_{a=1}^{N}x_{a} \Den_{a}
	=-\sum_{i,j=1}^{\Loops} \loopMat_{ij}\LoopMom_{i}\LoopMom_{j}
	+\sum_{i=1}^{\Loops} 2 \loopLin_{i}\LoopMom_{i}
	+\loopConst
	,
	\label{eq:def:M-Q-JAppendix}%
\end{equation}
which determines a symmetric $\Loops\times\Loops$ matrix $\loopMat$, a vector $Q$ and a scalar $\loopConst$. With their help we defined the polynomials $\U$, $\F$ and $\G=\U+\F$ in \eqref{eq:def-U-F}.
Explicitly, from definition \ref{def:M-Q-J} and \eqref{eq:bilinear-decomposition} we can read off that
\begin{align}
\loopMat_{ij} 
	& = -\frac{1+\delta_{ij}}{2}\sum_{a=1}^{N}x_{a} \spMat_{a}^{\set{i,j}}
	\quad\text{for $1\leq i,j\leq \Loops$,}
	\label{eq:M-from-A}\\
\loopLin_{i}
	& = \frac{1}{2}\sum_{a=1}^{N}\sum_{j=\Loops+1}^{M}x_{a} \spMat_{a}^{\set{i,j}} q_j
	\quad\text{for $1\leq i \leq \Loops$ and}
	\label{eq:Q-from-A}\\
\loopConst
	& = \sum_{a=1}^{N}x_{a}\lambda_{a}
	.
	\label{eq:J}%
\end{align}
Since $\loopMat_{ij}$ is an $\Loops\times\Loops$ matrix with entries that are linear in the Schwinger parameters, the polynomial $\U$ is homogeneous of degree $\Loops$.
By Cramer's rule, $(\det\loopMat)\loopMat^{-1}_{ij}$ is homogeneous of degree $\Loops-1$ and the linearity of $\loopLin$ and $\loopConst$ in the $x_a$ implies
\begin{cor}
\label{cor:U-F-homogeneity}%
	$\U$ and $\F$ are homogeneous polynomials in the variables
	$x_{1},\ldots, x_{N}$ with the degrees $\deg(\U)=\Loops$
	and $\deg(\F)=\Loops+1$. Hence, for $\G=\U+\F$, we have
\begin{equation}
	\left(\sum_{a=1}^{N}x_{a}\partial_a\right) \G 
	= \Loops \U + (\Loops+1) \F
	= (\Loops+1)\G-\U
	= \Loops \G + \F.
	\label{eq:G-euler}%
\end{equation}
\end{cor}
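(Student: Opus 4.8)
The plan is to establish the two homogeneity claims separately and then read off the Euler operator identity from Euler's theorem for homogeneous polynomials, with the trailing equalities following by trivial algebra.

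The degree of $\U$ is immediate: by \eqref{eq:M-from-A} every entry $\loopMat_{ij}$ is a homogeneous linear form in $x_1,\ldots,x_N$, and since the determinant of an $\Loops\times\Loops$ matrix is a signed sum of products of $\Loops$ entries (one from each row), $\U=\det\loopMat$ is homogeneous of degree $\Loops$.

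The only real content is the degree of $\F$, where some care is required because the defining expression \eqref{eq:def-U-F} involves the inverse matrix $\loopMat^{-1}$ and is therefore not obviously a polynomial. To deal with this I would clear the denominator using the adjugate: since $\U\loopMat^{-1}=\operatorname{adj}(\loopMat)$ and the entries of $\operatorname{adj}(\loopMat)$ are (signed) $(\Loops-1)\times(\Loops-1)$ minors of $\loopMat$, they are homogeneous of degree $\Loops-1$. Rewriting $\F=\loopLin^{\Transpose}\operatorname{adj}(\loopMat)\loopLin+\U\loopConst$ exhibits $\F$ as a genuine polynomial, and a degree count using the homogeneity of $\loopLin$ (degree $1$ by \eqref{eq:Q-from-A}) and $\loopConst$ (degree $1$ by \eqref{eq:J}) gives $\deg\bigl(\loopLin^{\Transpose}\operatorname{adj}(\loopMat)\loopLin\bigr)=1+(\Loops-1)+1=\Loops+1$ and $\deg(\U\loopConst)=\Loops+1$, so $\F$ is homogeneous of degree $\Loops+1$.

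With both homogeneities in hand, Euler's identity $\sum_a x_a\partial_a g=(\deg g)\,g$ applied to $g=\U$ and to $g=\F$ and then added together yields $\bigl(\sum_a x_a\partial_a\bigr)\G=\Loops\U+(\Loops+1)\F$. The remaining two forms in \eqref{eq:G-euler} are then obtained by substituting $\U=\G-\F$ and $\F=\G-\U$ respectively and collecting terms. The main (and only) obstacle is the bookkeeping around the adjugate, namely confirming that the rational-function factor $\loopMat^{-1}$ genuinely contributes degree $\Loops-1$ once it is multiplied by $\U$ into polynomial form; everything else is elementary.
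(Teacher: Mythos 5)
Your proof is correct and follows essentially the same route as the paper: the paper also deduces $\deg\U=\Loops$ from the linearity of the entries of $\loopMat$, invokes Cramer's rule to note that $(\det\loopMat)\loopMat^{-1}_{ij}$ (i.e.\ the adjugate entries) are homogeneous of degree $\Loops-1$, and combines this with the linearity of $\loopLin$ and $\loopConst$ to get $\deg\F=\Loops+1$, after which \eqref{eq:G-euler} is Euler's identity. Your write-up is simply a more explicit version of the paper's one-line argument.
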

Let us now come to the proof of proposition~\ref{prop: parametric reps} following \cite{Nak} and \cite{LeePom}.
\begin{proof}[Proof of proposition~\ref{prop: parametric reps}]
	We consider the Feynman integral defined in \eqref{eq:FI-momentum},
\begin{equation}
	\FI(\nu_{1},\ldots,\nu_{N})
	= \left(\prod_{j=1}^{\Loops}\int\frac{\dd[\Dim] \LoopMom_{j}}{\iu \pi^{\Dim/2}}\right)
	\prod_{i=1}^{N} \Den_{i}^{-\nu_{i}}
	.
	\label{eq:FI-momentumAppendix}%
\end{equation}
	Using the Schwinger trick to exponentiate each denominator,\footnote{%
		The integral~\eqref{eq:Schwinger-trick} converges only for $\Re(\nu_a)>0$ and therefore restricts the domain of convergence for the parametric integral. However, this has no consequences for algebraic relations, see remark~\ref{rem:continuation}.
	}
	\begin{equation}
		\frac{1}{\Den_a^{\nu_a}}
		= \frac{1}{\Gamma(\nu_a)} \int_0^{\infty} x_a^{\nu_a-1} e^{-x_a \Den_a} \dd x_a
		,
		\label{eq:Schwinger-trick}%
	\end{equation}
	the integral in \eqref{eq:FI-momentumAppendix} turns into
\begin{equation*}
	\FI(\nu_{1},\ldots,\nu_{N})
	=
	\left(\prod_{i=1}^{N}\int_{0}^{\infty}\frac{x_{i}^{\nu_{i}-1}\dd x_{i}}{\Gamma\left(\nu_{i}\right)}\right)
	\left(\prod_{j=1}^{\Loops}\int\frac{\dd[\Dim] \LoopMom_{j}}{\iu\pi^{\Dim/2}}\right)
	e^{-\sum_{a=1}^{N}x_{a} \Den_{a}}
	.
\end{equation*}
	According to \eqref{eq:def:M-Q-JAppendix} and \eqref{eq:def-U-F}, we can complete the square in the exponent
	\begin{equation*}
		-\sum_{a=1}^N x_a \Den_a
		= (\LoopMom - \loopMat^{-1} \loopLin)^{\Transpose} \loopMat (\LoopMom-\loopMat^{-1} \loopLin) - \F/\U
	\end{equation*}
	to perform the Gau{\ss}ian integrals over the shifted loop momenta $\LoopMom' \defas \LoopMom - \loopMat^{-1} \loopLin$ as\footnote{%
		Recall that our metric has signature $(1,-1,\ldots,-1)$, so the integrations over the $\Dim-1$ spacelike components are Euclidean and give $\sqrt{\pi^{\Loops} \U}$ each. The timelike integrations are understood as contour integrals and yield the same factor after rotating the integration contour to the imaginary axis, according to the Feynman $\iu\FeynEps$-prescription.
	}%
	\begin{equation*}
		\left(\prod_{j=1}^{\Loops}\int\frac{\dd[\Dim] \LoopMom'_{j}}{\iu\pi^{\Dim/2}}\right) e^{(\LoopMom')^{\Transpose} \loopMat \LoopMom'}
		= (\det \loopMat)^{-\Dim/2}
		= \U^{-\Dim/2}
		.
	\end{equation*}
	In summary, we therefore arrive at the integral representation \eqref{eq:parametric-exp}:
\begin{equation*}
	\FI(\nu_{1},\ldots,\nu_{N})
	=
	\left(\prod_{i=1}^{N}\int_{0}^{\infty}\frac{x_{i}^{\nu_{i}-1}\dd x_{i}}{\Gamma\left(\nu_{i}\right)}\right)
	\frac{e^{-\F/\U}}{\U^{\Dim/2}}
	.
\end{equation*}
We now multiply with $1=\int_0^{\infty} \delta(\rho-\sum_{j=1}^N x_j) \dd\rho$ and substitute $x_a \rightarrow \rho x_a$.\footnote{%
	Much more generally, we could replace $\sum_{j=1}^N x_j$ in the $\delta$-constraint with any other function as long as it is homogeneous of degree 1 and positive on $\R_+^N$.
}
The Jacobian $\rho^N$, the monomials $x_i^{\nu_i-1}$ and $\delta(\rho-\sum_j x_j) \rightarrow \delta(1-\sum_j x_j)/\rho$ contribute the power $\rho^{\abs{\nu}-1}$, whereas the homogeneity of $\F$ and $\U$ from corollary~\ref{cor:U-F-homogeneity} implies that $\U\rightarrow \rho^{\Loops}\U$ and $\F/\U\rightarrow \rho \F/\U$. 
Overall, by realising that the integral over $\rho$ is
	\begin{equation*}
		\int_0^{\infty} \rho^{\sdc-1} e^{-\rho \F/\U} \dd \rho
		= \Gamma(\sdc) \left( \frac{\U}{\F} \right)^{\sdc}
		, 
	\end{equation*}
we arrive at the first parametric formula \eqref{eq:FI-UF}.
Similarly, we multiply the integrand of \eqref{eq:Lee-Pom} with $1=\int_0^{\infty} \delta(\rho-\sum_{i} x_i) \dd\rho$ and substitute $x_i \rightarrow \rho x_i$. Using $\U \rightarrow \rho^{\Loops} \U$ and $\F\rightarrow \rho^{\Loops+1}\F$ from corollary~\ref{cor:U-F-homogeneity}, the integral over $\rho$ becomes
	\begin{equation*}
		\int_0^{\infty} \rho^{\sdc-1} (\U + \rho \F)^{-\Dim/2}
		= \U^{-\Dim/2} \left( \frac{\U}{\F} \right)^{\sdc}
		\frac{\Gamma(\sdc)\Gamma(\frac{\Dim}{2}-\sdc)}{\Gamma(\frac{\Dim}{2})}
	\end{equation*}
	and combines with the prefactors in \eqref{eq:Lee-Pom} to reproduce \eqref{eq:Lee-Pom}.
\end{proof}
We conclude the section with the proof of proposition~\ref{prop:IBP_Grozin} following Grozin~\cite{Grozin:IBP}:
\begin{proof}[Proof of proposition~\ref{prop:IBP_Grozin}]
	The action of $\momIBP{i}{j}$ on the integrand from \eqref{eq:momentum-integrand} is
\begin{equation*}
	\momIBP{i}{j}f 
	= \Dim \delta_{ij}f+f\sum_{a=1}^{N}\frac{-\nu_{a}}{\Den_{a}}q_{j}\frac{\partial \Den_{a}}{\partial q_{i}}.
\end{equation*}
	According to \eqref{eq:bilinear-decomposition}, the chain rule gives
\begin{equation*}
	q_{j}\frac{\partial}{\partial q_{i}} \Den_{a} 
	=
	q_{j}\frac{\partial}{\partial q_{i}} \sum_{\set{k,m}\in\bilabels} \spMat_a^{\set{k,m}} q_k q_m
	= \sum_{m=1}^{M} \spMat_{a}^{\set{i,m}}(1+\delta_{i,m})q_{j}q_{m}
\end{equation*}
	and we can express the scalar products $q_{j}q_{m}$ with $\set{j,m}\in\bilabels$ in terms of denominators using \eqref{eq:sp-from-props}. The remaining terms with $j,m>\Loops$ are products of external momenta, so
\begin{align*}
	\momIBP{i}{j} f 
	& = d\delta_{ij}f-f\sum_{a,b=1}^{N} C_{aj}^{bi}\frac{\nu_{a}}{\Den_{a}}\left(\Den_{b}-\lambda_{b}\right)
	\quad\text{for}\quad 1\leq j\leq \Loops \quad \text{and}
\\
	\momIBP{i}{j}f 
	& = -f\sum_{a,b=1}^{N} C_{aj}^{bi}\frac{\nu_{a}}{\Den_{a}}\left(\Den_{b}-\lambda_{b}\right)-f\sum_{a=1}^{N}\sum_{m=\Loops+1}^{M} \spMat_{a}^{\set{i,m}}q_{j}q_{m}\frac{\nu_{a}}{\Den_{a}}
	\quad\text{if}\quad \Loops <j \leq M.
\end{align*}
We conclude by noticing that multiplying the integrand $f$ with $\nu_a/\Den_a$ is equivalent to the action of the operator $\PlusD{a}$ defined in \eqref{eq:def-shift-action}, whereas multiplication with $\Den_b$ lowers the index $\nu_b$ and corresponds to $\Minus{b}$.
\end{proof}

%================================================================================
\subsection{Algebraic proof for corollary \ref{cor:parIBP}}
\label{sec:Algebraic proof}%
%================================================================================

With the proof of corollary~\ref{cor:parIBP} we have shown that, for every momentum space IBP relation, there is a corresponding annihilator in $\Ann\left( \G^{-\Dim/2} \right)$. The proof
rests on the inverse Mellin transform, which may be seen as a convenient but rather abstract argument. As a more direct alternative, we prove the statement in a purely algebraical way by use of properties of the graph polynomials.  
\begin{lem} \label{lem:Mom_in_Ann}
	\label{lem:parIBP-algebraic}%
	The operators $\parIBP{i}{j}$ from \eqref{eq:parIBP<=L} and \eqref{eq:parIBP>L} corresponding to the momentum space IBP relations annhilate the parametric integrand $\G^{-\Dim/2}$:
	\begin{equation*}
		\parIBP{i}{j} \in \Ann\left( \G^{-\Dim/2} \right)
		\quad\text{for all $1\leq i \leq \Loops$ and $1\leq j \leq M$.}
	\end{equation*}
\end{lem}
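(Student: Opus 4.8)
The plan is to establish $\parIBP{i}{j}\,\G^{s}=0$ (with $s=-\Dim/2$) by a direct polynomial computation, bypassing the inverse Mellin transform used in corollary~\ref{cor:parIBP}. First I would commute the operator past the power $\G^s$ using the two rules $\partial_b\G^{s}=s(\partial_b\G)\G^{s-1}$ and $H\G^{s}=-s\U\G^{s-1}$, the latter being exactly the computation recorded in remark~\ref{rem:HG+sU}. Since each summand of \eqref{eq:parIBP<=L} and \eqref{eq:parIBP>L} has the form $x_a\partial_b$, $x_a H$ or a constant, this expresses $\parIBP{i}{j}\G^s$ as $s\G^{s-1}$ times a polynomial. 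Abbreviating $\kappa_b\defas\sum_a C_{aj}^{bi}x_a$ and using $\Dim=-2s$, the annihilation for $i,j\le\Loops$ is thus equivalent to the polynomial identity
\begin{equation*}
	\sum_{b}\kappa_b\,(\partial_b\G)-\U\sum_{b}\kappa_b\lambda_b=2\delta_{ij}\,\G,
\end{equation*}
whereas for $i\le\Loops<j$ the term $\Dim\delta_{ij}$ is absent and the extra summand of \eqref{eq:parIBP>L} adds $2\U\,(\loopLin_i\cdot q_j)$ to the left-hand side, which must then vanish.

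Next I would rewrite $\kappa_b$ through the data of definition~\ref{def:M-Q-J}. Inserting \eqref{eq:IBP-coeffs} and recognising, via \eqref{eq:M-from-A} and \eqref{eq:Q-from-A}, the sum $\sum_a x_a\spMat_a^{\set{i,m}}$ as $-2\loopMat_{im}/(1+\delta_{im})$ for $m\le\Loops$ and as twice the $q_m$-coefficient of $\loopLin_i$ for $m>\Loops$, the symmetry factors $(1+\delta_{mi})$ cancel and $\kappa_b$ becomes a contraction of the rows $\spMat_{\set{m,j}}^{b}$ of $\spMat^{-1}$ against the entries of $\loopMat$ and $\loopLin$.

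The third and decisive step provides the derivatives. I would compute $\partial_b\U=\U\,\mathrm{tr}\!\big(\loopMat^{-1}\partial_b\loopMat\big)$ by Jacobi's formula, with $\partial_b\loopMat_{kl}=-\tfrac{1+\delta_{kl}}{2}\spMat_b^{\set{k,l}}$, and obtain $\partial_b\F$ either by differentiating $\F=\U(\loopLin^{\Transpose}\loopMat^{-1}\loopLin+\loopConst)$ or, more cleanly, from the envelope identity $\partial_b(\F/\U)=\restrict{\Den_b}{\LoopMom=\loopMat^{-1}\loopLin}$, which follows from the stationarity of the completed square in the proof of proposition~\ref{prop: parametric reps}. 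The point is that every such derivative carries a factor $\spMat_b^{\set{k,l}}$, so that the contraction against $\kappa_b$ collapses through $\sum_b\spMat_{\set{m,j}}^{b}\spMat_b^{\set{k,l}}=\delta_{\set{m,j},\set{k,l}}$; the resulting Kronecker deltas reassemble into $\loopMat\loopMat^{-1}$ and into $\loopConst=\sum_a x_a\lambda_a$ from \eqref{eq:J}, and the sum telescopes to $2\delta_{ij}\G$ (respectively to the required external-momentum term).

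I expect the bookkeeping of this last step to be the main obstacle. One must carefully track the symmetry factors $\tfrac{1+\delta_{kl}}{2}$ attached to the unordered pairs $\set{k,l}$, and keep the loop--loop scalar products (which are resolved into denominators by \eqref{eq:sp-from-props}) apart from the loop--external and external--external ones. This internal/external split is precisely what separates the cases $j\le\Loops$ and $j>\Loops$ and accounts for the inhomogeneous term $2\U(\loopLin_i\cdot q_j)$ in \eqref{eq:parIBP>L}; checking that this term exactly absorbs the leftover external contribution is the most delicate part of the verification.
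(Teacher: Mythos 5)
Your plan is correct, and every identity you assert does hold. Your reduction of $\parIBP{i}{j}\G^s=0$ to the polynomial identity $\sum_b\kappa_b(\partial_b\G)-\U\sum_b\kappa_b\lambda_b=2\delta_{ij}\G$ (with the extra term $2\U(\loopLin_i\cdot q_j)$ added for $j>\Loops$) is exactly the paper's starting point \eqref{eq:ibp-ann<=L}, and your Jacobi-formula step combined with the collapse $\sum_b\spMat^b_{\set{m,j}}\spMat_b^{\set{k,l}}=\delta_{\set{m,j},\set{k,l}}$ reproduces the paper's intermediate identities \eqref{eq:CxdM/db} and \eqref{eq:CxdbU/U}, including the fact that the latter holds for all $j$. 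The one place where you genuinely depart from the paper is the treatment of the $\F$-derivative. The paper expands $\partial_b\left(\loopLin^{\Transpose}\loopMat^{-1}\loopLin\right)=2(\partial_b\loopLin)^{\Transpose}\loopMat^{-1}\loopLin-\loopLin^{\Transpose}\loopMat^{-1}(\partial_b\loopMat)\loopMat^{-1}\loopLin$ and verifies term by term that the two contractions \eqref{eq:QMinvMMinvQ} and \eqref{eq:dQ/db<=L} cancel for $j\le\Loops$, while for $j>\Loops$ the surviving piece is absorbed by the extra summand of \eqref{eq:parIBP>L}. You instead invoke the envelope identity $\partial_b(\F/\U)=\restrict{\Den_b}{\LoopMom=\loopMat^{-1}\loopLin}$, which is valid (stationarity of the completed square kills the implicit dependence through $\loopMat^{-1}\loopLin$) and is a real simplification: it avoids differentiating $\loopMat^{-1}$ altogether, the $\spMat$-contraction then leaves only scalar products $q_mq_j$ evaluated at the stationary point, and the decisive cancellation appears in one line as $\loopMat(\loopMat^{-1}\loopLin)=\loopLin$ --- with the case distinction arising automatically, since for $j\le\Loops$ the momentum $q_j$ is itself replaced by $(\loopMat^{-1}\loopLin)_j$ and everything cancels, whereas for $j>\Loops$ it stays external and leaves precisely the $-2\U(\loopLin_i\cdot q_j)$ that the extra summand of \eqref{eq:parIBP>L} removes. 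What the paper's route buys is that it is pure matrix calculus, never appealing to the variational interpretation of $\F/\U$; what yours buys is lighter bookkeeping and a conceptual link back to the completion of the square in the proof of proposition~\ref{prop: parametric reps}. The bookkeeping you flag as the main obstacle does go through exactly as you describe.
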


\begin{proof}
	Let us first consider the case $j\leq \Loops$. After acting with $\parIBP{i}{j}$ from \eqref{eq:parIBP<=L} on $\G^{-\Dim/2}$ and dividing by $(\Dim/2) \G^{-\Dim/2-1}$, we are left to prove the vanishing of
	\begin{equation}
		2\G \delta_{i,j}
		-\sum_{a,b} C^{bi}_{aj} x_a \left( \partial_b \G - \lambda_b \Big[ \Loops+1-\sum_c x_c \partial_c \Big] \G \right)
		=2\G \delta_{i,j}
		- \sum_{a,b} C^{bi}_{aj} x_a \left( \partial_b \G - \lambda_b \U \right)
		\label{eq:ibp-ann<=L}%
	\end{equation}
	where we exploited the homogeneity from \eqref{eq:G-euler}. Using \eqref{eq:def-U-F}, we note that
	\begin{equation}
		\partial_b \G - \lambda_b \U
		= \partial_b \left[ \U\left(1+ \loopConst + \loopLin^{\Transpose} \loopMat^{-1} \loopLin \right) \right] - \lambda_b \U
		= \G \frac{\partial_b \U}{\U} + \U \partial_b\left( \loopLin^\Transpose\loopMat^{-1}\loopLin \right)
		\label{eq:dG/db}%
	\end{equation}
	because $\partial_b \loopConst = \lambda_b$ according to \eqref{eq:J}.
	In order to evaluate $\partial_b \U$ with Jacobi's formula $(\partial_b \U)/\U = \sum_{r,s=1}^{\Loops} \loopMat^{-1}_{r,s} \partial_b \loopMat_{s,r}$, we use \eqref{eq:M-from-A} to compute
	\begin{equation*}
		\sum_b \spMat^b_{\set{m,j}} \partial_b \loopMat_{s,r}
		= -\frac{1+\delta_{s,r}}{2}\sum_b \spMat^b_{\set{m,j}} \spMat_b^{\set{s,r}}
		= -\frac{1+\delta_{s,r}}{2} \delta_{\set{m,j},\set{r,s}}
		= -\frac{\delta_{m,r} \delta_{j,s} + \delta_{m,s} \delta_{j,r}}{2}
%		\label{eq:a-dbM}%
	\end{equation*}
	which restricts $m$ to either $r$ or $s$. So in particular, $m\leq\Loops$ and we can use \eqref{eq:M-from-A} in
	\begin{equation}
		\sum_{a,b} C^{bi}_{aj} x_a \partial_b \loopMat_{r,s}
		=- \sum_{a,m} x_a \spMat_a^{\set{i,m}}\frac{1+\delta_{i,m}}{2}
		\left( \delta_{m,r} \delta_{j,s} + \delta_{m,s} \delta_{j,r} \right) 
		= \loopMat_{i,r} \delta_{j,s} + \loopMat_{i,s} \delta_{j,r}
		\label{eq:CxdM/db}%
	\end{equation}
	which proves that for arbitrary $j$ (independent of whether $j\leq \Loops$ or $j>\Loops$)
	\begin{equation}
		\sum_{a,b} C^{bi}_{aj} x_a \frac{\partial_b \U}{\U}
		= \sum_{r,s=1}^{\Loops} \loopMat^{-1}_{r,s}
		  (\loopMat_{i,r} \delta_{j,s} + \loopMat_{i,s} \delta_{j,r})
		= 2 \delta_{i,j}
		.
		\label{eq:CxdbU/U}%
	\end{equation}
	Via \eqref{eq:dG/db}, this identity reduces the proof of \eqref{eq:ibp-ann<=L} to showing that
	\begin{equation}
		\sum_{a,b} C^{bi}_{aj} x_a \partial_b \left( \loopLin^{\Transpose} \loopMat^{-1} \loopLin \right)
		=
		\sum_{a,b} C^{bi}_{aj} x_a 
		\Big[
			2 (\partial_b \loopLin)^{\Transpose} \loopMat^{-1} \loopLin
			- \loopLin^{\Transpose}\loopMat^{-1} (\partial_b \loopMat) \loopMat^{-1} \loopLin
		\Big]
		\label{eq:d(QMinvQ)<=L}%
	\end{equation}
	vanishes. The last term is easily evaluated with \eqref{eq:CxdM/db} and gives
	\begin{equation}
		\sum_{a,b} C^{bi}_{aj} x_a \loopLin^\Transpose \loopMat^{-1} (\partial_b \loopMat) \loopMat^{-1} \loopLin
		= 2\sum_{r,s=1}^{\Loops} (\loopMat^{-1}\loopLin)_r (\loopMat^{-1}\loopLin)_s \loopMat_{i,r} \delta_{j,s}
		= 2 \loopLin_i (\loopMat^{-1} \loopLin)_j
		,
		\label{eq:QMinvMMinvQ}%
	\end{equation}
	whereas the derivative $\partial_b \loopLin$ can be read off from \eqref{eq:Q-from-A} and the sum over $b$ yields
	\begin{equation}
		\sum_b \spMat_{\set{m,j}}^b (2\partial_b \loopLin_s)
		= \sum_b \spMat_{\set{m,j}}^b \sum_{r>\Loops} \spMat_b^{\set{s,r}} q_r
		= \sum_{r>\Loops} q_r \delta_{\set{m,j},\set{s,r}}
		= \sum_{r>\Loops} q_r \delta_{m,r} \delta_{s,j}
		\label{eq:AdQ/db}%
	\end{equation}
	because $j\leq \Loops< r$ excludes the possibility that $m=s$ and $r=j$. Thus with \eqref{eq:Q-from-A},
	\begin{equation}
		\sum_{a,b} C^{bi}_{aj} x_a (2\partial_b \loopLin)^{\Transpose}\loopMat^{-1} \loopLin
		=
		(\loopMat^{-1} \loopLin)_j \sum_a \sum_{m>\Loops} x_a \spMat_a^{\set{i,m}} q_m
		= 2 \loopLin_i (\loopMat^{-1} \loopLin)_j
		\label{eq:dQ/db<=L}%
	\end{equation}
	cancels the contribution from \eqref{eq:QMinvMMinvQ} in \eqref{eq:d(QMinvQ)<=L} and finishes the proof in the case $j\leq \Loops$.
	If instead we have $j>\Loops$, then we must replace $\delta_{\set{m,j},\set{s,r}} = \delta_{j,r} \delta_{m,s}$ in \eqref{eq:AdQ/db} such that
	\begin{equation*}
		\sum_{a,b} C^{bi}_{aj} x_a (2\partial_b \loopLin)^{\Transpose}\loopMat^{-1} \loopLin
		= \sum_{s=1}^{\Loops} \sum_a \spMat_a^{\set{i,s}} (1+\delta_{i,s})
		q_j (\loopMat^{-1}\loopLin)_s
		= -2 \sum_{s=1}^{\Loops} \loopMat_{i,s} (\loopMat^{-1}\loopLin)_s
		= -2 Q_i q_j
%		\label{eq:dQ/db>L}%
	\end{equation*}
	where we used \eqref{eq:M-from-A} once more.
	Now recall that \eqref{eq:CxdbU/U} remains true and becomes zero for $j>\Loops$ because $\delta_{i,j}=0$ since $i\leq \Loops$. For the same reason, $\delta_{j,s}=0$ in \eqref{eq:QMinvMMinvQ} and therefore, using \eqref{eq:dG/db},
	\begin{equation*}
		-\sum_{a,b} C^{bi}_{aj} x_a (\partial_b \G - \lambda_b \U)
		=
		-\U\sum_{a,b} C^{bi}_{aj} x_a (2\partial_b \loopLin)^{\Transpose}\loopMat^{-1} \loopLin
		= 2 \U \loopLin_i q_j.
	\end{equation*}
	This is precisely cancelled by the additional contribution to \eqref{eq:ibp-ann<=L} coming from $\parIBP{i}{j}$ in \eqref{eq:parIBP>L} in the case $j>\Loops$: The additional term acts on $\G^{-\Dim/2}$ as
	\begin{equation*}
		-\sum_a \sum_{m>\Loops} \spMat_a^{\set{i,m}} q_j q_m x_a \Big[ \Loops + 1-\sum_c x_c \partial_c \Big] \G
		= -\U q_j \sum_a \sum_{m>\Loops} \spMat_a^{\set{i,m}} x_a q_m
		= - 2\U q_j \loopLin_i
	\end{equation*}
	after dividing by $(\Dim/2) \G^{-\Dim/2-1}$. Note that here we used \eqref{eq:G-euler} and \eqref{eq:Q-from-A}.
\end{proof}

\subsection{The Baikov representation}
\label{sec:Baikov}%

In this section we discuss the representation of Feynman integrals suggested by Baikov in \cite{Baikov:ExplicitSolutions-multiloop}, whose complete form \eqref{eq:Baikov-FI} was given by Lee in \cite{Lee:LL2010,Lee:DimensionalRecurrenceAnalyticalProperties}. We will give some details on the derivation of this formula (see also \cite[section~9]{Grozin:IBP}), which was presented in \cite{Lee:LL2010} and applied in our discussion of the lowering dimension shift in section~\ref{sec:Dimension shifts}.

Assume that $q_1$, \ldots, $q_{\AllMoms}$ are vectors in a Euclidean vector space and write
\begin{equation}
	V_n \defas \begin{pmatrix}
		q_n \cdot q_n & \cdots & q_n \cdot q_{\AllMoms} \\
		\vdots & \ddots & \vdots \\
		q_{\AllMoms} \cdot q_n & \cdots & q_{\AllMoms} \cdot q_{\AllMoms} \\
	\end{pmatrix}
	= \left( q_i \cdot q_j \right)_{n \leq i,j\leq \AllMoms}
	\quad\text{and}\quad
	G_n \defas \det V_n
	\label{eq:def-Gram}%
\end{equation}
for their Gram matrices and determinants. Note that
\begin{equation*}
	V_n = \begin{pmatrix}
		q_n^2 & q_{\bullet} \cdot q_n \\
		q_n \cdot q_{\bullet} & V_{n+1} \\
	\end{pmatrix}
	\quad\text{where}\quad
	q_{\bullet} \cdot q_n
	\defas \begin{pmatrix}
		q_{n+1} \cdot q_n \\
		\vdots \\
		q_{\AllMoms} \cdot q_n \\
	\end{pmatrix},
	\quad
	q_n \cdot q_{\bullet}
	\defas \left(q_{\bullet} \cdot q_n\right)^{\Transpose}
\end{equation*}
and thus, by adding $-(p_n \cdot p_{\bullet}) V_{n+1}^{-1}$ times the lower $\AllMoms-n$ rows to the first row,
\begin{equation}
	\frac{G_n}{G_{n+1}} =
	q_n^2-\norm{\prOrth{\lin\set{q_{n+1},\ldots,q_{\AllMoms}}}(q_n)}^2
	= \norm{\prOrth{\lin\set{q_{n+1},\ldots,q_{\AllMoms}}^{\perp}}(q_n)}^2
	.
	\label{eq:Gram-recursion}%
\end{equation}
Indeed, the formula $\prOrth{\lin\set{q_{n+1},\ldots,q_{\AllMoms}}} (v) = \sum_{i,j=n+1}^{\AllMoms} q_i \left( V^{-1}_{n+1} \right) _{i,j} (q_j \cdot v)$ for the orthogonal projection of $v$ onto the space spanned by $q_{n+1},\ldots,q_{\AllMoms}$ shows that
\begin{align*}
	\norm{\prOrth{\lin\set{q_{n+1},\ldots,q_{\AllMoms}}}(q_n)}^2
	&= \sum_{i,j,k,l=n+1}^{\AllMoms} \underbrace{(q_i \cdot q_k)}_{\left(V_{n+1}\right)_{i,k}}
		\left(V^{-1}_{n+1} \right)_{i,j} (q_j \cdot q_n)
		\left(V^{-1}_{n+1} \right)_{k,l} (q_k \cdot q_n)
	\\
	&= \sum_{j,k,l=n+1}^{\AllMoms} 
		\delta_{k,j} (q_j \cdot q_n)
		\left(V^{-1}_{n+1} \right)_{k,l} (q_l \cdot q_n)
	= \sum_{k,l=n+1}^{\AllMoms} 
		(q_k \cdot q_n)
		\left(V^{-1}_{n+1} \right)_{k,l} (q_l \cdot q_n)
	\\
	&= (q_n\cdot q_{\bullet})V^{-1}_{n+1} (q_{\bullet} \cdot q_n)
	.
\end{align*}
Now assume our integrand $f$ only depends on the scalar products $s_{i,j} = q_i \cdot q_j$, and we want to integrate out the first loop momentum $q_1$. Let us decompose $q_1 = q_{\perp} + q_{\parallel}$ into the component $q_{\parallel} \in \lin \set{q_2,\ldots,q_{\AllMoms}}$ that lies in the space spanned by the other momenta, and the component $q_{\perp}$ in its orthogonal complement.
According to \eqref{eq:Gram-recursion}, $G_n^{1/2}$ is the volume of the parallelotope spanned by $q_n,\ldots,q_{\AllMoms}$.
Hence, changing coordinates from $q_{\parallel}$ to $(s_{1,2},\ldots,s_{1,\AllMoms})$ yields 
\begin{equation*}
	\int_{\R} \dd s_{1,2}\ \cdots \int_{\R} \dd s_{1,\AllMoms}
	= \sqrt{G_2}
	\int_{\R^{\AllMoms-1}} \dd[\AllMoms-1] q_{\parallel}
	.
\end{equation*}
The integral over the orthogonal component is, due to $s_{1,1}=q_1^2=q_{\perp}^2 + q_{\parallel}^2$, given by
\begin{align*}
	\int_{\R^{\Dim-\AllMoms+1}}\hspace{-9mm} \dd[\Dim-\AllMoms+1] q_{\perp}
%	&=\frac{2\pi^{(\Dim-\AllMoms+1)/2}}{\Gamma\left(\frac{\Dim-\AllMoms+1}{2}\right)}
%	\int_0^{\infty} \dd \norm{q_{\perp}}\ \norm{q_{\perp}}^{\Dim-\AllMoms}
%	\\
	&= \frac{\pi^{(\Dim-\AllMoms+1)/2}}{\Gamma\left(\frac{\Dim-\AllMoms+1}{2}\right)}
	\int_{0}^{\infty} \!\!\dd q_{\perp}^2 \left(q_{\perp}^2 \right)^{(\Dim-\AllMoms-1)/2}
%	\\&
	= \frac{\pi^{(\Dim-\AllMoms+1)/2}}{\Gamma\left(\frac{\Dim-\AllMoms+1}{2}\right)}
	\int_{q_{\parallel}^2}^{\infty} \!\!\dd s_{1,1}
	\left( \frac{G_1}{G_2}\right)^{(\Dim-\AllMoms-1)/2}
	.
\end{align*}
Note that the lower boundary $s_{1,1}=q_{\parallel}^2$ corresponds to $0=q_{\perp}^2=G_1/G_2$. Altogether,
\begin{equation}
	\int_{\R^{\Dim}} \frac{\dd[\Dim] q_1}{\pi^{\Dim/2}}
	\ f(s)
	=
	\frac{\pi^{(1-\AllMoms)/2}}{\Gamma\left(\frac{\Dim-\AllMoms+1}{2}\right)}
	\int_{G_1/G_2>0} \!\!\dd[\AllMoms] s_{1,\bullet}
	\ \ f(s)
	\frac{G_1^{(\Dim-\AllMoms-1)/2}}{G_{2}^{(\Dim-\AllMoms)/2}}
	.
	\label{eq:measure-s1j}%
\end{equation}
Transforming the remaining loop integrations analogously, all but two of the Gram determinants cancel, and we conclude that
\begin{equation}
	\prod_{i=1}^{\Loops} \int \frac{\dd[\Dim] \LoopMom_i}{\pi^{\Dim/2}}
	f(s)
	= \frac{\pi^{-\Loops E/2-\Loops(\Loops-1)/4}}{\Gamma\left( \frac{\Dim-\Loops-E+1}{2} \right) \cdots \Gamma\left( \frac{\Dim-E}{2} \right)}
	\int \dd[N] s_{\bullet,\bullet}
	\ f(s)
	\cdot \frac{G_1^{(\Dim-\AllMoms-1)/2}}{G_{\Loops+1}^{(\Dim-E-1)/2}}
	\label{eq:measure-all-s}%
\end{equation}
where $\AllMoms=\Loops+\ExtMoms$ is the sum of the number $\Loops$ of loops and the number $\ExtMoms$ of linearly independent external momenta. Note that $G_{\Loops+1} = \det (\ExtMom_i \cdot \ExtMom_j)_{1\leq i,j\leq \ExtMoms}$ is the Gram determinant of the external momenta and independent of the integration variables.
\begin{proof}[Proof of Theorem~\ref{thm:Baikov}]
	Since every denominator $\Den_e$ is a linear combination of $s_{i,j}$ (and some loop momentum independent constant $\lambda_e$) according to \eqref{eq:bilinear-decomposition}, an affine change of variables allows us to integrate over the values of $\Den_e$'s instead of $s_{i,j}$'s. This transformation only introduces a constant Jacobian $c' = \det (\spMat^a_{\set{i,j}})$. We set $f(s)=\prod_e \Den_e^{-\nu_e}$ in the formula \eqref{eq:measure-all-s} above.
	The (Euclidean) integration domain is determined, according to \eqref{eq:Gram-recursion}, by 
	$0 < \|\prOrth{\lin\set{q_{n+1},\ldots,q_{\AllMoms}}^{\perp}}(q_n)\|^2 = G_n/G_{n+1}$
	for $1\leq n \leq \Loops$. Therefore, a point on the boundary of the integration domain is determined by $G_n=0$ for some $1\leq n \leq L$, which is equivalent to a linear dependence $q_n \in \lin\set{q_{n+1},\ldots,q_{\AllMoms}}$ and hence implies $G_1=0$.

	Note that we have to analytically continue \eqref{eq:measure-all-s} from Euclidean to Minkowski space in order to obtain the Feynman integral \eqref{eq:FI-momentum}. As Wick rotation turns $\int \dd[\Dim] \LoopMom_k/(\iu \pi^{\Dim/2})$ exactly into the measure $\int \dd[\Dim] \LoopMom_k/\pi^{\Dim/2}$ on the left-hand side of \eqref{eq:measure-all-s}, we only have to remember that, due to our mostly-minus signature $(1,-1,\cdots,-1)$ of the Minkowsi metric, the Euclidean scalar products on the right-hand side of \eqref{eq:measure-all-s} receive a factor $(-1)$. For example, the $\AllMoms \times \AllMoms$ determinant $G_1$ turns into $(-1)^{\AllMoms} \Gram_1$; similarly, $G_{\Loops+1}$ becomes $(-1)^{\ExtMoms}\Gram$.
	Overall, analytic continuation gives an additional factor of
	\begin{equation*}
		(-1)^{N}\cdot \frac{(-1)^{\AllMoms(\Dim-\AllMoms-1)/2}}{(-1)^{\ExtMoms(\Dim-\ExtMoms-1)/2}}
		= (-1)^{\Loops \Dim/2} \cdot (-1)^{N+\AllMoms(\AllMoms-1)/2-\ExtMoms(\ExtMoms-1)/2}
		.
	\end{equation*}
	We absorb the last factor, together with the Jacobian $c'$, into the constant prefactor $c$, and have thus finally arrived at \eqref{eq:Baikov-FI}.
\end{proof}

%================================================================================
\section{The theory of Loeser-Sabbah}
\label{sec:L-S}
%================================================================================

This section is devoted to the theorem~\ref{thm:L-S}, which was first stated in \cite{LoeserSabbah:IrredTore}. Beware that the original argument is flawed; a correct (but terse) proof was given in \cite{LoeserSabbah:IrredToreII}. Our aim here is to provide a simplified and more detailed derivation.

Throughout we will consider modules $\M$ over the algebra $\WeylT{N}_k=\Weyl{N}_k[x^{-1}]$ of differential operators \eqref{eq:WeylT} on the torus in some number $N$ of variables $x_i$, over some field $k$ of characteristic zero. To lighten the notation, let us abbreviate $\theta_i \defas x_i \partial_i$ and set
\begin{equation}
	\M(\theta_i,\ldots,\theta_N)
	\defas k(\theta_i,\ldots,\theta_N) \tp_{k[\theta_i,\ldots,\theta_N]} \M
	\label{eq:def-alg-Mellin}%
\end{equation}
for the \emph{algebraic Mellin transform} \cite[section~1.2]{LoeserSabbah:EDFdeterminants}. We begin with the finite-dimensionality, which was proven in \cite[Lemme~1.2.2]{LoeserSabbah:EDFdeterminants}:
\begin{lem}
	Let $\M$ denote a holonomic $\WeylT{N}_k$-module. Then, for any $1\leq i \leq N$, its algebraic Mellin transform $\M(\theta_i,\ldots,\theta_N)$ is a holonomic $\WeylT{i-1}_{k(\theta_i,\ldots,\theta_N)}$-module.
	\label{lem:L-S:holonomicity}%
\end{lem}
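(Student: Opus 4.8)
The plan is to prove the statement by induction on the number $N-i+1$ of inverted variables, reducing everything to a single Mellin step. Since the operators $\theta_i,\dots,\theta_N$ commute, the localization at $k[\theta_i,\dots,\theta_N]\setminus\{0\}$ can be performed one variable at a time, so it suffices to prove the following: if $\M$ is a holonomic $\WeylT{N}_k$-module over a characteristic-zero field $k$, then its Mellin transform $\M(\theta_N)\defas k(\theta_N)\tp_{k[\theta_N]}\M$ in the \emph{last} variable is a holonomic $\WeylT{N-1}_{k(\theta_N)}$-module. Applying this repeatedly, with the base field acquiring one further transcendental at each step, then yields the assertion for $\WeylT{i-1}_{k(\theta_i,\dots,\theta_N)}$.

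For the single step I would first make the localization algebraically transparent. Because $\theta_N x_N^{m}=x_N^{m}(\theta_N+m)$ and $\partial_N=x_N^{-1}\theta_N$, the multiplicative set $S\defas k[\theta_N]\setminus\{0\}$ is a two-sided Ore set in $A\defas\WeylT{N}_k$: any element commutes past a polynomial in $\theta_N$ at the cost of shifting its argument. Writing $R\defas k(\theta_N)$ and $B_R\defas\WeylT{N-1}_R$ (the torus Weyl algebra in $x_1,\dots,x_{N-1}$, with $R$ central because $\theta_N$ commutes with $x_{<N}$ and $\partial_{<N}$), the localization is the skew-Laurent ring $A_S=B_R[x_N^{\pm 1};\sigma]$, where $\sigma$ is the identity on $\WeylT{N-1}_k$ and sends $\theta_N\mapsto\theta_N-1$. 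Then $\M(\theta_N)=A_S\tp_A\M$ is a finitely generated $A_S$-module, being the localization of a finitely generated module at an Ore set. Crucially, on $\M(\theta_N)$ the operator $\theta_N$ acts as multiplication by the unit $\theta_N\in R^{\times}$, so $\partial_N=x_N^{-1}\theta_N$ acts invertibly; this invertibility is exactly what makes the $x_N$-direction ``disappear'' in the dimension count.

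It remains to show that $\M(\theta_N)$ is holonomic over the smaller algebra $B_R$, that is, that its characteristic variety has the holonomic dimension $N-1$. I would do this with a good filtration: starting from a good filtration of $\M$ over $A$ for the order filtration (by degree in $\partial_1,\dots,\partial_N$), whose associated graded is supported on the Lagrangian characteristic variety $\mathrm{Ch}(\M)\subset T^{*}\Gm^N=\mathrm{Spec}\,k[x^{\pm 1}][\xi]$ of dimension $N$. Inverting $\theta_N$, whose principal symbol is $x_N\xi_N$, restricts the support to the open locus $\{x_N\xi_N\neq0\}$ and trades the coordinate $x_N\xi_N$ for the transcendental $\theta_N$ adjoined to the base field; fixing its generic value cuts the dimension by one, so that $\mathrm{Ch}_{B_R}(\M(\theta_N))\subset T^{*}\Gm^{N-1}$ has dimension $N-1$ over $R$. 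By Bernstein's inequality for $\WeylT{N-1}_R$ this is the minimal possible dimension, whence $\M(\theta_N)$ is holonomic.

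The main obstacle is the bookkeeping hidden in the last paragraph. Although $\M(\theta_N)$ is finitely generated over $A_S$, finite generation over $B_R$ is \emph{not} automatic, since $A_S=B_R[x_N^{\pm 1};\sigma]$ involves the infinitely many powers $x_N^{m}$. Establishing it --- equivalently, that $\M(\theta_N)$ is generated over $B_R$ by the image of a single bounded step of the filtration --- is precisely the point where holonomicity of $\M$ must be used, via a $b$-function/good-filtration argument controlling the growth of the $\theta_N$-action; this already carries the content of the $N=1$ finiteness (a holonomic module on $\Gm$ has finite-dimensional Mellin transform over $k(\theta_N)$). I would therefore spend most of the effort making the order filtration compatible with the Ore localization at $S$ (whose generators have the non-zero symbol $x_N\xi_N$) and turning the transversality ``$\dim\bigl(\mathrm{Ch}(\M)\cap\{x_N\xi_N=\mathrm{const}\}\bigr)=N-1$'' into an honest Hilbert-polynomial estimate over $B_R$.
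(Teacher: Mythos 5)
Your opening reduction---induct on the number of inverted variables and treat only the Mellin transform in the last variable $\theta_N$, over the successively enlarged base field---is exactly the reduction the paper makes. But the core of the lemma, namely that $\M(\theta_N)$ is holonomic over $B_R\defas\WeylT{N-1}_{k(\theta_N)}$, is not actually proved in your proposal: it is deferred. Two things are missing, and you name them yourself. First, finite generation of $\M(\theta_N)$ over $B_R$ (equivalently, the existence of a good $B_R$-filtration) does not follow from finite generation over the skew-Laurent ring $A_S=B_R[x_N^{\pm 1};\sigma]$, and no mechanism for producing such a filtration from a good filtration of $\M$ is given; the Ore localization at $k[\theta_N]\setminus\set{0}$ is not compatible in any evident way with the order filtration. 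Second, the dimension count ``inverting $\theta_N$ restricts the support to $\set{x_N\xi_N\neq 0}$ and fixing its generic value cuts the dimension by one'' is a heuristic, not an argument: the characteristic variety of $\M(\theta_N)$ lives in $T^{*}\Gm[k(\theta_N)]^{N-1}$, and relating it to a generic fibre of $\mathrm{Ch}(\M)\subset T^{*}\Gm[k]^{N}$ over the coordinate $x_N\xi_N$ is essentially the whole content of the lemma. As you admit, this step ``already carries the content of the $N=1$ finiteness,'' so what you have is a plan with the hard part left open, not a proof.

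The paper closes precisely this gap by a different device: instead of analyzing the localization directly, it identifies $\M(\theta_N)$ as a \emph{direct image}. Extend scalars to $k(\nu)$ and twist, setting $\M x_N^{\nu}\defas \M\tp_k k(\nu)$ with $\partial_N$ acting as $\partial_N+\nu/x_N$; an exact sequence (the image of $\theta_N+\nu$ on $\M\tp_k k[\nu]$ is the kernel of the evaluation $\sum_j \nu^j m_j\mapsto\sum_j(-\theta_N)^j m_j$, and one then tensors with the flat module $k(\nu)$) yields
\begin{equation*}
	\M(\theta_N)\;\isomorph\; \M x_N^{\nu}\big/\partial_N\!\left(\M x_N^{\nu}\right) \;=\; \pi_{\ast}\!\left(\M x_N^{\nu}\right),
\end{equation*}
the push-forward along the projection $\pi\colon\Gm[k(\nu)]^{N}\surjects\Gm[k(\nu)]^{N-1}$ forgetting $x_N$. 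Holonomicity of the twist $\M x_N^{\nu}$ is checked by an explicit good filtration, $\Gamma'_j\defas (x_N^{-j}\Gamma_{2j}\M)\tp_k k(\nu)$, whose dimension grows like $j^{N}$, and then one invokes the standard theorem that direct images of holonomic modules are holonomic. That theorem is exactly what substitutes for the Hilbert-polynomial estimate you postponed; if you wish to complete your Ore-localization route instead, you would in effect be reproving it in this special case.
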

\begin{cor}
	\label{cor:L-S:findim}%
	The full Mellin transform $\M(\theta_1,\ldots,\theta_N)$ is a finite-dimensional vector space over the field $k(\theta_1,\ldots,\theta_N)$.
\end{cor}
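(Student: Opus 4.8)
The plan is to obtain the corollary directly from lemma~\ref{lem:L-S:holonomicity} by specializing to the extreme index $i=1$. Instantiating that lemma at $i=1$ tells us that the full algebraic Mellin transform, $\M(\theta_1,\ldots,\theta_N) = k(\theta_1,\ldots,\theta_N)\tp_{k[\theta_1,\ldots,\theta_N]}\M$, is a holonomic module over $\WeylT{0}_{k(\theta_1,\ldots,\theta_N)}$, the algebra of differential operators on the zero-dimensional torus. Thus no iteration of the lemma is required: the single case $i=1$ already performs the transform in all $N$ variables simultaneously, leaving a module over a Weyl algebra with no variables at all.

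It then remains only to interpret holonomicity in zero variables. Writing $k' \defas k(\theta_1,\ldots,\theta_N)$, I would note from the defining formula~\eqref{eq:WeylT} that $\WeylT{0}_{k'}$ carries no coordinates $x_i$ and no derivations $\partial_i$, so it collapses to the base field, $\WeylT{0}_{k'}=k'$. A module over it is therefore just a $k'$-vector space, and the relevant convention---already used in section~\ref{sec:Euler-NoM}, where holonomic modules over the point $\Aff^0$ are identified with finite-dimensional vector spaces---says precisely that such a module is holonomic if and only if it has finite dimension over $k'$. Putting the two observations together yields $\dim_{k'}\M(\theta_1,\ldots,\theta_N)<\infty$, as claimed.

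The entire weight of the statement rests on lemma~\ref{lem:L-S:holonomicity}, so there is no serious obstacle at the level of the corollary itself; the derivation is a clean specialization followed by a definitional reduction. The only points that genuinely need checking are the harmless identification $\WeylT{0}_{k'}=k'$ and the standard fact that, on a point, holonomicity is the same as finite-dimensionality. Both are immediate from the conventions set up earlier, so the corollary will follow in a few lines once the lemma is in hand.
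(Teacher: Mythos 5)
Your proof is correct and matches the paper's (implicit) argument exactly: the corollary is obtained by specializing lemma~\ref{lem:L-S:holonomicity} to $i=1$ and using the identification of $\WeylT{0}_{k(\theta_1,\ldots,\theta_N)}$ with the field itself, so that holonomicity over zero variables means precisely finite-dimensionality. The paper offers no separate proof for the corollary because this specialization is the intended derivation, and your two checkpoints (the collapse $\WeylT{0}_{k'}=k'$ and the point-convention for holonomicity) are exactly the ones that make it immediate.
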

\begin{proof}[Proof of lemma~\ref{lem:L-S:holonomicity}]
	Since $\M(\theta_i,\ldots,\theta_N) = [\M(\theta_{i+1},\ldots,\theta_N)](\theta_i)$, it suffices (by induction over $i$) to consider the case $i=N$.
	Introducing a new indeterminate $\nu$, we extend the scalars from $k$ to $k[\nu]$ to obtain a $\WeylT{N}_{k[\nu]}$-module $\M[\nu] \defas \M \tp_k k[\nu]$. It sits in an exact sequence
	\begin{equation*}
		0 \longrightarrow \M[\nu] \xrightarrow{\partial_N+\nu/x_N} \M[\nu] \xrightarrow{\sum_j \nu^j m_j \mapsto \sum_j (- x_N \partial_N)^j m_j} \mathfrak{M} \longrightarrow 0
	\end{equation*}
	of $\WeylT{N-1}_{k[\nu]}$-modules, where $\mathfrak{M} = \M$ denotes the initial module $\M$ with the action of $\nu$ defined as $\nu m \defas -x_N \partial_N m$.
	Since $k(\nu)$ is flat, this sequence remains exact after tensoring with $k(\nu)$ over $k[\nu]$. Through identification of $\nu$ with $-\theta_N$, we conclude that
	\begin{equation*}
		\frac{\M\tp_k k(\nu)}{(\partial_N + \nu/x_N) (\M \tp_k k(\nu))}
		\isomorph \mathfrak{M} \tp_{k[\nu]} k(\nu)
		\isomorph 
		\M \tp_{k[\theta_N]} k(\theta_N)
		= \M(\theta_N)
	\end{equation*}
	are isomorphic as $\WeylT{N-1}_{k(\nu)}$-modules. The left hand side is the quotient $\M x_N^{\nu}/\partial_N \M x_N^{\nu}$ of the $\WeylT{N}_{k(\nu)}$-module $\M x_N^{\nu} \defas \M \tp_k k(\nu)$ defined by the original action of $\WeylT{N-1}_{k}$ and $x_N^{\pm 1}$ on $\M$, but twisting the operator $\partial_N$ to act like $\partial_N+ \nu/x_N$.\footnote{%
		This just encodes the natural action $\partial_N x_N^{\nu} m = x_N^{\nu}(\partial_N + \nu/x_N) m$ on products of elements $m$ of $\M$ with the function $x_N^{\nu}$---hence the suggestive notation $\M x_N^{\nu}$.
}
	The holonomicity of $\M$ implies that $\M x_N^{\nu}$ is also holonomic,\footnote{%
		Given a good filtration $\Gamma_{\bullet}$ of $\M$, $\Gamma_{j}' (\M x_N^{\nu}) \defas (x_N^{-j}\Gamma_{2j} \M) \tp_k k(\nu)$ defines a filtration of $\M x_N^{\nu}$ with $\dim_{k(\nu)} \Gamma_j' (\M x_N^{\nu}) \leq \dim_k \Gamma_{2j} \M \leq c \cdot (2j)^N$ for some $c<\infty$.
	}
	and hence its push-forward $\pi_{\ast} (\M x_N^{\nu}) = \M x_N^{\nu} / \partial_N (\M x_N^{\nu}) \isomorph \M(\theta_N)$, with respect the projection $\pi\colon \Gm[k(\nu)]^N \surjects \Gm[k(\nu)]^{N-1}$ that forgets the last coordinate, is also holonomic.
\end{proof}
Now we want to relate this dimension to the de Rham complex
$
	\DR(\M)
	=
	\Koszul{\M}{\partial}
$,
which is a special case of the Koszul complex:
\begin{defn}
	For commuting $k$-linear endomorphisms $s=(s_1,\ldots,s_N)$ of a $k$-vector space $\M$, let
\begin{equation}
	\Koszul{\M}{s_1,\ldots,s_N}
	\defas \big( \Lambda^{\bullet} k \tp_k \M[N], \dd \big)
	\label{eq:Koszul}%
\end{equation}
	denote the Koszul complex with $r$-forms sitting in degree $r-N$. The $r-N$ cochains
\begin{equation*}
	\Koszul[r-N]{\M}{s} 
	= \Lambda^{r} \M 
	= \bigoplus _{\abs{I}=r} \uv{I} \tp \M
%	\ni \sum_{\abs{I}=r} \uv{I} \tp m_I
\end{equation*}
have natural coordinates with respect to the basis $\uv{I} \defas \uv{i_1} \wedge \cdots \wedge \uv{i_r} \in \Lambda^{r} k$ indexed by $r$-sets $I=\set{i_1<\cdots<i_r}$. The cochain map reads $\dd (\uv{I} \tp m) = \sum_{i \notin I} (\uv{i} \wedge \uv{I}) \tp s_i m$.
\end{defn}
\begin{rem}\label{rem:theta-qis}
Since $\theta_j \left(\prod_{i\in I} x_i \right)m = \left(\prod_{i \in I\cup \set{j}} x_i\right) \partial_j m$ (for $j \notin I$), the rule
\begin{equation}
	\Koszul{\M}{\partial} \longrightarrow \Koszul{\M}{\theta},
	\quad
	\uv{I} \tp m \mapsto \uv{I} \tp x^I m
	\label{eq:theta-qis}%
\end{equation}
defines a cochain map. It has an inverse, defined by $\uv{I} \tp m \mapsto \uv{I} \tp x^{-I}m$. We thus conclude that $\DR(\M)=\Koszul{\M}{\partial}$ and $\Koszul{\M}{\theta}$ are quasi-isomorphic and therefore share the same Euler characteristic.
\end{rem}

We prove theorem~\ref{thm:L-S} by an induction over the number of variables. The base case is
\begin{thm}[{\cite[{Th\'{e}or\`{e}me~1}]{LoeserSabbah:IrredTore}}]
	\label{thm:L-S:1}%
	If $\M$ denotes a holonomic $\WeylT{1}_k$-module, then 
	\begin{equation}
		\dim_{k(\theta_1)} \M(\theta_1) 
		= \chi(\M)
		\defas \dim_k \frac{\M}{\partial_1 \M} - \dim_k \ker(\partial_1)
		.
	\end{equation}
\end{thm}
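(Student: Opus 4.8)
The plan is to reduce the whole statement to the computation of the \emph{index} of the single operator $\theta=\theta_1=x_1\partial_1$ acting on $\M$, and then to carry out that computation by dévissage according to the $k[\theta]$-module structure. First I would invoke remark~\ref{rem:theta-qis}: since $\DR(\M)=\Koszul\M{\partial_1}$ and $\Koszul\M{\theta}$ are quasi-isomorphic, they share the same Euler characteristic, so
\begin{equation*}
	\chi(\M)=\dim_k\frac{\M}{\theta\M}-\dim_k\ker\bigl(\theta\colon\M\to\M\bigr).
\end{equation*}
Thus the theorem becomes the identity $\operatorname{ind}(\theta)=N$, where $N\defas\dim_{k(\theta)}\M(\theta)$ and $\operatorname{ind}(\theta)\defas\dim_k(\M/\theta\M)-\dim_k\ker\theta$. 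Holonomicity of $\M$ guarantees, via the de Rham finiteness already used in \eqref{eq:def-Euler}, that both $\ker\theta$ and $\M/\theta\M$ are finite-dimensional, so $\operatorname{ind}(\theta)$ is well defined.

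Next I would observe that both quantities are additive in short exact sequences of holonomic $\WeylT{1}_k$-modules. For $N$ this is the exactness of the algebraic Mellin transform \eqref{eq:def-alg-Mellin}, i.e.\ flatness of $k(\theta)$ over $k[\theta]$. For the index it follows from the six-term exact sequence that the snake lemma produces from multiplication by $\theta$ on $0\to\M'\to\M\to\M''\to0$, giving $\operatorname{ind}(\theta|_{\M})=\operatorname{ind}(\theta|_{\M'})+\operatorname{ind}(\theta|_{\M''})$ as soon as the four outer spaces are finite-dimensional. I would then split $\M$ by its $k[\theta]$-torsion submodule $T$ (elements killed by some nonzero $p(\theta)$); since $x_1$ and $\partial_1$ shift the $\theta$-support by $\pm1$, $T$ is a $\WeylT{1}_k$-submodule, so $0\to T\to\M\to\bar\M\to0$ is exact with $\bar\M\defas\M/T$ torsion-free. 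Decomposing $T=\bigoplus_{\lambda}T_{[\lambda]}$ over the closed points $\lambda$ of $\Aff^1_\theta$, multiplication by $\theta$ is invertible on $T_{[\lambda]}$ for $\lambda\neq0$ and restricts to an endomorphism of the finite-dimensional space $T_{[0]}$ (finite by holonomicity), hence $\operatorname{ind}(\theta|_{T})=0$; and $T(\theta)=0$. So the torsion part contributes $0$ to both sides, and since $\M(\theta)\isomorph\bar\M(\theta)$ we are reduced to proving $\operatorname{ind}(\theta|_{\bar\M})=N$ for the torsion-free module $\bar\M$, where $\ker\theta=0$ is automatic.

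The hard part will be showing $\dim_k(\bar\M/\theta\bar\M)=N$. One cannot appeal to a structure theorem, because $\bar\M$ need not be finitely generated over $k[\theta]$: the irregular connection $k[x^{\pm1}]e$ with $\partial_1 e=e$ is already torsion-free with $N=1$, is not $k[\theta]$-finite, yet still has one-dimensional $\theta$-cokernel. The approach I would take is to embed $\bar\M$ into its Mellin transform $W\defas\bar\M(\theta)$, a difference module of $k(\theta)$-dimension $N$ on which $x_1$ acts as a $\sigma$-semilinear automorphism for the shift $\sigma\colon\theta\mapsto\theta+1$. Picking elements whose images form a $k(\theta)$-basis yields a free lattice $\Lambda\cong k[\theta]^N\subseteq\bar\M$ with $\dim_k(\Lambda/\theta\Lambda)=N$; because $\Lambda$ and $\bar\M$ span the same $W$, the quotient $\bar\M/\Lambda$ is $k[\theta]$-torsion, and it suffices to check $\operatorname{ind}(\theta|_{\bar\M/\Lambda})=0$. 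The genuine obstacle is that the matrix of $x_1$ in a basis of $\Lambda$ may have poles at integer values of $\theta$ (``resonances''), so that $\bar\M/\Lambda$ could be infinite-dimensional near $\theta=0$; controlling this is exactly where the original argument of Loeser and Sabbah was incomplete. I would resolve it by adapting the lattice to the origin, choosing $\Lambda$ stable under both $x_1$ and $\partial_1$ near $\theta=0$ using $x_1\partial_1=\theta$ and $\partial_1 x_1=\theta+1$, so that the contributions of the $\theta$-translates to $\ker\theta$ and to $\M/\theta\M$ telescope and cancel, leaving $\dim_k(\bar\M/\theta\bar\M)=\dim_k(\Lambda/\theta\Lambda)=N$. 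Combined with additivity and the vanishing on $T$, this yields $\chi(\M)=\operatorname{ind}(\theta)=N=\dim_{k(\theta)}\M(\theta)$.
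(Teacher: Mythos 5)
Your reduction to the index of $\theta_1$ via remark~\ref{rem:theta-qis}, the additivity of both sides in short exact sequences, and the splitting $0\to T\to\M\to\bar\M\to 0$ along the $k[\theta_1]$-torsion are all sound (including the observation that $T$ is a $\WeylT{1}_k$-submodule). The proof breaks down exactly at the point you yourself call ``the genuine obstacle'': the assertion $\operatorname{ind}(\theta_1|_{\bar\M/\Lambda})=0$ carries the entire content of the theorem for torsion-free modules, and the proposed remedy---``choosing $\Lambda$ stable under both $x_1$ and $\partial_1$ near $\theta=0$ \dots\ so that the contributions \dots\ telescope and cancel''---is not an argument. Two concrete reasons why it cannot be waved through. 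First, the statement you need is \emph{false} for general torsion $k[\theta_1]$-modules with finite-dimensional kernel and cokernel of $\theta_1$: the Pr\"ufer module $k[\theta_1,\theta_1^{-1}]/k[\theta_1]$ is $(\theta_1)$-primary torsion, has one-dimensional $\ker\theta_1$ and surjective $\theta_1$, hence index $-1$. So the holonomicity of $\bar\M$ must enter this final step in an essential way, and your sketch gives no mechanism for feeding it in. Second, the adapted lattice you postulate will in general not exist: a finitely generated $k[\theta_1]$-submodule stable under both $x_1$ and $\partial_1$ is very restrictive---already in $\M=k[x_1^{\pm 1}]$ the smallest $x_1$- and $\partial_1$-stable submodule containing $1$ is $k[x_1]\isomorph\bigoplus_{n\geq 0}k[\theta_1]/(\theta_1-n)$, which is \emph{not} finitely generated over $k[\theta_1]$---and ``stable near $\theta=0$'' is given no precise meaning. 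A smaller but real gap of the same kind: ``$T_{[0]}$ finite by holonomicity'' also requires proof, since de Rham finiteness alone does not imply it (Pr\"ufer again); one genuinely needs finite length, e.g.\ via the observation that the $\WeylT{1}_k$-submodule generated by $T_{[0]}$ is $\bigoplus_{n\in\Z}x_1^n T_{[0]}$, whose $\WeylT{1}_k$-submodules are in bijection with the $k[\theta_1]$-submodules of $T_{[0]}$.

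For comparison, the paper closes precisely this resonance problem with a concrete device which is the rigorous version of your ``telescope and cancel''. It takes a cyclic generator of $\M$, extends it to a Mellin basis spanning a finitely generated $k[\theta_1]$-module $\Nmod$ with $\Nmod(\theta_1)=\M(\theta_1)$, filters $\M=\sum_j x_1^j\Nmod$ by $\Nmod_j\defas\sum_{\abs{i}\leq j}x_1^i\Nmod$, and chooses $b$ with $b(\theta_1)\Nmod_1\subseteq\Nmod$; conjugating by $x_1^{\pm j}$ gives $b(\theta_1+j)b(\theta_1-j)\Nmod_{j+1}\subseteq\Nmod_j$, and since the roots of these shifted polynomials are translated away from the origin, B\'ezout ($1=u_j b_j+v_j\theta_1$ for $j>j_0$) shows that $\ker\theta_1$ and $\M/\theta_1\M$ are already computed by the finitely generated stage $\Nmod_{j_0}$. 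The theorem then reduces to the identity $\dim_{k(\theta_1)}\Nmod_{j_0}(\theta_1)=\chi(\Koszul{\Nmod_{j_0}}{\theta_1})$ for finitely generated $k[\theta_1]$-modules, which follows by d\'evissage from the free rank-one case. If you attempt to complete your lattice argument, you will need exactly such a root-shifting statement relating $\Lambda$ to its $x_1^{\pm 1}$-translates, at which point you will have reproduced the paper's proof rather than found an alternative to it.
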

\begin{proof}
	We can pick a generator of $\M$ (by holonomicity, $\M$ is cyclic as a $\WeylT{1}_k$-module) and extend it to a (finite) basis of $\M(\theta_1)$ as a vector space over $k(\theta_1)$, due to corollary~\ref{cor:L-S:findim}. Let $\Nmod \subset \M$ denote the $k[\theta_1]$-module generated by such a basis, hence $\Nmod(\theta_1)=\M(\theta_1)$. By construction, $\M = \Weyl{1}_k \Nmod = \sum_{j\in \Z} x_1^j \Nmod$ is exhaustively filtered by the finitely generated $k[\theta_1]$-modules $\Nmod_j \defas \sum_{i=-j}^j x_1^i \Nmod$.\footnote{%
		Due to $\theta_1 x_1^{i} \Nmod = x_1^i (\theta_1+i) \Nmod \subseteq x_1^i\Nmod \subseteq \Nmod_j$, indeed $\Nmod_j$ is a $k[\theta_1]$-module.
	}

	Since $\M(\theta_1)=\Nmod(\theta_1) = \Nmod_1(\theta_1)$ is finitely generated, there is a non-zero polynomial $b(\theta_1)\in k[\theta_1]$ such that $b(\theta_1) \Nmod_1 \subseteq \Nmod$. Therefore, using $(\theta_1 -1)x_1 = x_1 \theta_1$,
	\begin{equation*}
		b(\theta_1 \mp j) x_1^{\pm (j+1)} \Nmod
		= x_1^{\pm j} b(\theta_1) x_1^{\pm 1} \Nmod
		\subseteq
		x_1^{\pm j} b(\theta_1) \Nmod_1
		\subseteq
		x_1^{\pm j} \Nmod
		\subseteq \Nmod_j
	\end{equation*}
	shows that the polynomials $b_{j+1}(\theta_1) \defas b(\theta_1+j)b(\theta_1-j) \in k[\theta_1]$ have the property $b_{j+1}(\theta_1)\Nmod_{j+1} \subseteq \Nmod_j$. Let $Z=b^{-1}(0)$ denote the zeroes of $b_1=b^2$, then note that the zeroes of $b_{j+1}$ are $(Z+j) \cup (Z-j)$ and get pushed away from zero for increasing $j$. In particular, there exists some $j_0 \in \N$ such that  $b_{j}(0) \neq 0$ for all $j > j_0$. For each such value of $j$, we can find $u_j,v_j \in k[\theta_1]$ such that $1 = u_j(\theta_1) b_j(\theta_1) + v_j(\theta_1) \theta_1$; then
	\begin{equation*}
		m = 1\cdot m
		= u_j(\theta_1) b_j(\theta_1) m + v_j(\theta_1) \theta_1 m
		\in \Nmod_{j-1} + v_j(\theta_1) \theta_1 m
	\end{equation*}
	holds for every $m \in \Nmod_j$. This proves $\ker(\theta_1) \cap \Nmod_j \subseteq \Nmod_{j-1}$ for all $j>j_0$, and therefore $\ker(\theta_1) \subseteq \Nmod_{j_0}$. Similarly, we conclude $\M/ (\theta_1 \M) \isomorph \Nmod_{j_0} / (\Nmod_{j_0} \cap \theta_1 \M$). But given some $m =\theta_1 x \in \Nmod_{j_0}$ with $x \in \Nmod_j$,
	$
		x = u_j(\theta_1) b_j(\theta_1) x + v_j(\theta_1) m
		\in \Nmod_{j-1} + \Nmod_{j_0}
	$
	proves that $\Nmod_{j_0} \cap \theta_1(\Nmod_j) = \Nmod_{j_0} \cap \theta_1(\Nmod_{j-1})$ for all $j>j_0$. In consequence, we have proven that
	\begin{equation*}
		\ker(\partial_1)
		= 
		\ker (\theta_1) = \ker\left(\restrict{\theta_1}{\Nmod_{j_0}}\right)
		\quad\text{and}\quad
		\frac{\M}{\partial_1(\M)}
		\isomorph
		\frac{\M}{\theta_1(\M)}
		\isomorph
		\frac{\Nmod_{j_0}}{\theta_1(\Nmod_{j_0})}
		;
	\end{equation*}
	in other words, the Koszul complexes $\DR(\M)=\Koszul{\M}{\partial_1}$ and $\Koszul{\Nmod_{j_0}}{\theta_1}$ are quasi-isomorphic (see remark~\ref{rem:theta-qis}). The statement of the theorem thus reduces to the identity
	\begin{equation*}
		\dim_{k(\theta_1)} \Nmod_{j_0}(\theta_1) 
		= \chi\left(\Koszul{\Nmod_{j_0}}{\theta_1}\right)
	\end{equation*}
	for a finitely generated $k[\theta_1]$-module $\Nmod_{j_0}$. Since both sides are additive under short exact sequences, this claim reduces (via a finite free resolution) to the case of a free rank one $k[\theta_1]$-module, i.e.\ $k[\theta_1]$, which is clear: $k[\theta_1](\theta_1) = k(\theta_1)$ is of dimension one over $k(\theta_1)$, while $\ker(\theta_1) = \set{0}$ is trivial and $k[\theta_1]/(\theta_1 k[\theta_1]) = k$ is one-dimensional.
\end{proof}
With this starting point, we can now prove theorem~\ref{thm:L-S} by induction. In fact, the higher dimensional case can be seen as a straightforward corollary of the univariate case above. In contrast to \cite{LoeserSabbah:IrredToreII}, our demonstration avoids any reference to higher-dimensional lattices.
\begin{proof}[Proof of theorem~\ref{thm:L-S}]
	Let $\M$ denote a holonomic $\WeylT{N}_k$-module, and suppose we have proven theorem~\ref{thm:L-S} for all holonomic modules in less than $N$ variables. In particular, we may invoke the claim for the $\WeylT{N-1}_k$-modules
	$\ker \partial_N$ and $\M/ \partial_N \M$,
	as these are holonomic because they are the cohomologies of the complex
	\begin{equation*}
		0 \longrightarrow \M \xrightarrow{\partial_N} \M \longrightarrow 0
	\end{equation*}
	which computes the push-forward of $\M$ along the projection $\pi\colon \Gm[k]^N \longrightarrow \Gm[k]^{N-1}$ that forgets the last coordinate. So we already know that 
	\begin{equation*}
		\chi\left(\frac{\M}{\partial_N \M} \right)
		=\dim_{k'} \left( \frac{\M}{\partial_N \M}(\theta') \right)
		=\dim_{k'} \frac{\M'}{\partial_N \M'}
	\end{equation*}
	where $k'\defas k(\theta')$ and $\M' \defas \M(\theta')$ with $\theta'\defas(\theta_1,\ldots,\theta_{N-1})$. 
	Analogously, $\chi(\ker \partial_N) = \dim_{k'} \ker(\partial_N')$, where $\partial_N'$ denotes the action of $\partial_N$ on $\M'$. In conclusion, we know that
	\begin{align*}
		\chi\left( \M/(\partial_N \M) \right)
		- \chi\left( \ker\partial_N \right)
		&=
		\dim_{k'} \M'/(\partial_N \M')
		- \dim_{k'} \ker(\partial_N')
		= \chi(\M')
		\\
		&= \dim_{k'(\theta_N)} \M'(\theta_N)
		=\dim_{k(\theta)} \M(\theta),
	\end{align*}
	where we recognized the first line as the Euler characteristic of the de Rham complex of the $\WeylT{1}_{k'}$-module $\M'$ and applied theorem~\ref{thm:L-S:1} to get to the last line ($\M'=\M(\theta')$ is holonomic by lemma~\ref{lem:L-S:holonomicity}). So we only need to show that the left hand side is equal to $\chi(\M)$.
	
	This is well-known and follows from the Grothendieck spectral sequence.\footnote{%
		Let $\pi^{N}\colon \Gm[k]^N\longrightarrow \set{\text{pt}}$ denote the projection to a point, such that $\pi^N = \pi^{N-1} \circ \pi$.
		The identity $\pi^{N}_+ = \pi^{N-1}_+ \circ \pi_{+}$ of the corresponding push-forwards in the derived category of $\WeylT{N}_k$-modules implies that $\chi(\DR(\M))=\chi(\pi^N_+(\M)) = \sum_i (-1)^i \chi( H^i (\pi_+ \M))$, where $H^{-1}(\pi_+ \M) = \ker \partial_N$ and $H^0(\pi_+ \M) = \M/(\partial_N \M)$.
	}
	Alternatively, an elementary way to obtain the identity $\chi(\M/(\partial_N\M))-\chi(\ker\partial_N)=\chi(\M)$ is given by the long exact sequence 
	\begin{equation}
		\cdots \rightarrow
		H^{i+1}(\DR(\ker \partial_N)) \rightarrow
		H^i(\DR(\M)) \rightarrow
		H^i(\DR(\M/\partial_N \M)) \rightarrow
		H^{i+2}(\DR(\ker \partial_N))
		\rightarrow\cdots
		\label{eq:les-relative-dR}%
	\end{equation}
	in de Rham cohomology \cite[Ch.~2, Proposition~4.13]{Bjork:RingsofDop}.
\end{proof}
For completeness, let us demonstrate \eqref{eq:les-relative-dR}, by following the standard construction in the proof of Kashiwara's theorem: First note that
	\begin{equation*}
		\Nmod \defas \setexp{m\in\M}{\partial_N^k m = 0 \quad\text{for some}\quad k>0}
		\subseteq \M
	\end{equation*}
	defines a $\WeylT{N}_k$-submodule of $\M$.\footnote{%
		One only needs to check that $x_N^{\pm 1} \Nmod \subseteq \Nmod$, which follows from $\partial_N^{k+1} x_N^{\pm 1} m = \big[(k+1)(\partial_N x_N^{\pm 1}) + x_N^{\pm 1} \partial_N) \partial_N^k m = 0$ whenever $\partial_N^k m =0$.
	}
	The exact sequence
$
		0 \rightarrow \Nmod \rightarrow \M \rightarrow\! \M/\Nmod \rightarrow 0
$
	of holonomic $\WeylT{N}_{k}$-modules induces a sequence of de Rham complexes,
$
		0 \rightarrow \DR(\Nmod) \rightarrow \DR(\M) \rightarrow \DR(\M/\Nmod) \rightarrow 0
$,
which is also exact ($\M \mapsto \DR(\M)=\Lambda^{\bullet} k \tp_k \M$ is exact by flatness of $\Lambda^{\bullet}k$).
	Hence we get a long exact sequence in cohomology:
	\begin{equation}
		\cdots \rightarrow
		H^i(\DR(\Nmod))\rightarrow
		H^i(\DR(\M)) \rightarrow
		H^i(\DR(\M/\Nmod)) \rightarrow
		H^{i+1}(\DR(\Nmod))
		\rightarrow \cdots
		\label{eq:les-relative-proof}%
	\end{equation}
	The key observation now is that $\partial_N$ is injective on $\M/\Nmod$ and surjective on $\Nmod$.\footnote{%
		The first statement is clear since $\ker\partial_N \subseteq \Nmod$. The second claim follows from the identity $x^k\partial^k = (\partial x -k)x^{k-1}\partial^{k-1} = \cdots = \prod_{i=1}^k (\partial x -i)$, which implies $0=x_N^k \partial_N^k m = (-1)^k (k!) m \mod \partial_N \Nmod$ whenever $\partial_N^k m = 0$.
	}
	We thus get short exact sequences
	\begin{align*}
		&
		0 \longrightarrow \ker\left( \partial_N \right) \injects \Nmod \xrightarrow{\partial_N} \Nmod \longrightarrow 0
		\quad\text{and}
		\\
		&
		0 \longrightarrow \M/\Nmod \xrightarrow{\partial_N} \M/\Nmod \surjects \M/\partial_N \M \longrightarrow 0
	\end{align*}
	of $\WeylT{N-1}_k$-modules. The induced short exact sequences of de Rham complexes provide quasi-isomorphisms $\DR(\ker \partial_N) \qiso \DR(\Nmod)[1]$ and $\DR(\M/\partial_N\M) \qiso \DR(\M/\Nmod)$, because the de Rham complex of a $\WeylT{N}_k$-module $\Nmod$ is the mapping cone of the map $\partial_N\colon \Koszul{\Nmod}{\partial'}\longrightarrow \Koszul{\Nmod}{\partial'}$.
	Hence we obtain \eqref{eq:les-relative-dR} from \eqref{eq:les-relative-proof} due to
	\begin{equation*}
		H^i(\DR(\Nmod)) \isomorph H^{i+1}(\DR(\ker \partial_N))
		\quad\text{and}\quad
		H^i(\DR(\M/\Nmod)) \isomorph H^i(\DR(\M/(\partial_N \M)))
		.
	\end{equation*}
To clarify this final step, first note that separating $\partial_N$ from $\partial'\defas (\partial_1,\ldots,\partial_{N-1})$ yields an isomorphism of $k$-vector spaces
\begin{align*}
	\Phi\colon
	\Koszul[\bullet-N]{\Nmod}{\partial'}
	\oplus \Koszul[\bullet-1-N]{\Nmod}{\partial'}
	&\stackrel{\isomorph}{\longrightarrow}
	\Koszul[\bullet-N]{\Nmod}{\partial}
%	= \bigoplus_{\substack{\abs{I}=r\\ N \notin I}} \uv{I} \tp \Nmod
%	\oplus \bigoplus_{\substack{\abs{I}=r \\N \in I}} \uv{I} \tp \Nmod
	= \bigoplus_{ N \notin I} \uv{I} \tp \Nmod
	\oplus \bigoplus_{N \in I} \uv{I} \tp \Nmod
	\\
%	(\uv{I} \tp m_I) \oplus \mathrlap{(\uv{J} \tp m_J)}
%	\phantom{\Lambda^{r-1} \Koszul{\Nmod}{\partial'}}
%	&\ \mapsto\ 
%	(\uv{I} \tp m_I) \oplus (\uv{N} \wedge \uv{J} \oplus m_J)
	x \oplus y
	\phantom{\Lambda^{r-1} \Koszul{\Nmod}{\partial'}}
	&\ \mapsto\ 
	x \oplus (\uv{N} \wedge y)
	.
\end{align*}
In this representation, the differential is given by 
\begin{equation*}
%	\dd\Phi(e_I \tp m_I,e_J\tp m_J)
%	= \Phi(\dd(e_I \tp m_i), \uv{N}\wedge \uv{I} \tp (\partial_N m_I) -\dd(e_J \tp m_J))
	\Phi^{-1} \left( \dd\Phi(x \oplus y) \right)
	= {\dd}'(x) \oplus (\partial_N x -{\dd}'(y))
	,
\end{equation*}
which is known as the mapping cone of $\partial_N\colon \Koszul{\Nmod}{\partial'} \longrightarrow \Koszul{\Nmod}{\partial'}$. Here we denote by ${\dd}'$ the differential of $\Koszul{\Nmod}{\partial'}$. 

If $\partial_N$ is surjective, we can find $x$ with $\partial_N x = y$ and hence $\dd \Phi(x\oplus 0)=\Phi({\dd}'x \oplus y)$ for every $y$. Therefore, every element of $\Koszul{\Nmod}{\partial}$ has a representative of the form $\Phi(x \oplus 0)$, modulo exact forms. But such a form is closed, $\dd \Phi(x\oplus 0)=0$, if and only if $x \in \ker \partial_N \cap \ker {\dd}'$.
\begin{cor}\label{cor:Koszul-qis}%
	If $\partial_N$ is surjective, then $\Koszul{\Nmod}{\partial}$ and $\Koszul{\ker\partial_N}{\partial'}[1]$ are quasi-isomorphic.
	If $\partial_N$ is injective, then $\Koszul{\Nmod}{\partial}$ and $\Koszul{\Nmod/\partial_N \Nmod}{\partial'}$ are quasi-isomorphic.
\end{cor}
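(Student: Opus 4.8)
The plan is to read off both quasi-isomorphisms directly from the mapping-cone identification obtained just above: under the isomorphism $\Phi$ of the previous paragraph, $\Koszul{\Nmod}{\partial}$ is precisely the mapping cone of the chain endomorphism
\begin{equation*}
	\partial_N \colon \Koszul{\Nmod}{\partial'} \longrightarrow \Koszul{\Nmod}{\partial'}.
\end{equation*}
That $\partial_N$ really is a chain map is the commutation $\commu{\partial_i}{\partial_N}=0$ for $i<N$, and the same relations show that $\partial'$ preserves both the submodule $\ker\partial_N\subseteq\Nmod$ and the image $\partial_N\Nmod\subseteq\Nmod$. Hence, computing kernel and cokernel of $\partial_N$ termwise on the Koszul spaces (each a finite direct sum of copies of $\Nmod$) yields the subcomplex $\Koszul{\ker\partial_N}{\partial'}$ and the quotient complex $\Koszul{\Nmod/\partial_N\Nmod}{\partial'}$, both carrying the unchanged differential $\partial'$. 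With these identifications, each statement is nothing but the standard homological behaviour of the mapping cone of an epimorphism, respectively a monomorphism, of complexes.

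For the surjective case I would simply formalise the representative computation already displayed: since $\partial_N$ is surjective on $\Nmod$, it is degreewise surjective on $\Koszul{\Nmod}{\partial'}$, so every class of $\Koszul{\Nmod}{\partial}$ has a representative $\Phi(x\oplus 0)$, which is closed exactly when $x\in\ker\partial_N$ and ${\dd}'x=0$, and two such are cohomologous precisely when their $x$-parts differ by ${\dd}'$ of an element of $\ker\partial_N$. The assignment $[x]\mapsto[\Phi(x\oplus 0)]$ thus identifies the two cohomologies, and invoking the general fact that the cone of a degreewise surjection is quasi-isomorphic to the shift $[1]$ of its kernel complex gives $\Koszul{\Nmod}{\partial}\qiso\Koszul{\ker\partial_N}{\partial'}[1]$.

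For the injective case I would argue more cheaply through the short exact sequence of complexes
\begin{equation*}
	0 \longrightarrow \Koszul{\Nmod}{\partial'} \xrightarrow{\partial_N} \Koszul{\Nmod}{\partial'} \longrightarrow \Koszul{\Nmod/\partial_N\Nmod}{\partial'} \longrightarrow 0,
\end{equation*}
together with the general fact that the mapping cone of a monomorphism of complexes is quasi-isomorphic to its cokernel: concretely, the cone's long exact sequence and that of the displayed short exact sequence both feature the induced maps $H^{\bullet}(\partial_N)$, so a five-lemma comparison identifies the two cohomologies. Since the cone is $\Koszul{\Nmod}{\partial}$ and the cokernel is $\Koszul{\Nmod/\partial_N\Nmod}{\partial'}$, this yields the second quasi-isomorphism, now with no shift.

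I expect the only genuine care to lie in the shift bookkeeping---ensuring that the $[1]$ appears in the surjective case but is absent in the injective one---and in confirming that the termwise kernels and cokernels of $\partial_N$ on the direct sums of copies of $\Nmod$ are indeed the asserted Koszul complexes with differential $\partial'$. Both points are formal consequences of $\commu{\partial_i}{\partial_N}=0$, so beyond matching the mapping-cone conventions there is no real obstacle; the entire content has effectively been reduced to the cone identification established before the statement.
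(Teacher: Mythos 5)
Your proof is correct and takes essentially the same route as the paper: both read the two quasi-isomorphisms off the mapping-cone identification $\Phi$ established immediately beforehand, using that $\commu{\partial_i}{\partial_N}=0$ makes $\partial_N$ a chain map whose termwise kernel and cokernel on $\Koszul{\Nmod}{\partial'}$ are again the Koszul complexes $\Koszul{\ker\partial_N}{\partial'}$ and $\Koszul{\Nmod/\partial_N\Nmod}{\partial'}$. The only difference is presentational: the paper treats the surjective case by hand with representatives and leaves the injective case as an exercise, whereas you invoke the standard facts that the cone of a degreewise surjection (resp.\ injection) is quasi-isomorphic to its kernel shifted by one (resp.\ its cokernel), which correctly supplies both cases, including the exercise the paper omits.
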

The proof of the second statement is very similar to the surjective case and left as a straightforward exercise.

%================================================================================
\section{A two-loop example}
\label{sec:two-loop example}
%================================================================================

We demonstrate some main points of this article by a pedagogical example. Consider the massless two-loop two-point graph with five propagators, graph $\WS{3}'$ in figure~\ref{fig:wheels}. To this graph we associate the family of integrals
\begin{equation}
	\FI(\nu_1,\ldots,\nu_5)
	= \int\frac{\dd[\Dim] l_{1}}{\iu\pi^{\Dim/2}}\int\frac{\dd[\Dim] l_{2}}{\iu\pi^{\Dim/2}}
	\frac{1}{[-l_{1}^{2}]^{\nu_{1}} [-l_{2}^{2}]^{\nu_{2}} [-(l_{2}-p)^{2}]^{\nu_{3}} [-(l_{1}-p)^{2}]^{\nu_{4}} [-(l_{1}-l_{2})^{2}]^{\nu_{5}}} 
	\label{eq:Example momentum rep}%
\end{equation}
with two loop-momenta $q_1=l_1,\,q_2=l_2$ and one external momentum $q_3=p$. We normalize to $-p^2=1$. The graph polynomial $\G=\U+\F$ is given by the Symanzik polynomials 
\begin{align*}
	\U
	& = (x_1+x_4)(x_2+x_3)+x_5(x_1+x_2+x_3+x_4) \quad\text{and}
    \\
	\F
	    & = x_1 x_2(x_3+x_4)+x_3x_4(x_1+x_2)+x_5(x_1+x_2)(x_3+x_4).
\end{align*}
By definition \ref{def:Mellin transform} the modified Feynman integral
\begin{equation}
	\FImel(\nu_1,\ldots,\nu_5)
	=  \Mellin{\G^{s}}
	\label{eq:Example modified Feyn int}%
\end{equation}
is related to the Feynman integral by 
\begin{equation}
	\FI(\nu_1,\ldots,\nu_5)
	=\frac{\Gamma(-s)}{\Gamma(-s-\omega)}\FImel (\nu_1,\ldots,\nu_5)
\end{equation}
with $\omega = \nu_1+\nu_2+\nu_3+\nu_4+\nu_5+2s$ and $s=-\Dim/2$.

\subsection{From annihilators to integral relations}

A set of generators of the annihilator ideal $\Ann_{A_5[s]}\left(\G^s\right)$ can be derived in {\Singular} \cite{Singular} with algorithms introduced in \cite{Andetal}. Using the command \texttt{SannfsBM} we obtain a set of 13 generators. To give an impression, the first five of them read

\begin{align*}
	P_1
	& = \partial_1-\partial_2+(x_3+x_4+1)(\partial_3-\partial_4)+(x_3-x_4)\partial_5,
\\
	P_2
    & = (x_2-1)\partial_1+(-2x_1-3x_2-1)\partial_2+(-x_4+1)\partial_3+(2x_3+3x_4+1)\partial_4 +(2x_1-x_2 \\
    & -2x_3+x_4)\partial_5,
\\
P_3
& = (x_1+2)\partial_1+(x_1+2 x_2)\partial_2+x_4 \partial_3 +(-2 x_3-3 x_4-2)\partial_4+(-x_1+2 x_3-x_4)\partial_5,
\\
P_4
& = \partial_1 x_1+(x_2+1)\partial_2+(-x_3-x_4-1)\partial_3+(x_4+x_5)\partial_5-s,
\\
P_5
& = (x_1 x_3+x_1 x_4+x_2 x_3+x_2 x_4+x_1+x_2+x_3+x_4)\partial_4 +(-x_1 x_2-x_1 x_3-x_1 x_5\\
& -x_2 x_3-x_2 x_5-x_2-x_3-x_5)\partial_5.
\end{align*}
We notice that this set includes one generator which is quadratic in the differential operators, reading
\begin{align*}
P_{13}
& = (x_4+x_5-2)\partial_1^2+(2 x_2+x_5)\partial_2^2 +(-2 x_3^2+2 x_3 x_4-x_3 x_5+x_4 x_5)\partial_5^2-x_3 x_4\partial_3\partial_5\\
&+(2 x_3^2+x_3 x_4-2 x_4^2+2 x_3-2 x_4)\partial_4\partial_5+(x_3-x_4+(-2 x_2-x_4-2 x_5+2)\partial_2\\
&+(x_4^2-x_4)\partial_3+(-x_4^2+2 x_3+3 x_4+2)\partial_4+(x_3 x_4+x_3 x_5-x_4^2-x_4 x_5-4 x_3+2 x_4\\
&-x_5)\partial_5)\partial_1+(x_3 x_4\partial_3-x_3+x_4+(-x_3 x_4-2 x_3-2 x_4-2)\partial_4+(-2 x_2 x_3+2 x_2 x_4\\
&-x_3 x_5+x_4 x_5+2 x_3+x_5)\partial_5+2)\partial_2+(-x_3+x_4)\partial_5.
\end{align*}

Every operator in ${\Ann_{A_5[s]}\left(\G^s\right)}$ gives rise to an integral relation. According to lemma \ref{lem:Mellin-properties}, we just need to replace each $x_i$ by $\PlusD{i}$ and each $\partial_i$ by $-\Minus{i}$ to obtain a shift relation between modified Feynman integrals. For example, for the generator $P_1$, we obtain the shift operator
\begin{equation*}
	\Mellin{P_1}
	=-\Minus{1}+\Minus{2}-(\PlusD{3}+\PlusD{4}+1)(\Minus{3}-\Minus{4})-(\PlusD{3}-\PlusD{4})\Minus{5}
\end{equation*}
satisfying 
\begin{equation*}
	\Mellin{P_1}
	\FImel(\nu_1,\ldots,\nu_5)=0.
\end{equation*}
For the (unmodified) Feynman integral $\FI(\nu_1,\ldots,\nu_5)$ we obtain the corresponding shift relation via \eqref{eq:FImel-relation-to-FI}. From $P_1$ we obtain the operator 
\begin{equation}
	\boldsymbol{O_1}
	= \left(\sum_{i=1}^5\nOp{i}+3s\right)(\Minus{1}-\Minus{2}+\Minus{3}-\Minus{4})-(\PlusD{3}+\PlusD{4})(\Minus{3}-\Minus{4})-(\PlusD{3}-\PlusD{4})\Minus{5}
\end{equation}
which satisfies
\begin{equation} 
	\boldsymbol{O_1}\FI(\nu_1,\ldots,\nu_5)=0.
	\label{eq:Example-relO1}%
\end{equation}

\subsection{Linear annihilators}
\label{sec:Linear annihilators}%

It is useful to consider the linear annihilators $\Ann^1_{\Weyl{N}[s]}(\G^s) \subseteq \Ann_{\Weyl{N}[s]}(\G^s).$ Recall that they are the annihilators which are linear differential operators, being of the form $P=q + \sum_{i=1}^N p_i \partial_i$ with $p,q_1,\ldots,q_N \in \Q[s,x_1,\ldots,x_N]$. We compute the generators of $\Ann^1(\G^s)$ as generators of the Syzygy-module of $(\G,\partial_1 \G,\ldots,\partial_N \G)$, using the {\Singular} command \texttt{syz}. For our example we obtain the following 8 generators:
\begin{align*}
L_1
&= \partial_1-\partial_2+(x_3+x_4+1)\partial_3+(-x_3-x_4-1)\partial_4+(x_3-x_4)\partial_5,\\
L_2
&= (-x_2-1)\partial_1+(x_2+1)\partial_2+(-x_4-1)\partial_3+(x_4+1)\partial_4+(x_2+x_4+2 x_5)\partial_5-2s,\\
L_3
&= (2 x_1+x_2+1)\partial_1+(x_2+1)\partial_2+(-2 x_3-x_4-1)\partial_3+(-x_4-1)\partial_4+(-x_2+x_4)\partial_5,\\
L_4
&= (x_1+x_2+1)\partial_1+(-x_1-x_2-1)\partial_2+\partial_3-\partial_4+(x_1-x_2)\partial_5,\\
L_5
&= -2s x_4+(x_4+x_5)\partial_1+(-2 x_2-x_4-x_5-2)\partial_2+(2 x_3 x_4+x_4^2+2 x_3+3 x_4+2)\partial_3\\
& +(x_4^2+x_4)\partial_4+(-x_4^2-2 x_4-x_5)\partial_5,\\
L_6
&= (-x_2 x_3+x_2 x_4-x_2-x_3+2 x_4+2 x_5)\partial_1+(-2 x_1 x_3+2 x_1 x_4-x_2 x_3+x_2 x_4+2 x_1\\
& -9 x_2-x_3-2 x_5-6)\partial_2+(-2 x_1 x_3-2 x_1 x_4-x_2 x_3-x_2 x_4-2 x_3^2-4 x_3 x_5+2 x_4^2\\
& -4 x_4 x_5-2 x_1-x_2+3 x_3+6 x_4-4 x_5+6)\partial_3+(2 x_1 x_3+2 x_1 x_4+x_2 x_3+x_2 x_4\\
& +4 x_3 x_4+4 x_3 x_5+4 x_4^2+4 x_4 x_5+2 x_1+x_2+x_3+6 x_4+4 x_5)\partial_4+(-2 x_4^2-6 x_4\\
& -4 x_5)\partial_5+s(-4 x_4+2),\\
L_7
&= (2 x_1 x_2+2 x_2^2+2 x_1+4 x_2+2)\partial_1+(x_2 x_4+2 x_2+x_5)\partial_3+(-x_2 x_4-2 x_2-2 x_4\\
& -x_5-2)\partial_4+(-2 x_2^2-x_2 x_4-2 x_2 x_5-2 x_2-x_5)\partial_5,\\
L_8
&= (x_4+x_5)\partial_1+(2 x_1 x_2+2 x_1 x_4+2 x_1 x_5+2 x_2^2+2 x_2 x_4+2 x_2 x_5+2 x_1+2 x_2+x_4\\
& +x_5)\partial_2+(-2 x_1 x_3-2 x_1 x_4-2 x_3 x_5+x_4^2-2 x_1+2 x_3-x_4)\partial_3+(-x_4^2-2 x_4 x_5-x_4\\
& -2 x_5)\partial_4+(-x_4^2-2 x_4 x_5-2 x_5^2+x_5)\partial_5+s(-2 x_2+2 x_5-2),\\
\end{align*}
Using {\Singular} we find that for our two-loop example every generator ${P_i}$ can be expressed as a linear combination of the linear generators ${L_i}$ over $\Ann_{\Weyl{5}[s]}(\G^s)$. For instance, the first five annihilators satisfy
\begin{align*}
	P_1 &= L_1, \quad
	P_2 = -2L_1-L_3+2L_4, \quad
	P_3 = 2L_1+L_3-L_4, \quad
	2P_4 = L_2+L_3, 
	\\
	2P_5 &= -2(x_1+x_2+1)L_1+(1+x_2-x_4-x_5)(L_2-L_3)-2(x_2+x_4+x_5)L_4+L_5+2L_7-L_8.
\end{align*}
We emphasize that such a relation exists for all the $P_i$. In particular such a relation also exists for the quadratic $P_{13}$, which however is too long to be shown here. As a consequence we can view $\Ann_{\Weyl{N}[s]}(\G^s)$ as generated by the linear $L_i$, which will simplify the discussion in section \ref{sec:Comparing}.

\subsection{From IBP relations to annihilators}
Going in the other direction, we can derive annihilators from momentum-space IBP relations. In the usual way, inserting the differential operators 
\begin{equation}
	\momIBP{i}{j}
	= \frac{\partial}{\partial q_i}q_j
	\quad \text{for}\quad i\in \set{1,2} \quad\text{and}\quad j\in \set{1,2,3}
\end{equation}
we obtain six IBP relations $\shiftIBP{i}{j} \FI=0$ with the shift operators 
\begin{align*}
	\shiftIBP{1}{1} 
	&=-\PlusD{4}\Minus{1}-\PlusD{5}\Minus{1}+\PlusD{5}\Minus{2}+\PlusD{4}-2\nOp{1}-\nOp{4}-\nOp{5},
	\\
	\shiftIBP{1}{2} 
	&=-\PlusD{1}\Minus{2}+\PlusD{1}\Minus{5}-\PlusD{4}\Minus{1}-\PlusD{4}\Minus{3}+\PlusD{4}\Minus{5}-\PlusD{5}\Minus{1}+\PlusD{5}\Minus{2}+\PlusD{4}-\nOp{1}+\nOp{5},
	\\
	\shiftIBP{1}{3} 
	&=-s\PlusD{1}+s\PlusD{4}-\PlusD{4}\Minus{1}-\PlusD{5}\Minus{1}+\PlusD{5}\Minus{2}-\PlusD{5}\Minus{3}+\PlusD{1}\Minus{4}+\PlusD{5}\Minus{4}-\nOp{1}+\nOp{4},
	\\
	\shiftIBP{2}{1} 
	&=-\PlusD{2}\Minus{1}+\PlusD{2}\Minus{5}-\PlusD{3}\Minus{2}-\PlusD{3}\Minus{4}+\PlusD{3}\Minus{5}+\PlusD{5}\Minus{1}-\PlusD{5}\Minus{2}+\PlusD{3}-\nOp{2}+\nOp{5},
	\\
	\shiftIBP{2}{2} 
	&=-\PlusD{3}\Minus{2}+\PlusD{5}\Minus{1}-\PlusD{5}\Minus{2}+\PlusD{3}-2s-2\nOp{2}-\nOp{3}-\nOp{5}
	\\
	\shiftIBP{2}{3} 
	&=\PlusD{2}\Minus{3}-\PlusD{3}\Minus{2}+\PlusD{5}\Minus{1}-\PlusD{5}\Minus{2}+\PlusD{5}\Minus{3}-\PlusD{5}\Minus{4}-\PlusD{2}-\PlusD{3}-\nOp{2}+\nOp{3}
	.
\end{align*}

Following the steps in the proof of corollary \ref{cor:parIBP} we derive for each shift opeartor $\shiftIBP{i}{j}$ a parametric annihilator $\parIBP{i}{j}$. We obtain
\begin{align*}
\parIBP{1}{1}
&= (x_1 x_4+2 x_1+x_4+x_5)\partial_1+(x_2 x_4-x_5)\partial_2+x_3 x_4\partial_3-3s x_4+(x_4^2+x_4)\partial_4+(x_4 x_5\\
& +x_5)\partial_5-2s,\\
\parIBP{1}{2}
&= -3s x_4+(x_1 x_4+x_1+x_4+x_5)\partial_1+(x_2 x_4+x_1-x_5)\partial_2+(x_3 x_4+x_4)\partial_3+x_4^2\partial_4\\
& +(x_4 x_5-x_1-x_4-x_5)\partial_5,\\
\parIBP{1}{3}
&= 3s x_1-3s x_4+(-x_1^2+x_1 x_4+x_1+x_4+x_5)\partial_1+(-x_1 x_2+x_2 x_4-x_5)\partial_2+(-x_1 x_3\\
& +x_3 x_4+x_5)\partial_3+(-x_1 x_4+x_4^2-x_1-x_4-x_5)\partial_4+(-x_1 x_5+x_4 x_5)\partial_5,\\
\parIBP{2}{1}
&= -3s x_3+(x_1 x_3+x_2-x_5)\partial_1+(x_2 x_3+x_2+x_3+x_5)\partial_2+x_3^2\partial_3+(x_3 x_4+x_3)\partial_4\\
& +(x_3 x_5-x_2-x_3-x_5)\partial_5,\\
\parIBP{2}{2}
&= -3s x_3+(x_1 x_3-x_5)\partial_1+(x_2 x_3+2 x_2+x_3+x_5)\partial_2+(x_3^2+x_3)\partial_3+x_3 x_4\partial_4+(x_3 x_5\\
& +x_5)\partial_5-2s,\\
\parIBP{2}{3}
&= 3s x_2-3s x_3+(-x_1 x_2+x_1 x_3-x_5)\partial_1+(-x_2^2+x_2 x_3+x_2+x_3+x_5)\partial_2+(-x_2 x_3\\
& +x_3^2-x_2-x_3-x_5)\partial_3+(-x_2 x_4+x_3 x_4+x_5)\partial_4+(-x_2 x_5+x_3 x_5)\partial_5.
\end{align*}
These operators are useful to compare both approaches as discussed next. 

\subsection{Comparing annihilators and IBP operators} \label{sec:Comparing}

According to corollary \ref{cor:parIBP}, every momentum-space IBP relation corresponds to a parametric annihilator. For our two-loop example, this is given by the fact that 
\begin{equation*}
	\parIBP{i}{j}
	\in \Ann_{\Weyl{5}[s]}\left(\G^s\right)
	\quad\text{for}\quad
	i\in \set{1,2}
	\quad\text{and}\quad
	j\in \set{1,2,3}.
\end{equation*}

We may furthermore ask if the reverse is true: Can every annihilator of $\G$ be derived from IBP relations? If the answer would be no, the approach via parametric annihilators would provide new integral identities. While this question remains open for the general case, we can test it for simple Feynman graphs such as the present two-loop example.

In a first attempt, we could consider the shift relations obtained from the generators ${P_1,\ldots,P_{13}}$ and try to confirm that they are combined IBP relations. If we use one of the well-known implementations of Laporta's algorithm to reproduce e.g.\ equation~\eqref{eq:Example-relO1}, we have to fix the values of $\nu_1,\ldots,\nu_5$ and do not answer the question for arbitrary values of the $\nu_i$. We therefore approach the problem on the level of parametric differential operators instead. 

We find that actually not all parametric annihilators are contained in $\Mom$; however, they turn out to still be consequences of the momentum space IBP relations in the following sense:
While we checked that $P_1 \notin \Mom$, we can find a polynomial $q_1\in\Q[s,x_1\partial_1,\ldots,x_5\partial_5]$ such that $q_1 P_1\in \Mom$. Recall that, under the Mellin transform, such a $q_1$ corresponds to a polynomial in the dimension and in the $\nu_e$. 
The interesting question then is if we can find a polynomial $q\in\Q[s,x_1\partial_1,\ldots,x_N\partial_N]$ for every $P\in\Ann_{\Weyl{N}[s]}(\G^s)$ such that $q P \in \Mom$. If we can find such a $q_i$ for every generator $P_i$, we can express every annihilator in terms of the $\parIBP{i}{j}$. The $q_i$ are the denominators of the coefficients in such a linear combination.

In section \ref{sec:Linear annihilators} we have seen for our example that $\Ann_{\Weyl{N}[s]}(\G^s)$ is generated by the linear annihilators $L_i$. As a consequence, it is sufficient to show that for each $L_i$ there is a $\tilde{q}_i$ such that $\tilde{q}_i L_i \in \Mom$ for $i=1,\ldots,8$. Indeed we can construct such ${q_1,\ldots,q_8}$ by an explicit Ansatz. For example we obtain the identity
\begin{equation*}
	\left(2\sum_{i=1}^5x_i \partial_i -6s\right)L_1= c_1 \parIBP{1}{1} +c_2 \parIBP{2}{1}+c_3 \parIBP{1}{2} + c_4 \parIBP{2}{2} + c_5 \parIBP{1}{3} + c_6 \parIBP{2}{3}
\end{equation*}
with
\begin{align*}
c_1 
&= -(-2 \partial_1+ \partial_2- \partial_3+2 \partial_4+x_5 ( \partial_1 (s+1)+ \partial_2 (-s-1)+ \partial_5s) +x_4 \partial_4+x_3 ( \partial_2 (-s-1)\\
& - \partial_4s
+ \partial_5s+3s^2+3s+x_5 \partial_5 (-s-1)+x_4 \partial_4 (-s-1))+x_3^2 \partial_3 (-s-1)+x_2 (- \partial_1s\\
&+ \partial_2 (
-s-1)+ \partial_5s+x_3 \partial_2 (-s-1))+x_1 ( \partial_1 +x_3 \partial_1 (-s-1))),\\
c_2
&= -(- \partial_2+ \partial_3-2s^2-2s+
x_5 \partial_5 (2s+1)+x_4 (- \partial_3s+ \partial_4 (s+1) + \partial_5s)+x_3 \partial_3+x_2 \partial_2\\
&+x_1 ( \partial_1 (s+1)- \partial_2s+
 \partial_5s)),\\
c_3
 &=-( \partial_1+1- \partial_4+s +x_5 ( \partial_1 (-s-1)+ \partial_2 (s+1)- \partial_5s)-x_4 \partial_4+x_3 ( \partial_2 (s+
1)+ \partial_4s\\
&- \partial_5s-3s^2-3s+x_5 \partial_5 (s+1)+x_4 \partial_4 (s+1))+x_3^2 \partial_3 (s+1)+x_2 ( \partial_1s+
 \partial_2 (s+1)\\
 &- \partial_5s+x_3 \partial_2 (s+1))+x_1 (- \partial_1+x_3 \partial_1 (s+1)))\\
c_4 
&= -(- \partial_1+2 \partial_2-2 \partial_3
+ \partial_4+x_5 \partial_5+x_4 (- \partial_3+ \partial_5)-x_3 \partial_3-x_2 \partial_2+x_1 (- \partial_2+ \partial_5))\\
c_5
&=-( \partial_1- \partial_2+ \partial_5)\\
c_6
&=
-( \partial_1- \partial_2- \partial_5).
\end{align*}
Using these results and the expressions for the generators of $\Ann_{\Weyl{5}[s]}(\G^s)$ in terms of the $L_i$, we can derive every annihilator from the IBP operators. We have done this computation with the same conclusion for several further graphs of low loop-order. These computations support our conjectures phrased in questions \ref{con:Ann=Ann1} and \ref{con:Ann=Mom}.

\subsection{The number of master integrals}

The result of the previous subsection implies that the parametric approach and momentum-space IBP lead to the same number of master integrals for this example. Indeed, as mentioned in section \ref{sec:lin-red}, we compute the Euler characteristic
\begin{equation*}
	\NoM{\G} = 3
\end{equation*}
and obtain the same number of master integrals with {\Azurite}. The three master integrals suggested by {\Azurite} are
\begin{equation*}
	I_1=\FI(1,1,1,1,0),\quad
	I_2=\FI(0,1,0,1,1) \quad \text{and} \quad 
	I_3=\FI(1,0,1,0,1).
\end{equation*}
Notice that symmetries of the graph were not taken into account here, which in {\Azurite} is assured by setting \texttt{Symmetry -> False} and \texttt{GlobalSymmetry -> False}. For an integral reduction in practice, one would of course make use of the symmetry  
\begin{equation*}
	\FI(\nu_1,\nu_2,\nu_3,\nu_4,\nu_5)=\FI(\nu_2,\nu_1,\nu_4,\nu_3,\nu_5) 
\end{equation*}
and compute with one of the sets $\set{I_1,I_2}$, $\set{I_1,I_3}$.

\bibliographystyle{JHEPsortdoi}
\bibliography{refs}

\end{document}